\documentclass{scrartcl}

\usepackage{authblk}

\usepackage[utf8]{inputenc}

\usepackage{negotiations}

\usepackage{amssymb}
\usepackage{eucal}
\usepackage{graphicx}
\usepackage{amsmath}
\usepackage{xcolor}
\usepackage{stmaryrd}
\usepackage{rotating}
\usepackage{xspace}
\usepackage{eepic}
\usepackage{epsfig}
\usepackage{latexsym} 
\usepackage{exscale}
\usepackage{verbatim}

\usepackage{caption}
\usepackage{subcaption}

\usepackage{amsthm}

\newtheorem{theorem}{Theorem}
\newtheorem{lemma}[theorem]{Lemma}
\newtheorem{corollary}[theorem]{Corollary}
\newtheorem{proposition}[theorem]{Proposition}
\newtheorem{definition}[theorem]{Definition}

\newtheorem*{remark*}{Remark}
\newtheorem{fact}{Fact}

\makeatletter
\newenvironment{proof*}[1][\proofname]{\par
  \normalfont 
  \partopsep=\z@skip \topsep=\z@skip
  \penalty100
  \trivlist
  \item[\hskip\labelsep
        \itshape
    #1\@addpunct{.}]\ignorespaces
}{%
  \endtrivlist\@endpefalse
  \penalty100
}
\makeatother

\newcommand{\reductionrule}[4]{
\begin{definition}{#1}\\
\label{#2}
\vspace{3pt}
\setlength{\extrarowheight}{5pt}
\begin{tabularx}{\textwidth}{lX}
{\bfseries Guard}: & {
\begin{minipage}[t]{\linewidth}
\itemsep-0.5ex 
#3
\end{minipage}}
\\
{\bfseries Action}: &{
\begin{minipage}[t]{\linewidth}
\itemsep-0.5ex 
#4
\end{minipage}}
\end{tabularx}

\end{definition}
}
\newcommand{\reductionrulenodef}[2]{
\vspace{3pt}
\setlength{\extrarowheight}{5pt}
\begin{tabularx}{\textwidth}{lX}
{\bfseries Guard}: & {
\begin{minipage}[t]{\linewidth}
\itemsep-0.5ex 
#1
\end{minipage}}
\\
{\bfseries Action}: &{
\begin{minipage}[t]{\linewidth}
\itemsep-0.5ex 
#2
\end{minipage}}
\end{tabularx}
}

\usepackage{algpseudocode,algorithmicx,algorithm}
\makeatletter
\newcommand{\algmargin}{\the\ALG@thistlm}
\makeatother

\usepackage{tabularx}
\usepackage{enumerate}

\newcommand{\je}[1]{#1}
\newcommand{\ph}[1]{#1}
\newcommand{\jd}[1]{#1}


\newcommand{\theoremlike}[2]{\par\medskip\penalty-250\refstepcounter{theorem}{{\bfseries\noindent#2
\ref{#1}.}}}

\newcommand{\thmhelperpre}[2]{\theoremlike{#1}{#2}}
\newcommand{\thmhelperpost}{\par\medskip}

\newenvironment{reftheorem}[1]{\thmhelperpre{#1}{Theorem}}{\thmhelperpost}
\newenvironment{reflemma}[1]{\thmhelperpre{#1}{Lemma}}{\thmhelperpost}

\newenvironment{refproposition}[1]{\thmhelperpre{#1}{Proposition}}{
\thmhelperpost }



\title{\jd{Negotiation as Concurrency Primitive}\thanks{This work was partially supported by the Graduiertenkolleg 1480 PUMA and the project ``Negotiations: A Model of Tractable Concurrency,'' both funded by the German Research Council, and by the Institute of Advanced Studies of the Technical University of Munich.}}
\author[1]{J\"org Desel}
\affil[1]{FernUniversit\"at in Hagen}
\author[2]{Javier Esparza}
\author[2]{Philipp Hoffmann}
\affil[2]{Technische Universit\"at M\"unchen}

\begin{document}
\maketitle

\begin{abstract}
\jd{This paper introduces} negotiations, a model of concurrency close to Petri nets, 
with multi-party negotiations as concurrency primitive. We study two fundamental analysis problems.
The soundness problem consists of deciding if it is always possible for a negotiation to 
terminate successfully, whatever the current state \jd{is}. Given a sound negotiation, 
the summarization problem 
\jd{aims at} computing an equivalent  
one-step negotiation with the same input/output behavior. 
The soundness and summarization problems can be solved 
by means of simple algorithms acting on the state space of the negotiation, which however 
face the well-known state explosion problem. We study alternative algorithms
that avoid the construction of the state space. In particular, we define reduction
rules that simplify a negotiation while preserving the sound/non-sound character of the negotiation
and its summary. In a first result we show that our rules are complete for the class of weakly 
deterministic acyclic negotiations, meaning that they reduce all sound negotiations in this class, and only them,
to equivalent one-step negotiations. This provides algorithms for both the soundness and summarization problems
that avoid the construction of the state space. We then study the class of deterministic negotiations.
Our \ph{second} main result shows that the rules are also complete for this class, even if the negotiation contains cycles.
Moreover, we present an algorithm that completely reduces all sound deterministic negotiations, and only them, in polynomial time. 
\end{abstract}

\section{Introduction}

Negotiation has long been identified as a paradigm for process interaction 
\cite{davis1983negotiation}. It has been applied to different problems
(see e.g. \cite{winsborough2000automated,atdelzater2000qos}), and
studied on its own \cite{jennings2001automated}. The purpose of this paper 
is to initiate the study of negotiations from a concurrency-theoretic,
observational point of view. Observationally, a negotiation is an interaction in 
which several partners come together to agree on one out of a number of 
possible outcomes (a synchronized nondeterministic choice). While such an interaction
can be modelled in any standard process algebra
as a combination of parallel composition and nondeterministic 
choice, or as \jd{a} small Petri net, we argue that much can be gained by studying 
formal models with {\em atomic negotiation} as concurrency primitive. In 
particular, we show that the negotiation point of view reveals new 
classes of systems with polynomial analysis algorithms.

Atomic negotiations can be combined into {\em distributed negotiations}. 
For example, a distributed negotiation between a buyer, a seller, and a broker
consists of one or more rounds of atomic negotiations involving the buyer and 
the broker, or the seller and the broker, followed by a 
\jd{concluding} atomic negotiation 
between the buyer and the seller. 

We introduce a formal model for distributed negotiations, \jd{inspired by}
van der Aalst's {\em workflow Petri nets} \cite{aalst}. A negotiation atom, or just {\em atom}, involves a set of
{\em parties} (for instance, buyer and broker), and has a set of possible {\em outcomes}
(for example, buy and sell). Each party has a set of possible {\em internal states},
and each outcome has associated a {\em state-transformer}; if the parties agree
on a given outcome, then its associated transformer 
determines the possible ``exit states'' of the parties after \ph{executing} the atom as a function
of their ``entry states'' before \ph{executing} the atom. Atoms are combined into 
{\em distributed negotiations}, by means of a {\em next-atoms} function 
that determines for each atom, each party, and each outcome, the set 
of atoms the party is ready to engage in next if the atomic negotiation ends with that 
outcome. 
\jd{We} assume that a distributed negotiation
always starts with an {\em initial atom} and end\jd{s} with a {\em final atom},
both of which involve all parties of the complete distributed negotiation.

Like \jd{workflow} nets, distributed negotiations can 
be {\em unsound} because of deadlocks or livelocks. The {\em soundness} 
problem consists of deciding if a given negotiation is sound. Further, 
a sound negotiation is equivalent to one single atom whose state-transformer
determines the possible final internal states of all parties
as a function of their initial internal states.  
The {\em summarization problem} consists of computing such an atomic negotiation, 
called a {\em summary}. \jd{The set of reachable global states can be finite or infinite }
\ph{because each party can have infinitely many internal states}.
\jd{If it is finite, then}
the soundness and summarization problems can be solved 
by means of well-known algorithms based on the exhaustive 
exploration of the state space. However, this approach badly suffers 
from the state-explosion problem: the state-space of distributed negotiations
grows exponentially in the size of the negotiation itself, \jd{even if all atoms 
had only one single local state}.

In this paper we first show that the state-explosion problem cannot be avoided in full generality,
because the soundness problem is PSPACE-complete for arbitrary negotiations,
and a decision problem related to the summarization problem is PSPACE-complete even for 
negotiations in which each party has only two internal states. We then provide {\em reduction} 
algorithms for the soundness and summarization 
problems. Reduction algorithms exhaustively 
apply syntactic reduction rules that simplify the system while preserving some aspects of the behavior,
like absence of deadlocks. The rules are exhaustively applied, until the system is either a trivial
one, or a non-trivial system, but with a smaller state space. This approach has been extensively applied to 
Petri nets \jd{and} workflow nets, but most of this work has been devoted to the 
liveness or soundness problems \cite{DBLP:conf/ac/Berthelot86,DBLP:conf/apn/Haddad88,DBLP:journals/ppl/HaddadP06}. 
For these problems many reduction rules are known, and they have been proved
{\em complete} for certain classes of systems \cite{DBLP:journals/tcs/GenrichT84,DBLP:books/daglib/0073545,Desel:1995:FCP:207572}, meaning that they reduce all live or sound systems in the class,
and only those, to a trivial system (in our case,
 to a single atom). However, many of these rules, like the linear dependency rule 
of \cite{Desel:1995:FCP:207572}, cannot be applied to the summarization problem, because 
they do not preserve the summary, only the soundness property. 

{\jd We provide a solution  to the summarization problem for
{\em deterministic} negotiations, based on a complete set of reduction rules}. In deterministic negotiations 
all involved agents are deterministic,
 meaning that they are never ready to engage in more than one atomic negotiation. 
\jd{We} provide a reduction strategy guaranteeing that a sound deterministic negotiation will be summarized 
by means of a polynomial number of applications of the rules. 
 
 Intuitively, 
nondeterministic agents may be ready to engage in 
\jd{several} atomic negotiations, and which one takes place 
is decided by the deterministic parties, which play thus the role of negotiation leaders.  
We \jd{also} introduce {\em weakly deterministic}
negotiations (\jd{roughly speaking, in weakly deterministic negotiations, each atomic negotiation has a deterministic party}),
and provide a complete set of reduction rules for acyclic weakly deterministic negotiations. 

This paper is the definitive version of the results presented in \cite{negI,negII}. 
It contains detailed proofs of all theorems. In particular, it corrects an error of the algorithm 
of \cite{negII}, which in a certain special case might not terminate. We also conduct a
better complexity analysis, leading to an algorithm of cubic complexity, instead of the
$O(n^4)$ analysis of \cite{negII}. 

\paragraph{Related work.} Specific distributed negotiation protocols have been modeled with
 the help of Petri nets or process 
calculi (see e.g. \cite{SalaunFC04,JiTL05,BacarinMMA11,Cap12}). 
However, these papers do not address the issue of negotiation as concurrency primitive. 

The reduction rules introduced in Petri net 
theory by Berthelot and Haddad \cite{DBLP:conf/ac/Berthelot86,DBLP:conf/apn/Haddad88}
have more recently been applied to the analysis of workflow nets in \cite{DBLP:conf/caise/DongenAV05,DBLP:journals/jcss/VerbeekWAH10}. 
These works do not address completeness issues.
Complete rule sets have been proposed for several Petri net classes \cite{DBLP:journals/tcs/GenrichT84,DBLP:books/daglib/0073545,Desel:1995:FCP:207572}. However, the rules only preserve  a property close to soundness,
and so cannot be used for summarization.

\paragraph{Applications.} In work published after \cite{negI,negII} we have applied the reduction 
algorithms of this paper to the analysis of workflow Petri nets. Deterministic negotiations are 
very tightly related to free-choice workflow nets \cite{DBLP:conf/apn/DeselE15}, and the reduction algorithms can be adapted. 
In \cite{DBLP:conf/fase/EsparzaH16} we report on experimental results on the analysis of a collection of almost 2000 workflows 
from industrial sources. 

\paragraph{Structure.} The paper is structured as follows. Section \ref{sec:definitions} introduces 
the syntax and semantics of negotiations, and the classes of deterministic and weakly deterministic 
negotiations. Section \ref{sec:analysis} presents the soundness and summarization problems,
and analyzes their computational complexity. Section \ref{sec:rules} introduces our set 
of reduction rules. The rest of the paper presents completeness and complexity results of our
set of rules. Section~\ref{sec:acyclic} studies acyclic negotiations. It shows that our rules 
summarize all sound and acyclic weakly-deterministic negotiations, and presents a simple reduction 
strategy that reduces all acyclic sound deterministic negotiations in polynomial time. 
Sections \ref{sec:cyclic} and \ref{subsec:many-agents-cyclic} present the main result of the paper: 
a polynomial reduction algorithm for arbitrary deterministic negotiations. 
Several lengthy proofs are containd in a number of appendices.

\section{Negotiations: Syntax and Semantics}
\label{sec:definitions}


We fix a finite set $\agents$ of \emph{agents} representing potential parties 
of negotiations.
Each agent $p \in \agents$ has a (possibly infinite) nonempty set $Q_p$ of {\em internal states}. 
We denote by $Q_\agents$ the cartesian product $\prod_{p \in \agents} Q_p$, for a subset $P \subseteq \agents$ we similarly define $Q_P = \prod_{p \in P} Q_p$.

A \emph{transformer} is a left-total relation $\tau \subseteq Q_\agents \times Q_\agents$, representing a nondeterministic state transforming function. Given $P \subseteq \agents$,
we say that a transformer $\tau$ is a {\em $P$-transformer} if, for each $p_i \notin P$, 
$\left((q_{p_1}, \ldots, q_{p_i}, \ldots, q_{p_{|A|}}), (q'_{p_1}, \ldots,q'_{p_i}, \ldots, q'_{p_{|A|}})\right)\in\tau$ implies $q_{p_i} = q'_{p_i}$.
This means that a $P$-transformer only transforms the internal states of agents in $P$. 

\begin{definition}
A {\em negotiation atom}, or just an \emph{atom}, is a triple $n=(P_n, R_n,\delta_n)$,
 where $P_n \subseteq \agents$ is a nonempty set of \emph{parties}, 
$R_n$ is a finite, nonempty set of \emph{results}, and 
$\delta_n$ is a mapping assigning to each result $\r$ in $R_n$ 
a $P_n$-transformer $\delta_n (\r)$. \jd{For each result $r\in R_n$, we call the pair $(n,r)$ an \emph{outcome}}.
\end{definition}

\noindent Intuitively, if the respective
states of the agents before an atom $n$ are given by a tuple $q$ 
and the result of the negotiation is $\r$, then the agents change
their states to another tuple $q'$ for some $(q,q') \in \delta_n (\r)$. 

For a simple example, consider a negotiation atom $n$ with 
parties \texttt{F} (Father) and \texttt{D} (teenage Daughter). The goal of the
negotiation is to determine whether \texttt{D} can go to a party, and the time 
at which she must return home. The possible results are
$\{\texttt{yes}, \texttt{no}, \texttt{ask\_mother}\}$.
Both sets $Q_{\texttt{F}}$ and $Q_{\texttt{D}}$ contain a state 
{\it angry} plus a state $t$ for every time $T_1 \leq t \leq T_2$ in a 
given interval $[T_1,T_2]$
\jd{(we assume that the time values are appropriately ordered, choose for example $\{9,10,11,12\}$)}.
Initially, \texttt{F} is in state $t_f$ and \texttt{D} in state $t_d$.

\jd{If the state of } \ph{$\texttt{F}$ or of $\texttt{D}$} 
\jd{ is {\it angry}, then both will be {\it angry} after negotiating, no matter 
what the result of the negotiation atom is. 
The remaining transformations of the mapping $\delta_{n}$ are given by}

$$
\begin{array}{lcl}
\delta_{n}(\texttt{yes}) & \supset & \left\{ \left( (t_f, t_d), (t,t)\right) \; \mid \; t_f \leq t \leq t_d \vee t_d \leq t \leq t_f \right\} \\
\delta_{n}(\texttt{no}) & \supset & \left\{\left((t_f, t_d), ({\it angry}, {\it angry})\right) \; \right\} \\
\delta_{n}(\texttt{ask\_mother})& \supset & \left\{ \left( (t_f, t_d), (t_f, t_d) \right) \right\} 
\end{array} 
$$
\noindent That is, if the result is \texttt{yes}, then \texttt{F} and \texttt{D} agree on a time $t$ 
which is not earlier and not later than both suggested times.
If it is \texttt{no}, then there is a quarrel and both parties get angry. If the
result is \texttt{ask\_mother}, then the parties keep their previous times.

\subsection{Combining Atomic Negotiations}

A negotiation is a composition of atoms. We add a {\em transition function} $\trans$ that assigns to every 
triple $(n,p,\r)$ consisting of an atom $n$, a participant $p$ of $n$, and a result $\r$ of $n$ a set 
$\trans(n,p,\r)$ of atoms. Intuitively, this is the set of atomic negotiations 
agent $p$ is ready to engage in after the atom $n$, if the result of $n$ is $\r$. 

Negotiations can be graphically represented as shown in Figure \ref{fig:FDM_example} 
(ignore the black dots on the arrows for now). 
For each atom $n \in N$ we draw a black bar; for each party $p \in P_n$ we 
draw a white circle on the bar, called a \emph{port}. For each triple $(n,p,\r) $, 
we draw a hyperarc leading from the port of $p$ in $n$ to all the ports of $p$ in 
the atoms of $\trans(n,p,\r)$, and label it by $\r$. (If $\trans(n,p,\r)$ contains only one atom, then the hyperarc is  actually an arc.)
\jd{If $\trans(n,p,\r) = \trans(n,p,\r')$, i.e., if the next possible atoms of an agent $p$
after the negotiation atom $n$ are the same for two results $r$ and $r'$, then 
we draw a single hyperarc and label it by $r$ and $r'$, and similarly for more than two results.}
In later examples, we   omit the labels whenever we are only interested in the structure of the 
negotiation. 

\begin{figure}[h]
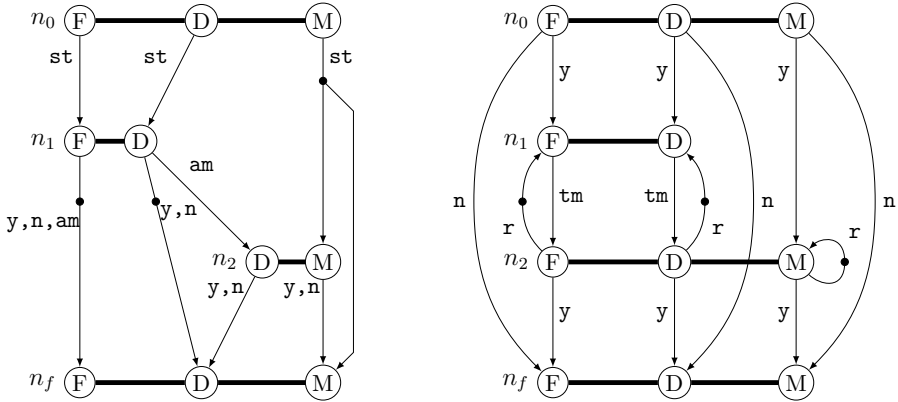

\centering
\scalebox{0.8}{
\input{tikz/FDM_acyclic}
\hspace{1cm}
\input{tikz/FDM_cyclic}
}
\caption{Two negotiations between three agents.}
\label{fig:FDM_example}
\end{figure}

Figure \ref{fig:FDM_example} shows two Father-Daughter-Mother negotiations,
omitting, for the sake of brevity, the state transformers associated to the results.
On the left, after an initial atomic negotiation $n_0$ involving all three agents,
Daughter and Father negotiate \ph{(atom $n_1$)} with possible results \texttt{yes} ($\texttt{y}$), 
\texttt{no} ($\texttt{n}$), and \texttt{ask\_mother} ($\texttt{am}$). If the
result is \texttt{ask\_mother}, then Daughter and Mother negotiate \ph{(atom $n_2$)} with possible results
\texttt{yes} and \texttt{no}. The negotiation ends with a final atom $n_f$

In the negotiation on the right,
Father, Daughter and Mother \ph{initially} negotiate with possible results \texttt{yes} and \texttt{no}.
If the result is \texttt{yes}, then Father and Daughter negotiate a time for the Daughter to return home 
(atom $n_1$) and propose it to Mother (atom $n_2$). If Mother approves (result 
\texttt{yes}), then the negotiation terminates with $n_f$, otherwise (result \texttt{\r})
Daughter and Father renegotiate the time.

Before defining negotiations, we need a more general structure that we call pre-negotiation:

\begin{definition}
Given a finite set of atoms $N$, let $T(N)$ denote the set of triples $(n, p, \r)$ such that 
$n \in N$, $p\in P_n$, and $\r \in R_n$. 
A \emph{pre-negotiation} is a tuple ${\cal N}=(N, n_0, n_f, \trans)$, where 
$n_0, n_f \in N$ are the \emph{initial} atom and the \emph{final} atom, and $\trans \colon T(N) \rightarrow 2^N$ is the {\em transition function}. 
\end{definition}

Abstracting from ports leads to the notion of graph of a pre-negotiation:

\begin{definition}
Let ${\cal N}=(N, n_0, n_f, \trans)$ be a pre-negotiation. The {\em graph} of $\N$  is the directed graph with
$N$ as set of vertices and an edge from $n$ to $n'$ labeled by $(p, \r)$ if $n' \in \trans(n, p, \r)$.

A {\em path} of $\N$ is a sequence $(n_1, p_1, \r_1) \, (n_2, p_2, \r_2) \cdots (n_k, p_k, \r_k)$ 
such that, for  $1 \leq i \leq k-1$, $n_{i+1} \in \trans(n_i, p_i, \r_i)$ . 
If $n_1 \in \trans(n_k, p_k, \r_k)$ then the path is also a {\em cycle}.

${\cal N}$ is \emph{acyclic} if its graph has no cycles, otherwise it is \emph{cyclic}.
\end{definition} 

\begin{figure}[h]
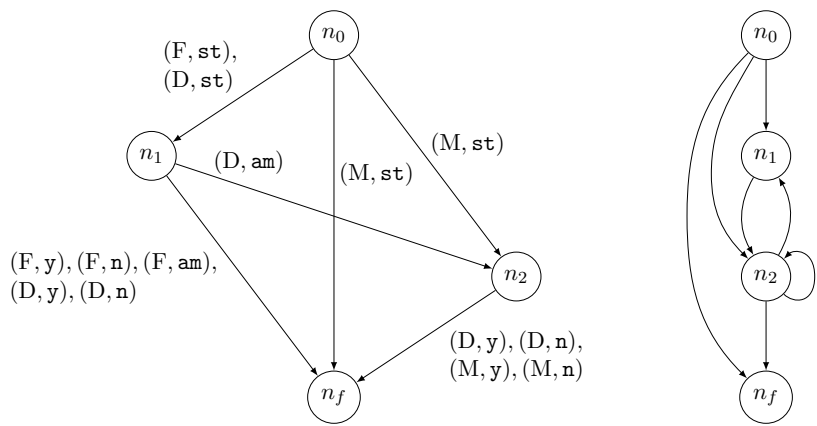

\centering
\scalebox{0.8}{
\input{tikz/Graph_example_acyclic}
\hspace{1cm}
\input{tikz/Graph_example_cyclic}
}
\caption{Graphs of the negotiations in Figure \ref{fig:FDM_example}.}
\label{fig:FDM_graphs}
\end{figure}

The graphs of the negotiations in Figure \ref{fig:FDM_example} are shown in Figure \ref{fig:FDM_graphs}.
In the graph of the negotiation on the right we have omitted the labels. The negotiation 
on the left of Figure \ref{fig:FDM_example} is acyclic, while the one the right is cyclic. \jd{Observe that in acyclic pre-negotiations and negotiations every agent can reach an atom at most once.}

Now we are ready to formally define negotiations. 

\begin{definition}
\label{def:neg}
A pre-negotiation ${\cal N}=(N, n_0, n_f, \trans)$ is a \emph{negotiation} if it satisfies the following properties: 

(1) every agent of $\agents$ participates in both $n_0$ and $n_f$ \jd{(i.e., $P_{n_0} = P_{n_f} = \agents$)};

(2) for every $(n, p, \r) \in T(N)$: $\trans(n, p, \r)= \emptyset$ if{}f $n=n_f$.

(3) for every atom $n\in N$ there is a path $(n_1, p_1, \r_1) \, \cdots \,(n_k, p_k, \r_k)$ such that $n_0 = n_1$, $n_f = n_k$, and $n = n_i$ for some $1 \leq i \leq k$. \medskip

\end{definition}

\noindent \jd{We call a result $\r$ of an atom $n$ \emph{final result for party $p$} if $\trans(n, p, \r) = \emptyset$.
Condition (2) of the above definition states that only the final atom has final results and that 
all results of the final atom are final results for all its parties, which is -- by (1) -- the set of all agents.} 

If $|N| = 1$ then the negotiation consists of a single atom, which is both the initial
atom $n_0$ and the final atom $n_f$; further, \jd{in this case} every result is final.

\jd{If $|N| > 1$ then $n_0$ and $n_f$ are distinct; otherwise, in the graph of $\N$ the vertex $n_0$ ($=n_f$) has no successor and so no other vertex can
be on a path starting with $n_0$.}


\jd{Figure \ref{fig:FDM_example} shows two negotiations. 
The initial atoms have no ingoing arc (which is not necessarily the case in general), whereas 
the final atoms have no outgoing arcs by definition. Therefore, 
there is no graphical representation of the final results. 
Since $\trans(n_f,p,\r) = \emptyset$ holds for every final result $\r$ 
and every agent $p$, any visual representation holds no further information other than the existence of that result.}

\jd{Condition (3) of the definition of negotiations is inspired by a corresponding property of workflow Petri nets \cite{aalst}.
Paths of the graph represent  possible subsequent atomic negotiations of an agent. Therefore, if for some vertex there was no path from the initial vertex (representing the initial atom) to that vertex, then the  corresponding atom can never take place (remember that all participants of all atoms participate in the initial atom).
As will become clear in the following subsection, we are mainly interested in distributed negotiations which can always eventually terminate (represented by the final atom); hence the requirement of a path from each vertex to the  vertex representing the final atom (which also involves all agents).}

\subsection{Semantics}

A \emph{marking} of a negotiation ${\cal N}=(N, n_0, n_f, \trans)$ is a mapping 
$\vx \colon \agents \rightarrow 2^N$. Intuitively, $\vx(p)$ is the set of atoms that agent $p$ is currently ready to engage in next. 
The \emph{initial} and \emph{final} markings, denoted by $\vx_0$ and $\vx_f$ respectively, are given by $\vx_0(p)=\{n_0\}$ and 
$\vx_f(p)=\emptyset$ for every $p \in \agents$.

A marking $\vx$ \emph{enables} an atom $n$ if $n \in \vx(p)$ for every $p \in P_n$,
i.e., if every party of $n$ is currently ready to engage in it.
If $\vx$ enables $n$, then $n$ can take place and its parties
agree on a result $\r$; we say that \jd{the outcome} $(n,\r)$ \emph{occurs}.
The occurrence of $(n,\r)$ produces a next marking $\vx'$ given by $\vx'(p) = \trans(n,p,\r)$ for every $p \in P_n$, 
and $\vx'(p)=\vx(p)$ for every $p \in \agents \setminus P_n$. 
We write $\vx \by{(n,\r)} \vx'$ to denote this,
and call it a {\em small step}. 

We write $\vx_1 \by{\sigma}$ to denote that there is a sequence 
\[
\vx_1 \by{(n_1,\r_1)} \vx_2 \by{(n_2,\r_2)}\cdots \by{(n_{k-1},\r_{k-1})} \vx_{k} \by{(n_k,\r_k)} \vx_{k+1} \cdots
\] 
of small steps such that $\sigma = (n_1, \r_1) \ldots (n_{k}, \r_{k}) \ldots$. We
call $\sigma$ an {\em occurrence sequence} from marking $\vx_1$.
If $\sigma$ is finite \ph{and ends with $(n_k, \r_k)$}, then we write $\vx_1 \by{\sigma} \vx_{k+1}$ and say that $\vx_{k+1}$ is 
\emph{reachable} from $\vx_1$. 
If $\vx_1$ is the initial marking, then we call $\sigma$ an {\em initial occurrence sequence}. If moreover $\vx_{k+1}$ is the final marking, then $\sigma$ is a {\em large step}.

Given an agent $p$, \ph{we always have that} either $\vx(p)=\{n_0\}$ or $\vx(p)=\trans(n,p,\r)$ for some atom $n$ and result $\r$. 
The marking $\vx_f$ can only be reached by the occurrence of an outcome $(n_f, \r)$, where $n_f$ is the final atom and thus $\r$ is a final result.  It does not enable any atom. Any other marking that does not enable any atom is a \emph{deadlock}.
A marking which is reachable from itself but from which the final marking is not reachable is a \emph{livelock}.

Reachable markings are graphically represented by placing tokens (black dots) on the forking points of the hyperarcs (or, if the hyperarc consists of just one arc, in the middle of the arc). Figure \ref{fig:FDM_example} shows \jd{on the left a marking in which all agents are ready to engage in $n_f$ and \texttt{M} moreover is ready to engage in $n_2$. So the only enabled outcomes are $(n_f, \r_f)$,
where $\r_f$ is a final result.}
The figure on the right shoes a marking in which \texttt{F} and \texttt{D} are ready to engage in $n_1$ and \texttt{M} is ready to engage in $n_2$. \jd{Since  $n_1$ involves only \texttt{F} and \texttt{D} and has only the  result \texttt{tm}, the only enabled outcome is $(n_2, \texttt{tm})$.}


\subsection{Determinism and Weak Determinism}
\label{sec:determ}

We introduce deterministic and weakly deterministic negotiations. Intuitively, 
an agent of a negotiation is deterministic if it is never ready to engage in more than
one atom; graphically, an agent is deterministic if all hyperarcs of the agent are normal arcs. Formally:

\begin{definition}
An agent $p \in \agents$ is \emph{deterministic} if for every
$(n,p,\r) \in T(N)$ such that $n \neq n_f$ there exists one atom
$n'$ such that $\trans(n,p,\r) = \{n'\}$. 
\end{definition}

Consider again Figure \ref{fig:FDM_example}. In the negotiation 
on the left, Father and Daughter are deterministic agents, but Mother is not,
because she has a proper hyperarc from $n_0$ to $n_2$ and $n_f$. After $n_0$, Mother is ready to
engage in the atoms $n_2$ and $n_f$. 

The fundamental property of deterministic agents is that, loosely speaking, they ``force'' 
atoms to occur. Assume that 
a deterministic agent is currently ready to engage in atom $n$. Then the agent is
not ready to engage in any other atom, and the only way to change this is by executing $n$.
Therefore, any extension of the current execution to a large step must necessarily execute
$n$.


\begin{definition}
A negotiation is \emph{deterministic} if all its agents are deterministic.

A negotiation is {\em weakly deterministic} if for every $(n,p,\r) \in T(N)$ there is a deterministic
agent $b$ that is a party of every atom in $\trans(n,p,\r)$, i.e., $b \in P_{n'}$ for every 
$n' \in \trans(n,p,\r)$.
\end{definition}

Observe that every deterministic negotiation is also weakly deterministic.
Consider again Figure \ref{fig:FDM_example}. The right negotiation is 
deterministic, while the left negotiation is not, as Mother has 
a proper hyperarc from $n_0$ to $n_2$ and $n_f$. However, the left negotiation
is weakly deterministic, as Daughter is deterministic and
participates in every atom.




\section{Analysis Problems}
\label{sec:analysis}

We introduce a notion of well-behavedness of a negotiation \jd{which we call soundness. It is similar to the soundness property of workflow Petri nets \cite{aalst}}. 

\begin{definition}
A negotiation is {\em sound} if

\noindent
 {\em (a)} every atom is enabled at some reachable marking, and 
 
 \noindent
 {\em (b)} every occurrence sequence from the initial marking is either a large step or can be extended to a large step. 
\end{definition}

Intuitively, (a) captures that there are no useless atoms, and (b) that the negotiation can never reach
a state from which it cannot terminate. In particular, sound negotiations can reach neither a deadlock
nor a livelock.

The negotiations of Figure \ref{fig:FDM_example} are sound. However, if we set in the left negotiation
$\trans(n_0,\texttt{M}, \texttt{st})= \ph{\{n_2\}}$ instead of $\trans(n_0,\texttt{M}, \texttt{st})= \ph{\{n_2, n_f\}}$, then the occurrence sequence $(n_0,\texttt{st}) \ph{(n_1, \texttt{yes})}$
leads to a deadlock.


We now introduce the notion of summary transformer, and summary of a negotiation. 
Intuitively, the summary transformer gives for each final outcome $(n_f,\r)$ and for each
tuple of initial states $q_0$ the possible \ph{resulting} tuples of states\ph{,} after the
negotiation finishes with the result $\r$.  

\begin{definition}
\label{def:summary}
Given a negotiation $\N=(N,n_0,n_f,\trans)$, we attach to each \jd{final} result $\r$  a 
{\em summary transformer} $\trf{\N,\r} \subseteq Q_\agents \times Q_\agents$ as follows.

Given two transformers $\tau_1, \tau_2 \subseteq Q_\agents \times Q_\agents$, we define their
{\em concatenation} as the transformer
\begin{equation*}
\tau_1 \, \tau_2 =  \{ (q, q') \in Q_\agents \times Q_\agents \mid(q, q'') \in \tau_1 \mbox{ and } (q'', q') \in \tau_2  \mbox{ for some $q''\in Q_\agents$} \}
\end{equation*}
For every finite occurrence sequence $\sigma = (n_1, \r_1) (n_2, \r_2)\ldots (n_k, \r_k)$ of $\N$, define\\
$\trf{\sigma}= \delta_{n_1} (\r_1)\, \delta_{n_2} (\r_2) \cdots \, \delta_{n_k} (\r_k)$.
Let $L_\r$ be the set of large steps of $\N$ that end with $(n_f,\r)$. 
We define $\trf{\N,\r} = \bigcup_{\sigma \in L_\r} \trf{\sigma}$. 
\end{definition}

Finally, we introduce the notion of equivalent negotiations and summary.

\begin{definition}
\label{def:equivalence}
Two negotiations $\N_1$ and $\N_2$ over the same set of 
agents are {\em equivalent} (\mbox{$\N_1 \equiv \N_2$}) if 
\begin{itemize}
\item[(1)] they are either both sound, or both unsound;
\item[(2)] they have the same final results; and
\item[(3)] $\trf{\N_1, \r} = \trf{\N_2, \r}$ for every final \jd{result}  $\r$.
\end{itemize}
If $\N_1$ and $\N_2$ are equivalent and $\N_2$ consists of a single atom, 
then $\N_2$ is the {\em summary} of $\N_1$. 
\end{definition}

We collect some easy consequences of the definition.
\begin{itemize}
\item A negotiation has a summary if{}f it is sound. \\
Indeed, if $\N$ is sound, then \jd{the negotiation with only one atom,
all final results of $\N$ as results of this atom, 
 and with transformers $\trf{\N,\r}$ for each of these final results $\r$} is a summary. 
Conversely, if $\N$ has a summary $\N'$, then, since negotiations 
consisting of one single atom are sound, $\N'$ is sound. Since $\N \equiv \N'$, by condition (1) $\N$ is sound.\footnote{Alternatively, we could also define summaries of 
 unsound negotiations by introducing {\em unreliable atoms} that, intuitively, may ``get stuck''. The summary of an unsound negotiation 
would then be an unreliable atom with the same summary transformers as the 
original negotiation. In this paper we do not further investigate this possibility.}
\item Summaries are unique up to \jd{the identity of the single atom}.\\
A negotiation with one single atom $n$ is completely determined by the \jd{transformers} of $n$ (observe that if two atoms have the same \jd{transformers} then they necessarily have the same sets of agents and results).
So two one-atom negotiations whose atoms have the same \jd{transformers}  are equivalent, \jd{although the two single atoms need not be identical}.
\item Equivalence of negotiations is a congruence with respect to 
substitution: if in a negotiation $\N$ we replace a subnegotiation ${\cal M}$ by an equivalent negotiation ${\cal M'}$,
then the resulting negotiation $\N'$ is equivalent to $\N$. \\
The formal definitions of subnegotiation and ``replacing a negotiation 
by an equivalent one'' are the expected ones, and the proof is easy but laborious, and we omit it. \jd{Without} condition (2) in Definition \ref{def:equivalence} 
the congruence property does not hold. Indeed, without condition (2) the notion of substituting a negotiation by an equivalent one is not 
even well defined. 

\item While Definition \ref{def:equivalence} preserves soundness in a 
way that makes substitutions possible, \jd{there are behavioral aspects}
that are not preserved under equivalence. 
In particular, \je{a negotiation with infinite behaviors may be equivalent to a negotiation having none}.
\end{itemize}

\subsection{Deciding soundness}
The soundness problem consists of deciding if a given negotiation is
sound. It can be solved with the help of the reachability graph. 

The {\em reachability graph} of a negotiation $\N$ has all markings reachable from $\vx_0$ as vertices, and an edge from $\vx$ to $\vx'$ whenever $\vx \by{(n,\r)} \vx'$. To decide soundness we can (1) compute the reachability graph of $\N$
and (2a) check that every atom appears at some arc, and (2b) that, for every reachable marking $\vx$, there is an occurrence sequence $\sigma$
such that $\vx \by{\sigma} \vx_f$. 

Step (1) needs exponential time in the number of atoms, and steps (2a) and (2b) are polynomial 
in the size of the reachability graph. So the algorithm is single exponential in 
the number of atoms. Appendix \ref{app:complex} shows that this cannot be easily avoided, 
because the problem is PSPACE-complete, and co-NP-hard and in DP for acyclic negotiations
\jd{(a} language $L$ is in the class DP if there exist languages $L_1$ in NP and $L_2$ \jd{in}
co-NP such that $L=L_1 \cap L_2$ \cite{PapadimitriouY82}).

\begin{theorem}
\label{thm:complexity}
The soundness problem is PSPACE-complete. For acyclic negotiations, the problem is 
co-NP-hard and in DP (and so at level $\Delta^P_2$ of the polynomial hierarchy).
\end{theorem}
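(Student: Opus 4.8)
\begin{proof*}[Proof sketch]
The plan is to treat the two membership claims and the two hardness claims separately, using throughout the observation that soundness depends only on the control structure $(N,n_0,n_f,\trans)$: since every transformer is left-total, each result of an enabled atom can always occur, so the internal states are irrelevant for conditions (a) and (b). It therefore suffices to reason about the (exponentially large) reachability graph, whose vertices are markings $\vx\colon\agents\to 2^N$ of polynomial size.

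\emph{Membership in PSPACE.} I would invoke Savitch's theorem (NPSPACE $=$ PSPACE). For condition (a) I check each atom $n$ in turn: guess a run of small steps from $\vx_0$, keeping only the current marking, and accept if a marking enabling $n$ is reached; this is a reachability test in the reachability graph and runs in polynomial space. For condition (b) I first note that it is equivalent to ``from every reachable marking $\vx$, the final marking $\vx_f$ is reachable.'' Its negation is decided by guessing a marking $\vx$, verifying $\vx_0 \by{\sigma} \vx$ for some run $\sigma$ (a PSPACE reachability test), and verifying that $\vx_f$ is \emph{not} reachable from $\vx$ (again a PSPACE computation, as PSPACE is closed under complement). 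Hence both conditions, and therefore soundness, lie in PSPACE.

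\emph{PSPACE-hardness.} I would reduce from membership of a word $w$ in the language of a fixed polynomial-space Turing machine $M$, a PSPACE-complete problem. The idea is to encode a configuration of $M$ on $w$ in a marking: one agent per tape cell, whose currently enabled atom codes that cell's symbol, plus a control agent coding the head position and state. Each transition of $M$ is implemented by a gadget of atoms that synchronizes the control agent with the scanned cell agent---so that enabledness reads their positions---and, via the result's $\trans$, writes the new symbol, moves the head and updates the state. The gadgets are arranged so that $n_f$ becomes reachable exactly when $M$ reaches its accepting configuration, while every other halting configuration yields a marking from which $\vx_f$ is unreachable; then $\N_{M,w}$ is sound iff $M$ accepts $w$. \emph{I expect the main obstacle to be here:} forcing the two soundness conditions to capture acceptance exactly. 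In particular, condition (a) is delicate, since atoms belonging to transitions that the actual computation never takes must still be made reachable by some auxiliary run, without thereby opening a new route to $\vx_f$ from a genuine dead end; getting this bookkeeping right, so that unsoundness coincides precisely with non-acceptance, is the crux of the argument.

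\emph{The acyclic case.} Here every agent reaches each atom at most once, so each atom occurs at most once in any occurrence sequence; hence all occurrence sequences are finite and of length at most $|N|$. For DP-membership I would exhibit soundness as the conjunction of an NP predicate and a coNP predicate. Condition (a) is in NP: a certificate gives, for each atom $n$, a short run from $\vx_0$ reaching a marking that enables $n$, all of which is guessed and verified in polynomial time. For condition (b), acyclicity makes it equivalent to ``$\N$ has no reachable deadlock'': every maximal occurrence sequence is finite and ends in a marking enabling no atom, which is either $\vx_f$ (a large step) or a deadlock violating (b). The negation ``a reachable deadlock $\neq \vx_f$ exists'' is in NP---guess the short reaching run and check that the reached marking enables no atom and differs from $\vx_f$---so condition (b) is in coNP. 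Thus soundness $=$ (a) $\wedge$ (b) lies in DP $\subseteq \Delta^P_2$. Finally, for co-NP-hardness I would reduce SAT to \emph{un}soundness: given a CNF $\phi$, I build an acyclic negotiation that first guesses a truth assignment (one two-result gadget per variable) and then evaluates $\phi$, routing a satisfying assignment into a deadlock---two agents left ready for atoms that mutually disable each other---and an unsatisfying one to $n_f$. Laying out the gadgets so that condition (a) always holds and the satisfaction sink is the only possible deadlock yields ``$\N_\phi$ unsound iff $\phi$ satisfiable,'' so soundness is co-NP-hard.
\end{proof*}
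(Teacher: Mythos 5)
Your two membership arguments coincide with the paper's: PSPACE-membership via nondeterministic polynomial-space reachability tests plus Savitch (the paper phrases the second condition through an auxiliary problem \textsc{NO-OCC}, but it is the same computation), and DP-membership for the acyclic case via exactly the same NP/coNP split, resting on the same observation that acyclicity bounds every occurrence sequence by $|N|$ and turns condition (b) into deadlock-freedom. Your co-NP-hardness reduction is also the paper's: variable agents that fix an assignment by choosing a result, clause atoms that become enabled exactly on falsified clauses, and a judge-like agent that deadlocks on a satisfying assignment. The only detail you elide is making condition (a) hold for the clause atoms; the paper handles this by assuming (w.l.o.g.) that no clause is a tautology, so every clause atom is enabled by some run.

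The one genuine gap is in the PSPACE-hardness part, and you have located it yourself but not closed it. Two things are missing. First, you reduce from an unspecified ``polynomial-space Turing machine''; if the machine is nondeterministic, soundness condition (b) demands that \emph{every} computation branch reach the accepting configuration, which is not the acceptance condition you are reducing from, so the reduction target is wrong. The paper reduces from acceptance of a \emph{deterministic} linearly bounded automaton: with one agent per tape cell, a head agent and a control agent, each reachable marking enables exactly one atom, so the negotiation has a single maximal occurrence sequence that mirrors the computation and is a large step iff the machine accepts. Determinism is the device that makes condition (b) equivalent to acceptance, and you need it (or an equivalent collapse to a unique run) for your gadget-per-transition construction to work. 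Second, the worry you raise about condition (a) -- that atoms encoding configurations the computation never visits are never enabled -- is real and is not dispelled by merely calling it the crux; a complete proof must either argue that condition (b) alone already forces the right answer in both directions of the reduction, or modify the construction so that all atoms are reachable without creating spurious escapes to $\vx_f$. As written, your sketch asserts the difficulty and stops, so the hardness half of the theorem is not yet proved.
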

\begin{proof}
See the Appendix.
\end{proof}

\subsection{A summarization algorithm}
\label{subsec:sumalg}
The {\em summarization problem} consists of computing
a summary of a given negotiation, if it is sound. We show that 
\ph{the summary}  can be  computed from the {\em labeled reachability graph}
of the negotiation. 

Let $\trf{n,\r}$ denote the transformer of the outcome $(n,\r)$. The labeled reachability
graph of a negotiation $\N$ is defined as its reachability graph, but labeling the 
edge corresponding to a step $\vx \by{(n,\r)} \vx'$ with the transformer $\trf{n,\r}$.
In other words, the labeled reachability graph has an edge from $\vx$ to $\vx'$ labeled with $\trf{n,\r}$
for every step  $\vx \by{(n,\r)} \vx'$.

Observe that the labeled reachability graph is in fact a multi-graph, 
since there may be more than one edge between two nodes.
Figure \ref{fig:reachgraph} shows a negotiation and its labeled 
reachability graph. If all the results of a negotiation have different 
names, we can identify an outcome $(n,\r)$ with the result $\r$ and further shorten $\trf{n,\r}$ to $\trf{\r}$.

\begin{figure}[h]
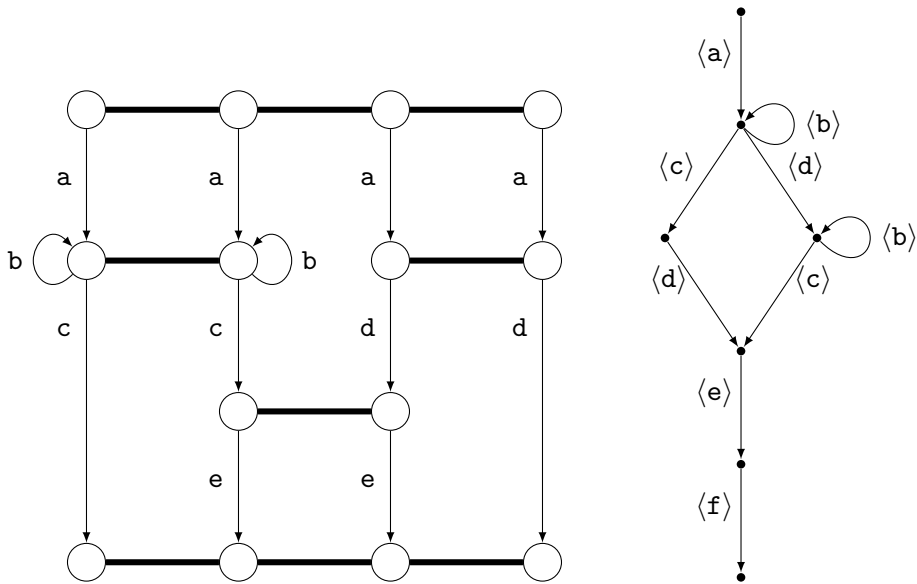

\begin{center}
\centerline{
\input{tikz/lts.tex} \qquad
\input{tikz/lts2.tex}
}
\end{center}
\caption{A negotiation and its labeled reachability graph. We assume that the only outcome of the
final atom is $(n_f, \texttt{f})$.}
\label{fig:reachgraph}
\end{figure}

The concatenation of two transformers was already introduced in Definition \ref{def:summary}.
Similarly we define the union $\tau_1 \cup \tau_2$ of two transformers, 
and the Kleene star $\tau^*$ of a transformer:
$$\begin{array}{rcl}
\tau_1 \cup \tau_2 & = & \{ (q, q') \in Q_\agents \times Q_\agents \mid (q, q') \in \tau_1 \mbox{ or } (q, q') \in \tau_2 \}\\
\tau^0 & = & \{ (q, q) \in Q_\agents \times Q_\agents \mid q \in  Q_\agents \} \\
\tau^{i+1} & = & \tau \, \tau^i \mbox{ for every $i \geq 0$ }\\
\tau^*     & = & \bigcup_{i \geq 0} \tau^i
\end{array}$$

By definition, the summary transformer $\trf{\N, \texttt{f}}$ is the union 
over all large steps $\sigma$ ending with $(n_f, \texttt{f})$
of the transformers $\trf{\sigma}$. We recall a well-known algorithm for the 
computation of this union based on state elimination (see e.g. \cite{HUM}). 
The algorithm iteratively reduces the graph to smaller graphs with the same 
summary transformers, until the graph contains only the nodes $\vx_0$ and $\vx_f$,
and one edge $\vx_0 \by{\tau_r} \vx_f$ for each final outcome $(n_f, \texttt{f})$ of
$\N$. Then we have $\tau_r = \trf{\N, \texttt{f}}$. At each step, the algorithm applies one \ph{of} the
following three {\em reduction rules}:

\begin{itemize}
\item[(1)] Replace a pair $\vx_1 \by{\tau} \vx_2$, $\vx_1 \by{\tau'} \vx_2$ of distinct 
edges, where $\vx_2 \neq \vx_f$,  by an edge $\vx_1 \by{\tau \cup \tau'} \vx_2$. 
\item[(2)] Given a self-loop $\vx \by{\tau} \vx$, replace every edge $\vx \by{\tau'} \vx'$ such that $\vx' \neq \vx$ by an edge $\vx \by{\tau^* \tau'} \vx'$, and then remove the self-loop.
\item[(3)] Given an edge $\vx' \by{\tau'} \vx$ such that $\vx' \neq \vx$ and $\vx$ has at least one successor, add for every
edge $\vx \by{\tau''} \vx''$ a {\em shortcut edge} $\vx' \by{\tau' \,\tau'' } \vx''$, 
and then remove the edge $\vx' \by{\tau} \vx$. Further, if after removing this edge the node
$\vx$ has no other incoming edges, then remove $\vx$, together with all its outgoing edges.
\end{itemize}

The algorithm applies rules (1)-(3) in phases. At each phase, it follows this
strategy: 
\begin{itemize}
\item Apply rule (1) as long as possible.
\item Apply rule (2) as long as possible. 
\item When neither rule (1) nor (2) are applicable, select a node $\vx$ different from $\vx_0$ and $\vx_f$,
and apply rule (3) to {\em all} its ingoing edges, that is, to all  edges of the form $\vx' \by{\tau'} \vx$.
(Observe that this step necessarily ends with the removal of $\vx$ and its outgoing edges.)
\end{itemize}

It is easy to see that the application of any of these rules does 
not change the summary of the graph. In particular, the idea behind rule (3) is that, in every large step, a step $\vx' \by{\tau'} \vx$ must necessarily be followed by one of the steps $\vx \by{\tau''} \vx''$. \jd{After} all shortcut edges have been added, the edge $\vx' \by{\tau'} \vx$ becomes redundant: the two
graphs with and without the edge have the same summary.

\jd{After} each phase the number of nodes decreases by one. If the original negotiation is sound 
(which implies that every
node of the reachability graph lies on a path starting at the \jd{initial marking} $\vx_0$ and ending at \jd{the final marking} $\vx_f$), then the algorithm
can only terminate with a graph containing exactly the nodes $\vx_0$ and $\vx_f$, and an edge from 
$\vx_0$ to $\vx_f$ for each \jd{final} result. If the original negotiation is not sound because \jd{from} some 
reachable marking the final marking cannot be reached, then the algorithm terminates with a graph containing 
additional nodes. If the original negotiation is not sound only because some atom can never be executed, 
then the algorithm terminates with a graph as in the sound case. 

When applied to the labeled reachability graph shown in Figure \ref{fig:reachgraph}, the algorithm reduces the graph completely because the negotiation is sound. Figure \ref{fig:reductioninter} shows some intermediate steps of the algorithm.

\begin{figure}[h]
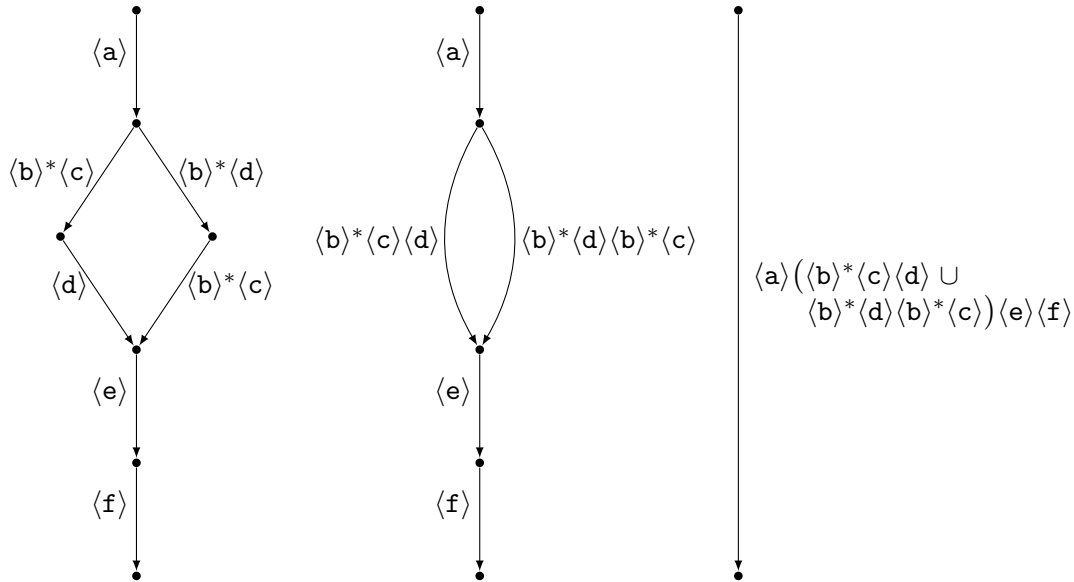

\input{tikz/lts3.tex}
\input{tikz/lts4.tex}
\input{tikz/lts5.tex}
\caption{Intermediate steps in the reduction of the graph of Figure \ref{fig:reachgraph}.}
\label{fig:reductioninter}
\end{figure}

This algorithm is simple and elegant, but it suffers from the state-explosion problem: the size of the 
reachability graph can be exponential in the size of the negotiation, and so the algorithm has
exponential worst-case time complexity. While the complexity results of 
Section \ref{sec:analysis} show that
this is unavoidable unless PSPACE=P, the main problem is that the algorithm takes exponential time for
{\em any} family of negotiations whose reachability \jd{graphs exhibit} exponential growth, even if \jd{the negotiations} have a very
simple structure. 

In the next section we present reduction rules that act directly on the negotiation diagram, {\em not} on the
reachability graph, and thus can be applied without constructing it. The rules can be applied to any negotiation,
until no reduction is possible anymore. If the resulting negotiation consists of one single atom, the summary
can be read out directly from it. Otherwise the algorithm above can be applied, with the advantage that,
since the negotiation is now smaller, the reachability graph is smaller, too. 


\section{Reduction Rules for Negotiations}
\label{sec:rules}

We introduce four \jd{reduction} rules acting on negotiation diagrams.
\jd{The rules are correct in the following sense:
The} negotiation obtained after applying a rule is equivalent to
the negotiation before applying it. In particular, this implies that the rules
preserve soundness, \jd{unsoundness} and the summary.

The first two rules are straightforward generalizations of the rules
applied in steps (1) and (2) of the summarization algorithm of Section \ref{subsec:sumalg}.

The third rule deals with a characteristic of negotiation diagrams that is not present when 
we consider their reachability graphs. Finally, the last rule is a generalization of the one 
applied in step (3), but it is far from straightforward, and we \ph{will} discuss it in detail.

\subsection{Reduction Rules}

A {\em reduction rule}, or just a rule,
is a binary relation on the set of negotiations. Given a rule $R$,
we write ${\cal N}_1 \by{R} {\cal N}_2$ for $({\cal N}_1, {\cal N}_2) \in R$.
{\jd Notice that a rule $R$ is not necessarily applicable to a given negotiation ${\cal N}_1$.  
If it is applicable, then the resulting negotiation ${\cal N}_2$ is not necessarily unique.}

We describe rules as pairs of a {\em guard} and an {\em action};
${\cal N}_1 \by{R} {\cal N}_2$ holds if ${\cal N}_1$ satisfies the guard and
${\cal N}_2$ is a possible result of applying the action to ${\cal N}_1$.

A rule $R$ is {\em correct} if its application to a negotiation {\jd always} yields an equivalent negotiation, i.e., if
${\cal N}_1 \by{R} {\cal N}_2$ implies ${\cal N}_1 \equiv{\cal N}_2$. In particular, this
implies that ${\cal N}_1$ is sound if{}f
${\cal N}_2$ is sound.


A finite sequence
$R_1 \ldots R_k$ \jd{of rules} is a {\em reduction sequence for a negotiation $\N$} if there are
$\N_1, \ldots, \N_k$ such that $\N \by{R_1} \N_1 \by{R_2} \cdots \by{R_k} \N_k$. We say that the sequence
reduces $\N$ to $\N_k$. Infinite reduction sequences are defined similarly\footnote{Infinite reduction sequences are of course undesirable. We define them, \jd{but} show how to avoid them.}. 

 \jd{A set of} \ph{ correct } \jd{rules} ${\cal R}$ is {\em complete with respect to a class of negotiations} if every sound
negotiation in the class can be reduced to a negotiation consisting of a single atom \jd{by a finite sequence of reductions in ${\cal R}$}.

\jd{In the following we introduce the reduction rules for negotiations. For convenience, we assume in this section that a rule is applied to a negotiation $\N = (N, n_0, n_f, \trans )$.
The actions are formulated as assignments to the components of $\N$, so that the reduced negotiation is the one obtained after performing these assignments.}



\subsection{Merge and iteration rules}

\paragraph{Merge rule.}
The merge rule merges results with identical transition functions into one.

\reductionrule{Merge rule}{def:merge}{
	$\N$ contains an atom $n$, $n \neq n_f$ with two distinct results $\r_1, \r_2 \in R_n$ such that $\trans(n,p,\r_1) = \trans(n,p,\r_2)$ for every $p \in P_n$.
}{
	\begin{enumerate}[(1)]
	\itemsep-0.5ex
	\item $R_n \leftarrow (R_n \setminus \{\r_1, \r_2\}) \cup \{\r\}$,
	where $\r$ is a fresh name.
	\item For all $p \in P_n$: $\trans(n,p,\r) \leftarrow \trans(n,p,\r_1)$.
	\item $\delta_n (\r) \leftarrow \delta_n (\r_1) \cup \delta_n (\r_2)$.
	\end{enumerate}
}

\begin{figure}[h]
\centering
\input{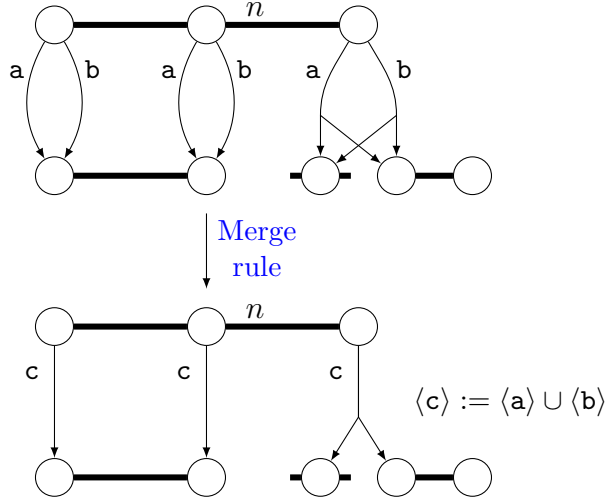}
\caption{Illustration of the merge rule}
\label{fig:mergeExample}
\end{figure}

An example \jd{is} sketched in Figure \ref{fig:mergeExample}. 
The results \texttt{a} and \texttt{b} are merged into one result \texttt{c} with an appropriate transformer.

Observe that the guard forbids application \jd{of the merge rule} to the final atom; by the definition of equivalence,
merging results of the final atom yields a non-equivalent negotiation. Notice further that the 
result of applying the rule is always a negotiation, according to Definition \ref{def:neg}.

\paragraph{Iteration rule.}
Loosely speaking, the iteration rule removes self-loops, i.e., \jd{it removes} results of an atom
after which {\em all} parties are  \jd{ready} to take part only in the same atom again.
The rule changes the transformers of all other results of that atom.

\reductionrule{Iteration rule}{def:iteration}{
	$\N$ contains an outcome $(n, \r)$ such that $\trans(n,p,\r)= \{n\}$ \jd{for} every party $p$ of $n$.
}{
	\begin{enumerate}[(1)]
	\itemsep-0.5ex
	\item $R_n \leftarrow R_n \setminus \{\r\}$.
	\item For every $\r' \in R_n $: $\delta_n (\r') \leftarrow \delta_n (\r)^*\: \delta_n (\r')$.
	\end{enumerate}
}

\begin{figure}[h]
\centering
\input{tikz/ruleExamples/iteration}
\caption{Illustration of the iteration rule}
\label{fig:iterationExample}
\end{figure}

For an example, consider Figure \ref{fig:iterationExample}. 
The result \texttt{a} for which all agents stay in $n$ is removed and the transformer of result \texttt{b} is changed. 

\je{Observe that the application of the rule always yields a negotiation. }
\jd{Indeed, before applying the rule every atom lies on some path of the negotiation graph 
leading from the initial to the final atom; since the rule only removes a self-loop of this graph, 
the same property holds after the rule is applied. 
}

The correctness of the merge and iteration rules is an immediate consequence of the definitions,
and we state it without proof.

\begin{theorem}
The merge rule and \jd{the} iteration rule  are correct.
\end{theorem}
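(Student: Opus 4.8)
The plan is to verify, for a single application of each rule, say $\N \by{R} \N'$, the three conditions of Definition~\ref{def:equivalence}: that $\N$ and $\N'$ are both sound or both unsound, that they have the same final results, and that $\trf{\N,g} = \trf{\N',g}$ for every final result $g$. The common tool is an explicit correspondence between the occurrence sequences of $\N$ and those of $\N'$. Condition~(2) is immediate for both rules: their guards forbid application to $n_f$ (for the iteration rule because $\trans(n_f,p,\r)=\emptyset\neq\{n\}$ by condition~(2) of Definition~\ref{def:neg}), so $R_{n_f}$ and the transformers of $n_f$ are untouched and the final results coincide. The work is therefore in the soundness condition~(1) and the summary condition~(3), and in both cases the engine is that concatenation of transformers distributes over union on both sides, a fact immediate from the definitions in Section~\ref{subsec:sumalg}.

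For the merge rule I would use the map on occurrence sequences that replaces every occurrence of $(n,\r_1)$ or $(n,\r_2)$ by $(n,\r)$ and leaves all other outcomes unchanged. Since $\trans(n,p,\r)=\trans(n,p,\r_1)=\trans(n,p,\r_2)$ for every $p\in P_n$, a small step $\vx \by{(n,\r_i)} \vx'$ of $\N$ and the step $\vx \by{(n,\r)} \vx'$ of $\N'$ reach the same marking $\vx'$, and all steps not touching $n$ are literally identical. Hence $\N$ and $\N'$ have the same reachable markings, each enabling the same atoms and admitting a path to $\vx_f$ exactly when the other does, which settles soundness. For the summary I would group the large steps of $\N$ by their image: a large step $\sigma'$ of $\N'$ with $j$ occurrences of $(n,\r)$ has exactly the $2^{j}$ preimages obtained by choosing $\r_1$ or $\r_2$ independently at those positions. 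Because $\delta_n(\r)=\delta_n(\r_1)\cup\delta_n(\r_2)$, distributing the union through the concatenation defining $\trf{\sigma'}$ yields $\trf{\sigma'}=\bigcup_\sigma\trf{\sigma}$ over these preimages; unioning over all large steps ending in $(n_f,g)$ gives $\trf{\N',g}=\trf{\N,g}$.

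For the iteration rule I would use the map $d$ that deletes every occurrence of the removed self-loop outcome $(n,\r)$. The key observation is that once $(n,\r)$ fires, every party of $n$ is ready only for $n$, so the marking is fixed until some other result $\r'$ of $n$ fires, and since $n\neq n_f$ reaching $\vx_f$ forces such an escape eventually (the ports of $n$ can only be emptied through $n_f$). Consequently $d$ sends every initial occurrence sequence of $\N$ to one of $\N'$, in particular every large step to a large step, while every initial occurrence sequence of $\N'$ is already one of $\N$. Soundness then transfers in both directions, because a ``looped'' marking enables only a subset of the atoms enabled at the marking just before the loop, and every continuation to $\vx_f$ available after the loop is available before it. For the summary, the $d$-preimages of a large step $\sigma'$ of $\N'$ insert $k\geq 0$ copies of $(n,\r)$ before each $n$-outcome of $\sigma'$; since the transformer of $\r'$ in $\N'$ is $\delta_n(\r)^*\,\delta_n(\r')=\bigcup_{k\geq 0}\delta_n(\r)^{k}\,\delta_n(\r')$, distributivity again gives $\trf{\sigma'}=\bigcup_\sigma\trf{\sigma}$ and hence $\trf{\N',g}=\trf{\N,g}$.

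The main obstacle is precisely this last step for the iteration rule, because the self-loop occurrences need not be adjacent to the escaping outcome: between two firings of $(n,\r)$ a large step may interleave occurrences of atoms whose parties are disjoint from $P_n$. To collapse such a block into a clean power $\delta_n(\r)^{k}$ and recover the factorization $\delta_n(\r)^*\,\delta_n(\r')$, I would use that transformers with disjoint support commute under concatenation: $\delta_n(\r)$ is a $P_n$-transformer and each interleaved transformer is a $P_m$-transformer with $P_m\cap P_n=\emptyset$, so they may be reordered without changing the composed relation. This commutativity lemma, rather than the distributivity bookkeeping, is where the actual content of the otherwise routine correctness argument lies; the same remark makes the marking correspondence above exact, since disjoint atoms fired inside a loop can be equivalently fired either before or after it.
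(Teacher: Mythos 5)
The paper offers no argument to compare against: it declares the correctness of both rules ``an immediate consequence of the definitions'' and states the theorem without proof. Your merge-rule argument is correct and is surely what the authors have in mind -- identical transition functions give identical markings, each large step of $\N'$ with $j$ occurrences of $(n,\r)$ has exactly $2^j$ preimages, and $\trf{\N',g}=\trf{\N,g}$ follows from distributivity of concatenation over union. Your handling of condition (2) and of soundness for both rules is also fine, and you are right that the iteration rule is the only place where anything non-trivial happens.

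The gap is in the commutativity lemma you lean on for the iteration rule. The definition of a $P$-transformer only forbids it from \emph{changing} the states of agents outside $P$; it does not forbid the relation from \emph{depending} on them. Hence a $P_n$-transformer and a $P_m$-transformer with $P_n\cap P_m=\emptyset$ need not commute: with agents $a,b$ and $Q_a=Q_b=\mathbb{N}$, take $\delta_n(\r)$ to be the $\{a\}$-transformer ``$a:=b$'' and $\delta_m(s)$ the $\{b\}$-transformer ``$b:=b+1$''; then $\delta_n(\r)\,\delta_m(s)\neq\delta_m(s)\,\delta_n(\r)$. Building a sound deterministic negotiation around such transformers, a large step of $\N$ of the shape $\ldots(n,\r)\,(m,s)\,(n,\r')\ldots$ contributes pairs to $\trf{\N,g}$ that no large step of the reduced $\N'$ reproduces, because in $\N'$ the powers $\delta_n(\r)^k$ are forced to sit immediately before $\delta_n(\r')$; so the inclusion $\trf{\N,g}\subseteq\trf{\N',g}$ -- precisely the direction your reordering is meant to establish -- fails under the literal definitions (the converse inclusion, obtained by inserting $(n,\r)^k$ adjacent to each escaping outcome, is unproblematic). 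The repair is to assume, as is clearly intended and as every example in the paper satisfies, that an atom's transformer is also \emph{independent} of the states of its non-parties; with that locality property disjoint-support transformers do commute and your proof closes. You have located exactly the point that ``immediate'' glosses over, but the commutativity must be imposed as an additional property of transformers rather than derived from the $P$-transformer definition as given.
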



\subsection{Useless arc rule}

\jd{If the graphical representation of a negotiation
contains a hyperarc for some agent leading from an atom $n$ to 
atoms $\{n_1, \ldots, n_k\}$,  we say that this hyperarc consists of $k$ different arcs.}

\begin{definition}
An {\em arc} of \ph{the} negotiation $\N$ is a tuple $(n,p,\r,n')$ such that
$(n,p,\r) \in T(N)$ and $n' \in \trans(n, p,\r)$. 
\end{definition}

\begin{figure}
\centering
\input{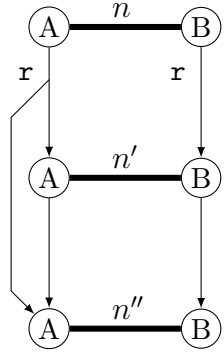}
\caption{Example of a useless arc}
\label{fig:example_useless}
\end{figure}

Some of these arcs may be ``useless''. Intuitively, an arc is useless if no occurrence sequence makes a token flow along it.
\jd{More precisely, an arc $(n,p,\r,n'')$ is useless if in every initial occurrence sequence,  atom $n''$ is never enabled after the occurrence of the outcome $(n,\r)$, and $n''$ remains disabled unless another atom involving agent $p$ has occurred}.

Consider \jd{as an} example the negotiation in Figure \ref{fig:example_useless}. The arc for agent A from $n$ to $n''$ can never
 be ``used'', that is, after outcome $(n,\r)$ agent A is ready to engage \jd{in $n'$} and \jd{also} in $n''$, but $n''$ can not happen directly after $n$ 
because after $(n, \r)$ agent B is only ready to engage in $n'$. Thus $n'$ must happen before $n''$, and agent A is also required \jd{for} $n'$.

\je{The useless arc rule deletes useless arcs, which obviously preserves the occurrence sequences of the negotiation}. 
However, in general useless arcs may be very difficult to identify. \jd{So, instead of requiring in the guard of the rule 
that an arc is useless, we require a stronger, but easier to check condition}. 

\je{
\reductionrule{Useless arc rule}{def:useless}{
\begin{itemize}	
	\item
	There are three distinct arcs $(n,p,\r, n')$, $(n,p,\r, n'')$,  $(n,q,\r, n')$, such that $\trans(n,q,\r) = \{n'\}$.
      \item 
    The pre-negotiation obtained by removing the arc $(n,p,\r, n'')$ (see the action of this rule) is a negotiation, i.e., 
    satisfies the conditions of Definition \ref{def:neg}.
    \end{itemize}
}{
	$\trans(n,p,\r) \leftarrow \trans(n,p,\r) \setminus \{n''\}$.
}
}

\begin{figure}[h]
\centering
\input{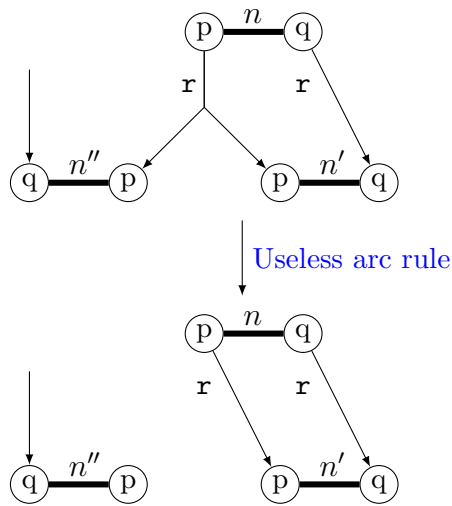}
\caption{Illustration of the useless arc rule}
\label{fig:uselessExample}
\end{figure}

Intuitively (see Figure \ref{fig:uselessExample}), after choosing
$\r$ at atom $n$, agent $p$ is ready to engage in both $n'$ and $n''$. However, another party $q$ of both $n'$ and
$n''$ is only willing to engage in $n'$. \jd{Then} agent $p$ \jd{can} never engage directly in $n''$ after
$n$, and so this part of the hyperarc can be removed without changing the behavior.

\jd{To see why the second part of the guard is necessary, recall that in a negotiation every atom lies on some path 
from the initial to the final atom. Without the second part, the result of applying the rule may no longer \ph{satisfy} this condition. 
For example, it is easy to \ph{define} a negotiation in which all paths from the initial to the final atom containing $n''$ 
traverse the arc $(n,p,\r, n'')$. We will show later that for the \ph{negotiations} of the classes of negotiations 
considered in this paper, the second part of the guard is always true and can be omitted.}

For the correctness proof we first specify what it means for an arc to
occur in an occurrence sequence.

\begin{definition}
An arc $(n,p,\r,n')$ of a negotiation {\em occurs} in an occurrence sequence
 $\sigma$ if $\sigma = \sigma_1 \, (n,\r) \, \sigma_2 \, (n', \r') \, \sigma_3$ for
some outcome $\r'$ of $n'$, and agent $p$ does not participate in any outcome of $\sigma_2$.
\end{definition}

\begin{lemma}
\label{lem:uselessRemoval}
\je{Let $(n,p,\r,n')$ be an arc of a negotiation $\N_1$ that does not occur in any initial occurrence sequence and belongs to a hyperarc with more than one arc. If
the result of removing this arc is a negotiation $\N_2$ then $\N_2$ is equivalent to $\N_1$.}
\end{lemma}
\begin{proof}
\jd{Assume that $\N_2$ is a negotiation. We proceed by showing that $\N_2$ is equivalent to $\N_1$.}

We claim that $\N_1$ and $\N_2$ have the same occurrence sequences. Since $\trans_2(n'',p'',\r'') \subseteq \trans_1(n'',p'',\r'')$ for every $(n'',p'',\r'') \in T(N_1)$,
every occurrence sequence of $\N_2$ is also an occurrence sequence of $\N_1$.
Since the transition functions only differ in $(n,p,\r)$, and the arc $(n,p,\r,n')$ never occurs in an initial
occurrence sequence of $\N_1$, every initial occurrence sequence of $\N_1$ is an initial occurrence sequence of $\N_2$, and the claim is proved.

A first immediate consequence of the claim is that an atom can occur in $\N_1$ if{}f it can occur in $\N_2$. 
It remains to show that every occurrence sequence of $\N_1$ can be extended to a large step if{}f the same holds for $\N_2$. Since 
both negotiations have the same occurrence sequences, we only have to show that an occurrence sequence is a large step of $\N_1$ if{}f it is a large step of $\N_2$. For this we observe that the markings $\vx_1$ and $\vx_2$ of $\N_1$ and $\N_2$ reached by executing an occurrence sequence $\sigma$ in both $\N_1$ and $\N_2$ can only differ with respect to agent $p$ and atom $n''$: If $\vx_1 \neq  \vx_2$, then 
$n',n'' \in  \vx_2(p)$, whereas $n' \in \vx_1(p)$ but $n'' \notin \vx_2(p)$. If $\sigma$ is a large step of $\N_1$ or $\N_2$ then $\vx_1(p) = \emptyset$ or $\vx_2(p) = \emptyset$, and so $\vx_1 = \vx_2$.
\end{proof}

\begin{theorem}
The useless arc rule is correct.
\end{theorem}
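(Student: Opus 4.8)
The plan is to reduce the claim to Lemma~\ref{lem:uselessRemoval}. Suppose $\N_1 \by{R} \N_2$, where $R$ is the useless arc rule. Then $\N_2$ arises from $\N_1$ by deleting the arc $a = (n,p,\r,n'')$, and the guard supplies three distinct arcs $(n,p,\r,n')$, $(n,p,\r,n'')$ and $(n,q,\r,n')$ with $\trans(n,q,\r) = \{n'\}$. To apply the lemma to $a$ I must verify its three hypotheses: that $a$ lies on a hyperarc with more than one arc, that $\N_2$ is a negotiation, and that $a$ occurs in no initial occurrence sequence. The first holds because the arcs $(n,p,\r,n')$ and $(n,p,\r,n'')$ are distinct, so $n' \neq n''$ and $\{n',n''\} \subseteq \trans(n,p,\r)$. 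The second is precisely the second clause of the guard (removal again yields a structure satisfying Definition~\ref{def:neg}). So everything reduces to the third hypothesis.

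To show that $a = (n,p,\r,n'')$ occurs in no initial occurrence sequence I would argue by contradiction. Suppose some initial occurrence sequence has the shape $\sigma = \sigma_1\,(n,\r)\,\sigma_2\,(n'',\r'')\,\sigma_3$ for some result $\r''$ of $n''$, with $p$ participating in no outcome of $\sigma_2$, as required by the definition of an arc occurring. Let $\vx$ be the marking reached after the prefix $\sigma_1\,(n,\r)$. By the small-step semantics $\vx(q) = \trans(n,q,\r) = \{n'\}$, so immediately after $(n,\r)$ agent $q$ is ready for $n'$ and for nothing else.

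I then track agents $q$ and $p$. Since $q$ is a party of $n''$ and $n''$ is enabled just before $(n'',\r'')$, agent $q$ must be ready for $n''$ at that point; as $n'' \neq n'$, the readiness set of $q$ must have changed since the marking $\vx$. A readiness set changes only when its agent participates in an outcome, and right after $(n,\r)$ the only atom $q$ is ready for is $n'$; hence the first outcome of $\sigma_2$ in which $q$ participates is an occurrence of $n'$. But $p$ is a party of $n'$, so $p$ participates in that same occurrence, which lies inside $\sigma_2$ --- contradicting the hypothesis on $\sigma_2$. Therefore $a$ occurs in no initial occurrence sequence, and Lemma~\ref{lem:uselessRemoval} gives $\N_1 \equiv \N_2$, which is the asserted correctness.

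The step that tracks $q$ is where I expect the real work to sit, since it rests on the two party-membership facts $q \in P_{n''}$ and $p \in P_{n'}$ that encode the geometry of Figure~\ref{fig:uselessExample}: $p$ and $q$ are common parties of $n'$ and $n''$. I would record these memberships at the outset ($p \in P_{n'}$ from $n' \in \trans(n,p,\r)$, and $q \in P_{n''}$ from the configuration guaranteed by the guard), and then exploit the singleton $\trans(n,q,\r) = \{n'\}$ as the bottleneck: it pins $q$ to $n'$ until $q$ moves, and any move of $q$ drags $p$ along through $n'$, so $n''$ cannot become enabled while $p$ stands still.
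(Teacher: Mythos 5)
Your proof is correct and follows essentially the same route as the paper's: reduce to Lemma~\ref{lem:uselessRemoval} and show that the removed arc $(n,p,\r,n'')$ occurs in no initial occurrence sequence, because after $(n,\r)$ agent $q$ is pinned to $n'$, so $n'$ must occur before $n''$ can become enabled, and $p$ participates in $n'$. Your write-up is in fact more explicit than the paper's two-line argument, in particular in isolating the two membership facts $p \in P_{n'}$ and $q \in P_{n''}$ on which the argument rests (the latter appears in the informal discussion and in Figure~\ref{fig:uselessExample} but not in the literal text of the guard, so it is worth stating, as you do).
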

\begin{proof}
\je{We adopt the notations of Definition \ref{def:useless} and show that an arc removed by the rule satisfies the conditions of Lemma \ref{lem:uselessRemoval}. } 
After every occurrence of $(n, \r)$, agent $q$ is only ready to engage in $n'$. 
Therefore, after an occurrence of $(n, \r')$ the atom $n''$ must occur before $n'$ can occur. Thus the arc 
$(n, p, \r, n')$ can never occur in any initial occurrence sequence. 
\end{proof}





\subsection{Shortcut rule}

\jd{The  shortcut rule is inspired by the rule used in step (3) of
the summarization algorithm presented in Section \ref{subsec:sumalg}.} Intuitively, the goal of the rule is to
remove an outcome $(n, \r)$ such that, for
every large step $\sigma$ containing $(n, \r)$ \jd{of the original negotiation},
there is a large step $\sigma'$  \jd{of the reduced negotiation that does not contain $(n, \r)$,
such that}  $\trf{\sigma} = \trf{\sigma'}$.
Moreover, sometimes the rule must allow us to remove some successor atom of $(n, \r)$;
otherwise we  never reduce the number of atoms. Finally, this
must be achieved while preserving the sound \jd{or the} unsound character of a negotiation.

As a warm-up we first define the rule for the simple case of negotiations with
only one agent. Then we illustrate the problems involved in extending the rule to
arbitrary negotiations. Finally, we introduce the rule for the general case and provide its
correctness proof.

\jd{In the one-agent case} we can consider atoms and outcomes as nodes
and edges of a graph (see e.g. Figure \ref{fig:shortcut_easy1}).
We write $n \by{\r} n'$ instead of $\trans(n,p,\r) = \{n'\}$, where $p$ is the single agent of the negotiation.
Informally, the rule states:

\begin{itemize}
\item[(1)] If $n \by{\r} n'$ and $n' \neq n_f$, then
\begin{itemize}
\item add for every $n''$ such that $n' \by{\r'} n''$ a new {\em shortcut} $n \by{\r''} n''$
with transformer $\trf{n,\r''} = \trf{n,\r}\trf{n', \r'}$, and then remove $n \by{\r} n'$; 
\item if $n'$ has no other incoming edges, then remove $n'$, together with its outgoing edges.
\end{itemize}
\item[(2)] If $n \by{\r} n_f$, $\r$ is the only result of $n$ and $n_f$ has no other incoming edges, then
\begin{itemize}
\item add for every result $\r'$ of $n_f$  a new result $\r''$ of $n$  with $\trans (n,p,\r'') = \emptyset$
and transformer $\trf{n,\r''} = \trf{n,\r}\trf{n', \r'}$, and then remove $n \by{\r} n'$; 
\item remove $n_f$;
\item consider $n$ as the new final atom.
\end{itemize}
\end{itemize}

For example, in the negotiation of Figure \ref{fig:shortcut_easy1} we can
apply the rule to $n_2 \by{d} n_3$, obtaining the negotiation of
Figure \ref{fig:shortcut_easy3}, and then to $n_1 \by{c} n_3$, yielding
the negotiation of Figure \ref{fig:shortcut_easy5}. 

To see why we require $n'\neq n_f$ \jd{in the guard of part (1) of the rule}, 
\jd{consider} Figure \ref{fig:shortcut_easy5}. Without this condition, the application of the 
rule to $n_1 \by{\mathtt{c}'} n_f$ replaces the result $\mathtt{c}'$ by a new result, say $\mathtt{c}''$, 
such that $\trans(n_1,p,\mathtt{c}'') = \emptyset$ (graphically, the arc from $n_1$ to $n_f$ disappears).
The result is not even a negotiation, because in a negotiation all atoms but the final one must have at least 
one outgoing edge.

Part (2) of the rule allows us to apply \ph{the rule} to $n_3 \by{\mathtt{e}} n_f$ in Figure \ref{fig:shortcut_easy1}. 
Since $n_f$ has no other incoming edge, it is removed. Atom $n_3$ loses its only outgoing edge, and becomes 
the new final atom. To preserve equivalence, the results of $n_f$ are transferred to the new final atom $n_3$.


\begin{figure}[ht]
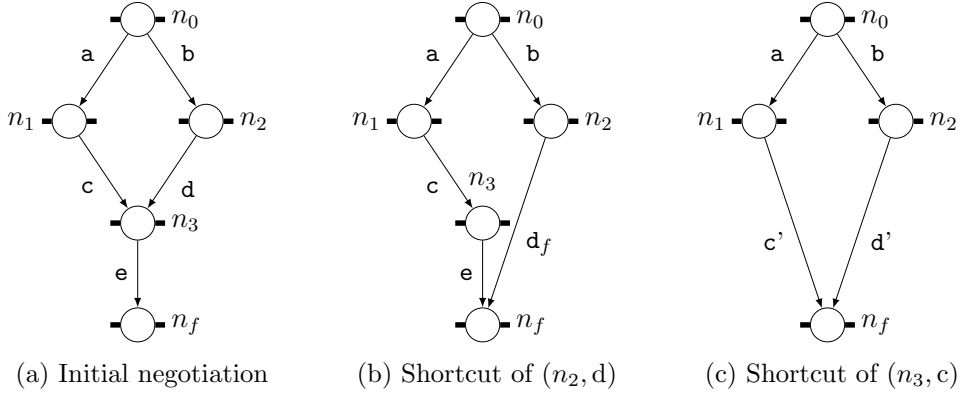

\centering
\begin{subfigure}[c]{0.3\textwidth}
	\centering
	\scalebox{0.9}{
		\input{tikz/shortcut_easy/shortcut_easy1}
	}
	\subcaption{Initial negotiation}
	\label{fig:shortcut_easy1}
\end{subfigure}
\begin{subfigure}[c]{0.3\textwidth}
	\centering
	\scalebox{0.9}{
		\input{tikz/shortcut_easy/shortcut_easy3}
	}
	\subcaption{Shortcut of $(n_2, {\rm d})$}
	\label{fig:shortcut_easy3}
\end{subfigure}
\begin{subfigure}[c]{0.3\textwidth}
	\centering
	\scalebox{0.9}{
		\input{tikz/shortcut_easy/shortcut_easy5}
	}
	\subcaption{Shortcut of $(n_3, {\rm c})$}
	\label{fig:shortcut_easy5}
\end{subfigure}
\caption{Shortcut for one agent}
\label{fig:shortcut_easy}
\end{figure}

\begin{figure}[ht]
\centering
	\centering
	\scalebox{0.8}{
		\input{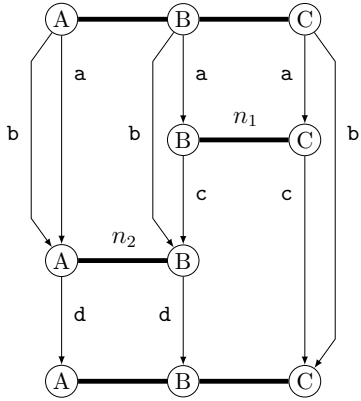}
        }
        \label{fig:short2_1}
\caption{Searching for a shortcut rule}
\label{fig:short2}
\end{figure}

In the case of multiple agents, \jd{defining} a correct shortcut rule is more complicated.
Consider the negotiation of Figure \ref{fig:short2}. 
\jd{The large steps of this negotiation are} 
$$(n_0,\mathtt{a}), (n_1, \mathtt{c}), (n_2,\mathtt{d}), (n_f,\mathtt{f})$$ where \ph{$ \mathtt{f}$} is a final result, and 
$$(n_0,\mathtt{b}), (n_2,\mathtt{d}), (n_f,\mathtt{f}).$$
\jd{So, if $(n_1, \mathtt{c})$ occurs, it is succeeded by $(n_2,\mathtt{d})$. However,}
there is no \jd{obvious}
way to remove the outcome $(n_1, \mathtt{c})$ by shortcutting it with its
``successor'' $(n_2, \mathtt{d})$, \jd{in particular because the latter has an additional participant}.
For this reason, we 
\jd{apply the shortcut rule only} to outcomes $(n, \r)$ such
that, after executing them, an atom $n'$ is enabled, and remains enabled until it occurs.
\jd{This condition apparently depends on the markings enabling $(n, \r)$.
Since there is no efficient way to identify the markings reachable from the initial marking 
and since the conditions for applying the rule must be easy to verify, 
we require that any marking, reachable or not, that enables $(n, \r)$, subsequently enables $n'$ until it occurs.}
For example, in Figure \ref{fig:short2} we can
shortcut $(n_0, \texttt{a})$, because after its occurrence the atom $n_1$ is  enabled,
and can 
become disabled only by the occurrence of $n_1$. 
However, we cannot shortcut $(n_1,\texttt{c})$, because the marking that places a token on each of the \jd{ingoing arcs} of $n_1$ does enable $n_1$, but after $(n_1,\texttt{c})$ occurs no other atom is enabled.

\jd{The following simple structural condition formulates an obvious sufficient condition of the property mentioned before. }



\begin{definition}[Unconditionally enables]
\label{def:uncond}
An outcome $(n,\r)$  {\em unconditionally enables} an atom $n'$
if $P_n \supseteq P_{n'}$ and $\trans(n, p, \r) = \{n'\}$ for every $p \in P_{n'}$.
\end{definition}

We now consider a second problem. If a negotiation is sound \jd{before application of the shortcut 
rule to an outcome $(n, \r)$ and an atom $n'$}, and after 
the shortcut the atom $n'$ \jd{is not enabled at any reachable marking}, then we must remove $n'$, otherwise
the negotiation becomes unsound. 
For example, if in the sound negotiation of Figure \ref{fig:shortcut_easy} we do not 
remove $n_3$ after shortcutting $n_1 \by{c} n_3$, we are left with an unsound negotiation.
In the one-agent case, the atom $n'$ can be enabled after the shortcut
if{}f it still has incoming edges. However, in the case of multiple agents, 
deciding whether $n'$ must  be removed or not can be much harder. Consider the sound negotiation of
Figure \ref{fig:shortcut_intro1}. The only result of $n_2$
enables $n_3$ unconditionally. After shortcutting this result 
the atom $n_3$ can never become enabled, and must be removed. However, after the shortcut
all ports of $n_3$ still have incoming edges,
\jd{ and so this criterion does not suffice to decide whether the atom must be removed}.
In general, deciding if $n'$ must be removed 
is an intractable problem \jd{(observe that it is equivalent to deciding whether these arcs are useless)}. 
\jd{Fortunately, it suffices to consider 
two cases for which the decision is simple, and to apply the shortcut rule only in these cases}. 

First, as in the one-agent case, if after the shortcut no edge leads to \jd{a port of} $n'$, then 
$n'$ cannot be enabled, and must be removed. \jd{This is the case if  all ports have, before applying the rule,
only one ingoing arc, namely the one from $n$ labeled by $\r$. In this case we say that $(n,\r)$ has exclusive access to $n'$.}

\begin{definition}[Exclusive access]
\label{def:depends}
An outcome  $(n, \r)$ has {\em exclusive access} to an atom $n'$ if,  for each $p \in P_{n'}$, 
$n'\in \trans(n,p,\r)$  and $n'\notin \trans(n'',p,\r'')$ for $(n,\r)\neq (n'',\r'')$.
\end{definition}
\jd{So, if after the shortcut \ph{all} ports of $n'$ have no ingoing arc, we can remove $n'$.}
Assume now that after the shortcut some {\em deterministic} arc leads to a port of $n'$ \jd{(that is,
an arc which is not a proper hyperarc)}. Say this arc corresponds to an outcome $(n'', \r'')$.
As we shall see, 
if the negotiation was sound before the shortcut, then $n''$ can still be enabled after it,
\jd{the outcome $(n'', \r'')$ can occur, and $n'$ is the only atom removing the token 
on the above deterministic arc. So keeping $n'$ preserves the soundness in this case.
If the negotiation was unsound before, then keeping $n'$ will obviously not make the resulting negotiation sound.}
So in both cases, if $n'$ is not removed
the sound/unsound character of the negotiation is preserved.

\jd{If $(n,\r)$ unconditionally enables $n'$ and has exclusive access to $n'$ then a shortcut is possible and $n'$ has to be removed.
}

\begin{definition}[Commits to]
\label{def:commits}
An outcome $(n'', \r'')$ {\em commits to} an atom $n'$ if $\{n'\} = \trans (n'', p, \r'')$ 
for some $p \in P_{n''}$.
\end{definition}

\jd{As in the case of a single agent, we can only have $n' = n_f$ if $n'$ is removed and $n$ qualifies as a new final atom.
Therefore, we require $n' \neq n_f$  unless $(n,\r)$ has exclusive access to $n'$ and moreover $\r$ is the only result of $n$.}

Equipped with Definitions \ref{def:uncond}, \ref{def:depends} and \ref{def:commits} we can now formally
define the general shortcut rule, illustrated in Figure \ref{fig:shortcutExample}.

\begin{figure}[ht]
\centering
\input{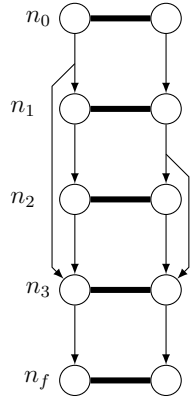}
\caption{Problematic case for a shortcut}
\label{fig:shortcut_intro1}
\end{figure}



\reductionrule{Shortcut rule}{def:shortcut}{
        $N$ contains an outcome $(n, \r)$ that unconditionally enables an atom $n'$ distinct from $n$. Moreover: 
        \vspace{-0.9ex}
        \begin{itemize}
        \itemsep-0.5ex
        \item $n' \neq n_f$ and $(n, \r)$ has exclusive access to $n'$,  or 
        \item $n' \neq n_f$ and some outcome $(n'', \r'') \neq (n, \r)$ commits to $n'$, or
        \item $n' = n_f$, the outcome $(n, \r)$ has exclusive access to $n'$, and $\r$ is the only result of $n$.
       
        \end{itemize}
}{
	\begin{enumerate}[(1)]
	\itemsep-0.5ex
	\item $R_n \leftarrow (R_n \setminus \{\r\}) \cup \{\r'_s \mid \r' \in R_{n'}\}$, where $\r'_s$ are fresh names.
	\item For all $p \in P_{n'}$ and all $\r' \in R_{n'}$: $\trans(n,p,\r_s') \leftarrow \trans(n',p ,\r')$. \\
 For all $p \in P_n \setminus P_{n'}$ and all $\r' \in R_{n'}$: $\trans(n,p,\r'_s) \leftarrow \trans(n,p,\r)$.
	\item For all $\r' \in R_{n'}$: $\delta_n (\r'_s)\leftarrow \delta_n (\r)\delta_{n'} (\r')$.
	\item If $(n,\r)$ has exclusive access to $n'$, then remove $n'$. 
	\item If $n'$ was the final atom before applying the rule and is removed, then $n$ is the new final atom.
	\end{enumerate}
}

\begin{figure}[h]
\centering
\input{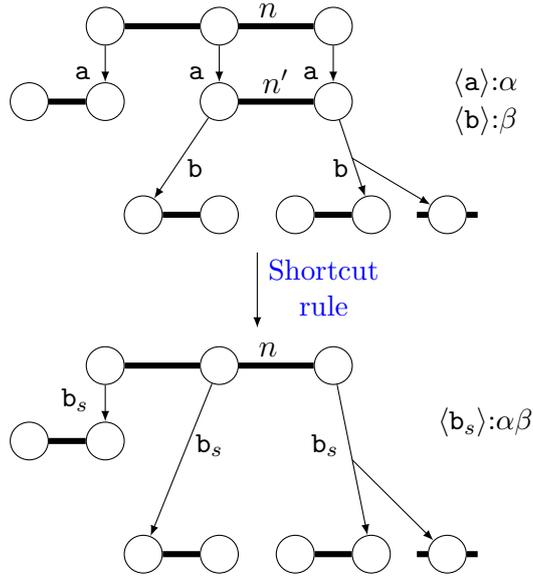}
\caption{Illustration of the shortcut rule}
\label{fig:shortcutExample}
\end{figure}

Consider the negotiation of Figure \ref{fig:short2}. We can shortcut result $\texttt{a}$ of $n_0$
because it unconditionally enables $n_1$ and has exclusive access to it. We can also shortcut 
result $\texttt{b}$, because it unconditionally enables $n_2$ and result $\texttt{a}$ of
$n_0$ (or result $\texttt{c}$ of $n_1$) commits to $n_2$. 

Consider now Figure \ref{fig:shortcut_intro1}. We cannot shortcut the results of $n_0$ and $n_1$ because they
do not unconditionally enable any node. The result of $n_2$ unconditionally enables $n_3$, 
but we cannot shortcut it either: it does not have exclusive access to $n_3$, and no other 
result commits to $n_3$. The result of $n_3$ unconditionally enables $n_f$ with exclusive access, 
and so we can shortcut it. The application of the rule removes $n_f$, and $n_3$ becomes the new final 
atom of the negotiation.


\begin{theorem}
\label{thm:shortcutcorrect}
The shortcut rule is correct.
\end{theorem}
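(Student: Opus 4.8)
The plan is to verify the three conditions of Definition~\ref{def:equivalence} by exhibiting a transformer-preserving correspondence between the large steps of $\N_1$ and those of the reduced negotiation $\N_2$. Before that I would dispatch two preliminaries: that $\N_2$ is again a negotiation in the sense of Definition~\ref{def:neg} (the only delicate point is condition~(3), which survives because in the exclusive-access cases the outgoing behaviour of the deleted atom $n'$ is inherited by the fresh results $\r'_s$ of $n$), and condition~(2) of equivalence (trivial when $n'\neq n_f$, since then $n_f$ and its results are untouched, and in the case $n'=n_f$ following from the bijection $\r'\leftrightarrow\r'_s$ between the results of the old final atom $n_f$ and those of the new one $n$). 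The workhorse is a structural lemma read off from Definition~\ref{def:uncond}: since $(n,\r)$ unconditionally enables $n'$, immediately after any occurrence of $(n,\r)$ every party $p\in P_{n'}$ is ready only for $n'$, so $n'$ is enabled and stays enabled until it occurs. Hence every occurrence sequence of $\N_1$ that contains $(n,\r)$ has the shape $\sigma_1\,(n,\r)\,\sigma_2\,(n',\r')\,\sigma_3$ in which no outcome of $\sigma_2$ involves a party of $n'$.

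For the summary (condition~(3)) I would set up a contraction/expansion bijection: a large step $\sigma_1\,(n,\r)\,\sigma_2\,(n',\r')\,\sigma_3$ of $\N_1$ is matched with $\sigma_1\,(n,\r'_s)\,\sigma_2\,\sigma_3$ of $\N_2$, and a large step of $\N_1$ avoiding $(n,\r)$ is matched with itself. (In the exclusive-access cases one first checks, using Definition~\ref{def:depends}, that a large step avoiding $(n,\r)$ can never enable the deleted atom $n'$, so no behaviour is lost.) The crucial computation is that this matching preserves $\trf{\cdot}$. Because $\delta_{n'}(\r')$ transforms only the states of the parties in $P_{n'}$, while every transformer occurring in $\sigma_2$ transforms only states of agents outside $P_{n'}$, the transformer $\delta_{n'}(\r')$ commutes with $\trf{\sigma_2}$; therefore $\delta_n(\r)\,\trf{\sigma_2}\,\delta_{n'}(\r') = \delta_n(\r)\,\delta_{n'}(\r')\,\trf{\sigma_2} = \trf{n,\r'_s}\,\trf{\sigma_2}$, which gives $\trf{\sigma}=\trf{\sigma'}$ for the matched pair. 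Conversely every large step of $\N_2$ lifts to the ``canonical'' large step of $\N_1$ in which $(n',\r')$ fires immediately after $(n,\r)$, with the identical transformer. Taking unions over all large steps ending in a fixed final result then yields $\trf{\N_1,\r}=\trf{\N_2,\r}$.

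Condition~(1), soundness, is where the argument really has to work and where the three guard cases diverge according to whether $n'$ is deleted. I would transfer both ``every atom is enabled at some reachable marking'' and ``every occurrence sequence extends to a large step'' across the same occurrence-sequence correspondence. Every atom other than $n'$ carries over directly. When $(n,\r)$ has exclusive access (the two exclusive-access bullets of the guard), deleting $n'$ loses nothing, since $n'$ can only ever be enabled after $(n,\r)$, which the rule removes. When instead $(n,\r)$ lacks exclusive access but some $(n'',\r'')$ commits to $n'$ (Definition~\ref{def:commits}), the committing arc is a deterministic arc into $n'$ that the rule does not touch; I would argue that soundness of $\N_1$ forces the token placed on this arc by $(n'',\r'')$ to be consumed, which only $n'$ can do, so $n'$ stays enabled at a reachable marking of $\N_2$ and retaining it preserves soundness (and, symmetrically, retaining a dead $n'$ can never turn an unsound $\N_1$ into a sound $\N_2$).

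The main obstacle is precisely this soundness bookkeeping: the markings of $\N_1$ lying strictly between the occurrences of $(n,\r)$ and $(n',\r')$ --- in which some parties are already committed to $n'$ --- have no literal counterpart among the reachable markings of $\N_2$, so I must define the marking correspondence as a relation rather than a bijection and check that it preserves enabledness of every atom and reachability of $\vx_f$ uniformly across the exclusive-access and commits cases. A secondary but essential point is the commutation of disjoint-party transformers underlying the summary computation, which is exactly where the property that a transformer changes only the internal states of its own parties is used.
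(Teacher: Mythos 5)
Your proposal follows essentially the same route as the paper's proof: establish that $\N_2$ is again a negotiation, set up the contraction/expansion correspondence between occurrence sequences (replacing $(n,\r)\cdots(n',\r')$ by $(n,\r'_s)$ and back, using that after $(n,\r)$ the parties of $n'$ are ready only for $n'$), and transfer soundness in both directions with exactly the paper's case split between exclusive access and a committing outcome $(n'',\r'')$ whose deterministic arc into $n'$ only $n'$ itself can consume. You in fact go further than the paper on condition (3) of equivalence by making explicit the commutation of $\delta_{n'}(\r')$ with the transformers of the interleaved outcomes of $\sigma_2$ -- a step the paper's proof leaves entirely implicit -- so your write-up is, if anything, more complete on the summary-preservation side.
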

\begin{proof}
Let $\N_2$ be the result of applying the shortcut rule to $\N_1$. Assume the atoms $n$ and $n'$ and the result $\r \in R_n$ are as in the shortcut rule (Definition \ref{def:shortcut}). 
\jd{We first show that $\N_2$ is in fact a negotiation.}

\begin{itemize}
\item
$\trans (n'', p , \r'') = \emptyset$ if and only if $n'' = n_f$ in $\N_2$. This is obviously the case if 
$n'$ was not the final atom before the reduction. If it was the final atom, then we only apply the rule if we also remove $n'$, and if moreover $n$
has only the single result $\r$. So it remains to show that in this case $n$ satisfies the criteria of a final atom. 

We have $P_{n'} \subseteq P_{n}$ because $(n,\r)$ unconditionally enables $n'$. We have $P_{n'} = \agents$ because $n' = n_f$, and therefore $P_{n} = \agents$.
By definition of the rule, $\trans (n, p , \r) = \emptyset$ after the rule was applied, for all agents $p$.
Since $\r$ is the only result, we obtain
 $\trans (n, p , \r'') = \emptyset$ for all results $\r''$.
So $n$ qualifies as new final atom. The transition function $\trans$ is not changed by the rule application for any other atom.

\item
Every atom $n''$ of $\N_2$ is on a path from the initial atom to the final atom in the graph of the negotiation.
Consider first the case $n'' \neq n'$. A path of $\N_2$ that visits $n''$ is either a path of $\N_1$, or
it is obtained from a path of $\N_1$ by replacing $(n, p, \r) (n', p', \r')$ by $(n, p', \r'_s)$.
So, no matter if $n'$ is the final atom or not, $n_2$ lies on some path of $\N_2$ from $n_0$ to $n_f$. 

Assume now $n''=n'$. Then $n'$ has not been removed by the rule, and so $(n,r)$ does not have exclusive access to $n'$.
This implies that some party $p$ of $n'$ has an additional input arc, i.e., that $p$ is in $\trans (n''', p , \r''')$ for some $(n,\r) \neq (n''', \r''')$.
There is a path from $n_0$ via  $n'''$ to $n'$, before and after application of the rule.
Moreover, the path from $n'$ to $n_f$ remains unchanged by the rule application (assuming that $n'$ does not occur again in the path).
So, also after application of the rule, $n'$ is on a path from $n_0$ to $n_f$.
\end{itemize}

We \jd{continue}  by studying the relation between occurrence sequences in $\N_1$ and $\N_2$. We introduce the concept of corresponding sequences which we then use to show that soundness of $\N_1$ implies soundness of $\N_2$ and \jd{vice versa}. We say that an atom $\overline{n}$ occurs in an occurrence sequence $\sigma$ if $(\overline{n}, \overline{\r})$ occurs in $\sigma$ for some $\overline{\r} \in R_{\overline{n}}$.

For each initial occurrence sequence $\sigma_2$ of $\N_2$, replacing all occurrences of $(n, \r'_s)$ by $(n,\r), (n',\r')$ yields an initial occurrence sequence $\sigma_1$ of $\N_1$. We call $\sigma_1$ the occurrence sequence in $\N_1$ corresponding to $\sigma_2$. The markings reached by these two sequences are the same.

For each initial occurrence sequence $\sigma_1$ of $\N_1$, between each two occurrences of $(n,\r)$ there is an occurrence of $(n',\r')$ (with possibly varying $\r'$) because $n$ unconditionally enables $n'$. Thus only the last occurrence of $(n,\r)$ might not be followed by an occurrence of $n'$, which may happen only if $n'$ is still enabled.
We transform $\sigma_1 = \sigma_a(n,\r)\sigma_a'(n',\r_a)\sigma_b(n,\r)\sigma_b'(n',\r_b)\ldots$ to a sequence $\sigma_2=\sigma_a(n,\r_{af})\sigma_a'\sigma_b(n,\r_{bf})\sigma_b'\ldots$ \jd{of} $\N_2$. Should, \jd{in $\sigma_1$}, the last occurrence of $(n,\r)$ not be followed by an occurrence of $n'$, we instead apply the transformation to \jd{the extended sequence} $\sigma_1(n',\r')$ for some $\r'$. \jd{In both cases, this} yields an occurrence sequence $\sigma_2$ \jd{of} $\N_2$, which we call the corresponding sequence to $\sigma_1$, and which leads to the same marking as $\sigma_1$ (or $\sigma_1(n',\r')$).

(1) Assume that $\N_1$ is sound. We prove that $\N_2$ is sound. 

First we show that every atom $\tilde{n}$ of $\N_2$ can be enabled by some initial occurrence sequences, and then that 
every initial occurrence sequence $\sigma_2$ in $\N_2$ can be extended to a large step.
For the first part, we consider two cases.

Case $\tilde{n}\neq n'$. Since $\N_1$ is sound, $\tilde{n}$ can be enabled by some 
initial occurrence sequence $\sigma_1$ of $\N_1$. Since the corresponding sequence $\sigma_2$ in $\N_2$ 
leads to the same marking as either $\sigma_1$ or $\sigma_1(n',\r')$ for some $\r'$, 
it also enables $\tilde{n}$. (Note that the addition of $(n',\r')$ to the occurrence sequence cannot influence that $\tilde{n}$ is enabled: We only add $(n',\r')$ if it was unconditionally enabled by some prior occurrence of $(n,\r)$, but then the agents of $n'$ were only ready to engage in $n'$ and thus do not participate in $\tilde{n}$.)

Case $\tilde{n}= n'$.
Since $n'$ is still an atom of $\N_2$ and by the definition of the guard of the rule, 
some outcome $(n'',\r'')\neq (n,\r)$ of $\N_1$ commits to $n'$, i.e., there
is $(n'',p'',\r'')\in T(\N_1)$ such that $(n'',\r'')\neq (n,\r)$ and $\{n'\}=\trans(n'',p'',\r'')$. 
Since $(n'',\r'')\neq (n,\r)$, the outcome $(n'',\r'')$ is unchanged in $\N_2$. By soundness of $N_1$, 
some initial occurrence sequence $\sigma_1$ of $\N_1$ enables $n''$. We extend it by an occurrence of $(n'',\r'')$ and then to a large step $\sigma_1(n'',\r'')\rho_1$ in $\N_1$, which is possible since $\N_1$ is sound. Since after $(n'',\r'')$ agent $p''$ is only ready to engage in $n'$, the sequence $\rho_1$ contains an occurrence of $n'$. During the construction we replace occurrences of $(n,\r)$ and those occurrences of $n'$ that are direct consequences of the occurrence of some $(n,\r)$. We will however not replace the occurrence of $n'$ that was caused by $(n'',\r''$), and so the corresponding sequence in $\N_2$ will also contain an occurrence of $n'$. Thus $n'$ can be enabled in $\N_2$.

We now prove that every initial occurrence sequence $\sigma_2$ in $\N_2$ can be extended to a large step. 
Take the corresponding occurrence sequence $\sigma_1$ in $\N_1$ and extend it to a large step 
$\tau_1 = \sigma_1\rho_1$ in $\N_1$ (possible by soundness of $\N_1$). The corresponding sequence in $\N_2$ 
is $\tau_2=\sigma_2\rho_2$, which is an extension of $\sigma_2$ (by construction of corresponding sequences) to a 
large step.

(2)  Assume that $\N_2$ is sound. We prove that $\N_1$ is sound. 

Since every atom in $\N_2$ can be enabled by some initial occurrence sequence, using the corresponding sequence in 
$\N_1$ we see that the same is true for all atoms but $n'$. Further, it is easy to show that also $n'$ can be enabled 
in $\N_1$: Take the initial occurrence sequence that enables $n$ and extend it by $(n,\r)$, which unconditionally enables $n'$.

For an initial occurrence sequence $\sigma_1$ in $\N_1$, the corresponding occurrence sequence $\sigma_2$ 
in $\N_2$ can be extended to a large step $\tau_2 = \sigma_2\rho_2$ in $\N_2$, \jd{because $\N_2$ is sound}.
\jd{Let $\tau_1$ be the occurrence sequence of $\N_1$ that  corresponds to $\tau_2$. 
Then $\tau_1 = \sigma'_1\rho'_1$ where $\sigma'_1$ corresponds to $\sigma_2$, by the definition of ``corresponds''.
Now either $\sigma_1$ or $\sigma_1 (n',\r') $ leads to the same marking as $\sigma_2$, and $\sigma_2$ leads to the
same marking as $\sigma'_1$. Therefore, either $\sigma_1 \rho'_1$ or $\sigma_1 (n',\r') \rho'_1 $ is a large step of $\N_1$.}
\end{proof}
\subsection{Rules Preserve (Weak-)Determinism}

We conclude the section with a simple but important observation.

\begin{proposition}
If a negotiation $\N_1$ is deterministic (weakly deterministic) and the application of one of the rules 
above yields negotiation $\N_2$, then $\N_2$ is also deterministic (weakly deterministic, \jd{respectively}). Further, if $\N_1$ is
acyclic, then so is $\N_2$.
\end{proposition}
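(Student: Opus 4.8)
The plan is to prove the three closure claims---preservation of determinism, of weak determinism, and of acyclicity---by a case analysis over the four rules, tracking in each case precisely how the action rewrites $\trans$ and then reading off the conclusion. The first three rules are routine, so I would dispatch them first. For the \emph{merge rule} the underlying graph does not change at all: the two merged results satisfy $\trans(n,p,\r_1)=\trans(n,p,\r_2)$ for every $p\in P_n$, so the set of atoms reachable from $n$ along an edge is identical before and after, and all three properties are inherited verbatim. For the \emph{iteration rule} the action only rewrites transformers and deletes one result, so $\trans$ is merely restricted; no new edge is created, every witnessing deterministic agent survives, and since the deleted result is a self-loop---which cannot occur in an acyclic graph---acyclicity is preserved (vacuously in the acyclic case). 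For the \emph{useless arc rule} I would first observe that its guard forces $\{n',n''\}\subseteq\trans(n,p,\r)$, i.e.\ a proper hyperarc at $p$, so the rule is never applicable to a deterministic negotiation and the deterministic claim holds vacuously; for weak determinism, deleting $n''$ only shrinks $\trans(n,p,\r)$, so the agent that witnessed this triple in $\N_1$ witnesses it a fortiori in $\N_2$, every other triple and every other agent being untouched; and acyclicity holds because only an edge is deleted.

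The substantial case is the \emph{shortcut rule}, where the only altered triples are the fresh outcomes $(n,\r'_s)$, with images $\trans_1(n',p,\r')$ for $p\in P_{n'}$ and $\trans_1(n,p,\r)$ for $p\in P_n\setminus P_{n'}$. For determinism I would note first that $n\neq n_f$, since $n$ has an outgoing edge to $n'\neq n$. When additionally $n'\neq n_f$, both inherited images are singletons by determinism of the source triples $(n',p,\r')$ and $(n,p,\r)$ in $\N_1$ (both have a non-final source), so every agent stays deterministic in $\N_2$. The remaining possibility $n'=n_f$ is exactly the guard branch in which $n'$ is removed and $n$ is promoted to the new final atom; here $P_n=P_{n'}=\agents$ and each $\trans_2(n,p,\r'_s)=\emptyset$, so no singleton obligation is incurred and determinism again holds.

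Weak determinism then follows because, agentwise, the above argument shows the deterministic agents of $\N_1$ are exactly preserved in $\N_2$, so the $\N_1$-witness of $(n',p,\r')$ (respectively of $(n,p,\r)$) is still a deterministic agent and is a party of every atom of $\trans_2(n,p,\r'_s)$; every unchanged triple retains its old witness, and the empty-image triples at the promoted final atom are satisfied vacuously. The point I would flag most carefully is the removal of $n'$: it occurs only when $(n,\r)$ has \emph{exclusive access} to $n'$, and exclusive access---instantiated with the pair $(n',\r')\neq(n,\r)$ for each result $\r'$ of $n'$---rules out any self-loop at $n'$, so the inherited images $\trans_1(n',p,\r')$ never mention the deleted atom $n'$ and no dangling arc is created.

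Finally, for acyclicity under the shortcut rule I would argue by lifting: every edge of $\N_2$ is either already an edge of $\N_1$, or a fresh shortcut $n\to n'''$ that corresponds to a length-two path $n\to n'\to n'''$ in the graph of $\N_1$ (the first edge because $(n,\r)$ unconditionally enables $n'$, the second because $n'''\in\trans_1(n',p,\r')$). Consequently any cycle of $\N_2$ lifts to a closed walk, and hence a cycle, in $\N_1$, contradicting the acyclicity of $\N_1$. I expect the most delicate step to be the weak-determinism bookkeeping for the shortcut rule, where one must confirm simultaneously that the class of deterministic agents does not shrink and that the exclusive-access removal of $n'$ leaves every surviving witness both deterministic and correctly placed.
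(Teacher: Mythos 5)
Your proof is correct and follows essentially the same route as the paper's: the first three rules are dispatched as routine, and for the shortcut rule the key observation in both arguments is that every transition value $\trans_2(n,p,\r'_s)$ is inherited from a value $\trans_1(\cdot,p,\cdot)$ of the same agent, so determinism and weak determinism carry over, while acyclicity follows by lifting edges of $\N_2$ to paths of $\N_1$. Your additional check that exclusive access excludes a self-loop at the removed atom $n'$ is a useful detail the paper leaves implicit.
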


\begin{proof} 
For the merge rule and the iteration rule the proposition is obvious. 
The useless-arc rule removes an arc and thus may even lead to a deterministic negotiation starting from a 
weakly deterministic one (but never the other way round). For the shortcut rule, observe 
that, by the definition of the rule, for every $(n_2,a_2,\r_2) \in T(\N_2)$ there is $(n_1,a_1,\r_1) \in T(\N_1)$ such that 
$\trans_2(n_2,a_2,\r_2) = \trans(n_1,a_1,\r_1)$ (in fact, we can always take $a_1= a_2$). 
Since whether a negotiation $\N$ is deterministic or weakly deterministic only depends on the 
set $\{ \trans(n,a,\r) \mid (n,a,\r) \in T(\N)\}$, we are done.
Finally, \jd{the definitions of the rules imply immediately that acyclicity is preserved (but not cyclicity).}
\end{proof}

\subsection{Reducible outcomes}
In the rest of the paper we present reduction algorithms for different classes of negotiations.
All of them repeatedly choose a {\em reducible} outcome, and apply the corresponding rule. Formally:

\begin{definition}
\label{def:reducibleoutcome}
Let $\N$ be a negotiation. An outcome $(n, \r)$ of $\N$ is {\em reducible} if 
\begin{itemize}
\item it satisfies the guard of the iteration or the shortcut rule; or
\item it satisfies, together with another outcome, the guard of the merge rule; or
\item $(n, p, \r, n'')$ is a useless arc for some agent $p$ and atom $n''$.
\end{itemize}
The set of reducible outcomes of $\N$ is denoted $R(\N)$.
\end{definition}

As we shall see, a nondeterministic reduction procedure that can choose any reducible outcome may
not terminate, or terminate in exponentially many steps. In order to obtain polynomial algorithms we
will constrain the choices of the procedure.

\section{Summarizing Acyclic Weakly Deterministic Negotiations}
\label{sec:acyclic}

Recall that if a set of rules is complete for a class of negotiations, then it completely reduces 
all (and only) sound negotiations in the class to an atomic negotiation. Therefore, a reduction
procedure that transforms every negotiation in the class into an equivalent {\em irreducible} negotiation 
allows us to check soundness and compute a summary without constructing the 
reachability graph of the negotiation. Indeed, the negotiation is sound if{}f its corresponding
irreducible negotiation is atomic. 

\begin{definition}
\label{def:maximal}
Let ${\cal R}$ be a set of rules and let $\N$ be a negotiation.

A negotiation is {\em irreducible} with respect to ${\cal R}$ if no rule of ${\cal R}$ can be applied to it. 

A \jd{finite or infinite  reduction sequence $R_1 \, R_2 \,  R_3 \ldots $ for $\N$ 
over ${\cal R}$ is {\em maximal} if it is, applied to $\N$, either infinite or if it is 
finite and cannot be extended, i.e.}  leads to an irreducible negotiation. 
\end{definition}

In this section we show that the rules presented in the previous section are complete for 
acyclic, weakly deterministic negotiations. This class contains for example the 
\texttt{Father} / \texttt{Daughter} / \texttt{Mother} negotiation from the beginning of this paper.
Further, we show in Section \ref{sec:acyclictime} that if the negotiation is deterministic, then there is a
reduction algorithm that reaches an irreducible negotiation after a polynomial number of rule 
applications, and \ph{thus}  runs in polynomial time. Whether such an algorithm also exists in the weakly deterministic case 
is an open problem.

Acyclicity is a severe restriction, which in particular implies that every atom occurs at most once. 
However, the results on acyclic negotiations will be reused as lemmas in the following sections, 
where we consider cyclic deterministic negotiations. 

We start with a first lemma, showing that for acyclic negotiations the guard of the useless arc rule
can be simplified.

\begin{lemma}
\label{lemma:uselessacyclic}
Let $\N$ be an acyclic negotiation. $\N$ can be transformed into $\N'$ by the useless arc rule if{}f it can be transformed
into $\N'$ by the following rule:\\
\reductionrulenodef{
\begin{itemize}	
\item There are three distinct arcs $(n,p,\r, n')$, $(n,p,\r, n'')$,  $(n,q,\r, n')$, such that $\trans(n,q,\r) = \{n'\}$.
\item The arc $(n,p,\r, n'')$ is not the only arc leading to \ph{a port of} $n''$.
\end{itemize}
}{
$\trans(n,p,\r) \leftarrow \trans(n,p,\r) \setminus \{n'\}$.
}
If moreover $\N$ is sound, then $\N$ can be transformed into $\N'$ by the useless arc rule if{}f it can be transformed
into $\N'$ by the following rule:\\
\reductionrulenodef{
There are three distinct arcs $(n,p,\r, n')$, $(n,p,\r, n'')$,  $(n,q,\r, n')$, such that $\trans(n,q,\r) = \{n'\}$.
}{
$\trans(n,p,\r) \leftarrow \trans(n,p,\r) \setminus \{n'\}$.
}

\end{lemma}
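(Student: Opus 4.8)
The plan is to reduce both equivalences to the single nontrivial requirement in Definition \ref{def:neg}, namely condition (3): that every atom lie on some path from $n_0$ to $n_f$. Conditions (1) and (2) are preserved for free when we delete the arc $(n,p,\r,n'')$, since parties are untouched and, as the three arcs in the guard are distinct, the guard also supplies the arc $(n,p,\r,n')$ with $n'\neq n''$; hence $\trans(n,p,\r)$ still contains $n'$ and stays nonempty, so $n$ is still a non-final atom and no new empty image is created. Thus ``the result of the deletion is a negotiation'' is equivalent to ``condition (3) still holds'', and the whole lemma becomes a statement about reachability in the graph of $\N$ after deleting the single edge $e$ from $n$ to $n''$ labelled $(p,\r)$.

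For the first equivalence I would argue entirely in this graph. Write $G$ for the graph, $G'$ for $G$ with $e$ removed, and note that the edges $n\to n'$ (labelled $(p,\r)$ and $(q,\r)$) survive in $G'$. For the direction ``deletion yields a negotiation $\Rightarrow$ $n''$ has a further incoming arc'': in a negotiation $n''$ lies on a path from $n_0$, and since acyclicity forbids any incoming edge at $n_0$ we have $n''\neq n_0$, so this path enters $n''$ along an edge of $G'$, which is an arc into a port of $n''$ distinct from $e$. For the converse I would show that if $n''$ keeps some incoming arc $\tilde e\colon\tilde n\to n''$, then in $G'$ every atom still reaches, and is reached from, the required endpoints. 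The two driving facts come from acyclicity: because $n\to n'$ is an edge there is no path from $n'$ back to $n$, and because $e\colon n\to n''$ and $\tilde e\colon\tilde n\to n''$ are edges there is no path from $n''$ back to $n$ nor from $n''$ back to $\tilde n$. Consequently any walk forced to use $e$ reaches $n$ along an $e$-free prefix; I would reroute it through the surviving edge $n\to n'$ and the ($e$-free, by acyclicity) path $n'\to n_f$ to restore reachability of $n_f$, and symmetrically use the $e$-free route $n_0\to\tilde n\to n''$ to restore reachability of $n''$ and of everything below it. This re-establishes condition (3) in $G'$.

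For the second equivalence it suffices, by the first, to prove that soundness forces $n''$ to possess an incoming arc other than $(n,p,\r,n'')$. Here I would invoke the analysis already carried out in the proof that the useless arc rule is correct: whenever the guard's three arcs are present, the arc $(n,p,\r,n'')$ occurs in no initial occurrence sequence, because after $(n,\r)$ the shared party $q$ of $n'$ and $n''$ is ready only for $n'$, so $n'$ --- an atom in which $p$ participates --- must occur before $n''$, preventing $p$ from passing directly from $(n,\r)$ into $n''$. Now soundness (a) guarantees that $n''$ is enabled at some reachable marking; executing it yields an initial occurrence sequence in which $n''$ occurs, and in that sequence the token making agent $p$ ready for $n''$ was placed by the last outcome $(n^*,\r^*)$ involving $p$, so the arc $(n^*,p,\r^*,n'')$ occurs. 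Since the deleted arc never occurs, $(n^*,p,\r^*)\neq(n,p,\r)$, and this is the required second incoming arc of $n''$.

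The main obstacle is the converse direction of the first equivalence: deleting $e$ could a priori disconnect $n''$ or its descendants from $n_0$, or cut the route from some ancestor to $n_f$, so I must verify that nothing is genuinely lost. Acyclicity is precisely what makes the rerouting succeed --- every walk forced through $e$ can be split at $n$ (or at $n''$) into $e$-free pieces and redirected along the surviving edge $n\to n'$ and the alternative arc $\tilde e$ --- and the delicate point is to check that these replacement sub-paths do not themselves secretly reuse $e$, which is exactly where the ``no path back to $n$'' observations are applied. The remaining steps (conditions (1)--(2), and the soundness argument of Part~2) are routine by comparison.
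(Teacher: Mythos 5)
Your proof is correct and follows essentially the same route as the paper's: the same rerouting argument for the acyclic case (replace the prefix through the deleted arc by a path through the alternative arc into $n''$, and the suffix by a path through the surviving arc $n\to n'$, with acyclicity guaranteeing that the replacement pieces avoid the deleted arc), and the same reduction of the sound case to the existence of a second incoming arc of $n''$. The only difference is that you spell out why soundness yields that second arc (the deleted arc occurs in no initial occurrence sequence while $n''$ must occur in one, so agent $p$ reaches $n''$ via some other arc), a step the paper asserts without detail.
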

\begin{proof}
\noindent 
Assume that $\N$ is acyclic.

($\Leftarrow$) If $\N$ can be transformed into $\N'$ by the useless arc rule, then after removing the useless arc 
$(n, p, \r, n'')$ the node $n''$ still lies on a path from the initial to the final node, and so $\N$ contains another arc leading to $n''$. So 
$\N$ can also be transformed into $\N'$ by the new rule. 

\noindent ($\Rightarrow$) Roughly speaking,
we have to show that after applying the new rule to an acyclic negotiation, the resulting pre-negotiation is actually a negotiation, 
i.e., every node is still on a path from $n_0$ to $n_f$ in the  graph of the resulting pre-negotiation. 
Obviously, all paths which do not contain an arc leading from $n$ to $n''$ are not
changed by the rule application. So it suffices to show that that every node of $\N$ lies 
on some path from $n_0$ to $n_f$ which does not contain such an arc.

Consider an arbitrary atom $\tilde{n}$. There is a path of $\N$ from $n_0$ to $\tilde{n}$. 
By assumption there is an arc leading from some atom $n_1$ to $n''$.
If the path from $n_0$ to $\tilde{n}$ contains an arc from $n$ to $n''$, then we can replace the prefix 
up to $n$ by a path from $n_0$ to $n_1$, yielding a path from $n_0$ to $\tilde{n}$ without the arc.
By acyclicity, this path does not contain $n''$.
If a path from $\tilde{n}$ to $n_f$ contains an arc from $n$ to $n''$ then we can replace the suffix from $n''$ 
by a path from $n'$ to $n_f$. Again by acyclicity, this path does not contain $n$. 
Summarizing, after the rule application the node $\tilde{n}$ is again on a path from $n_0$ to $n_f$, and we are done.

Assume now that $\N$ is acyclic and \ph{moreover} sound.

\noindent ($\Leftarrow$) Obvious.

\noindent ($\Rightarrow$) \jd{If the negotiation is sound, then the atom $n''$ definitely has more ingoing arcs 
than the useless one, and we can apply the first part of the lemma.} 
\end{proof}

\begin{theorem}
\label{thm:compacyc}
Let $\N$ be an acyclic and weakly deterministic negotiation, and let $\rho$ be a maximal
reduction sequence for $\N$ using \jd{only} the shortcut, merge  and useless arc rules. 
Then $\rho$ is finite, and it reduces $\N$ to an atomic negotiation if{}f $\N$ is sound.
\end{theorem}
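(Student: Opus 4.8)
The plan is to decompose the statement into three parts: the easy implication ``$\rho$ reduces $\N$ to an atom $\Rightarrow$ $\N$ sound'', the converse ``$\N$ sound $\Rightarrow$ $\rho$ reduces $\N$ to an atom'', and the termination claim that $\rho$ is finite. The easy implication needs almost nothing: the three rules are correct (Theorem \ref{thm:shortcutcorrect} and the correctness statements for the merge and useless-arc rules), so every negotiation along $\rho$ is equivalent to $\N$; a one-atom negotiation is sound, and equivalence preserves soundness, hence if the endpoint $\N_k$ is atomic then $\N$ is sound.

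For termination I would look for a well-founded measure that strictly decreases under every rule application on acyclic negotiations. The number of atoms is a natural first component: it never increases, and it strictly decreases exactly when a shortcut removes $n'$ (the exclusive-access case). The useless-arc rule strictly decreases the number of arcs and the merge rule strictly decreases the number of outcomes, both leaving the atom set untouched. The genuinely delicate case --- and what I expect to be the main obstacle --- is the shortcut applied in the ``commits-to'' situation where $n'$ is kept: the outcome $(n,\r)$ is replaced by one new outcome per result of $n'$, and for the parties in $P_n\setminus P_{n'}$ the old arcs are duplicated, so both the arc count and the outcome count can grow. To tame this I would add a secondary component measuring topological progress towards $n_f$ (each such shortcut moves the heads of the affected arcs strictly closer to $n_f$ in the acyclic order) and exploit acyclicity, together with the fact that the number of results of any atom stays bounded by the number of $n_0$--$n_f$ paths of the original graph, to argue that the progress component dominates the duplication. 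Finding a single measure that is provably monotone under \emph{every} application order is the technical heart of this part.

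For the converse it suffices, given termination, to prove the structural lemma: an irreducible, sound, acyclic, weakly deterministic negotiation is a single atom. Indeed, by termination $\rho$ ends in an irreducible $\N_k$; since the rules are correct and preserve acyclicity and weak determinism, $\N_k$ is again sound, acyclic and weakly deterministic, so the lemma forces $\N_k$ to be atomic. I would prove the contrapositive: a negotiation with these properties and at least two atoms (so $n_0\neq n_f$) has a reducible outcome. Using soundness, after a single outcome $(n_0,\r)$ some atom $n'$ must be enabled, i.e.\ $n'\in\trans(n_0,p,\r)$ for all $p\in P_{n'}$. Weak determinism applied to $(n_0,p^*,\r)$ for some $p^*\in P_{n'}$ yields a deterministic agent $b\in P_{n'}$, and since $b$ is deterministic with $n'\in\trans(n_0,b,\r)$ we get $\trans(n_0,b,\r)=\{n'\}$; that is, $(n_0,\r)$ commits to $n'$ through $b$.

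I would then split on whether this enabling is ``clean''. If some party $p\in P_{n'}$ has $\trans(n_0,p,\r)\supsetneq\{n'\}$, then the three arcs $(n_0,p,\r,n')$, $(n_0,p,\r,n'')$ and the committing arc $(n_0,b,\r,n')$ match the guard of the useless-arc rule (with $q=b$), and Lemma \ref{lemma:uselessacyclic} shows its remaining side condition is automatic in the acyclic sound case, so the rule applies and the outcome is reducible. Otherwise $\trans(n_0,p,\r)=\{n'\}$ for every $p\in P_{n'}$, i.e.\ $(n_0,\r)$ unconditionally enables $n'$ (Definition \ref{def:uncond}); here I would verify the shortcut guard (Definition \ref{def:shortcut}) by cases: if $(n_0,\r)$ has exclusive access the first bullet applies when $n'\neq n_f$ and the third when $n'=n_f$ (using the merge rule first to collapse multiple results of $n_0$ into the required sole result), and if access is not exclusive then $n'$ carries a second incoming arc which, if committing, triggers the second bullet, while if it is a hyperarc the same weak-determinism argument exhibits a useless arc there instead. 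The main subtlety in this direction is the bookkeeping that converts a hyperarc together with its guaranteed deterministic agent into an instance of the useless-arc guard, which is precisely what Lemma \ref{lemma:uselessacyclic} is designed to streamline.
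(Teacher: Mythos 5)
Your decomposition --- correctness of the rules gives the ``only if'' direction, then termination, then a structural lemma that an irreducible sound acyclic weakly deterministic negotiation must be atomic --- is the same as the paper's, and the first part is unproblematic. But there are two genuine gaps in the other two parts.

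First, termination. You correctly isolate the dangerous case (a shortcut in the ``commits-to'' situation, where the atom count is unchanged and both the number of outcomes and the number of arcs can grow), but you then leave the construction of a decreasing measure as ``the technical heart'' without carrying it out; the sketch of a secondary ``topological progress'' component dominated by a bound on the number of results is not an argument. The paper closes this quite differently: each large step $\sigma$ of the reduced negotiation is mapped back to a large step $\phi(\sigma)$ of the unreduced one by re-expanding every new outcome $(n,\r'_s)$ into $(n,\r)(n',\r')$; one has $|\sigma|\le|\phi(\sigma)|$, with strict inequality for some large step whenever the shortcut is applied, and since in an acyclic negotiation large steps have length at most the number of atoms, the set of large steps is finite, so the shortcut rule can be applied only finitely often; afterwards the merge and useless-arc rules strictly decrease the number of arcs. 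You need either this argument or an actually exhibited measure.

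Second, in the structural lemma you let $n'$ be an \emph{arbitrary} atom enabled after some outcome $(n_0,\r)$. The paper instead fixes a linearization of the acyclic order and takes $n_1$ to be its \emph{second} element, which guarantees that every arc into $n_1$ comes from $n_0$. That choice is what saves the non-exclusive-access case: the ``second incoming arc'' is then another outcome of $n_0$, and soundness together with the minimality of $n_1$ in the order forces the deterministic agent supplied by weak determinism to point at one of the atoms of that hyperarc, which yields the useless-arc guard. With your arbitrary $n'$, the second incoming arc may come from an arbitrary atom $m$, and the deterministic agent $q$ for the hyperarc $(m,p,\r'')$ may satisfy $\trans(m,q,\r'')=\{y\}$ for an atom $y$ outside the hyperarc (for instance, $q$ first traverses an intermediate atom not involving $p$ before reaching $n'$ or $n''$); then neither the commit condition nor the useless-arc guard is established at $(m,\r'')$, and your case analysis fails to produce a reducible outcome. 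Lemma \ref{lemma:uselessacyclic} does not help here --- it only removes the ``result is still a negotiation'' side condition, not this one. A smaller slip of the same kind: the deterministic agent $b$ you extract from $(n_0,p^*,\r)$ need not be a party of the extra atom $n''$ in the hyperarc of a \emph{different} party $p$; you should re-apply weak determinism to $(n_0,p,\r)$ itself, as the paper does.
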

\begin{proof}
We first prove that $\rho$ is finite. It suffices to show that none of the shortcut, merge, 
and useless arc rules can be applied infinitely often in a reduction sequence.
Let $\N_2$ be the result of applying any of the rules to a 
negotiation $\N_1$. For every large step $\sigma$ of $\N_2$, let $\phi(\sigma)$ 
be defined as follows: if the useless arc rule or the merge rule
has been applied, then $\phi(\sigma)=\sigma$; if the shortcut rule has been applied
to atoms $n, n'$, then let $\phi(\sigma)$ be the result of replacing every occurrence of $(n,\r_f')$ by 
the sequence $(n,\r)(n',\r')$. We have $|\sigma| \leq |\phi(\sigma)|$ for every large step $\sigma$. Moreover,
if the shortcut rule is applied, then $|\sigma'| < |\phi(\sigma')|$ for at least one large step $\sigma'$, 
indeed for all large steps containing $(n,\r_f')$. 

\jd{None of the rules destroy acyclicity, i.e., all considered negotiations are acyclic.
In an acyclic negotiation, the length of a large step is bounded by the number of atoms, 
and so the set of large steps is finite.
Therefore,} no infinite sequence of 
applications of the rules can contain infinitely many applications of the shortcut rule.
So any infinite sequence of applications must, from some point on, only apply the useless arc rule 
and the merge rule. But this is also not possible as both rules reduce the number of arcs.
\jd{This completes the proof that $\rho$ is finite.}

Assume that $\N$ is not sound. Since the three rules are correct, $\rho$ reduces $\N$ to an
unsound negotiation $\N'$. By definition of soundness,  atomic negotiations are sound. Therefore, 
$\N'$ is not atomic.

Assume now that $\N$ is sound.  We prove that $\rho$ reduces $\N$ to an atomic negotiation.  
The proof has two parts:

\noindent (1) If $\N$ has more than two atoms, then the shortcut rule or the useless arc rule
can be applied to it.

Since $\N$ is acyclic, its graph induces a partial order on atoms in the obvious way ($n < n'$ if there is a path from $n$ to $n'$).
Clearly $n_0$ and $n_f$ are the unique minimal and maximal elements, respectively.
We choose an arbitrary linearization of this partial order.
Since $\N$ has more than two atoms, this linearization begins with $n_0$ and has some second element, say $n_1 \neq n_f$.

Since $\N$ is sound \jd{and since $n_1$ does not depend on any prior atom except $n_0$}, some occurrence sequence begins with an occurrence of $n_0$ and a subsequent occurrence of $n_1$.
So $n_0$ has a result $\r_0$ such that $(n_0, \r_0)$ unconditionally enables $n_1$. 

Consider two cases:

\begin{itemize}
\item $\{n_1\} \subsetneq \trans(n_0,p,\r_0)$ for some party $p$ of $n_1$.\\
\je{Then $n_1, n_2 \in \trans(n_0,p,\r_0)$ for some atom $n_2 \neq n_1$. 

Since $\N$ is weakly deterministic, some deterministic agent $q$ must be a party of all atoms in $\trans(n_0,p,\r_0)$, and so in particular of $n_1$ and $n_2$. 

Since $(n_0, \r_0)$ unconditionally enables $n_1$ and $q$ is deterministic, $\trans(n_0,q,\r_0)=\{n_1\}$. 

So we have three distinct arcs $(n_0,p,\r_0, n_1)$, $(n_0,p,\r_0, n_2)$, and $(n_0,q,\r_0, n_1)$ such that $\trans(n_0,q,\r_0)=\{n_1\}$. Since $\N$ is sound, by Lemma \ref{lemma:uselessacyclic} the useless arc rule can be applied
with $n := n_0$, $n':= n_2$, $n'':= n_1$.}

\item $\{n_1\} = \trans(n_0,p,\r_0)$ for every party $p$ of $n_1$.\\
\je{We claim that the shortcut rule or the useless arc rule can be applied. 
Assume $\N$ does not satisfy the guard of the shortcut rule. Then, since $(n_0, \r_0)$ 
unconditionally enables $n_1$ and $n_1 \neq n_f$, we have that (a) $(n_0, \r_0)$ does not have exclusive 
access to $n_1$, and (b) no outcome distinct from $(n_0, \r_0)$ commits to $n_1$. 

By (a), there exists an arc $(n, p, \r, n_1)$ such that $(n, \r) \neq (n_0, \r_0)$. 
Since $n_1$ is the second atom in the linearization, we have $n=n_0$ and $\r \neq \r_0$.
So there exists an arc $(n_0, p, \r, n_1)$.

By (b), there is an arc $(n_0, p, \r, n_2)$ for some atom $n_2 \neq n_1$

Since $\N$ is weakly deterministic, some deterministic agent $q$ is a party of all 
atoms in $\trans(n_0,p,\r')$. Moreover, by soundness, the unique atom in $\trans(n_0, q, \r)$ must
be one of them. We can assume without loss of generality that the atom is 
either $n_1$ or $n_2$.

If the atom is $n_1$, then we have three distinct arcs $(n_0, p, \r, n_1)$, $(n_0, p, \r, n_2)$, 
and $(n_0, q, \r, n_1)$ such that $\trans(n_0,q,\r) = \{n_1\}$. By Lemma \ref{lemma:uselessacyclic}, the useless arc rule can be 
applied with $n:=n_0$, $n':=n_2$, and $n'':=n_1$.

If the atom is $n_2$, then we have three distinct arcs $(n_0, p, \r, n_1)$, $(n_0, p, \r, n_2)$, 
and $(n_0, q, \r, n_2)$ such that $\trans(n_0,q,\r) = \{n_2\}$. Further, since $(n_0, \r_0)$ 
unconditionally enables $n_1$, there is a path from $n_0$ to
$n_1$ that does not contain the arc $(n_0, p, \r, n_1)$.  By Lemma \ref{lemma:uselessacyclic} the useless arc rule can be applied 
with $n:=n_0$, $n':=n_1$, and $n'':=n_2$.}
\end{itemize}

\noindent (2) If $\N$ has exactly two atoms, it can be summarized (i.e., reduced to an equivalent atomic negotiation). \\
In this case, the atoms of $\N$ are $n_0$ and $n_f$. Since the negotiation is sound, all results of 
$n_0$ have the same transition function. So 
the merge rule can be applied until $n_0$ only has one result. Then the shortcut rule can be applied with 
$n = n_0$ and $n' = n_f$ to obtain an equivalent atomic negotiation.\\

We can now conclude the proof.
By the maximality of $\rho$ and (1), $\rho$ reduces $\N$ to a negotiation containing at most two atoms.
By (2), $\sigma$ reduces $\N$ to an atomic negotiation. 
\end{proof}

As a corollary of Theorem \ref{thm:compacyc} we obtain \jd{the} completeness result:

\begin{corollary}
\label{thm:completeness1}
The shortcut rule, merge rule and useless  arc rules are complete for acyclic, weakly 
deterministic negotiations.
\end{corollary}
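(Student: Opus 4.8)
The plan is to obtain the corollary as an essentially immediate consequence of Theorem \ref{thm:compacyc}, which already carries all the substantive work. First I would recall the definition of completeness: a set of correct rules is complete for a class of negotiations exactly when every \emph{sound} negotiation in the class can be reduced, by a finite sequence of rule applications, to a single atom. Since the shortcut, merge, and useless arc rules are correct (by the correctness results proved earlier, including Theorem \ref{thm:shortcutcorrect}), it remains only to exhibit, for each sound negotiation in the class, such a finite reducing sequence.

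I would then fix an arbitrary sound, acyclic, weakly deterministic negotiation $\N$ and construct a \emph{maximal} reduction sequence $\rho$ for it over the three rules, simply by applying an applicable rule as long as one exists; such a $\rho$ exists by construction. By the proposition stating that the rules preserve weak determinism and acyclicity, every negotiation occurring along $\rho$ is again acyclic and weakly deterministic, so Theorem \ref{thm:compacyc} applies directly to $\N$: the sequence $\rho$ is finite, and because $\N$ is sound it reduces $\N$ to an atomic negotiation. This finite reduction of $\N$ to a single atom is precisely what completeness demands.

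For the informal ``and only them'' part of completeness I would observe that it follows from correctness rather than from Theorem \ref{thm:compacyc}: a one-atom negotiation is always sound and the rules preserve soundness, so no unsound negotiation can ever be reduced to a single atom. I do not expect any genuine obstacle here, since Theorem \ref{thm:compacyc} does the real work. The only points requiring a line of care are justifying that a maximal reduction sequence exists and stays inside the class (handled by the preservation proposition) and matching the rule set correctly, noting in particular that the iteration rule is absent because an acyclic negotiation has no self-loops and hence never satisfies that rule's guard.
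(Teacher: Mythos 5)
Your proposal is correct and matches the paper, which states the corollary as an immediate consequence of Theorem \ref{thm:compacyc} (a maximal reduction sequence exists, is finite, stays in the class by the preservation proposition, and reduces every sound negotiation to an atom, while correctness rules out reducing unsound ones). Nothing further is needed.
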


\begin{figure}[p]
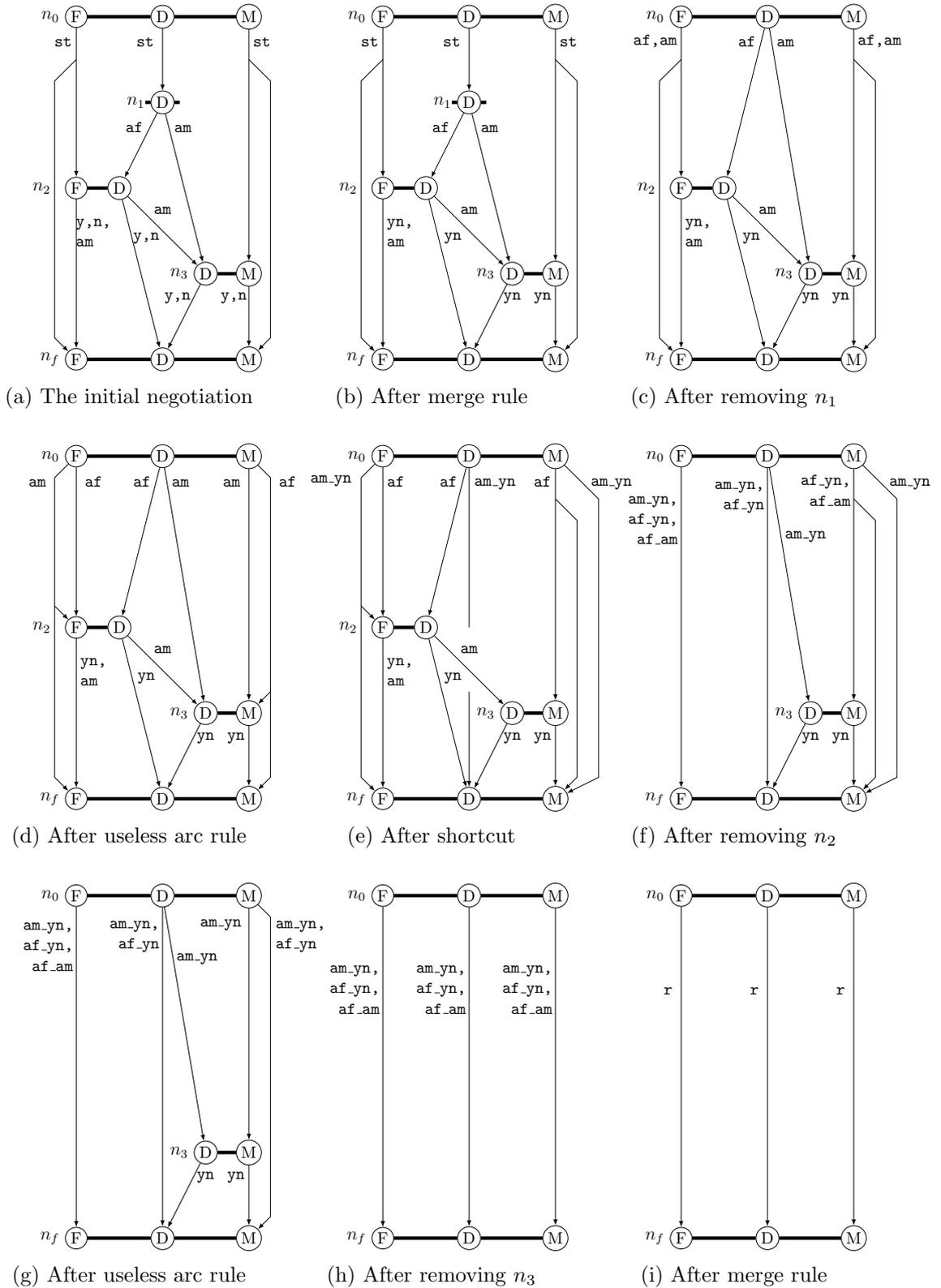

\centering
\begin{subfigure}[c]{0.32\textwidth}
	\scalebox{0.7}{
		\input{tikz/wdetExample/1}
	}
	\subcaption{The initial negotiation}
	\label{fig:summary:init}
\end{subfigure}
\begin{subfigure}[c]{0.33\textwidth}
	\scalebox{0.7}{
		\input{tikz/wdetExample/2}
	}
	\subcaption{After merge rule}
	\label{fig:summary:afterMerge}
\end{subfigure}
\begin{subfigure}[c]{0.32\textwidth}
	\scalebox{0.7}{
		\input{tikz/wdetExample/3}
	}
	\subcaption{After removing $n_1$}
	\label{fig:summary:afterShortcut}
\end{subfigure}\\
\vspace{0.5cm}
\begin{subfigure}[c]{0.32\textwidth}
	\scalebox{0.7}{
		\input{tikz/wdetExample/4}
	}
	\subcaption{After useless arc rule}
	\label{fig:summary:afterUseless}
\end{subfigure}
\begin{subfigure}[c]{0.33\textwidth}
	\scalebox{0.7}{
		\input{tikz/wdetExample/5}
	}
	\subcaption{After shortcut}
	\label{fig:summary:afterShortcut2}
\end{subfigure}
\begin{subfigure}[c]{0.32\textwidth}
	\scalebox{0.7}{
		\input{tikz/wdetExample/6}
	}
	\subcaption{After removing $n_2$}
	\label{fig:summary:afterShortcut3}
\end{subfigure}\\
\vspace{0.5cm}
\begin{subfigure}[c]{0.32\textwidth}
	\scalebox{0.7}{
		\input{tikz/wdetExample/7}
	}
	\subcaption{After useless arc rule}
	\label{fig:summary:afterUseless2}
\end{subfigure}
\begin{subfigure}[c]{0.33\textwidth}
	\scalebox{0.7}{
		\input{tikz/wdetExample/8}
	}
	\subcaption{After removing $n_3$}
	\label{fig:summary:afterShortcut4}
\end{subfigure}
\begin{subfigure}[c]{0.32\textwidth}
	\scalebox{0.7}{
		\input{tikz/wdetExample/9}
	}
	\subcaption{After merge rule}
	\label{fig:summary:afterMerge2}
\end{subfigure}
\caption{Example of the summary procedure}
\label{fig:summary}
\end{figure}

Figure \ref{fig:summary} \jd{shows a weakly deterministic negotiation, which 
is a slightly modified version of the Father-Daughter-Mother negotiation shown in 
Figure \ref{fig:summary:init}: Daughter now has the choice in $n_1$ whether to ask her 
Father (and possibly later her mother) or to directly ask her mother. Father thus has 
a nondeterministic edge to $n_f$ for the latter case. This negotiation is sound, 
acyclic and weakly deterministic, and can 
therefore be summarized by Theorem \ref{thm:completeness1}}.
The figure also  illustrates the summarization of the negotiation.  

We apply the merge rule twice to merge the results  $\texttt{y}$ and $\texttt{n}$ of 
the atoms $n_2$ and $n_3$. We call the merged results ``$\texttt{yn}$''.
The resulting negotiation is shown in Figure \ref{fig:summary:afterMerge}.

Now we shortcut the result $\texttt{st}$ of $n_0$, which unconditionally enables $n_1$ 
and has exclusive access to it. The result $\texttt{st}$ and the node $n_1$ are removed, and 
two new results $\texttt{af}$ and $\texttt{am}$ are added to $n_0$, yielding the negotiation 
in Figure \ref{fig:summary:afterShortcut}.

In this negotiation, applying the useless arc rule to $(n_0, \texttt{F}, \texttt{af})$ and 
$(n_0, \texttt{D}, \texttt{af})$ removes the arc $(n_0, \texttt{F}, \texttt{af}, n_f)$. 
Similarly, applying it to $(n_0, \texttt{M}, \texttt{am})$ and 
$(n_0, \texttt{D}, \texttt{am})$ removes the arc $(n_0, \texttt{M}, \texttt{am}, n_f)$. We 
obtain the negotiation of Figure \ref{fig:summary:afterUseless}. 

Now the result $\texttt{am}$ of $n_0$ unconditionally enables $n_3$, and 
the result $\texttt{am}$ of $n_2$ commits to $n_3$. So we can apply the shortcut rule,
removing the result $\texttt{am}$ of $n_0$, and adding a new result $\texttt{am\_yn}$.
This yields the negotiation in Figure \ref{fig:summary:afterShortcut2}.

The useless arc rule can be applied to $(n_0, F, \texttt{am\_yn})$ and $(n_0, D, \texttt{am\_yn})$,
removing the arc $(n_0, \texttt{F}, \texttt{am\_yn}, n_2)$.
After that, the result $\texttt{af}$ of $n_0$ unconditionally enables $n_2$ and has exclusive access 
to it. So we can apply the shortcut rule and remove $n_2$. As $n_2$ has two results, $\texttt{yn}$ and $\texttt{am}$, two new results $\texttt{af\_yn}$ and $\texttt{af\_am}$ are added to $n_0$. We obtain the negotiation of Figure \ref{fig:summary:afterShortcut3}.

We can now apply the useless arc rule twice, once to $(n_0, \texttt{M}, \texttt{af\_yn})$ 
and $(n_0, \texttt{F}, \texttt{af\_yn})$, and once to $(n_0, \texttt{M}, \texttt{af\_am})$
and $(n_0, \texttt{F}, \texttt{af\_am})$. The result is shown in Figure \ref{fig:summary:afterUseless2}.

Outcome $\texttt{af\_am}$ of $n_0$ unconditionally enables $n_3$, and has exclusive access to it. 
After the shortcut we obtain the negotiation of Figure \ref{fig:summary:afterShortcut4}. Applying the 
merge rule to $\texttt{am\_yn}$, $\texttt{af\_yn}$ and $\texttt{af\_am}$ yields a single result 
$\r$ of $n_0$ (result in Figure \ref{fig:summary:afterMerge2}). A final application of the shortcut 
rule to result $\r$ of $n_0$, which unconditionally enables $n_f$ with exclusive access, yields
the summary of the initial negotiation.

\subsection{Acyclic deterministic negotiations: Runtime analysis}
\label{sec:acyclictime}

Theorem \ref{thm:compacyc} shows that every maximal reduction sequence 
is finite, but does not provide any bound on its length. The family of 
negotiations shown at the 
top of Figure \ref{fig:expacyclic} shows that the length of maximal reduction sequences can
grow exponentially in the size of the negotiation, even in the sound and deterministic case. 
Indeed, consider a sequence of applications of the shortcut rule that always give priority to applications involving 
the initial atom. The second negotiation of the figure shows the result of shortcutting 
$(n_0, \texttt{a})$ and $(n_1, \texttt{a}_1)$ into a new outcome $(n_0, \texttt{a}\, \texttt{a}_1)$,
shortcutting  $(n_0, \texttt{a})$ and $(n_1, \texttt{b}_1)$ into a new outcome 
$(n_0, \texttt{a}\, \texttt{b}_1)$, and removing $n_1$. The third negotiation
shows the result of shortcutting $(n_0, \texttt{a}\, \texttt{a}_1)$ and $(n_0, \texttt{a}\, \texttt{b}_1)$
with the two outcomes of $n_2$, after which $n_0$ has four outcomes. 
Proceeding in this way we generate $2 ^{k-1}$ different results for $n_0$. 

\begin{figure}[htp]
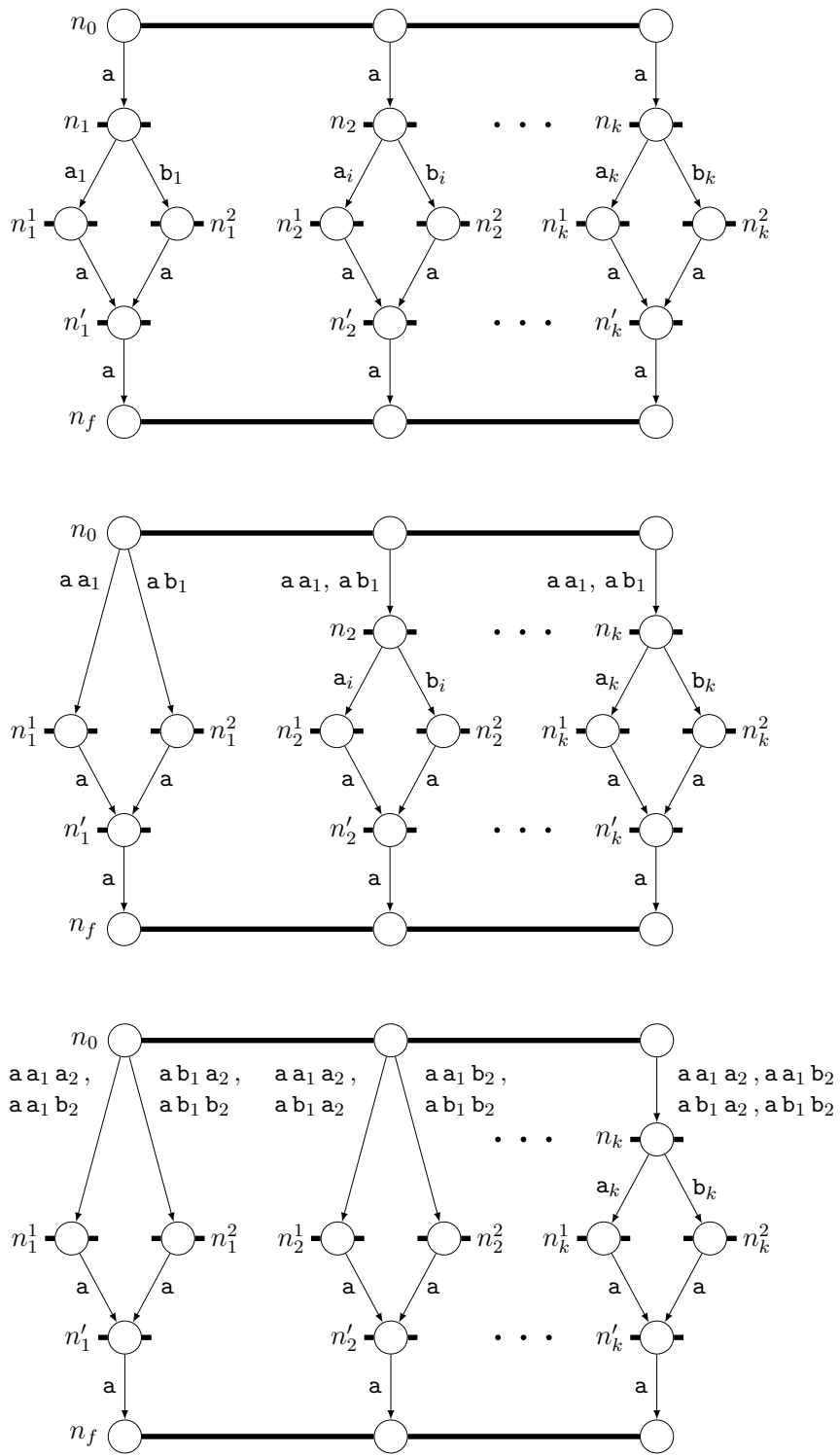

\centering	
\scalebox{0.9}{\input{tikz/expacyclic1}}\\[1.0cm]
\scalebox{0.9}{\input{tikz/expacyclic2}}\\[1.0cm]
\hspace{0.85cm}\scalebox{0.9}{\input{tikz/expacyclic3}}
\caption{An exponentially long reduction sequence}
\label{fig:expacyclic}
\end{figure}

However, there is another maximal reduction sequence with only a {\em polynomial} number of rule applications.
At the second negotiation of the figure we first apply the shortcut rule 
to $(n_0, \texttt{a}\,\texttt{a}_1)$ and $(n_1^1, \texttt{a})$, and then to $(n_0, \texttt{a}\,\texttt{b}_1)$ 
and $(n_1^2, \texttt{a})$. This yields two outcomes that can be merged with the help of the merge rule,
yielding a negotiation in which $n_0$ has again only one outcome. Proceeding in this way we
completely reduce the negotiation after only \ph{$5k+1$} rule applications.

The reason why the first reduction sequence ``blows up'' the negotiation is that an application of
the shortcut rule may actually {\em increase} the size of the negotiation. Indeed, when the 
rule is applied to an outcome $(n, \r)$ that unconditionally enables an atom $n'$ with $kl$ results for $l > 1$, 
then the rule removes the result $\r$ but adds $l$ new results, and so the total number of results increases.
This problem is avoided by applying the shortcut rule to an outcome only if the atom unconditionally enabled by it 
has one single result.

\begin{definition}
The \emph{deterministic shortcut rule}, or \emph{d-shortcut} rule, 
is defined by adding to the guard of the 
shortcut rule in Definition \ref{def:shortcut} a new condition: $n'$ has at most one result. 
(The action of the d-shortcut rule is identical
to that of the shortcut rule.)
\end{definition}

This reduction strategy corresponds to Algorithm \ref{alg:acyclic}, where $R(\N)$ denotes the set 
\ph{of} reducible outcomes of $\N$ (see Definition \ref{def:reducibleoutcome}). The algorithm also gives 
priority to the merge rule over the d-shortcut rule.

\begin{algorithm}[t]
\caption{Summarization algorithm for an acyclic negotiation $\N$}
\begin{algorithmic}[1]
\While{$R(\N) \neq \emptyset$}
\If{possible} apply the merge rule 
\ElsIf{possible} apply the d-shortcut rule 
\Else \ stop and answer ``unsound''
\EndIf
\State $\N := $ negotiation obtained after the application of the rule
\EndWhile
\end{algorithmic}
\label{alg:acyclic}
\end{algorithm}

In the rest of the section we prove that for an acyclic {\em deterministic} negotiation $\N$
Algorithm \ref{alg:acyclic} terminates after at most length $K \cdot L$ rule applications,
where $K$ and $L$ are the number of atoms and the number of outcomes of $\N$, respectively.
Moreover, if $\N$ is sound, then the algorithm returns an atomic negotiation. Whether such
an \ph{efficient} reduction algorithm also exists for the weakly deterministic case is still open.  


We first prove that Algorithm \ref{alg:acyclic} terminates after at most $K \cdot L$ rule applications
for arbitrary acyclic negotiations. For this we introduce a measure that
strictly decreases when a rule is applied. 

\begin{definition}
\label{def:nrsequence}
Let $\N$ be a negotiation, and let $(n, \r)$ be an outcome of $\N$. 
A {\em $(n, \r)$-sequence} is a finite occurrence sequence that starts with $(n, \r)$ and satisfies $P_{n'} \subseteq P_n$ for every atom occurring in $\sigma$.  A $(n,\r)$-sequence is {\em maximal} if it cannot be extended to a longer $(n,\r)$-sequence.
\end{definition}

\begin{fact}
Let $\vx_n$ be the marking given by $\vx_n(p) = \{n\}$ if $p \in P_n$ and $\vx_n(p) = \emptyset$ otherwise.
A sequence $\sigma$ is an $(n, \r)$-sequence if{}f it starts with $(n, \r)$ and is enabled at $\vx_n$.
Moreover, $\sigma$ is maximal if{}f $\vx_n \by{\sigma} \vx$ for a marking $\vx$ that enables no atom.
\end{fact}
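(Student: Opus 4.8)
The plan is to prove the two biconditionals separately, and in both the single structural observation that carries the argument is that no agent outside $P_n$ ever receives a token. First I would record this as an invariant: whenever $\vx_n \by{\rho} \vx'$ for a sequence $\rho$ all of whose atoms $n'$ satisfy $P_{n'} \subseteq P_n$, we have $\vx'(p) = \emptyset$ for every $p \notin P_n$. This follows by a straightforward induction on $\rho$, since $\vx_n(p) = \emptyset$ for $p \notin P_n$ by definition and a step $(n',\r')$ with $P_{n'} \subseteq P_n$ rewrites only the entries of agents in $P_n$, leaving those outside untouched.

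For the first biconditional I would argue in two directions. For ``enabled at $\vx_n$ implies $(n,\r)$-sequence'', assume $\vx_n \by{\sigma}$ with $\sigma$ starting with $(n,\r)$; I must show $P_{n'} \subseteq P_n$ for every atom $n'$ of $\sigma$. Looking at the first step whose atom $n'$ has a party $p \notin P_n$, the invariant applied to the preceding (shorter) prefix gives $\vx'(p) = \emptyset$, so $n' \notin \vx'(p)$ and $n'$ cannot be enabled there, a contradiction. For the converse, assume $\sigma$ is an $(n,\r)$-sequence, so $\vx \by{\sigma}$ for some marking $\vx$, which necessarily enables $n$ since $(n,\r)$ is the first step. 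Firing $(n,\r)$ from $\vx$ and from $\vx_n$ produces markings that agree on all entries indexed by $P_n$ (both equal $\trans(n,p,\r)$ there); as every remaining atom of $\sigma$ reads and writes only $P_n$-entries, the tail of $\sigma$ fires from $\vx_n$ exactly as it does from $\vx$, giving $\vx_n \by{\sigma}$.

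For the maximality statement I would combine the first part with the invariant. Write $\vx_n \by{\sigma} \vx$. If $\vx$ enables no atom, then no one-step extension $\sigma\,(n',\r')$ fires from $\vx_n$, so by the first biconditional none is an $(n,\r)$-sequence; hence $\sigma$ is maximal, it being enough to exclude one-step extensions because any strictly longer $(n,\r)$-sequence extending $\sigma$ has a length-$(|\sigma|+1)$ prefix that is again an $(n,\r)$-sequence. Conversely, if $\vx$ enables some atom $m$, then every party of $m$ has a non-empty entry in $\vx$ and so, by the invariant, lies in $P_n$; thus $P_m \subseteq P_n$, the extension $\sigma\,(m,\r')$ is again an $(n,\r)$-sequence, and $\sigma$ is not maximal.

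The genuinely delicate step --- and the only place where the precise minimality of $\vx_n$ is used --- is the ``$(n,\r)$-sequence implies enabled at $\vx_n$'' direction: one must check that replacing an arbitrary enabling marking $\vx$ by the smaller marking $\vx_n$ does not disable any later outcome. This is exactly what the combination ``the two markings agree on $P_n$ after the first step'' and ``all later atoms stay within $P_n$'' guarantees; everything else is routine induction and bookkeeping of which agents are rewritten by a step.
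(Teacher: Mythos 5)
Your proof is correct. The paper states this result as a \emph{Fact} without giving any proof, treating it as immediate from the definitions; your argument --- the invariant that agents outside $P_n$ keep empty entries, the lockstep comparison of firing from $\vx_n$ versus an arbitrary enabling marking, and the reduction of maximality to one-step extensions --- is exactly the routine verification the authors are implicitly relying on, so there is nothing to contrast.
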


\begin{definition}
\label{def:index}
Let $\N$ be a negotiation and let $(n, \r)$ be an outcome of $\N$. The {\em index}
of $(n,\r)$ in $\N$ is the length of a longest maximal 
$(n, \r)$-sequence of $\N$ minus $1$, if such sequence exists, or $\infty$, if $\N$ has 
arbitrarily long $(n, \r)$-sequences. The {\em index} of $\N$, 
denoted by $I(\N)$, is the sum of the indices of all outcomes
$(n, \r)$ such that $n \neq n_f$.
\end{definition}

\begin{lemma}
\label{lem:lemacyc}
Let $\N$ be an acyclic negotiation, and let $\N'$ be the result of applying the merge or the d-shortcut rule to $\N$.
Then $I(\N') < I(\N)$.
\end{lemma}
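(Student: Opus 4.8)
The plan is to exhibit, for each of the two rules, an explicit comparison of the two index sums of Definition \ref{def:index}, so that $I(\N') < I(\N)$ follows by tracking how the maximal $(n,\r)$-sequences change under the rule. First I would record that $I(\N)$ is well defined and finite: since $\N$ is acyclic, every atom occurs at most once in any occurrence sequence, so every $(n,\r)$-sequence has length at most $|N|$, no outcome has index $\infty$, and $I(\N)$ is a finite sum of finite numbers. Throughout I would use that the index of an outcome depends only on the transition function $\trans$ and the party sets $P_{n'}$ (the transformers play no role in which atoms are enabled at $\vx_n$), so renaming or merging results that have identical $\trans$-behaviour leaves the relevant sequences intact up to relabelling.

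For the d-shortcut rule the heart of the argument is a correspondence between maximal sequences of $\N$ and of $\N'$ that shortens lengths by exactly one. Let $(n,\r)$ be the shortcut outcome and $n'$ the atom it unconditionally enables, with unique result $\r'$. The key structural observation is that after $(n,\r)$ fires every party of $n'$ is ready only for $n'$ and stays so until $n'$ occurs; hence in any maximal $(n,\r)$-sequence the atom $n'$ eventually fires, every atom fired before it uses only parties in $P_n \setminus P_{n'}$, and those atoms commute with $n'$. So a longest maximal $(n,\r)$-sequence may be reordered into the form $(n,\r)(n',\r')\gamma$, and collapsing $(n,\r)(n',\r')$ into the single new outcome $(n,\r'_s)$ yields a maximal $(n,\r'_s)$-sequence of $\N'$ that is exactly one step shorter (the marking after $(n',\r')$ equals the marking after $(n,\r'_s)$, and by acyclicity $\gamma$ revisits neither $n$ nor $n'$). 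This gives $\mathrm{ind}_{\N'}(n,\r'_s) = \mathrm{ind}_{\N}(n,\r) - 1$. For every other outcome I would prove a monotonicity bound via the ``un-shortcut'' map that replaces each $(n,\r'_s)$ back by $(n,\r)(n',\r')$: it sends an $(m,s)$-sequence of $\N'$ to an at-least-as-long $(m,s)$-sequence of $\N$ (the inserted $n'$ is legal because $P_{n'} \subseteq P_n \subseteq P_m$), which after extension to a maximal sequence shows $\mathrm{ind}_{\N'}(m,s) \le \mathrm{ind}_{\N}(m,s)$. Summing, the shortcut outcome contributes a strict drop of $1$, no surviving outcome increases, and if $(n,\r)$ had exclusive access the removed atom $n'$ only deletes nonnegative terms; hence $I(\N') \le I(\N) - 1 < I(\N)$.

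For the merge rule let $\r_1,\r_2$ be the merged results of $n \neq n_f$ and $\r$ the fresh result, so $\trans(n,p,\r) = \trans(n,p,\r_1) = \trans(n,p,\r_2)$ for all $p$. Because the two outcomes reach the identical marking from $\vx_n$ and $n$ never recurs by acyclicity, the maximal $(n,\r_1)$-, $(n,\r_2)$- and $(n,\r)$-sequences share the same continuations, whence $\mathrm{ind}_{\N}(n,\r_1) = \mathrm{ind}_{\N}(n,\r_2) = \mathrm{ind}_{\N'}(n,\r) =: i$; and every other outcome keeps its index, since a single sequence uses at most one result of $n$ and relabelling $\r_1,\r_2 \mapsto \r$ preserves lengths. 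Collapsing two copies of index $i$ into one therefore gives $I(\N') = I(\N) - i$, and the whole statement for merge reduces to the claim $i \ge 1$, i.e. that after firing $(n,\r_1)$ from $\vx_n$ some atom with party set contained in $P_n$ is again enabled, so that the longest maximal $(n,\r_1)$-sequence has length at least two. This positivity claim, and not the bookkeeping, is the delicate point of the lemma: it is exactly where the non-finality of $n$ (which forces each party of $n$ to have a nonempty continuation) and the acyclic structure of $\N$ must be used, and I expect it to be the main obstacle, since a naive reading allows a non-final atom whose successors all involve an agent outside $P_n$, for which the local sequence would deadlock immediately.
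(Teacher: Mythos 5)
Your d-shortcut argument is correct and is the paper's argument with the details filled in: the paper merely asserts that the index of every surviving outcome does not increase and that the new outcome's index drops by exactly one, whereas you justify both (the reordering of a longest maximal $(n,\r)$-sequence into $(n,\r)(n',\r')\gamma$ using that the parties of $n'$ are frozen until $n'$ fires, and the ``un-shortcut'' map for all other outcomes). The gap is exactly where you say it is, in the merge case, and your suspicion is justified: the positivity claim $i\geq 1$ is \emph{false}. Take agents $p,q$ and atoms $n_0,n,n_f$ with $P_{n_0}=P_{n_f}=\{p,q\}$ and $P_n=\{p\}$; let the single result of $n_0$ send $p$ to $n$ and $q$ to $n_f$, and let $n$ have two results $\r_1,\r_2$ both sending $p$ to $n_f$. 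This is a sound, acyclic, deterministic negotiation and the merge rule applies to $\r_1,\r_2$; but $P_{n_f}\not\subseteq P_n$, so the only $(n,\r_i)$-sequence is $(n,\r_i)$ itself and both outcomes have index $0$. Merging them leaves $I(\N)$ unchanged, so the lemma as literally stated fails here.

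You have not missed an argument that the paper supplies: the paper's treatment of the merge case stops at $I(\N')=I(\N)-\ell$ and tacitly needs $\ell\geq 1$, which it never establishes and which the example above refutes. The consequence that actually matters (Proposition \ref{prop:polcomp}) survives under a mild repair: use the measure $I(\N)+L(\N)$, where $L(\N)$ is the number of outcomes. A merge removes one outcome and, as you showed, increases no index; a d-shortcut strictly decreases $I(\N)$ and does not increase $L(\N)$, since the guard forces $n'$ to have exactly one result, so $n$ trades one result for one and the outcome of $n'$ may additionally disappear. The combined measure therefore strictly decreases under both rules and is bounded by $K\cdot L+L$, preserving the intended $O(K\cdot L)$ bound on the length of reduction sequences.
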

\begin{proof}
If the merge rule is applied to merge two outcomes
$(n, \r_1)$ and $(n, \r_2)$, then $(n, \r_1)$ and $(n, \r_2)$ have the same index $\ell$,
and $I(\N') =  I(\N)  - \ell$. Assume now that the d-shortcut rule is applied to some outcome $(n, \r)$ of $\N$,
yielding a new outcome $(n, \hat{\r})$. Since $(n, \hat{\r})$ is a shortcut, for every outcome 
$(n',\r') \neq (n, \hat{\r})$ appearing in both $\N$ and $\N'$, 
the index of $(n',\r')$ in $\N'$ is smaller than or equal to the index of 
$(n',\r')$ in $\N$. Moreover, we have $\hat{\ell} = \ell-1$, where $\ell$ and $\hat{\ell}$
are the indices of $(n, \r)$ in $\N$ and $(n, \hat{\r})$ in $\N$, respectively. So
$I(\N') \leq  I(\N) - \ell + \hat{\ell} < I(\N)$,
and we are done.
\end{proof}

\begin{proposition}
\label{prop:polcomp}
Let $\N$ be an acyclic negotiation with $K$ atoms and $L$ outcomes. 
Every maximal reduction sequence of $\N$ containing only applications 
of the merge and d-shortcut rules has length at most $K\cdot L$. In particular, 
Algorithm \ref{alg:acyclic} terminates after at most $K\cdot L$ iterations of the \textbf{while}-loop.
\end{proposition}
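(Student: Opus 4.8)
The plan is to use the index $I(\N)$ of Definition \ref{def:index} as a termination measure: by Lemma \ref{lem:lemacyc} every application of the merge or the d-shortcut rule to an acyclic negotiation strictly decreases the index, so it suffices to bound the initial value $I(\N)$ and to check that the index remains a non-negative integer throughout the reduction sequence.

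First I would bound the index of a single outcome. Since $\N$ is acyclic, no atom occurs twice in any occurrence sequence, so every $(n,\r)$-sequence has length at most $K$, the number of atoms. In particular every $(n,\r)$-sequence is finite, so the index of each outcome $(n,\r)$ is well-defined (never $\infty$) and is at most $K-1$. Moreover the index is a non-negative integer, because a maximal $(n,\r)$-sequence has length at least $1$ (it begins with $(n,\r)$). Summing the indices over the at most $L$ outcomes $(n,\r)$ with $n \neq n_f$ yields $I(\N) \le (K-1)\cdot L < K\cdot L$.

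Next I would combine this bound with Lemma \ref{lem:lemacyc}. Along any reduction sequence that uses only the merge and d-shortcut rules, all intermediate negotiations stay acyclic, since the reduction rules preserve acyclicity. Hence the lemma applies at every step, and $I$ strictly decreases by at least one per rule application while remaining $\ge 0$. Therefore such a reduction sequence contains at most $I(\N) \le (K-1)L < K\cdot L$ applications, which proves the first claim. For the ``in particular'' part I would observe that Algorithm \ref{alg:acyclic} applies only the merge and d-shortcut rules, at most one per iteration of the \textbf{while}-loop, so the number of iterations is bounded by the same quantity $K\cdot L$, and the algorithm terminates.

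The step I expect to require the most care is the bound on the index of a single outcome: one must use acyclicity both to exclude the value $\infty$ and to obtain the clean bound $K-1$, and one must invoke preservation of acyclicity under the rules so that this bound together with Lemma \ref{lem:lemacyc} remains valid at every intermediate negotiation encountered along the reduction sequence.
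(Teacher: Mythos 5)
Your proposal is correct and follows essentially the same route as the paper's proof: it uses Lemma \ref{lem:lemacyc} together with the bound $I(\N) \leq K\cdot L$ obtained from acyclicity (each outcome has index at most $K-1$, and there are at most $L$ outcomes). The extra details you supply — that the index is never $\infty$, that it stays a non-negative integer, and that acyclicity is preserved along the sequence so the lemma applies at every step — are all correct and are left implicit in the paper's more terse argument.
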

\begin{proof}
By Lemma \ref{lem:lemacyc}, and since $I(\N) \geq 0$ by definition, it suffices to show 
$I(\N) \leq  K \cdot L$. Since $\N$ is acyclic, every occurrence sequence fires an atom at most once. So the index
of any outcome of $\N$ is at most $K-1$, and hence $I(\N) \leq K \cdot L$. 
\end{proof}

In the rest of the section we prove that if $\N$ is acyclic, deterministic, and sound, then 
Algorithm \ref{alg:acyclic} completely reduces $\N$, i.e., returns an atomic negotiation.
The proof, which is rather involved, proceeds in three steps. First, 
we show a technical lemma showing that sound and deterministic acyclic negotiations can be reduced so that 
all agents participate in every atom with more 
than one result (Lemma \ref{lem:acyclicpoly1}). Then we use this result to prove that, loosely speaking, 
every sound and deterministic acyclic negotiation can be reduced to a ``replication'' of a negotiation with only one agent
(Lemma \ref{lem:acyclicpoly2}). Intuitively, a replication is a one-agent negotiation in disguise: Even if the negotiation 
has more than one agent, after each outcome they all move synchronously to the same new atom. Figure \ref{fig:rep} shows a one-agent 
negotiation and its replication to two agents. In the third step, we show that the algorithm reduces replications to atomic 
negotiations. 

\begin{figure}[ht]
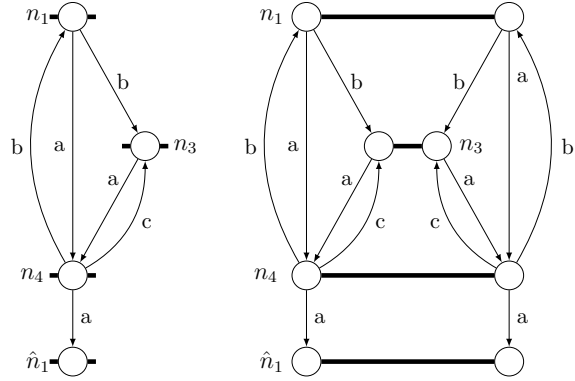

\centering
       \resizebox{!}{5cm}{
	\input{tikz/cyclicExample/fragmentafter2.tex}
        }
        \resizebox{!}{5cm}{
	\input{tikz/cyclicExample/fragmentafter1.tex}
        }
        \quad
	\caption{A one-agent negotiation and its replication to two agents.}
	\label{fig:rep}
\end{figure}

The following lemma is proved in Appendix \ref{app:irred}:

\begin{lemma}
\label{lem:acyclicpoly1} 
Let $\N$ be an irreducible sound and deterministic acyclic negotiation and 
let $n \neq n_f$ be an atom of $\N$ with more than one result.
Then every agent participates in $n$. 
\end{lemma}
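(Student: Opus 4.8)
The plan is to argue by contradiction, after first recording a structural consequence of irreducibility that is special to the deterministic acyclic case. Since $\N$ is deterministic, every set $\trans(n,p,\r)$ with $n\neq n_f$ is a singleton, so the guard of the useless-arc rule (Definition~\ref{def:useless}), which needs a hyperarc with two arcs, can never be met; and since $\N$ is acyclic it has no self-loops, so the iteration rule (Definition~\ref{def:iteration}) never applies. Hence here ``irreducible'' means irreducible under the merge and d-shortcut rules only, i.e.\ exactly the rules of Algorithm~\ref{alg:acyclic}. I would then observe that in a deterministic negotiation the ``exclusive access or commit'' part of the shortcut guard (Definitions~\ref{def:depends} and~\ref{def:commits}) is \emph{automatic}: if $(n,\r)$ unconditionally enables a non-final atom $n'$ (Definition~\ref{def:uncond}), then every port of $n'$ receives the arc from $(n,\r)$; if no other outcome commits to $n'$ then $(n,\r)$ has exclusive access to $n'$, and otherwise the commit disjunct holds. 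Calling this fact $(\star)$: in an irreducible negotiation \emph{no} outcome can unconditionally enable a non-final atom that has exactly one result, for otherwise the d-shortcut rule, whose extra guard asks precisely that $n'$ have at most one result, would apply.

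Next I set up the contradiction. Assume $n\neq n_f$ has at least two results but some agent $a\notin P_n$. We may assume $n\neq n_0$, since $P_{n_0}=\agents$ by Definition~\ref{def:neg}(1). Because the merge rule (Definition~\ref{def:merge}) is inapplicable, any two distinct results of $n$ must differ in their transition function at some party; by determinism this yields a \emph{split} party $p\in P_n$ and results $\r_1\neq \r_2$ with $\trans(n,p,\r_1)=\{m_1\}$ and $\trans(n,p,\r_2)=\{m_2\}$ for distinct atoms $m_1\neq m_2$.

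The engine of the proof is a forward descent along the acyclic order. I ask whether $(n,\r_1)$ unconditionally enables $m_1$. If it does, then $m_1\neq n_f$ (otherwise unconditional enabling forces $P_n\supseteq P_{n_f}=\agents$, contradicting $a\notin P_n$), and by $(\star)$ the atom $m_1$ has at least two results; moreover unconditional enabling gives $P_{m_1}\subseteq P_n$, so still $a\notin P_{m_1}$. Thus $m_1$ is a strictly later atom in the topological order satisfying the same hypotheses as $n$, and I restart the argument there. As the order is finite, this descent halts, so after replacing $n$ by the last atom of the descent we may assume that $(n,\r_1)$ does \emph{not} unconditionally enable $m_1$; equivalently, $m_1$ has a ``foreign'' port, a party that $(n,\r_1)$ does not drive into $m_1$ (a party outside $P_n$, or a party of $P_n$ sent elsewhere by $\r_1$).

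The remaining case is the crux and the hardest part. Here I would use soundness: since $\N$ is sound, $n$ is enabled at some reachable marking $\vx$, and by clause~(b) of soundness \emph{both} branches $(n,\r_1)$ and $(n,\r_2)$ extend to large steps. Because $a\notin P_n$, firing either result leaves the readiness of $a$ unchanged, so the two branches are forced to reconverge before reaching the common final atom $n_f$. The plan is to follow the two completing occurrence sequences, locate the first atom at which the $\r_1$-branch and the $\r_2$-branch rejoin (first for the split party $p$, ultimately for $a$), and then read off from determinism and acyclicity that at this reconvergence point either two results of some atom carry identical transition functions (a merge, contradicting inapplicability of the merge rule) or some outcome unconditionally enables a one-result atom (contradicting $(\star)$). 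The delicate bookkeeping---showing that the foreign port of $m_1$ cannot be supplied soundly without creating such a reducible configuration, and that the information about the choice at $n$ must propagate to $a$ through exactly such a synchronizing atom---is where the real work lies, and is the step I expect to be the main obstacle; the earlier paragraphs are essentially mechanical given $(\star)$ and the merge-based extraction of the split party.
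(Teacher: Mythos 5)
Your opening observations are sound: in a deterministic negotiation the useless-arc rule cannot fire, in an acyclic one the iteration rule cannot, and your fact $(\star)$ (an outcome unconditionally enabling a non-final one-result atom would trigger the d-shortcut rule, since in the deterministic case ``exclusive access or commit'' is automatic) is exactly the preliminary claim the paper also uses. Your forward descent along unconditional enablings is essentially the paper's step (b) run as a contradiction: a maximal chain of unconditional enablings must end at $n_f$, whence $P_n \supseteq P_{n_f} = \agents$. Up to the minor repair that the descent should continue whenever \emph{some} outcome of the current atom unconditionally enables something (not only the distinguished $(n,\r_1)$), this part is fine.

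The genuine gap is the case you yourself flag as ``the crux'': an atom $n$ with more than one result none of whose outcomes unconditionally enables any atom. You do not prove anything here; you only announce a plan (``locate the first atom at which the two branches rejoin and read off a merge or a d-shortcut''), and that plan does not match a workable argument. The paper's contradiction at this point is not with irreducibility at all but with \emph{acyclicity}: assuming no outcome of $n$ unconditionally enables any atom, it takes the longest occurrence sequence $\sigma$ firable after both $(n,\r_1)$ and $(n,\r_2)$, shows the resulting markings enable nonempty \emph{disjoint} sets $N_1,N_2$ of atoms (nonemptiness uses that if one branch had reached $\vx_f$ the other would too via the same $\sigma$, forcing $\trans(n,p,\r_1)=\trans(n,p,\r_2)$), and then, using the failure of unconditional enabling to find a ``foreign'' port, constructs for every atom of $N_1$ a path from some atom of $N_2$ and vice versa, yielding a cycle. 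Nothing in your sketch produces either a pair of mergeable results or a d-shortcut configuration at a ``reconvergence point'', and since the branches need not reconverge at a single atom (they reconverge only at the level of markings of the agents outside $P_n$), the bookkeeping you defer is not a routine verification but the entire content of the lemma. As it stands the proposal establishes only the reductions surrounding the main claim, not the claim itself.
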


Now we formally define replications. 

\begin{definition}
\label{def:uniform}
An outcome $(n, \r)$ is {\em uniform} if
$\trans(n,p,\r) = \trans(n,p',\r)$ for every two parties $p, p' \in P_n$.
A negotiation is a {\em replication} if all atoms have the same parties 
and every outcome is uniform.
\end{definition}

Observe that final outcomes $(n_f,\r)$ are uniform because
$\trans{(n_f, p, \r)} = \emptyset$ for every agent $p$ and result $\r$.
Moreover, if $(n, \r)$ is uniform and $n \neq n_f$ then there is a unique
atom $n'$ such that $\trans(n,p,\r) = \{n'\}$ for every party $p$. 

\begin{lemma} 
\label{lem:acyclicpoly2} 
Let $\N$ be an irreducible sound and deterministic acyclic negotiation. 
Then $\N$ is a replication.
\end{lemma}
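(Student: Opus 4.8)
The plan is to establish the two defining properties of a replication — that every atom has $\agents$ as its set of parties, and that every outcome is uniform — at the same time, by induction along the acyclic graph of $\N$ starting from $n_0$. The engine of the induction is a local statement: \emph{if $n \neq n_f$ has $P_n = \agents$, then every result $\r$ of $n$ is uniform and its (then unique) successor atom again has $\agents$ as its parties.} Write $U$ for the set of atoms with all agents. Since $n_0 \in U$ and, by condition~(3), every atom lies on a path from $n_0$, once the local statement is proved it shows that the graph-successors of any $U$-atom lie in $U$; so by induction on path length every atom reachable from $n_0$ — that is, every atom — lies in $U$, and each non-final outcome is uniform (final outcomes being uniform trivially). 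That will give exactly a replication.

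To prove the local statement, fix $n \neq n_f$ with $P_n = \agents$ and a result $\r$. By soundness, $n$ is enabled at some reachable marking $\vx$; as $n$ involves all agents and $\N$ is deterministic, necessarily $\vx(p) = \{n\}$ for every $p$, so $\vx$ is the marking that places every agent on $n$. Firing $(n,\r)$ then yields a reachable marking $\vx'$ with $\vx'(p) = \trans(n,p,\r)$ for every agent $p$, a singleton by determinism. Soundness forbids $\vx'$ from being a deadlock, and $\vx' \neq \vx_f$, so some atom $m$ is enabled at $\vx'$; unwinding the definition of enabling, this says precisely that $\trans(n,p,\r) = \{m\}$ for every $p \in P_m$, i.e.\ $(n,\r)$ \emph{unconditionally enables} $m$ in the sense of Definition~\ref{def:uncond} (using $P_n = \agents \supseteq P_m$). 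Acyclicity gives $m \neq n$.

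The argument then splits on $m$. If $m = n_f$, or if $m$ has at least two results, then $P_m = \agents$ (the first by definition, the second by Lemma~\ref{lem:acyclicpoly1}); unconditional enabling then forces $\trans(n,p,\r) = \{m\}$ for \emph{every} agent $p$, so $(n,\r)$ is uniform with successor $m \in U$, as required. The remaining case — $m \neq n_f$ with exactly one result — I expect to be the crux, and I intend to rule it out via irreducibility. First I would note that the useless-arc rule can never fire on a deterministic negotiation, since its guard needs an agent with two outgoing arcs on one outcome; hence ``irreducible'' here means exactly that neither the merge nor the d-shortcut rule applies. Now $(n,\r)$ unconditionally enables the single-result atom $m \neq n_f$, so the result-condition in the d-shortcut guard (Definition~\ref{def:shortcut}) holds, and only one of its two bullets remains to be checked. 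Either $(n,\r)$ has exclusive access to $m$ (first bullet), or it does not; in the latter case some party $p \in P_m$ has $m \in \trans(n'',p,\r'')$ for an outcome $(n'',\r'') \neq (n,\r)$, which by determinism means $\trans(n'',p,\r'') = \{m\}$, so $(n'',\r'')$ commits to $m$ (second bullet). In either case the d-shortcut rule applies, contradicting irreducibility, so this case does not occur.

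Assembling the pieces, the local statement holds for every $n \in U \setminus \{n_f\}$, and the bootstrap from $n_0$ through condition~(3) forces $U$ to contain all atoms and every non-final outcome to be uniform, establishing that $\N$ is a replication. The main obstacle is precisely the single-result non-final case, and the small but decisive observation that makes it go through is the equivalence ``$(n,\r)$ lacks exclusive access to $m$ $\Rightarrow$ some other outcome commits to $m$'', which guarantees the d-shortcut guard is satisfiable whenever a single-result non-final atom is unconditionally enabled.
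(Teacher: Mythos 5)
Your proof is correct. It assembles the same ingredients as the paper's own proof --- Lemma \ref{lem:acyclicpoly1}, soundness to obtain an enabled successor after firing, determinism, and a d-shortcut contradiction for unconditionally enabled single-result non-final atoms --- but packages them as one forward induction along the acyclic graph from $n_0$, whereas the paper argues in two separate stages: first that every atom other than $n_f$ has more than one result (by taking the first single-result atom on an initial occurrence sequence and shortcutting into it), hence by Lemma \ref{lem:acyclicpoly1} that every agent participates in every atom; and second that every outcome is uniform, because a non-uniform outcome fired from the marking placing all agents on $n$ would leave a deadlock. Your uniformity step is essentially the contrapositive of that deadlock argument: no deadlock gives some enabled $m$, and $P_m = \agents$ forces all agents to move to $m$. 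The one place where you are more careful than the paper is the guard of the d-shortcut rule: the paper simply asserts the rule ``can be applied'' to the offending pair, while you verify the exclusive-access/commits-to disjunction explicitly, via the observation that in a deterministic negotiation a failure of exclusive access to $m$ immediately yields another outcome that commits to $m$. That observation is sound (the offending party is necessarily a party of the other atom, whose transition function is a singleton by determinism), and it closes a small gap that the paper leaves implicit.
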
 
\begin{proof}
We first show that every agent participates in every atom. By Lemma \ref{lem:acyclicpoly1},
it suffices to prove that every atom $n \neq n_f$ has more than one result.
Assume the contrary, i.e., some atom different from $n_f$ has only one result.
Since, by soundness, every atom can occur, there is an initial occurrence sequence
$(n_0, r_0) (n_1, r_1) \cdots (n_k, r_k)$ such that $n_k$ has only one result
and all of $n_0, \ldots, n_{k-1}$ have more than one result. By Lemma \ref{lem:acyclicpoly1}, 
all agents participate in all of $n_0, n_1, n_{k-1}$. It follows that $(n_i, r_i)$ 
unconditionally enables $(n_{i+1}, r_{i+1})$ for every $0 \leq i \leq k-1$. 
In particular, $(n_{k-1}, r_{k-1})$ unconditionally enables $(n_k, r_k)$. But 
then, since $n_k$ has only one result, the d-shortcut rule can be applied to 
$n_{k-1}, n$, contradicting the hypothesis that $\N$ is irreducible.

For the second part, assume that some outcome $(n, \r)$ 
is not uniform. Then $n \neq n_f$, and there are two
distinct agents $p_1, p_2$ such that $\trans(n,p_1,\r)= \{n_1\} \neq \{n_2\} = \trans(n,p_2,\r)$. 
By the first part, every agent participates in $n$, $n_1$ and $n_2$. Since $\N$ is sound, some reachable
marking $\vx$ enables $n$. Moreover, since all agents participate in $n$, and $\N$ is
deterministic, the marking $\vx$ only enables $n$. Let $\vx'$ be the marking given by
$\vx \by{(n,\r)} \vx'$. Since $p_1$ participates in all atoms, no atom different from $n_1$ can be enabled
at $\vx'$. Symmetrically, no atom different from $n_2$ can be enabled
at $\vx'$. So $\vx'$ does not enable any atom, contradicting that $\N$ is sound.
\end{proof}
 
\begin{proposition}
\label{prop:irredtheo}
Let $\N$ be an irreducible sound and deterministic acyclic negotiation. Then $\N$ is atomic. 
\end{proposition}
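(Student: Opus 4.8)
The plan is to build directly on Lemma~\ref{lem:acyclicpoly2}, which already tells us that an irreducible sound deterministic acyclic negotiation is a \emph{replication}: all atoms share the full party set $\agents$, and every outcome is uniform. I would argue by contradiction, assuming $\N$ is not atomic (so $n_0 \neq n_f$ and $N$ has at least two elements) and exhibiting an applicable rule, which contradicts irreducibility.

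The first step is to record the two facts that the replication structure buys us for free. For any edge $n \to n'$ of the graph (i.e.\ $n' \in \trans(n,p,\r)$), uniformity gives $\trans(n,p,\r) = \{n'\}$ for \emph{every} $p \in \agents$; since moreover $P_{n'} \subseteq P_n = \agents$, the outcome $(n,\r)$ \emph{unconditionally enables} $n'$ (Definition~\ref{def:uncond}), and every outcome whose target is $n'$ \emph{commits to} $n'$ (Definition~\ref{def:commits}). I would then prove the key claim: every result of every non-final atom leads to $n_f$. Indeed, suppose some non-final $n$ had a result $\r$ with target $n' \neq n_f$ (note $n' \neq n$ by acyclicity). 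Then $(n,\r)$ unconditionally enables $n' \neq n_f$, and either $(n,\r)$ has exclusive access to $n'$ (Definition~\ref{def:depends}), or it does not --- in which case $n'$ has a second incoming edge whose outcome commits to it. Either way one of the first two cases of the shortcut guard (Definition~\ref{def:shortcut}) is met, so the shortcut rule applies, contradicting irreducibility. The crucial observation here is that these two sub-cases are jointly exhaustive precisely because we are in a replication.

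From the key claim, every edge out of $n_0$ reaches $n_f$ in one step; since every atom lies on a path from $n_0$ to $n_f$ (Definition~\ref{def:neg}(3)) and $n_f$ is a sink, this forces $N = \{n_0, n_f\}$. The endgame then splits on the number of results of $n_0$: if $n_0$ has two distinct results they both send every party to $n_f$, so the merge rule (Definition~\ref{def:merge}) applies; if $n_0$ has a single result $\r$, then $(n_0,\r)$ unconditionally enables $n_f$ with exclusive access and $\r$ is the only result of $n_0$, so the \emph{third} case of the shortcut guard applies. Both contradict irreducibility, so $\N$ must be atomic.

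I expect the main obstacle to be the treatment of the final atom. One cannot shortcut an edge into $n_f$ via the first two guard cases (they require the target to differ from $n_f$), which is exactly why the argument must first funnel the whole negotiation down to the two-atom case and only then dispose of it, using the merge rule or the special third case of the shortcut guard. A secondary point worth stating carefully is that ``irreducible'' here must be read with respect to the \emph{general} shortcut rule (equivalently, $R(\N) = \emptyset$), not merely the d-shortcut rule; otherwise the applicability arguments above would not yield genuine contradictions.
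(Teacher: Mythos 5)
Your argument is correct, and it reaches the conclusion by a different route than the paper. The paper does not first collapse $\N$ to two atoms: after invoking Lemma~\ref{lem:acyclicpoly2} it picks an atom $n_{\min}$ minimizing the length of the longest path to $n_f$, observes that by acyclicity and uniformity every result of $n_{\min}$ must target $n_f$, and then applies the merge rule (if $n_{\min}$ has several results) or the d-shortcut rule (if it has one). You instead show globally that in an irreducible replication no outcome can target a non-final atom --- and your observation that ``exclusive access'' and ``some other outcome commits'' are jointly exhaustive once unconditional enabling is known (the first half of the exclusive-access condition is automatic, and its failure can only come from another incoming arc, which by determinism is a committing outcome) is exactly right --- and only then dispose of the two-atom case. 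Your endgame is in fact the more careful of the two: you verify exclusive access to $n_f$, which is only available once $n_0$ is known to be the sole atom pointing at $n_f$, and you use the general shortcut rule there; the paper's final sentence asserts applicability of the d-shortcut rule to the single outcome of $n_{\min}$ without checking exclusive access, and despite the fact that the d-shortcut guard also requires the unconditionally enabled atom (here $n_f$) to have at most one result.

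The price of your route is the caveat you flag yourself, and it is more than a secondary point. The paper's proof confines itself to the merge and d-shortcut rules because Proposition~\ref{prop:irredtheo} is later applied to the terminal negotiations of Algorithms~\ref{alg:acyclic} and~\ref{alg:oneAgent} (see the proofs of Theorem~\ref{thm:acyclicdeterministic} and of Theorem~\ref{thm:oneAgentPoly}, Claim~2), which are only known to admit no further application of the merge, iteration and d-shortcut rules. Your key claim invokes the first two guard cases of the \emph{general} shortcut rule on a target $n'$ that may have several results, and your single-result endgame invokes its third guard case; so your version of the proposition cannot be substituted verbatim where the paper needs the merge-or-d-shortcut form. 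A repair in the spirit of your proof would aim the key claim at an atom all of whose results already target $n_f$ (the paper's $n_{\min}$ is such an atom, and the merge rule reduces it to a single result), but note that shortcutting into a final atom with several results still falls outside the d-shortcut rule as literally defined, so the discrepancy between the two readings of ``irreducible'' is a genuine issue for the surrounding development and worth stating explicitly rather than in passing.
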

\begin{proof}
Assume $\N$ contains more than one atom. For every atom $n \neq n_f$,
let $l(n)$ be the length of the longest path from $n$ to $n_f$ in the graph of $\N$. Let
$n_{\min}$ be any atom such that $l(n_{\min})$ is minimal, and let $\r$ be an arbitrary result 
of $n_{\min}$. By Lemma \ref{lem:acyclicpoly2} there is an atom
$n$ such that $\trans(n_{\min},a,\r) = \{n\}$ for every party $p$ of $n_{\min}$. If $n \neq n_f$, then by
acyclicity we have $l(n) < l(n_{\min})$, contradicting the minimality of $n_{\min}$.
So $n = n_f$, and therefore $\trans(n_{\min},p,\r) = \{n_f\}$ for every result $\r$ of $n_{\min}$ and every party $p$.
If $n_{\min}$ has more than one result, then the merge rule is applicable. If $n_{\min}$ has one 
result, then, since it unconditionally enables $n_f$, the d-shortcut rule is applicable. 
In both cases we get a contradiction to irreducibility.
\end{proof} 

Proposition \ref{prop:irredtheo} proves that every maximal reduction sequence that
uses the merge and d-shortcut rules leads to the same result: a summary of $\N$. 

Putting Proposition \ref{prop:polcomp} and Proposition \ref{prop:irredtheo} \ph{together}, we obtain our  result:

\begin{theorem}
\label{thm:acyclicdeterministic}
Let $\N$ be an acyclic, deterministic negotiation. 
Algorithm \ref{alg:acyclic} terminates after at most $K\cdot L$ iterations of the \textbf{while}-loop,
and returns an atomic negotiation if{}f $\N$ is sound.
\end{theorem}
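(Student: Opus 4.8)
The plan is to obtain the theorem by assembling two results already established for this setting: the termination bound of Proposition \ref{prop:polcomp} and the structural characterization of irreducible negotiations in Proposition \ref{prop:irredtheo}. The glue is an invariant asserting that the negotiation kept by the algorithm stays sound, acyclic and deterministic throughout the run.

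First I would dispatch the termination claim, which is in fact already the final sentence of Proposition \ref{prop:polcomp}. Each iteration of the \textbf{while}-loop that does not stop applies exactly one merge or d-shortcut rule, so the rules applied across the run form a reduction sequence that uses only these two rules and is maximal; since $\N$ has $K$ atoms and $L$ outcomes, Proposition \ref{prop:polcomp} bounds the length of such a sequence, hence the number of iterations, by $K\cdot L$. This needs neither soundness nor determinism and holds for every acyclic $\N$.

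Next I would prove the direction ``$\N$ sound $\Rightarrow$ output atomic''. The invariant holds initially by assumption; acyclicity and determinism are preserved by merge and d-shortcut (the latter being a restriction of the shortcut rule, so covered by the proposition stating that the rules preserve determinism and acyclicity), and soundness is preserved because both rules are correct. The crucial step is to rule out the ``stop and answer unsound'' branch: suppose the loop body is entered with $R(\N)\neq\emptyset$ but neither merge nor d-shortcut applies. Then the current $\N$ is irreducible with respect to $\{\text{merge},\text{d-shortcut}\}$, so by Proposition \ref{prop:irredtheo} it is atomic. But an atomic negotiation is a single atom that is at once initial and final, and I would check directly that it admits no rule at all --- merge is forbidden on the final atom, shortcut and iteration need a second atom or a non-final outcome, and the useless-arc rule needs outgoing arcs --- so $R(\N)=\emptyset$, contradicting $R(\N)\neq\emptyset$. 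Hence some rule is always applicable while the loop runs, the ``unsound'' exit never fires, and the algorithm halts only when $R(\N)=\emptyset$; at that point $\N$ is irreducible with respect to $\{\text{merge},\text{d-shortcut}\}$ and still sound, acyclic and deterministic, so Proposition \ref{prop:irredtheo} gives that it is atomic.

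The converse ``output atomic $\Rightarrow$ $\N$ sound'' is immediate from correctness: a one-atom negotiation is sound, and since merge and d-shortcut are correct they preserve soundness along the whole reduction, so the input $\N$ is sound; contrapositively, an unsound input can never be reduced to an atom and the algorithm ends either at a non-atomic irreducible negotiation or at the ``unsound'' branch. I expect the only delicate point to be the bookkeeping in the forward direction --- specifically the verification that an atomic negotiation has no reducible outcomes, which is exactly what lets Proposition \ref{prop:irredtheo} exclude the premature ``unsound'' exit on sound inputs.
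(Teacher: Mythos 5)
Your proof is correct and follows essentially the same route as the paper, which simply combines Proposition \ref{prop:polcomp} (for the $K\cdot L$ bound) with Proposition \ref{prop:irredtheo} (for atomicity of the irreducible sound result). You additionally spell out a detail the paper leaves implicit --- that the ``unsound'' exit cannot fire on a sound input even though $R(\N)$ counts outcomes reducible by rules the algorithm never uses, because irreducibility with respect to merge and d-shortcut already forces atomicity via Proposition \ref{prop:irredtheo} --- and this bookkeeping is accurate.
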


\section{Summarizing Deterministic Negotiations: The One-agent Case}
\label{sec:cyclic}

We have shown that any maximal reduction sequence for a sound and deterministic acyclic negotiation  
that only applies the d-shortcut rule leads to a summary after a finite, polynomial number of steps. 
The following examples show that this does not hold for cyclic deterministic negotiations,
not even for the degenerate case of negotiations with one single agent. 
\begin{figure}[ht]
\centering
\input{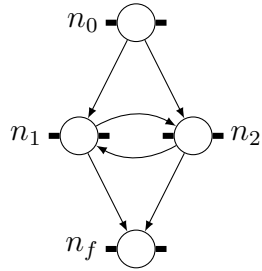}
\caption{A cyclic negotiation where every atom has two outcomes}
\label{fig:cyclictwooutcomes}
\end{figure}

A first problem is that in the cyclic case we can no longer restrict ourselves to the d-shortcut rule. 
Consider the sound negotiation of Figure \ref{fig:cyclictwooutcomes} \footnote{Since in the
examples of this section the names of the results are not important, in all the figures we omit them
for clarity.}
Every atom has two results,
and so the d-shortcut rule cannot be applied. Since the merge and iteration rules are not applicable either, the
negotiation cannot be summarized unless we allow to apply the shortcut rule in more generality.

\begin{figure}[ht]
\centering
\input{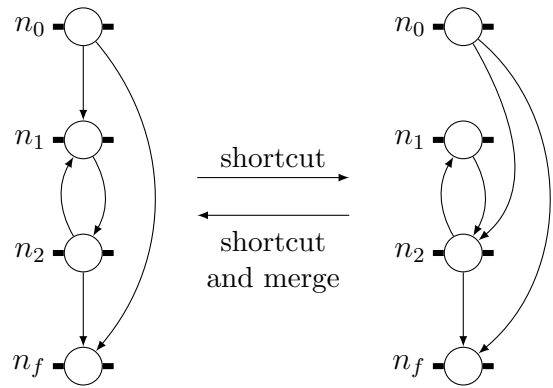}
\caption{Shortcutting and cycles}
\label{fig:cyclicendless}
\end{figure}

A second problem is that in the cyclic case infinite reduction sequences are possible. 
Consider the negotiation on the left of Figure \ref{fig:cyclicendless}. If we apply the 
shortcut rule to $n_0$ and the result leading to $n_1$, we obtain the negotiation on the right. 
Applying the shortcut rule to $n_0$ and the result leading to $n_2$, 
and then the merge rule, we obtain again the negotiation on the left. This process can be repeated 
arbitrarily often.

\begin{figure}[ht]
\centering
\input{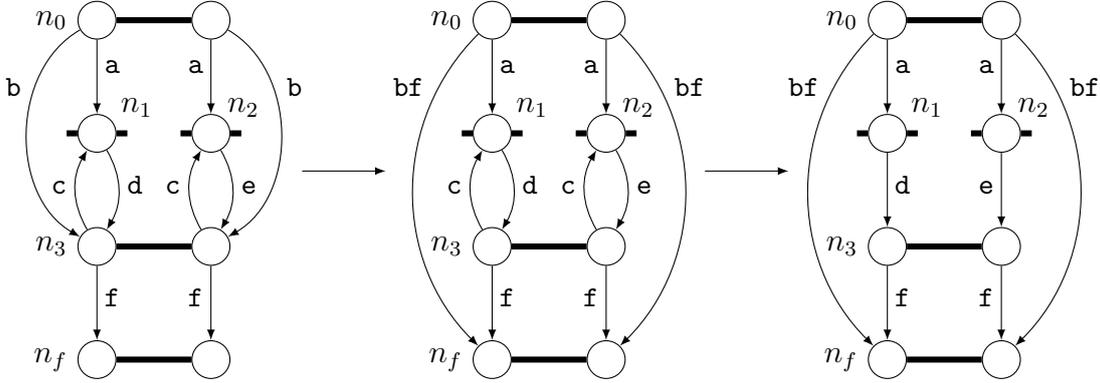}
\caption{ A removed result is produced again later on}
\label{fig:cyclicnoedgeagain}
\end{figure}

To solve this problem, one could try the following policy: 
do not apply the shortcut rule if it produces an arc that has already been removed by an earlier rule (intuitively,
we do not apply rules that ``undo'' reductions by earlier rule applications.) However,
this policy prevents the summarization of the sound negotiation on
the left of Figure \ref{fig:cyclicnoedgeagain}.
We start by applying the d-shortcut rule to $n_0$ and the result $\texttt{b}$, which yields the negotiation
in the middle  of the figure. It is easy to see that this negotiation cannot be summarized 
without generating again a result of $n_0$ leading to $n_3$. Indeed, after
shortcutting $\texttt{c}$ and applying the iteration rule we get the negotiation on the
right, where after removing $n_1$ and $n_2$ with the shortcut rule we get
a negotiation where $n_0$ contains again a result leading to $n_3$. Other reduction
sequences starting at the negotiation in the middle also generate such a result.
We therefore need a more sophisticated approach for the
summarization of cyclic deterministic negotiations. In this section 
we consider the one-agent case, which already illustrates the main ideas. The general case is studied in 
Section \ref{subsec:many-agents-cyclic}.

Let $\N = (N, n_0, n_f, \trans)$ be an arbitrary deterministic negotiation with only one agent. 
We first describe our summarization procedure, and then discuss its steps in more detail. 
We use the negotiation of Figure \ref{fig:DFSexample} as example. 

\begin{figure}[ht]
\centering
\input{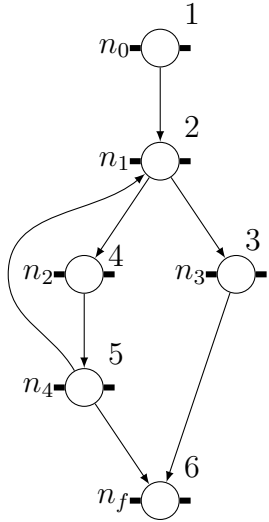}
\caption{A one-agent negotiation.}
\label{fig:DFSexample}
\end{figure}

The summarization procedure is parameterized by a fixed but arbitrary total order $\prec$ on the atoms of the negotiation. In Figure \ref{fig:DFSexample} the order is $1 \prec 2 \prec \cdots  \prec 6$.

Using this order, we classify outcomes into {\em backward} and {\em forward} outcomes, and define a 
total order on outcomes.

\begin{definition}
\label{def:backward}
Let $\N$ be a negotiation with one single agent $p$. Let $n \by{\r} n'$ denote $\trans(n,p,\r) = \{n'\}$. 
Abusing language, we call $n \by{\r} n'$ an outcome of $\N$ \ph{(instead of $(n,r)$)}. 

An outcome $n \by{\r} n'$ is {\em backward} if $n' \neq n_f$ and $n' \prec n$. Otherwise it is {\em forward}. 
We extend the order $\prec$ to a total order on outcomes, also denoted by $\prec$, 
as follows: 
\begin{center}
$(n_1 \by{\r_1} n_1') \prec (n_2 \by{\r_2} n_2')$ if{}f 
$n_1' \prec n_2'$ or $n_1' = n_2'$ and $n_1 \prec n_2$.
\end{center}
A backward outcome is {\em minimal} if no other backward outcome is 
strictly smaller with respect to $\prec$. 
\end{definition}

In Figure \ref{fig:DFSexample} the only backward outcome is $n_4 \by{} n_1$, which is also minimal.
The classification of outcomes into forward and 
backward is only defined for the one-agent case; we will generalize it later.

Observe that every backward outcome $n \by{\r} n'$ is reducible, 
because $n' \neq n_f$ and thus the shortcut rule can be applied to it. 
The summarization procedure is shown in Algorithm \ref{alg:oneAgent}.
It gives priority to the merge and iteration rules, that is, 
they are applied whenever possible. If neither of them can be applied, then the algorithm
 selects a minimal backward
outcome and applies the shortcut rule to it. If this is also not possible, then the algorithm 
applies the d-shortcut rule \ph{to another outcome}. We will see that in this case the d-shortcut rule is necessarily 
applicable, and so that the algorithm never gets stuck. 

Notice that the algorithm never answers ``not sound''. The reason is that all one-agent negotiations 
are sound. Indeed, by definition every atom of a negotiation lies on a path between the initial and 
final atoms, which in the one-agent case implies soundness.

\begin{algorithm}[t]
\caption{Summarization algorithm for an one-agent negotiation $\N$}
\begin{algorithmic}[1]
\State fix an arbitrary total order $\prec$ on the atoms of $\N$
\While{$R(\N) \neq \emptyset$}
\If{possible} apply the merge rule 
\ElsIf{possible} apply the iteration rule 
\ElsIf{possible} apply the shortcut rule to a minimal backward outcome 
\Else \ apply the d-shortcut rule 
\EndIf
\State $\N := $ negotiation obtained after the application of the rule
\EndWhile
\end{algorithmic}
\label{alg:oneAgent}
\end{algorithm}

\begin{figure}[ht]
\centering
\input{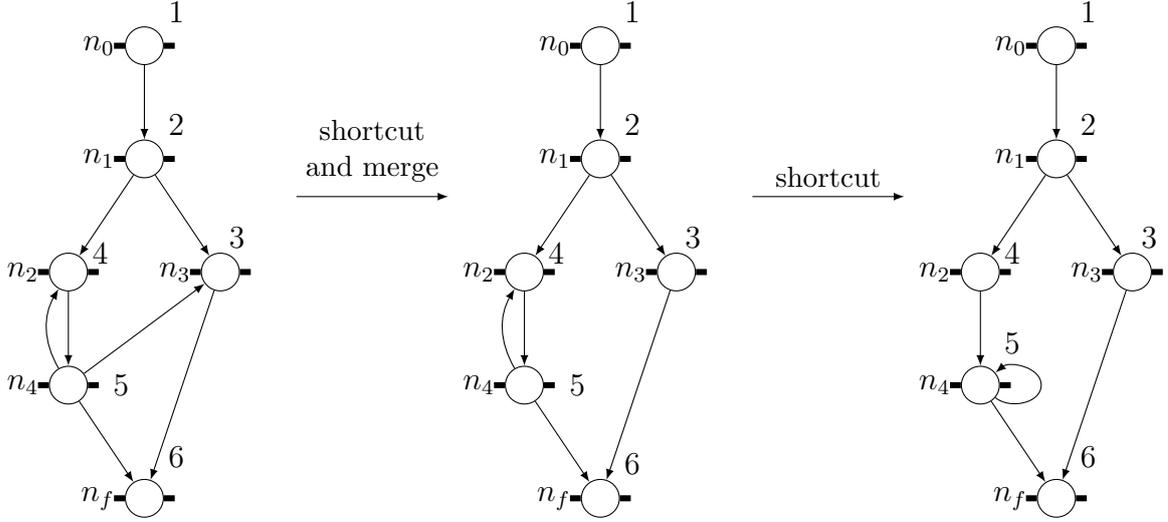}
\caption{Reduction of the negotiation of Figure \ref{fig:DFSexample}.}
\label{fig:oneAgentExample}
\end{figure}

Let us apply Algorithm \ref{alg:oneAgent} to the negotiation of  Figure \ref{fig:DFSexample}.
In the first iteration of the loop the 
algorithm shortcuts the only backward outcome $n_4 \by{} n_1$, yielding the negotiation
on the left of Figure \ref{fig:oneAgentExample}. This negotiation has two backward outcomes,
$n_4 \by{} n_2$ and $n_4 \by{} n_3$. Since $n_3 \prec n_2$, the minimal backward outcome
is $n_4 \by{} n_3$. After applying the shortcut rule to $n_4 \by{} n_3$, a merge yields the negotiation
in the middle of the figure. In the third iteration, the algorithm shortcuts the only backward result
$n_4 \by{} n_2$, yielding the negotiation on the right. An application 
of the iteration rule yields an acyclic negotiation, which is reduced by applying 
the merge and d-shortcut rules.

\subsection{Termination and Runtime Analysis}

We show that Algorithm \ref{alg:oneAgent} always terminates after at most $2K^3 + K^2 +L$ rule applications, 
where $K$ and $L$ are the number of atoms and outcomes of the negotiation, respectively. 
Moreover, the algorithm always returns a summary of the negotiation.

Let $\N$ be a one-agent negotiation, and let $\N_1, \N_2, \ldots$ be the sequence 
of negotiations produced by Algorithm \ref{alg:oneAgent} starting from $\N_1 := \N$, where
each $\N_{i+1}$ is obtained from $\N_i$ by applying one of the rules. We call this sequence
the {\em computation} of the algorithm on $\N$.

We start the proof with \ph{a} lemma:

\begin{lemma}
\label{lem:alg2shortcut}
Let $\N_1, \N_2, \ldots$ be the computation of Algorithm \ref{alg:oneAgent} on a one-agent negotiation $\N$. 
Let $\N_{t_1}$ and $\N_{t_2}$ be two negotiations of the computation to which Algorithm \ref{alg:oneAgent} applies the shortcut rule at line 5, and such such that $t_1 < t_2$. 
Let $n_1 \by{\r_1} n_1'$ and $n_2 \by{\r_2} n_2'$ be the outcomes of $\N_{t_1}$ and $\N_{t_2}$ to which the shortcut rule is applied. Then $(n_1 \by{\r_1} n_1') \prec (n_2 \by{\r_2} n_2')$.
\end{lemma}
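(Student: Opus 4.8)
The plan is to show that the sequence of outcomes selected at line~5 is strictly $\prec$-increasing; since $\prec$ is a total, hence transitive, order on outcomes, it then suffices to treat the case where $\N_{t_1}$ and $\N_{t_2}$ are \emph{consecutive} line-5 applications and to propagate the invariant by induction. Concretely, writing $e_1 = (n_1 \by{\r_1} n_1')$ for the minimal backward outcome shortcut at step $t_1$, I would prove that every backward outcome of every $\N_i$ with $t_1 < i \le t_2$ is strictly greater than $e_1$; applied at $i = t_2$ this gives that $e_2 = (n_2 \by{\r_2} n_2')$, being a backward outcome of $\N_{t_2}$, satisfies $e_1 \prec e_2$. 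A preliminary observation I would record is that no rule ever creates a new atom (they only delete atoms and insert results between already-existing atoms), so the fixed order $\prec$ on the atoms of $\N$ restricts to a total order on the atoms of every $\N_i$, and the induced order on outcomes depends only on the (target, source) pair.

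First I would prove the one-step \emph{base claim}: shortcutting the minimal backward outcome $e_1 = (n_1 \by{\r_1} n_1')$ of $\N_{t_1}$ produces only backward outcomes strictly above $e_1$. The crucial sub-step is that, by minimality of $e_1$, no outcome of $n_1'$ can be backward: a backward outcome $n_1' \by{\r'} n''$ would have target $n'' \prec n_1'$ and hence would satisfy $(n_1' \by{\r'} n'') \prec e_1$ (whose target is $n_1'$), contradicting minimality. So every outcome of $n_1'$ is forward, i.e. leads to $n_f$ or to an atom strictly above $n_1'$. Each new shortcut outcome $n_1 \by{\r''} n''$ created by the rule inherits such an $n''$; if $n'' = n_f$ it is forward, and if $n_1' \prec n''$ then comparing targets gives $e_1 \prec (n_1 \by{\r''} n'')$. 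The remaining (undeleted) backward outcomes of $\N_{t_1}$ are distinct from $e_1$, and since merge is inapplicable at a line-5 step (so no two outcomes share a source–target pair) they are $\succ e_1$ by minimality; deleting $n_1'$ when $(n_1,\r_1)$ has exclusive access only removes forward outcomes of $n_1'$. Hence all backward outcomes of $\N_{t_1+1}$ exceed $e_1$, and the same argument applies verbatim at any line-5 step.

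Next I would run the induction from $\N_{t_1+1}$ to $\N_{t_2}$, checking that every rule the algorithm can apply in between preserves the invariant ``every backward outcome is $\succ e_1$''. The merge rule replaces two outcomes with identical source and target by one of the same source and target, leaving every outcome's $\prec$-position unchanged; the iteration rule only deletes a self-loop (never backward) and rewrites transformers without altering any target. For an intermediate line-5 shortcut of the then-minimal backward outcome $e'$, the induction hypothesis gives $e' \succ e_1$, and the base claim applied at that step gives that all new backward outcomes exceed $e'$, hence $e_1$. The last case is the d-shortcut: because line~5 has priority and every backward outcome is reducible by the shortcut rule, the d-shortcut branch is entered only when $\N_i$ has no backward outcome at all, and shortcutting a forward outcome in a negotiation without backward outcomes again yields only forward outcomes (if $n \by{\r} n'$ is forward with $n \prec n'$, then each successor $n''$ of $n'$ satisfies $n'' = n_f$ or $n' \prec n''$, so $n \prec n''$), so no backward outcome is created.

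The main obstacle, and the point where minimality is indispensable, is the base claim: controlling exactly which backward outcomes a shortcut introduces and showing they all lie strictly above $e_1$. The two delicate ingredients are (i) the argument that $n_1'$ has no backward outgoing outcome, which is precisely where minimality of $e_1$ enters and is the conceptual heart of the proof, and (ii) the bookkeeping over deleted atoms and merged or duplicated results, ensuring no outcome at or below $e_1$ is ever silently produced. Once the base claim is established, the inductive propagation through merge, iteration, and the d-shortcut branch is routine, the only extra input being that the d-shortcut branch is reached only in the absence of backward outcomes.
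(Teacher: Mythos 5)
Your overall strategy is sound and, at its core, the same as the paper's: both arguments reduce to consecutive line-5 applications by transitivity, and both rest on the observation that minimality of $e_1 = (n_1 \by{\r_1} n_1')$ forces every outgoing outcome of $n_1'$ to be non-backward, so that the outcomes manufactured by the shortcut sit strictly above $e_1$. You package this as an invariant (``every backward outcome of every intermediate negotiation is $\succ e_1$'') propagated through merge, iteration and the d-shortcut branch, whereas the paper traces $(n_2 \by{\r_2} n_2')$ directly back to its origin; the two routes are interchangeable.

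There is, however, one concrete hole in your base claim. You write that every outcome of $n_1'$ ``is forward, i.e.\ leads to $n_f$ or to an atom strictly above $n_1'$.'' That ``i.e.'' is false: a self-loop $n_1' \by{\r'} n_1'$ is also forward (it is not backward, since $n_1' \not\prec n_1'$), and it is not excluded by minimality of $e_1$. If such a self-loop existed, shortcutting $e_1$ would create a fresh outcome $n_1 \by{} n_1'$ --- a backward outcome occupying exactly the same $\prec$-position as $e_1$, because $\prec$ on outcomes compares only source and target --- and both your invariant and the lemma's strict inequality would fail. The missing ingredient is the one the paper supplies at precisely this point: the algorithm gives the iteration rule priority over the shortcut rule, so at any step where line 5 fires no atom of the current negotiation carries a self-loop; in particular $n_1'$ has none, and every forward outcome of $n_1'$ really does lead to $n_f$ or strictly above $n_1'$. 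You invoke iteration-rule priority elsewhere (to argue that the d-shortcut branch only ever sees negotiations without backward outcomes), but you need it here as well. With that one observation added, your proof goes through.
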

\begin{proof}
By the transitivity of $\prec$, it suffices to prove the result for the case in which the 
algorithm does not apply the shortcut rule  \ph{to a negotiation $\N_i$ with $t_i < i < t_2$}. 


If $\N_{t_1}$ contains some outcome $n_2 \by{\r} n_2'$, then,
since $n_2 \by{\r_2} n_2'$ is a backward outcome, $n_2 \by{\r} n_2'$ is a backward outcome too. 
Since $n_1 \by{\r_1} n_1'$ is the minimal backward outcome
of $\N_{t_1}$, we have  $(n_1 \by{\r_1} n_1') \prec (n_2 \by{\r} n_2')$, and thus 
$(n_1 \by{\r_1} n_1') \prec (n_2 \by{\r_2} n_2')$.

Assume now that $\N_{t_1}$ does not contain any outcome from $n_2$ to $n_2'$. 
Since between $t_1+1$ and $t_2-1$ the algorithm only applies the merge and iteration rules, 
and these rules do not create any \ph{additional} outcomes, $n_2 \by{\r_2} n_2'$ is created by the application 
of the shortcut rule to the outcome $n_1 \by{\r_1} n_1'$ of $\N_{t_1}$. By the definition
of the shortcut rule, it follows that $n_2=n_1$ and that $\N_{t_1}$
contains an outcome $n_1' \by{\r} n_2'$. If $n_1' = n_2'$ then, since the algorithm gives 
priority to the iteration rule over the shortcut rule, it would not have applied the shortcut rule 
to $n_1 \by{\r_1} n_1'$. So we have $n_1' \neq n_2'$. 

If $n_2' \prec n_1'$ then $n_1' \by{\r} n_2'$ is a backward outcome and by definition
$(n_1' \by{\r} n_2') \prec (n_1 \by{\r_1} n_1')$. This contradicts the assumption that $(n_1 \by{\r_1} n_1')$
is the minimal backward outcome of $\N_{t_1}$. So $n_1' \prec n_2'$,
and thus $(n_1 \by{\r_1} n_1') \prec (n_2 \by{\r_2} n_2')$.
\end{proof}

\begin{theorem}
\label{thm:oneAgentPoly}
If $\N$ has $K$ atoms and $L$ outcomes, then Algorithm \ref{alg:oneAgent} terminates after at most
$2 K^3+K^2+L$ rule applications, and it summarizes $\N$.
\end{theorem}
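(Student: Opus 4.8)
The plan is to prove the two claims of the theorem in turn: the $2K^3+K^2+L$ bound on the number of loop iterations, and the fact that the algorithm halts with an atomic negotiation. Both rest on a \emph{phase separation} of the computation that I would establish first. Since line~6 (the d-shortcut) is reached only when no merge, no iteration and no backward shortcut is possible, and since every backward outcome is shortcuttable, the algorithm applies a d-shortcut exactly when the current negotiation is acyclic. I would then check that acyclicity, once reached, is never lost: a merge or an iteration creates no new outcome, and a d-shortcut of a forward outcome $n \by{\r} n'$ with $n'\neq n_f$ only produces outcomes $n\by{}n''$ with $n\prec n'\prec n''$, again forward (the case $n'=n_f$ merely turns $n$ into the final atom). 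Hence the computation decomposes into a \textbf{cyclic phase} (only merge, iteration, and the backward shortcut of line~5) followed by an \textbf{acyclic phase} (only merge and d-shortcut). A small extra observation, needed below, is that merge and iteration cannot destroy the last backward outcome (a merged backward outcome stays backward, and iteration only removes self-loops), so the transition into the acyclic phase is always caused by a line~5 shortcut.

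For the cyclic phase I would bound each rule type. By Lemma~\ref{lem:alg2shortcut} the backward outcomes treated at line~5 are strictly $\prec$-increasing; as each is a pair of atoms $n'\prec n$ and the atom set only shrinks, there are at most $\binom{K}{2}\le K^2$ such applications. Because merge has priority, immediately before every line~5 shortcut no two results of any non-final atom share a target, so the shortcut atom $n'$ has at most $K$ results and the shortcut creates at most $K$ outcomes. An amortized count then controls merge and iteration: each merge destroys $2$ outcomes and creates $1$, each iteration destroys $1$, each shortcut destroys at least the shortcut result and creates at most $K$, and no outcome is destroyed twice; comparing total destruction with total creation yields $\#\text{merge}+\#\text{iteration}\le L+(K-1)\cdot\#(\text{line 5})\le L+K^3/2$. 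The cyclic phase therefore contributes at most $L+K^3/2+K^2/2$ steps.

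For the acyclic phase I would apply Proposition~\ref{prop:polcomp}: via Lemma~\ref{lem:lemacyc} the phase length is at most the index $I(\N)$ of the negotiation entering the phase. Since the phase is entered right after a line~5 shortcut, and merge had been exhausted just before that shortcut, the entering negotiation has at most $K(K-1)+K=K^2$ non-final outcomes, so $I(\N)\le (K-1)K^2< K^3$. Summing the two phases gives at most $\tfrac32 K^3+\tfrac12 K^2+L\le 2K^3+K^2+L$ rule applications, which also proves termination.

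For correctness I would first show the algorithm never blocks: a one-agent negotiation is always sound and deterministic and has no hyperarcs, so $R(\N)\neq\emptyset$ forces $\N$ to have at least two atoms, and if merge, iteration and backward shortcut all fail the negotiation is acyclic, whence the argument of Proposition~\ref{prop:irredtheo} produces an applicable d-shortcut. When the loop finally exits, $\N$ is irreducible; as every backward outcome would be shortcuttable, $\N$ has none and is thus acyclic, so Proposition~\ref{prop:irredtheo} forces $\N$ to be atomic, and since all rules are correct this atom is a summary of the input. The main obstacle is the runtime bound rather than correctness: one must simultaneously control the number of \emph{new} outcomes generated by shortcuts (done via the merge-priority bound of $K$ results per atom and the $\binom{K}{2}$ bound of Lemma~\ref{lem:alg2shortcut}) and reuse the acyclic analysis of Proposition~\ref{prop:polcomp}, and the clean cyclic/acyclic phase split is exactly what makes this bookkeeping possible.
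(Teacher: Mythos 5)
Your proof is correct and follows essentially the same route as the paper's: the same split into the phase before and after the first line-6 d-shortcut, Lemma \ref{lem:alg2shortcut} to bound the line-5 shortcuts by $O(K^2)$, the merge-priority argument giving at most $K$ new outcomes per shortcut, Proposition \ref{prop:polcomp} for the acyclic tail, and Proposition \ref{prop:irredtheo} for correctness. The only differences are bookkeeping (your amortized outcome count versus the paper's per-interval count of merges/iterations, and your slightly tighter $\binom{K}{2}$ bound), which do not change the argument.
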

\begin{proof}
Let $\N_1, \N_2, \ldots$ be the computation of Algorithm \ref{alg:oneAgent} on $\N$.
We divide it into two phases. Phase I
terminates immediately before the first execution of line 6. Phase II starts with the first
execution of line 6 and continues until the end of the computation. We show that
phase I and phase II terminate after at most $K^3+K^2+L$ and $K^3$ rule applications, respectively.

By Lemma \ref{lem:alg2shortcut}, if during phase I of the computation the algorithm 
applies the shortcut rule to two outcomes $(n_1 \by{\r_1} n_1')$ and $(n_2 \by{\r_2} n_2')$,
then $(n_1 \by{\r_1} n_1') \prec (n_2 \by{\r_2} n_2')$. In particular, this implies 
$n_1 \neq n_1'$ or $n_2 \neq n_2'$. So the number of applications of the shortcut rule during phase
I is bounded by the number of pairs of atoms, i.e., 
by $K^2$.

Since the algorithm gives priority to the merge and iteration rules over the shortcut rule, 
it only  applies the shortcut rule to negotiations having at most one result between any
pair of nodes. Therefore, if the shortcut rule is applied to $n \by{\r} n'$, then the atom
$n'$ has at most $K$ outcomes, and so the rule generates at most $K$ new outcomes. It follows that
in phase I the merge and iteration rules are applied at most $K$ times between two consecutive
applications of the shortcut rule, and also after the last application of the shortcut rule. 
Further, since every application removes one outcome, before the first application of the shortcut
rule merge and iteration are applied together at most $L$ times. So the total number of applications of the 
merge and iteration rules during phase I is bounded by $L + K^2 \cdot K = L + K^3$, and the 
overall total number of applications is bounded by $K^3+ K^2 + L$.

Let $\N_t$ be the first negotiation of phase II of the computation, i.e., 
the first negotiation to which the d-shortcut rule is applied at line 6.
By the definition of the algorithm, $\N_t$ has only forward outcomes. We prove that
during phase II the algorithm applies at most $K^3$ rules. We need two preliminary claims.

\medskip

\textbf{Claim 1.} $\N_t$ is acyclic. \\
If $n \by{\r} n$, then the iteration rule can be applied to $\N_t$, a contradiction. 
If $\N_t$ contains a cycle with at least two atoms, then the cycle contains at least one result
$n \by{\r} n'$ such that $n \succ n'$. Since $n_f$ does not belong to any cycle, we have
$n' \neq n_f$, and so $n \by{\r} n'$ is backward, again a contradiction. 

\medskip

\textbf{Claim 2.} During phase II the algorithm only applies the merge, iteration, 
and d-shortcut rules. \\
Follows easily from the fact that the application of the shortcut rule to a negotiation that has only 
forward outcomes leads to another negotiation satisfying the same property. So $\N_{t'}$ contains
only forward results for $t' > t$. By the definition of the algorithm, line 5 is never executed, and we are done.

\medskip

These two claims allow us to apply Proposition \ref{prop:polcomp}. We conclude that 
the length of the sequence of rules applied by the algorithm from $\N_t$ is bounded by 
$K_t \cdot L_t$, where $K_t$ and $L_t$ are
the number of atoms and non-final outcomes of $\N_t$. Clearly we have $K_t \leq K$.
Further, since neither the merge nor the iteration rule can be applied 
to $\N_t$, for every two atoms $n, n'$ of $\N$ there is at most one outcome $n \by{r} n'$. 
and so $L_t \leq K^2$. So $K_t \cdot L_t \leq K^3$, and we are done. 

It remains to show that Algorithm \ref{alg:oneAgent} summarizes 
$\N$, \ph{i.e., that it eventually reaches an atomic negotiation}. 
\ph{It suffices to look at the suffix of the sequence starting with $\N_t$.}
By Claim 1, $\N_t$ is acyclic. Further, since $\N_t$ has one agent and every atom 
of $\N_t$ lies on a path between the initial and the final atoms, $\N_t$ is sound. 
By Proposition \ref{prop:irredtheo}, the algorithm summarizes $\N_t$. 
\end{proof}

\subsubsection{Extension to Replications}

Recall that a replication is a negotiation whose atoms have all the same
parties, and whose outcomes are all uniform, that is, for every result 
all parties move together to the same atom (see Definition \ref{def:uniform}). 
Algorithm \ref{alg:oneAgent} also works for replications. Indeed, in order to 
summarize a replication, we just apply the
same sequence of rules that we would apply if all atoms had only one party. 
By Theorem \ref{thm:oneAgentPoly}, we obtain: 

\begin{theorem}
\label{thm:repred}
Algorithm \ref{alg:oneAgent} summarizes a replication $\N$ with 
$K$ atoms and $L$ outcomes
after at most $2K^3+K^2+L$ rule applications.
\end{theorem}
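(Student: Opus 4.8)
The plan is to reduce the statement to the one-agent case already settled in Theorem \ref{thm:oneAgentPoly}, by showing that the algorithm runs on a replication in lockstep with its run on an associated one-agent negotiation. Given a replication $\N$, all atoms share the same parties $\agents$, and by the remark following Definition \ref{def:uniform} every non-final outcome $(n,\r)$ determines a unique atom $n'$ with $\trans(n,p,\r)=\{n'\}$ for all $p$. I would define the \emph{collapse} $\N^\flat$ of $\N$ to be the one-agent negotiation with the same atoms, results, initial and final atoms, the same total order $\prec$, and with $n \by{\r} n'$ exactly when $\trans(n,p,\r)=\{n'\}$ in $\N$. Since in $\N$ all agents start in $\{n_0\}$ and, by uniformity together with $P_n=\agents$ for every atom, move synchronously to the same atom after each outcome, every reachable marking of $\N$ assigns the same singleton to all agents. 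Hence the reachability graph of $\N$ is isomorphic to that of $\N^\flat$; in particular $\N$ is sound iff $\N^\flat$ is, and one-agent negotiations are always sound, so $\N$ is sound.

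First I would check that the collapse commutes with the four rules used by Algorithm \ref{alg:oneAgent} (merge, iteration, shortcut, d-shortcut): that the guard of each rule holds on $\N$ iff it holds on $\N^\flat$, and that applying a rule to $\N$ yields a replication $\N'$ with $(\N')^\flat$ equal to the result of applying the same rule to $\N^\flat$. The guard checks are the crux and reduce, by uniformity and $P_n=P_{n'}=\agents$, to their one-agent forms: for merge, $\trans(n,p,\r_1)=\trans(n,p,\r_2)$ for all $p$ holds iff the two outcomes point to the same atom in $\N^\flat$; for iteration, the self-loop condition is literally $n \by{\r} n$; and for the shortcut rule, ``unconditionally enables'' collapses to $n \by{\r} n'$ (the inclusion $P_n \supseteq P_{n'}$ is an equality), ``exclusive access'' collapses to ``$n'$ has no incoming arc other than $n \by{\r} n'$'', and ``commits to'' collapses to the existence of an outcome $n'' \by{\r''} n'$. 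The extra d-shortcut condition (``$n'$ has at most one result'') is identical in both. For the actions, since $P_n\setminus P_{n'}=\emptyset$ the second assignment of the shortcut action is vacuous, and every assignment sets $\trans(n,p,\cdot)$ to a value copied from an already uniform $\trans(n',p,\r')$; thus $\N'$ is again a replication and $(\N')^\flat$ is exactly its reduct, and the same immediate argument covers merge and iteration. Finally, the backward/forward classification and the order on outcomes depend only on $\prec$ and the successor atom, which are shared, so the branch chosen in the \textbf{while}-loop agrees on $\N$ and $\N^\flat$ at every step.

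Combining these, the computations of Algorithm \ref{alg:oneAgent} on $\N$ and on $\N^\flat$ proceed step for step through corresponding negotiations, using the same rule at each step and hence halting after the same number of applications. Since $\N$ and $\N^\flat$ have the same number $K$ of atoms and the same number $L$ of outcomes, Theorem \ref{thm:oneAgentPoly} bounds this number by $2K^3+K^2+L$ and guarantees that the $\N^\flat$-computation ends in a one-atom negotiation; by the lockstep correspondence the $\N$-computation ends in a one-atom replication, which is atomic. Because the merge and iteration rules are correct and the (d-)shortcut rule is correct by Theorem \ref{thm:shortcutcorrect}, every negotiation in the computation is equivalent to $\N$, so this terminal atom is a summary of $\N$. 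I expect the guard-matching for the shortcut rule — verifying that the multi-party notions of Definitions \ref{def:uncond}, \ref{def:depends} and \ref{def:commits} degenerate to their one-agent counterparts under uniformity — to be the only part demanding care; everything else is bookkeeping.
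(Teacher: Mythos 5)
Your proposal is correct and follows essentially the same route as the paper, which justifies the theorem by observing that a replication can be summarized by applying exactly the sequence of rules one would apply to its one-agent counterpart and then invoking Theorem \ref{thm:oneAgentPoly}. You have simply made explicit the collapse construction and the guard/action correspondence that the paper leaves implicit, and those verifications are accurate.
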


While this extension is straightforward, it is important because 
such negotiations will appear at intermediate stages of our general summarization algorithm.

\section{Summarizing Deterministic Negotiations with Multiple Agents}
\label{subsec:many-agents-cyclic}

We present a reduction algorithm for deterministic negotiations and show that it summarizes every sound 
deterministic negotiation with $K$ atoms and $L$ outcomes after at most 
$2K^3 + K^2+KL+ L$ rule applications.

\begin{figure}[ht]
\centering
        \resizebox{!}{8cm}{
	\input{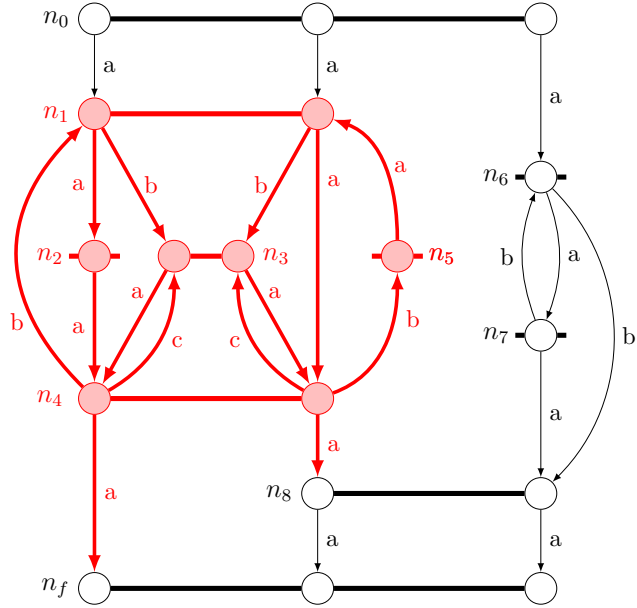}
        }
	\caption{A sound deterministic negotiation.}
	\label{fig:cyclicRunning}
\end{figure}

\subsection{Informal description}
We give a first overview of the summarization procedure using the sound deterministic 
negotiation of Figure \ref{fig:cyclicRunning} as example. 

The procedure is based on the notion of the {\em fragment} of a negotiation generated by an atom. 
A fragment is almost a negotiation, with the only difference that the final atom may be missing,
and so there may be ``dangling outcomes''.  For example, in the negotiation
of Figure \ref{fig:cyclicRunning} the fragment generated by $n_1$ is highlighted in red, and
$(n_4, \texttt{a})$ is a ``dangling outcome''. Intuitively, the fragment generated by an atom $n$ is 
the part of the negotiation that can occur from the marking $\vx$ satisfying
$\vx(p) = \{n\}$ for every party of $n$, and $\vx(p) = \emptyset$ otherwise. 

The procedure first summarizes
all {\em one-agent fragments}, i.e., all fragments generated by atoms with one party,
then all {\em two-agent fragments}, and so on. Figure \ref{fig:stage1} shows on the left
the one-agent fragments of the negotiation, and on the right the result of the reduction:
\ph{The} fragments generated by $n_2$ and $n_5$ are already summarized, and thus do not change, while the
fragment generated by $n_6$ (which coincides with the fragment of $n_7$) is summarized into
atom $n_{67}$. In the second stage the procedure summarizes the fragments of the
atoms $n_1$ and $n_8$ (see Figure \ref{fig:stage2}, where the fragments are drawn using different 
colors for clarity). Notice that the fragments generated by $n_1$, $n_3$, and $n_4$ are identical.
The fragment of $n_8$ is atomic, while the fragment for $n_1$
is summarized into $n_{15}$. The only three-agent fragment of the resulting negotiation is
the negotiation itself, and after the third step the negotiation is completely summarized.

\begin{figure}[ht]
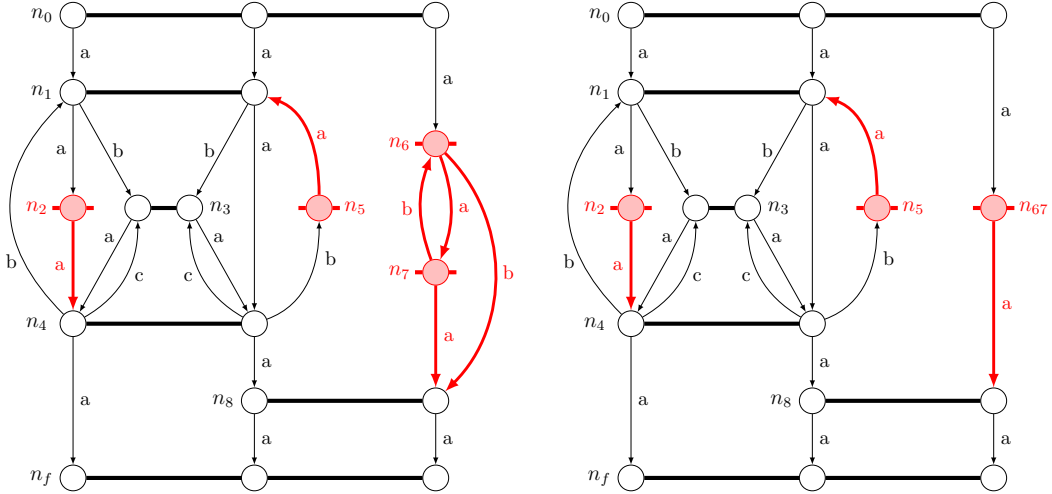

\centering
        \resizebox{!}{6.5cm}{
	\input{tikz/cyclicExample/before1}
        }
        \resizebox{!}{6.5cm}{
        \input{tikz/cyclicExample/after1red}
        }
	\caption{Before and after reducing one-agent fragments.}
	\label{fig:stage1}
\end{figure}

\begin{figure}[ht]
\centering
        \resizebox{!}{6.5cm}{
	\input{tikz/cyclicExample/before2}
        }
        \resizebox{!}{6.5cm}{
        \input{tikz/cyclicExample/after2red}
        }
	\caption{Before and after reducing two-agent fragments.}
	\label{fig:stage2}
\end{figure}

\begin{figure}[ht]
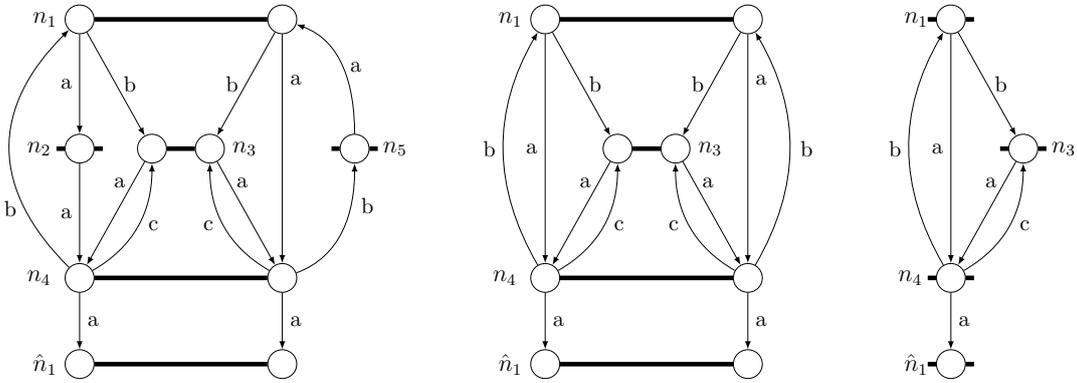

\centering
        \resizebox{!}{5cm}{
	\input{tikz/cyclicExample/fragmentbefore.tex}
	}
        \quad
        \resizebox{!}{5cm}{
	\input{tikz/cyclicExample/fragmentafter1.tex}
        }
        \quad
        \resizebox{!}{5cm}{
	\input{tikz/cyclicExample/fragmentafter2.tex}
        }
	\caption{Reducing a fragment.}
	\label{fig:fragmentExample}
\end{figure}

The procedure to reduce a fragment is illustrated in Figure \ref{fig:fragmentExample}.
On the left we see the fragment of the negotiation of Figure \ref{fig:cyclicRunning} generated by the atom $n_1$,
with a ``mock'' final atom $\hat{n}_1$ for the dangling outcome. The procedure first removes all atoms
with fewer agents than $n_1$, in this case $n_2$ and $n_5$. In this example this is achieved by two
applications of the shortcut rule, and it yields the negotiation in the middle of the figure. This negotiation is nothing 
but the replication of the one-agent negotiation 
shown on the right of the figure: The two agents behave identically. We reduce this replication
by means of the same sequence of rules we would use to reduce its one-agent counterpart.

Observe that, while Figure  \ref{fig:fragmentExample} shows the fragment separated from the rest 
of the negotiation, the sequence of rules applied to summarize the fragment can also
be applied to the negotiation of Figure \ref{fig:cyclicRunning}. The sequence produces the same effect,
and eventually replaces the fragment by one single atom. It has no side-effects. 

What happens if the summarization procedure is applied to an {\em unsound} deterministic negotiation? Since the rules
preserve unsoundness, the procedure does not summarize the negotiation. It may
not terminate, or terminate after a very large number of rule applications. Notice, however,
that we can still use the algorithm  to decide soundness in polynomial time. Indeed, if after
$2K^3 + K^2 + KL + L$ rule applications the procedure has not summarized the negotiation, then we know that
it will never \ph{do}, and that the initial negotiation is unsound. We discuss this point in more detail later.

In \ph{Appendix} \ref{subsec:loopsynch} we obtain some preliminary results. We study loops, \ph{which are} occurrence sequences leading
from a marking to itself. We prove a  lemma \ph{which turns out to be fundamental for the sequel}: \ph{The} loops of sound deterministic negotiations 
\ph{contain a ``synchronizing''} atom involving all parties of
all atoms in the loop. In Section \ref{subsec:fragseg} we formally define 
the fragment of a negotiation generated by an atom. We also define the segment generated by an outcome, and prove 
a second fundamental result: The fragments and segments of sound and deterministic negotiations
are also sound negotiations. Section \ref{subsec:summalg} describes the summarization algorithm
for the sound case. 
Sections \ref{subsec:manycomp} and \ref{subsec:manyruntime} analyze its behavior.
Section \ref{subsec:algforunsound} \ph{provides} the algorithm for the general case in which the negotiation can be sound
or unsound.

\subsection{Fragments and Segments}
\label{subsec:fragseg}

We show that every atom $n$ and every outcome $(n, \r)$
of a sound deterministic negotiation induces a ``sound subnegotiation'' of $\N$, called 
the {\em $n$-fragment} and the {\em $(n, \r)$-segment} of $\N$, respectively.

In Definition \ref{def:nrsequence} we introduced $(n, \r)$-sequences.
Loosely speaking, a $(n, \r)$-sequence is a \ph{finite} occurrence sequence that starts with $(n, \r)$ and can be executed by the parties of $n$  on their own, without involving any other agent. We now come back to this notion. 

\begin{definition}
\label{def:closed}
A sequence of outcomes is a {\em $n$-sequence} if it is a $(n, \r)$-sequence for some result $\r$ of $n$. A $n$-sequence is {\em maximal} if it cannot be extended to a longer $n$-sequence.

A $(n,\r)$-sequence is {\em strict} if every atom $n'$ that appears in it after the first outcome $(n, \r)$ satisfies 
$P_{n'} \subset P_n$. A strict $(n, \r)$-sequence is {\em maximal} if it cannot be extended to a longer strict 
$(n, \r)$-sequence.

Let $\vx_n$ be the marking given by $\vx_n(p) = \{n\}$ if $p \in P_n$ and $\vx_n(p) = \emptyset$ otherwise.
The {\em target} of a $(n,\r)$-sequence $\sigma$ is the marking reached by firing $\sigma$ from $\vx_n$. 
\end{definition}

The sequences $(n_6, \texttt{a}) \, (n_7, \texttt{a})$  and 
$(n_6, \texttt{a}) \, (n_7, \texttt{b}) \, (n_6, \texttt{b})$ of the negotiation shown in Figure \ref{fig:cyclicRunning} are 
maximal $n_6$-sequences, but not strict $(n_6, \texttt{a})$-sequences. Calling the agents
of the negotiation $A$, $B$, and $C$, the target of all these
sequences is the marking $\vx$ given by $\vx(A) = \vx(B) = \emptyset$ and $\vx(C) = \{ n_8 \}$.
The sequence
$(n_1, \texttt{a}) \, (n_2, \texttt{a})$ is a maximal strict 
$(n_1, \texttt{a})$-sequence, but not a maximal $n_1$-sequence. The sequence
$(n_3, \texttt{a}) \, (n_4, \texttt{a})$ is a maximal $n_3$-sequence, and its target is the marking
$\vx'$ given by $\vx'(A) = \{ n_f\}$, $\vx'(B) = \{ n_8\}$, and $\vx'(C) = \emptyset$.

Appendix \ref{app:fragseq} \ph{provides a proof of} the proposition below, stating that, for every atom $n$ of a sound deterministic negotiation,
{\em all maximal $n$-sequences have the same target}. In our example, 
the markings $\vx$ and $\vx'$ above are the targets of all maximal $n_6$-sequences and all maximal
$n_2$-sequences, respectively. The same property holds for 
the strict $(n, \r)$-sequences too. 

\begin{proposition}
\label{prop:unique}
Let $\N$ be a sound and deterministic negotiation, and let $n$ be an atom of $\N$.
\begin{itemize}
\item[(a)] All maximal $n$-sequences have the same target.\\
That is: there is a unique marking $\vx$ such that $\vx_n \by{\sigma} \vx$ for every 
maximal $n$-sequence $\sigma$. We call $\vx$ the {\em target} of $n$.
\item[(b)] For every outcome $(n, \r)$, all maximal strict $(n, \r)$-sequences have the same target. \\
That is: there is a unique marking $\vx$ such that $\vx_n \by{\sigma} \vx$ for every 
maximal strict $(n, \r)$-sequence $\sigma$. We call $\vx$ the {\em target} of $(n,\r)$.
\end{itemize}
\end{proposition}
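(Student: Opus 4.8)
The plan is to prove part~(b) first, by induction on $|P_n|$, and then to derive part~(a) from it together with the loop-synchronization lemma announced for Appendix~\ref{subsec:loopsynch}. Two consequences of determinism are used throughout. First, every marking $\vx$ reachable from $\vx_n$ satisfies $|\vx(p)|\le 1$ for each agent $p$: the marking $\vx_n$ is single-valued, and each small step replaces $\vx(p)$ by $\trans(n,p,\r)$, which is a singleton or empty in a deterministic negotiation. Second, this yields a \emph{commutation (diamond) lemma}: if two distinct atoms $m_1\neq m_2$ are both enabled at such a marking, then $P_{m_1}\cap P_{m_2}=\emptyset$, since a shared party would have to be simultaneously ready for $m_1$ and $m_2$, contradicting single-valuedness. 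Hence firing $(m_1,\r_1)$ and $(m_2,\r_2)$ affects disjoint agents and the two orders reach the same marking. This lets me compare $n$-sequences up to reordering of concurrently enabled outcomes, so that the only essential freedom left is \emph{which result} is chosen at each atom.

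For part~(b) I induct on $|P_n|$ (proving (a) and (b) simultaneously, with (b) at a given level depending only on strictly smaller levels, and (a) at that level depending on (b) at the same level). In the base case $|P_n|=1$, any atom occurring after the first outcome of a strict $(n,\r)$-sequence would need a party set strictly contained in the singleton $P_n$, i.e.\ empty, which is impossible; so the only maximal strict sequence is $(n,\r)$ itself and its target is determined. For the step, fire $(n,\r)$ first; afterwards only atoms $m$ with $P_m\subsetneq P_n$ may occur. Using the commutation lemma I normalise two maximal strict sequences to a common interleaving, reducing the question to the choice of results inside the independent sub-computations carried out by the disjoint groups of agents created after $(n,\r)$. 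Each such group runs an $m$-type computation with $|P_m|<|P_n|$; by the induction hypothesis (parts~(a) and (b) at the smaller level, where (a) is needed because a group may revisit atoms of the same smaller party set and thus loop) each sub-computation has a unique target. Because distinct active atoms act on disjoint agents, these agent-wise unique targets assemble into a single global target, proving~(b). Existence of a maximal strict sequence follows because the finitely many single-valued configurations over $P_n$ cannot be traversed forever without a stuck configuration being reachable, which soundness guarantees.

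For part~(a) I use~(b) at the same level. An $n$-sequence alternates \emph{strict phases} (only atoms with $P_m\subsetneq P_n$) with occurrences of \emph{full} atoms $m$ with $P_m=P_n$, which move all parties of $n$ together. By part~(b) every strict phase ends at a uniquely determined marking, so a maximal $n$-sequence is entirely determined by its \emph{skeleton}: the sequence of full atoms fired and the results chosen there. It remains to show that all maximal skeletons reach the same stuck configuration. Here I would lift the situation to the ambient negotiation: by soundness some reachable marking $\vx^*$ enables $n$, and since $n$-sequences fire only atoms with parties $\subseteq P_n$, every $n$-sequence is also fireable from $\vx^*$, agreeing with the target on $P_n$ and leaving the other agents frozen. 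Suppose two maximal $n$-sequences reached different targets; running them from $\vx^*$ gives two reachable markings, each of which, by soundness, extends to $\vx_f$. The loop-synchronization lemma of Appendix~\ref{subsec:loopsynch} then controls the cyclic behaviour: any loop among the full atoms contains an atom with party set $\supseteq P_n$, forcing the skeleton to resynchronise, and combined with the commutation lemma and soundness this forces the two branches to be absorbed into the same stuck configuration.

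The main obstacle is exactly the last step, namely result-branching at the full (synchronizing) atoms. Local confluence genuinely \emph{fails} there: firing the same atom with two different results sends the common parties to genuinely different atoms, and no purely local diamond closes this. The commutation lemma handles only the interleaving of disjoint-party outcomes, so the entire weight of part~(a) rests on the global hypotheses: soundness rules out that a ``wrong'' result choice at a full atom leads to a deadlock (which would otherwise allow two inequivalent stuck configurations), and the loop-synchronization lemma tames the cycles created by atoms sharing the full party set $P_n$. I expect the bulk of the technical effort — and the reason the result is relegated to an appendix — to lie in making this reconciliation of result-choices precise, whereas parts~(b) and the diamond lemma are comparatively routine.
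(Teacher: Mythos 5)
Your proposal correctly isolates the hard point---reconciling different \emph{result} choices, which no diamond/commutation argument can close---but it does not actually resolve it, and the two reductions you build around it both have gaps. For part~(b), the ``disjoint groups of agents created after $(n,\r)$'' are disjoint only at the instant after $(n,\r)$ fires: an atom with parties $\{p_1,p_2\}$ may fire and send $p_1$ towards an atom shared with an agent $p_3$ who initially belonged to another group, so the sub-computations interleave and do not decompose agent-wise. Moreover, even where a sub-computation does stay inside some $P_m \subsetneq P_n$, it does not start from the marking $\vx_m$ (all parties of $m$ pointing at $m$, everyone else empty) that the induction hypothesis speaks about, so parts (a)/(b) at the smaller level cannot be invoked as stated; note that the paper does \emph{not} treat (b) as the easy case---it proves it by the same global argument as (a). For part~(a), the final step---that two maximal skeletons diverging by a result choice at a full atom are ``absorbed into the same stuck configuration'' by soundness and loop synchronization---is exactly the content of the proposition and is asserted rather than proved; soundness only guarantees that each target can be extended to $\vx_f$ in the ambient negotiation, not that the two targets coincide on $P_n$.

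The paper closes this gap with a different device. Given two maximal $n$-sequences reaching markings $\vx_1,\vx_2$ that already agree outside $P_n$, it lets $B$ be the set of agents on which they agree and $C$ the rest, pushes both markings forward by the longest $B$-only prefix of a sound run to $\vx_f$, and then applies Lemma~\ref{lem:same}: if two reachable markings agree on $B$ and every atom enabled at either has parties in both $B$ and $C$, then they agree on $C$ as well. That lemma is proved by induction on $|C|$, and its engine is the claim that the two sets of enabled atoms intersect, which is in turn derived from Proposition~\ref{prop:dominating} (every cycle has a dominating atom)---the syntactic counterpart of the loop-synchronization lemma you cite, reached through path executability (Lemma~\ref{lem:pathexec}) rather than through a normalization of skeletons. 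Without a substitute for Lemma~\ref{lem:same}, your outline does not go through.
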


While the proof of this \ph{proposition} is surprisingly non-trivial, there is a clear intuition for it.  
After a maximal $n$-sequence,
the agents of $P_n$ have to interact with agents
of $\agents \setminus P_n$ in order to reach the 
final marking, and these agents have to be ready for the interaction, otherwise a deadlock would arise. If 
maximal $n$-sequences can lead to different targets, then, since the  agents of $\agents \setminus P_n$ do not know which
target will be chosen by the agents of $P_n$, they must be ready to engage in different atoms. But
this is not possible, because in a deterministic negotiation  an agent is always committed
to at most one atom. 

Proposition \ref{prop:unique} allows us to define the fragments of sound deterministic negotiations.

\begin{definition}
\label{def:segfrag}
Let $\N = (N, n_0, n_f, \trans)$ be a sound and deterministic negotiation.
Let $n$ be an atom of $\N$, and let $\tgt{n}$ be its target.
The {\em $n$-fragment} of $\N$ is the \ph{tuple} $\F_n = (F_n, n, \hat{n}, \trans_n)$ defined as follows:
\begin{itemize}
\item $F_n$ contains the atoms occurring in the $n$-sequences of $\N$, 
plus a fresh atom $\hat{n}$. This atom \ph{$\hat{n}$} has the same parties as $n$, and \ph{it has} one single result $\hat{\r}$. 
\item For every atom $n' \in F_n$, party $p$ and result $\r$:
$$\trans_n(n',p, \r) = 
\begin{cases}
\emptyset & \mbox{ if $n' = \hat{n}$ and $\r = \hat{\r}$} \\
\hat{n}   & \mbox{ if $\trans(n', p, \r) = \tgt{n}$ } \\
\trans(n', \ph{p}, \r) & \mbox{ otherwise}
\end{cases}$$
\end{itemize} 
\noindent The $n$-fragment of $\N$  is {\em atomic} if $F_n = \{n, \hat{n}\}$ and 
$n$ has exactly one result.

\medskip

Let $(n, \r)$ be an outcome of $\N$, and let $\tgt{n,\r}$ be its target.
The {\em $(n, \r)$-segment} of $\N$ is the tuple $\S_{(n,\r)} = (S_{(n,\r)}, n, \hat{n}, \trans_{(n,\r)})$ defined exactly as 
the $n$-fragment above, but replacing the $n$-sequences by the strict $(n,\r)$-sequences of $\N$. Atomic segments are  defined \ph{like} atomic fragments.
\end{definition}

The $n_1$-fragment of the negotiation of Figure \ref{fig:cyclicRunning} is shown on the left of Figure \ref{fig:fragmentExample}. Observe that this is also the $n_3$-fragment and the $n_4$-fragment. 
The segments of the outcomes contained in the $n_1$-fragment are all atomic, \ph{except} the segments 
\ph{of} $(n_1, \texttt{a})$ and \ph{of} $(n_4, \texttt{b})$.

We show that the fragments and segments of sound deterministic negotiations are 
also sound. This result allows us to summarize sound negotiations ``inside-out'': first summarize
the fragments for agents with one party, \ph{then for agents with} two parties, and so on. 

\begin{proposition}
\label{prop:segfragsound}
Let $\N = (N, n_0, n_f, \trans)$ be a sound and deterministic negotiation.
\begin{itemize} 
\item[(1)] For every atom $n$, the $n$-fragment \ph{$\F_n$} of $\N$ is a sound and deterministic negotiation.
\item[(2)] For every outcome $(n, \r)$, the $(n, \r)$-segment of $\N$ is a sound and deterministic negotiation.
\end{itemize}
\end{proposition}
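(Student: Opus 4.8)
The plan is to prove~(1) for fragments in full and then to observe that~(2) follows by the identical argument with strict $(n,\r)$-sequences in place of $n$-sequences and Proposition~\ref{prop:unique}(b) in place of~(a). For a fixed atom $n$ I would first record an \emph{embedding} observation that ties the local behaviour of $\F_n$ to the global behaviour of $\N$. Since $\N$ is deterministic, every reachable marking assigns a singleton to each agent, so a marking enabling $n$ must assign exactly $\{n\}$ to every $p\in P_n$; that is, it agrees with $\vx_n$ on $P_n$. Hence, if $\rho$ is an initial occurrence sequence of $\N$ reaching a marking that enables $n$ (one exists by soundness of $\N$) and $\sigma$ is any $n$-sequence, then $\rho\sigma$ is again an occurrence sequence of $\N$, because $\sigma$ moves only the agents of $P_n$ and freezes the others. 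Dually, the occurrence sequences of $\F_n$ that do not fire $\hat n$ are in length-preserving bijection with the $n$-sequences of $\N$: the functions $\trans_n$ and $\trans$ differ only by redirecting to $\hat n$ those outcomes that move an agent into its target position, and such a redirected agent is frozen in $\F_n$ (it can only wait for $\hat n$) exactly as it is frozen inside an $n$-sequence of $\N$. Under this correspondence $\hat n$ becomes enabled precisely when the matching $n$-sequence has reached the target marking $\tgt{n}$, which is well defined by Proposition~\ref{prop:unique}(a).

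Next I would dispatch determinism and the well-formedness conditions of Definition~\ref{def:neg}. Determinism is immediate: for $n'\neq\hat n$ and $p\in P_n$ the value $\trans_n(n',p,\r)$ is either the singleton $\hat n$ or the singleton $\trans(n',p,\r)$ (and even when $n'=n_f$, that value is redirected to $\hat n$). For Definition~\ref{def:neg}: the agent set of $\F_n$ is $P_n$, and both $n$ and $\hat n$ have party set $P_n$, giving~(1); condition~(2) holds because $\hat n$ is the unique atom all of whose outcomes are empty, every other atom being sent to a nonempty $\trans(n',p,\r)$ or to $\hat n$; and condition~(3) follows from the correspondence, since each $n'\in F_n$ occurs in some $n$-sequence, and any $n$-sequence can be completed to one reaching $\tgt{n}$, whose final redirect supplies the edge to $\hat n$.

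Soundness splits into its two clauses. Clause~(a): each atom $n'\neq\hat n$ occurs in some $n$-sequence, so a prefix of the corresponding $\F_n$-run enables it, while $\hat n$ is enabled after any maximal $n$-sequence, which reaches $\tgt{n}$ by Proposition~\ref{prop:unique}(a). Clause~(b): a run of $\F_n$ that has fired $\hat n$ must end there (the final marking is all-empty) and is thus a large step; a run that has not fired $\hat n$ corresponds to an $n$-sequence $\sigma'$, which I extend to a maximal $n$-sequence, pull the extension back into $\F_n$ to reach the marking enabling $\hat n$, and then fire $\hat n$. Both clauses therefore reduce to one \emph{key claim}: every $n$-sequence of $\N$ extends to a finite maximal $n$-sequence, which by Proposition~\ref{prop:unique}(a) necessarily reaches $\tgt{n}$; equivalently (using that $\tgt{n}$ enables no $P_n$-atom, as otherwise a maximal sequence could be prolonged) that $\tgt{n}$ is reachable from every $\vx_n$-reachable marking.

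The main obstacle is exactly this key claim, which amounts to excluding a livelock confined to the agents of $P_n$. I would argue by contradiction: were some $n$-sequence $\sigma$ to admit no finite maximal extension, then every $n$-sequence extending $\sigma$ could be prolonged, and since each atom has finitely many outcomes, König's lemma yields an infinite $n$-run, i.e.\ an occurrence sequence of $\N$ that after the prefix $\rho\sigma$ uses only agents of $P_n$. As the restrictions of markings to $P_n$ form a finite set, this run traverses a loop executed entirely within $P_n$. The loop-synchronisation lemma of Section~\ref{subsec:loopsynch} then exhibits an atom of the loop whose parties contain those of every loop atom; combined with soundness of $\N$, which forbids a reachable marking from which $\vx_f$ is unreachable, this forces the $P_n$-agents to be able to exit the loop, and iterating such exits drives any $n$-sequence toward $\tgt{n}$ --- contradicting non-extendability. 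Making this ``reachability of the target'' rigorous, and verifying that it is precisely what both Proposition~\ref{prop:unique} and the present proposition require, is where the real work lies; once it is in hand, case~(2) is obtained verbatim by substituting strict $(n,\r)$-sequences and Proposition~\ref{prop:unique}(b).
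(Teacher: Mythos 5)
Your proposal follows the same route as the paper's proof: identify the runs of $\F_n$ (before $\hat{n}$ fires) with the $n$-sequences of $\N$, extend any such run to a maximal $n$-sequence, invoke Proposition~\ref{prop:unique} to conclude that the extension reaches the target and hence the marking $\vx_{\hat{n}}$ of the fragment, and finish by firing $(\hat{n},\hat{\r})$; the segment case is obtained by the same substitution the paper makes. The extra bookkeeping you carry out --- embedding $n$-sequences into global runs via a prefix $\rho$ enabling $n$, and checking determinism and the three conditions of Definition~\ref{def:neg} --- is exactly what the paper compresses into its opening sentence, and it is correct.

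The one point of divergence is the step you yourself flag as the crux: that every $n$-sequence extends to a \emph{finite} maximal one. The paper simply writes ``it can be extended to a maximal $n$-sequence $\sigma\tau$'' and moves on; you attempt a proof by contradiction via K\"onig's lemma, a loop confined to the agents of $P_n$, and Proposition~\ref{prop:minSynchronized}, but your concluding sentence (``this forces the $P_n$-agents to be able to exit the loop, and iterating such exits drives any $n$-sequence toward $\tgt{n}$'') is an assertion rather than an argument: the existence of a synchronizer does not by itself show that from every marking reachable from $\vx_n$ one can reach, \emph{using $P_n$-atoms only}, a marking enabling no atom, since soundness of $\N$ only guarantees an escape route that may pass through atoms with parties outside $P_n$. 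The natural place to look for the missing piece is the termination procedure inside the proof of Proposition~\ref{prop:minSynchronized} (which shows how to drive the agents of a loop out of it by firing loop outcomes, using that the set of agents committed to the loop never decreases), combined with determinism, which prevents an outside agent from being needed to unblock an inside one. So your skeleton is the paper's and your diagnosis of where the real work lies is accurate, but that work is left undone in the proposal --- as, to be fair, it is in the paper, which leaves the same step implicit.
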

\begin{proof}
We prove the result for the $n$-fragment \ph{$\F_n$}, the proof for the $(n, \r)$-segment is analogous.
Observe first that, by the definition of $n$- and $(n, \r)$-sequences, every agent of an atom of \ph{$\F_n$} 
is also an agent of $n$, and so \ph{$\F_n$} is indeed a negotiation.

For soundness, observe first that
every atom of \ph{$\F_n$} is executable: If $n'$ belongs to \ph{$\F_n$}, then by definition
some $n$-sequence contains $n$, and this sequence is also a sequence of \ph{$\F_n$}.

Now let $\sigma$ be an occurrence sequence of \ph{$\F_n$}. Then $\sigma$ is also an $n$-sequence
of $\N$, and \ph{it} can be extended to a maximal $n$-sequence $\sigma \tau$. By Proposition \ref{prop:unique}
we have $\vx_n \by{\sigma \, \tau} \vx_{\hat{n}}$ in \ph{$\F_n$}, and so 
$\vx_n \by{\sigma \, \tau \, (\hat{n}, \hat{\r})} \vx_f$.
\end{proof}

\begin{corollary}
A deterministic negotiation is sound if{}f all its fragments and segments are sound.
\end{corollary}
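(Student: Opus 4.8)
The equivalence splits into two implications of very unequal difficulty. The direction $(\Rightarrow)$ requires no new work at all: it is literally the content of Proposition~\ref{prop:segfragsound}, which asserts that every fragment $\F_n$ and every segment $\S_{(n,\r)}$ of a sound deterministic $\N$ is again a sound deterministic negotiation. So the entire task is the converse, and my plan is to realise $\N$ as one of its \emph{own} fragments.

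For $(\Leftarrow)$ the fragment to use is $\F_{n_0}$, the one generated by the initial atom. The decisive structural fact is condition~(1) of Definition~\ref{def:neg}: every agent is a party of $n_0$, so $P_{n_0} = \agents$. This forces the marking $\vx_{n_0}$ (which puts $n_0$ on each of its parties and $\emptyset$ elsewhere) to coincide with the initial marking $\vx_0$, and it makes the side condition $P_{n'} \subseteq P_{n_0}$ in the definition of an $n_0$-sequence vacuous for every atom $n'$. Hence the $n_0$-sequences of $\N$ are exactly its initial occurrence sequences, and the maximal ones are exactly the maximal initial occurrence sequences. When these all terminate in $\vx_f$ — that is, when $\N$ is sound — the target $\tgt{n_0}$ equals $\vx_f$, and unfolding Definition~\ref{def:segfrag} shows that $\F_{n_0}$ is nothing but $\N$ with $n_f$ renamed to the mock atom $\hat{n}_0$ and the former final outcomes routed to it. Since this renaming preserves the occurrence sequences, the reachable markings and the large steps, $\F_{n_0}$ and $\N$ are sound together; reading this in reverse delivers the converse, \emph{once the well-definedness issue below is settled}.

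I expect the real obstacle to be definitional rather than combinatorial. Proposition~\ref{prop:unique} supplies the target $\tgt{n}$, and therefore the fragment $\F_n$, only for negotiations already known to be sound, so in the converse the hypothesis ``all fragments and segments are sound'' must first be read as also asserting that they are well-defined negotiations at all, and I must check that every obstruction to soundness of $\N$ genuinely surfaces as a defect in some fragment rather than being silently ``repaired''. The danger is concrete: a deterministic negotiation all of whose runs deadlock at one common marking $\vx \neq \vx_f$ has a \emph{unique} target $\tgt{n_0} = \vx$, and a naive $\F_{n_0}$ would declare $\vx$ successful and look sound even though $\N$ is not. I would neutralise this by making the convention explicit — a fragment counts as sound only when its target is the true final marking and clause~(b) of soundness holds — and then argue the contrapositive: a reachable deadlock or livelock of $\N$ lies on an $n_0$-sequence and so forces either two maximal $n_0$-sequences with distinct targets (so $\tgt{n_0}$, hence $\F_{n_0}$, fails to exist) or an outright violation of clause~(b) inside $\F_{n_0}$. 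The genuinely subtle case is clause~(a): an atom $n^{\ast}$ that is never enabled in $\N$ occurs on no $n_0$-sequence and therefore disappears from $\F_{n_0}$ entirely; here I would instead examine the fragment $\F_{n^{\ast}}$ built directly from $\vx_{n^{\ast}}$ together with the fragments of the outcomes that commit to $n^{\ast}$, and show that their joint soundness cannot coexist with $n^{\ast}$ remaining globally unreachable. Establishing this last reconciliation — that local soundness of every fragment cannot leave an unreachable atom behind — is the one step I would not expect to collapse to routine bookkeeping.
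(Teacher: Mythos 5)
Your proposal is correct and takes essentially the same route as the paper: the forward direction is Proposition~\ref{prop:segfragsound}, and the converse rests on the observation that, because every agent is a party of $n_0$, the $n_0$-fragment is (up to appending the mock final atom) the complete negotiation. The well-definedness worries you raise about fragments of unsound negotiations are legitimate, but the paper's own proof consists of exactly this one-line observation and does not address them either, so they should not be counted against you here.
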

\begin{proof}
One direction follows immediately from Proposition \ref{prop:segfragsound}. For the other, it suffices
to observe that, since every agent is a party of the initial atom, the fragment of the initial atom is
 the complete negotiation.
\end{proof}

\subsection{The Summarization Algorithm}
\label{subsec:summalg}

\begin{algorithm}[t]
\caption{Summarization algorithm for a sound deterministic negotiation $\N$}
\begin{algorithmic}[1]
\State fix an arbitrary order $\prec$ on the atoms of $\N$ 
\For{$k = 1$ to $A$}
\While{$R(\N,k) \neq \emptyset$}
\If{possible} 
\State apply the merge rule to some outcome of $R(\N,k)$
\ElsIf{possible} 
\State apply the iteration rule to some outcome of $R(\N,k)$
\ElsIf{possible} 
\State apply the d-shortcut rule to some non-uniform outcome of $R(\N,k)$
\ElsIf{possible} 
\State apply the shortcut rule to a backward uniform outcome of $R(\N,k)$
\State  \ph{which is} minimal w.r.t. $\prec$
\Else 
\State apply the d-shortcut rule to some outcome of $R(\N,k)$
\EndIf
\State $\N := $ negotiation obtained after the application of the rule to $\N$
\EndWhile

\EndFor
\end{algorithmic}
\label{alg:redalg}
\end{algorithm}

We first present Algorithm \ref{alg:redalg}, an algorithm that summarizes a deterministic negotiation under 
the premise that it is sound. The general case is considered in Section  
\ref{subsec:algforunsound}. Algorithm \ref{alg:redalg} is a generalization of Algorithm \ref{alg:oneAgent} for one-agent negotiations. 
We start by explaining the meaning of \ph{$R(\N, k)$} in line 3 and backward outcome in line 11.

In the one-agent case all atoms have \ph{only one and thus} the same number of parties. Now this is no longer the case, and to cope 
with this the new algorithm proceeds in stages. During the $k$-th stage the algorithm 
only applies rules to outcomes $(n, \r)$ such that $n$ has $k$ parties. 

\begin{definition}
An outcome $(n, \r)$ is {\em $k$-reducible} if it is reducible and $n$ has $k$ parties. The set of $k$-reducible outcomes of $\N$ is denoted $R(\N, k)$. 
\end{definition}

Backward outcomes were introduced in Definition \ref{def:backward} for one-agent negotiations: An 
outcome $n \by{\r} n'$ is backward if $n' \neq n_f$ and $n' \prec n$. In the many-agent case we may have $\trans(n, p_1, \r) = \{n'\}$ and $\trans(n, p_2, \r) = \{n''\}$ for $n'' \neq n'$, and so not even the notation $n \by{\r} n'$ makes sense. 
But it does make sense when $\trans(n, p, \r) = \{n'\}$ for every party $p$ of $n$, i.e., for uniform outcomes 
(recall Definition \ref{def:uniform}). So we write $n \unif{\r} n'$ to denote that $(n, \r)$ is uniform and \ph{that}
$\trans(n, p, \r) = \{n'\}$ holds for \ph{each} $p \in P_n$, and generalize Definition \ref{def:backward} as follows:

\begin{definition}
A uniform outcome  $n \unif{\r} n'$ is {\em backward} if $n' \neq n_f$ and $n' \prec n$. We extend the order $\prec$ to
an order on uniform outcomes as in Definition \ref{def:backward}. 
\end{definition}

There is another difference between Algorithm \ref{alg:redalg} and Algorithm \ref{alg:oneAgent}. 
The \textbf{if}-command of lines 4-13 contains a new clause, which \ph{concerning the shortcut rule} gives preference to non-uniform outcomes 
over uniform ones. To understand the reason for this, consider the 
fragment on the left of Figure \ref{fig:fragmentExample}. The non-uniform outcomes are $(n_1, \texttt{a})$ and \ph{$(n_4, \texttt{b})$}. By giving them priority, we eliminate from \ph{this $n_1$-fragment} all atoms 
with strictly fewer parties than \ph{$n_1$} and ``make the fragment uniform''. 

The algorithm indeed summarizes the negotiation of Figure \ref{fig:cyclicRunning}
in the way sketched at the beginning of the section. We choose the order $\prec$ given by the 
numbering of the atoms. During the execution of the \textbf{while}-loop 
for $k=1$ the algorithm executes lines 5,7, and 11 to yield the negotiation on the right of Figure \ref{fig:stage1}. 
The execution for $k=2$ begins with applications of the d-shortcut rule
to the non-uniform outcomes $(n_1, a)$, and $(n_4, b)$ in line 9, which removes the atoms $n_2$ and $n_5$. 
After that, lines 5, 7, and 11 are executed to yield the negotiation on the right of Figure \ref{fig:stage2}.
Finally, the execution for $k=3$ applies the d-shortcut rule in line 9 several times, removing the atoms
$n_{15}$, $n_{67}$, and $n_8$, and leaving a negotiation with only the atoms $n_0$ and $n_f$. A final execution 
of line 13 finishes the summarization.

We conclude the section with a small lemma. Observe that the algorithm forbids to first apply a reduction to a 
$k$-reducible outcome and then to a $k'$-reducible outcome \ph{where} $k' < k$. This may seem dangerous:
In principle,  rule applications to $k$-reducible outcomes might {\em produce} new $k'$-reducible
outcomes, and the algorithm would not be able to ``use'' these outcomes for further reductions.
The \ph{following} lemma shows that this is not the case.

\begin{lemma}
\label{lem:kred}
Let $\N$ be a negotiation without $k$-reducible outcomes for any $k \leq K$, and let $\N'$ be obtained 
from $\N$ by an arbitrary sequence of reductions. Then $\N'$ \ph{also} has no $k$-reducible outcomes for any $k \leq K$.
\end{lemma}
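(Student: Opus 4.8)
The plan is to reduce the statement to a single reduction step and then exploit that every rule used in this section acts \emph{locally}, i.e.\ on a single atom. By induction on the length of the reduction sequence it suffices to prove: if $\N$ has no $k$-reducible outcome with $k\le K$ and $\N\by{R}\N'$ for one rule $R$, then $\N'$ has no $k$-reducible outcome with $k\le K$. Throughout I would use that the negotiations here are deterministic and that determinism is preserved by the rules; thus both $\N$ and $\N'$ are deterministic, every $\trans(n,p,\r)$ is a singleton, the useless-arc rule is never applicable (its guard needs a proper hyperarc), and ``$(n,\r)$ reaches $n'$'' coincides with ``$(n,\r)$ commits to $n'$'' (Definition~\ref{def:commits}).

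First I would observe that $R$ must act on an atom $n$ with $|P_n|>K$. The merge, iteration and shortcut rules are applied to an outcome $(n,\r)$ (the merge rule to a pair of outcomes of one atom) whose atom meets the corresponding guard; if $|P_n|\le K$, that outcome would be $|P_n|$-reducible in $\N$, against the hypothesis. Next I would record the key locality fact: $R$ changes neither $R_m$ nor $\trans(m,\cdot,\cdot)$ for any atom $m\neq n$ surviving in $\N'$. The merge and iteration rules rewrite only the results and transformers of $n$; the shortcut rule rewrites only $n$ and deletes an atom $n'$ only when $(n,\r)$ has \emph{exclusive access} to it (Definition~\ref{def:depends}), i.e.\ when no surviving outcome points to $n'$, so that no surviving $\trans(m,\cdot,\cdot)$ mentions the deleted atom. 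As every atom with at most $K$ parties differs from $n$, all such atoms keep exactly their results and transitions.

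It then remains to rule out that some surviving outcome $(m,t)$ with $|P_m|\le K$ is reducible in $\N'$, which I would do by contradiction on the rule that could reduce it. If $(m,t)$ met the iteration guard, or together with another result of $m$ the merge guard, in $\N'$, then---these guards referring only to $R_m$ and $\trans(m,\cdot,\cdot)$, which are unchanged---the same guard would hold in $\N$, so $(m,t)$ would be reducible there, a contradiction. For the shortcut rule, whether $(m,t)$ unconditionally enables an atom $\tilde n$ depends only on $\trans(m,\cdot,t)$ and on the (unchanged) parties of $\tilde n$ (Definition~\ref{def:uncond}); as any $\tilde n$ present in $\N'$ is present in $\N$, the set of atoms unconditionally enabled by $(m,t)$ is the same in both. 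If $\tilde n\neq n_f$, then by determinism $(m,t)$ is automatically shortcut-reducible in $\N$: either it has exclusive access to $\tilde n$, or exclusivity fails because some \emph{other} outcome reaches and hence commits to $\tilde n$; in either case a guard branch of Definition~\ref{def:shortcut} already holds in $\N$, a contradiction.

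The last sub-case $\tilde n=n_f$ is the main obstacle, since exclusive access to $n_f$ is a global condition and the fresh outcomes of $n$ could a priori reach $n_f$. Here I would note that a shortcut with $n'=n_f$ removes $n_f$ and promotes $n$ to the final atom, so that possibility is vacuous, and otherwise argue directly. Assume $(m,t)$ has exclusive access to $n_f$ in $\N'$. Since $n\neq m$, each new outcome $(n,\r'_s)$ differs from $(m,t)$ and hence does not reach $n_f$ in $\N'$; but $\trans'(n,p,\r'_s)=\trans(n,p,\r)$ for $p\in P_n\setminus P_{n'}$, while for $p\in P_{n'}$ the outcome $(n,\r)$ reaches only $n'\neq n_f$, so $(n,\r)$ does not reach $n_f$ in $\N$ either. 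All other outcomes being unchanged, $(m,t)$ is then the unique outcome reaching $n_f$ in $\N$, i.e.\ it has exclusive access to $n_f$ in $\N$; as ``$t$ is the only result of $m$'' is a local condition, the final-atom branch of the shortcut guard is met by $(m,t)$ already in $\N$, a contradiction. This exhausts the cases and proves the single-step claim, hence the lemma.
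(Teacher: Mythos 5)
Your proof is correct and follows the same strategy as the paper's: reduce to a single rule application, observe that the rule must act on an atom with more than $K$ parties, and argue that outcomes of atoms with at most $K$ parties cannot thereby become reducible. The paper disposes of that last step in one sentence (``the only outcomes that might be reducible in $\N'$ but not in $\N$ are those added as a consequence of the rule''), whereas you verify it explicitly---including the only genuinely non-local condition, exclusive access to $n_f$, and the inapplicability of the useless-arc rule under determinism---so your argument is a more careful elaboration of the same idea rather than a different route.
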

\begin{proof}
Clearly, it suffices to prove the result for sequences of length 1. 
Assume the sequence applies a rule to a reducible outcome $(n, \r)$. Then $n$ has more than $k$ parties. 
If the rule is the iteration rule, 
then every reducible outcome of $\N'$ was already a reducible outcome of $\N$, and we are done.
In the case of the merge and shortcut rules, the only outcomes that might be reducible in $\N'$ but not in $\N$
are those added as a consequence of the rule. But these are outcomes of $n$, 
which has more than $k$ parties.
\end{proof}

This lemma already \ph{can be used to show} that Algorithm \ref{alg:redalg} summarizes all sound and deterministic
acyclic negotiations. Assume this is not the case. By Proposition \ref{prop:polcomp} such a negotiation is reducible by any maximal 
sequence of applications of the merge and d-shortcut rules, in particular by any in which we always choose a reducible outcome
with a minimal number of parties for the next rule application, and within those we give preference to the
merge over the shortcut rule. By Lemma \ref{lem:kred}, if $k < k'$ then such sequences never choose a $k'$-reducible outcome 
and then later a $k$-reducible one, and so the algorithm executes one of them.  

\subsection{Completeness}
\label{subsec:manycomp}

We prove a proposition at the core of our completeness result which states that Algorithm \ref{alg:redalg}
summarizes all sound deterministic negotiations. Intuitively, the proposition shows that the algorithm summarizes
the negotiation ``inside-out'', meaning that after $k$ iterations of the \textbf{for}-loop all fragments for atoms with 
$k$ parties have been summarized. Intuitively, the proof shows that before the $k$th iteration 
all the segments of these fragments are acyclic, and that the $k$-iteration performs the following two steps:
\begin{itemize}
\item Summarize all these segments using the merge and d-shortcut rules. This transforms the
fragments into replications.
\item Summarize the replications using Algorithm \ref{alg:oneAgent}.
\end{itemize}

\begin{proposition}
\label{prop:mainred}
Let $\N$ be a sound deterministic negotiation, and assume Algorithm \ref{alg:redalg}
terminates for input $\N$. For every $k \geq 0$, let
$\N^k$ be the negotiation produced by Algorithm \ref{alg:redalg} after $k$ iterations
of the for-loop. 
\begin{itemize}
\item[(1)] For every atom $n$ of $\N$ with at most $k$ parties, the $n$-fragment of $\N^k$ is atomic.
\item[(2)] For every outcome $(n, \r)$ of $\N$ with at most $k+1$ parties, the $(n, \r)$-segment 
of $\N^k$ is acyclic.
\end{itemize}
\end{proposition}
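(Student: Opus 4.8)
The plan is to prove both statements simultaneously by induction on $k$, carrying along the auxiliary invariant that $\N^k$ contains no $j$-reducible outcome for any $j \le k$. This invariant follows directly from the shape of Algorithm~\ref{alg:redalg}: the $k$-th \textbf{while}-loop runs until $R(\N,k)=\emptyset$, and by Lemma~\ref{lem:kred} once the $j$-reducible outcomes have been exhausted at stage $j$ they never reappear. In the base case $k=0$ both parts are degenerate: no atom has fewer than one party, so (1) is vacuous; and for an outcome $(n,\r)$ whose atom has a single party every strict $(n,\r)$-sequence consists of $(n,\r)$ alone, since an atom occurring after the first outcome would need a proper subset of a singleton as its party set. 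Hence its segment is the trivial two-atom negotiation, which is acyclic. Throughout the step I use that every $\N^i$ is sound and deterministic (the rules preserve both), so all fragments and segments are well defined by Proposition~\ref{prop:unique}.

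The heart of the inductive step is part (1) for $k+1$: every $(k+1)$-party fragment of $\N^{k+1}$ is atomic. I would argue fragment by fragment, following the two-step picture of the informal description. Fix an atom $n$ with exactly $k+1$ parties; by Proposition~\ref{prop:segfragsound} the fragment $\F_n$ is itself a sound deterministic negotiation, and by the induction hypothesis (2) every $(k+1)$-party outcome inside it has an acyclic segment. The priority given to the merge, iteration, and \emph{non-uniform} d-shortcut rules drives the first step: a non-uniform outcome $(n',\r')$ of a $(k+1)$-party atom can only unconditionally enable atoms with strictly fewer parties, which by induction hypothesis (1) have atomic fragments and hence a single result, so the d-shortcut rule applies to them and removes them. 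Combining Proposition~\ref{prop:polcomp} and Proposition~\ref{prop:irredtheo} on the acyclic segments with the replication argument of Lemma~\ref{lem:acyclicpoly2}, this step eliminates every atom of $\F_n$ with fewer than $k+1$ parties and turns $\F_n$ into a replication. The second step applies the shortcut rule to minimal backward uniform outcomes exactly as in Algorithm~\ref{alg:oneAgent}; since a replication reduces under this strategy like its one-agent projection, Theorem~\ref{thm:repred}, via the ordering Lemma~\ref{lem:alg2shortcut}, guarantees that $\F_n$ is summarized to a single atom, establishing part (1).

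Part (2) for $k+1$ then follows cleanly from part (1) for $k+1$ by a loop argument. Consider a segment $\S_{(m,\r)}$ whose atom $m$ has at most $k+2$ parties; every atom occurring after the first outcome of a strict $(m,\r)$-sequence has a proper subset of $P_m$ as its party set and hence at most $k+1$ parties. A cycle inside this segment is a loop of the sound deterministic negotiation $\N^{k+1}$, so by the synchronization lemma of Appendix~\ref{subsec:loopsynch} it contains an atom $c$ whose parties include those of every atom of the loop; in particular all loop atoms lie on $c$-sequences and $|P_c|\le k+1$, so the loop is contained in the $c$-fragment. If $|P_c|=k+1$ this fragment is atomic by part (1) for $k+1$, and if $|P_c|\le k$ it is atomic by the persistence of part (1) for the smaller indices; either way an atomic fragment admits no nontrivial loop through $c$, a contradiction. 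Hence the segment is acyclic.

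The main obstacle will be making rigorous the claim that these globally applied rules act \emph{locally} on each fragment, and that the schedule Algorithm~\ref{alg:redalg} actually follows can be reorganised into the per-fragment schedules used above. This rests on the congruence of $\equiv$ under substitution of subnegotiations, on Proposition~\ref{prop:unique} (all maximal $n$-sequences share a target, so the fragment boundary is stable under reduction), and on a refinement of Lemma~\ref{lem:kred} showing that a rule applied inside one fragment neither creates nor removes reducible outcomes of another. A second delicate point is verifying that the first step really produces a replication rather than merely a negotiation irreducible at level $k{+}1$, for which one must invoke Lemma~\ref{lem:acyclicpoly2} on each segment and check that uniformity of all surviving outcomes is forced. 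Once localisation and uniformity are in place, the two steps yield atomicity of every $(k+1)$-fragment, and with the loop argument the induction closes.
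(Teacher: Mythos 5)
Your skeleton matches the paper's: a simultaneous induction on $k$, with part (2) deduced from part (1) via a synchronizer of a loop in the segment, and part (1) obtained through the chain ``segments acyclic (by the induction hypothesis) $\Rightarrow$ segments atomic $\Rightarrow$ fragment is a replication $\Rightarrow$ fragment atomic.'' Your part (2) is essentially the paper's argument, up to one imprecision: a cycle of the segment is a graph-theoretic object, not yet a loop, so you should first invoke Proposition~\ref{prop:segfragsound} to see that the segment is itself a sound deterministic negotiation and then apply Proposition~\ref{prop:loopsexist} and Proposition~\ref{prop:minSynchronized} \emph{to the segment} (rather than to $\N^{k+1}$) to obtain a loop with a synchronizer $s$; since $s$ is neither the initial nor the final atom of the segment, it has strictly fewer parties than the segment's generating atom, and the contradiction with an atomic $s$-fragment closes the induction exactly as you say.

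The real divergence is in how you close part (1). You argue operationally --- simulate what Algorithm~\ref{alg:redalg} does during the stage on each fragment, show it first uniformizes the fragment and then runs the strategy of Algorithm~\ref{alg:oneAgent} on the resulting replication --- and you correctly flag that this needs (i) a localization lemma saying the global schedule restricts to a valid per-fragment schedule and (ii) stability of atomicity under later rule applications. Neither is needed, and the obstacle you name is genuine if you insist on this route. The paper's proof is static: by the termination condition of the \textbf{while}-loop together with Lemma~\ref{lem:kred}, $\N^k$ contains \emph{no} reducible outcome with at most $k$ parties, regardless of which schedule produced it. Given that, the segments of an $n$-fragment with $|P_n|\le k$ are acyclic by the induction hypothesis, sound, deterministic and irreducible, hence atomic by Proposition~\ref{prop:irredtheo}; atomicity of all segments forces every atom of the fragment to share the parties of $n$ and every outcome to be uniform, so the fragment is a replication; and if that replication were not atomic, Theorem~\ref{thm:repred} would exhibit a reducible outcome of it --- necessarily with at most $k$ parties --- contradicting irreducibility. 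This one-line contradiction replaces your entire second step and all of the scheduling and localization machinery; the operational picture you describe is what the paper deploys later, for the runtime bound of Theorem~\ref{thm:cyclicupperbound}, not for this proposition.
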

\begin{proof}
We prove (1) and (2) simultaneously by induction on $k$. 
If $k=0$ then (2) is vacuously true. For (1) just observe that if $n$ has one party
then the only $(n, \r)$-sequence is $(n, \r)$, and so the $(n, \r)$ segment is acyclic.

Assume now $k > 0$, and let $n$ be an atom of $\N^k$ 
with $\ell \leq k$ parties. We prove (1) in four steps.

\medskip

\noindent {\bf Claim 1}. All segments of the $n$-fragment are acyclic. \\
By the definition of a fragment, every atom of the $n$-fragment of $\N^{k+1}$ has at most $\ell$ parties. 
By induction hypothesis on (2), all segments of the $n$-fragment of $\N^{k}$ are acyclic, 
and therefore the same holds for $\N^{k+1}$. 

\medskip

\noindent {\bf Claim 2}. All segments of the $n$-fragment are atomic.\\
By Theorem \ref{prop:segfragsound}, the segments are sound and deterministic. 
Moreover, since every atom of a segment has at most $\ell$ parties and no outcome with
at most $k-1$ parties is reducible, all segments are irreducible.
By Claim 1 and Proposition \ref{prop:irredtheo}, all segments are atomic. 

\medskip

\noindent {\bf Claim 3}. The $n$-fragment is a replication.\\
Since every segment is atomic, all atoms of the $n$-fragment have the same 
set of parties as $n$. Now let $(n', \r')$ be an outcome of the $n$-fragment.
By Claim 2 the $(n', \r')$-segment is atomic, and so 
in the negotiation $\N^k$ we have $\vx_{n'} \by{(n', \r')} \vx$ for a marking $\vx$ such that
either $\vx = \vx_{n''}$ for some atom $n''$ with the same parties as $n'$, or
$\vx$ does not enable any atom. In the first case we have $n' \unif{\r'}  n''$,
and we are done. In the second case, by the definition of a segment we have 
$n' \unif{\r'}  \hat{n}$. 

\medskip

\noindent {\bf Claim 4}. The $n$-fragment is atomic. \\
Assume the $n$-fragment is not atomic. Then by Claim 3 and Theorem \ref{thm:repred} 
the \ph{n-}fragment has at least one reducible outcome, 
and the outcome has $\ell \leq k$ parties. This contradicts the 
hypothesis that the minimal reducible outcome has $k+1$ parties.

\medskip

Now we proceed to prove (2). Let $(n, \r)$ be an outcome of $\N$ with at most $k+1$ parties. 
Assume that the $(n, \r)$-segment is cyclic.
By Theorem \ref{prop:segfragsound}, the segment is sound and deterministic. 
By Proposition \ref{prop:loopsexist}, the segment has a loop, and by Proposition
\ref{prop:minSynchronized}, proved in the Appendix, the loop has a synchronizer $s$. By the definition of a segment,
$s$ is neither the initial atom $n$ nor the final atom of the segment. Since, by definition, every atom 
of the $(n, \r)$-segment different from the initial and final atoms has 
strictly fewer parties than $n$, we get $|P_s| \leq k$. By induction hypothesis on (1), the $s$-fragment is atomic. 
But this contradicts that $s$ is a synchronizer.
\end{proof}

\begin{theorem}
\label{thm:manyAgentComp}
Algorithm \ref{alg:redalg} summarizes every sound deterministic negotiation $\N$.
\end{theorem}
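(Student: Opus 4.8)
The plan is to derive the theorem from Proposition \ref{prop:mainred}, whose conclusion is conditional on termination; hence the two tasks are (i) to prove that Algorithm \ref{alg:redalg} terminates on every sound deterministic $\N$, and (ii) to read off from the terminal negotiation that it is a summary. Granting termination, set $A = |\agents|$ and let $\N^A$ be the negotiation produced after all $A$ iterations of the for-loop. Every atom has at most $A$ parties, so Proposition \ref{prop:mainred}(1) applied with $k = A$ shows in particular that the $n_0$-fragment of $\N^A$ is atomic. Since $P_{n_0} = \agents$ (Definition \ref{def:neg}), this fragment coincides with the whole negotiation (the Corollary following Proposition \ref{prop:segfragsound}), and therefore $\N^A$ is a single-atom negotiation. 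As all four rules are correct, they preserve equivalence, so $\N^A \equiv \N$, and $\N^A$ is the summary of $\N$. This half of the argument is routine once termination is in hand.

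The real work is termination, which I would establish by bounding the number of rule applications in each of the $A$ stages separately; finiteness of every stage together with finiteness of the number of stages then gives the result. For a fixed stage $k$ the analysis mirrors the one-agent case (proof of Theorem \ref{thm:oneAgentPoly}). First, by the inductive content of Proposition \ref{prop:mainred} one knows that at the start of stage $k$ all segments of $k$-party atoms are acyclic and all fragments of atoms with fewer than $k$ parties are already atomic. The dangerous applications are those of the shortcut rule to backward uniform outcomes (line 11), since Figures \ref{fig:cyclicendless} and \ref{fig:cyclicnoedgeagain} show that shortcuts can re-create previously removed arcs and reintroduce cycles. These are controlled exactly as before: a generalization of Lemma \ref{lem:alg2shortcut} to uniform outcomes shows that successive backward shortcuts within a stage are strictly increasing with respect to the order $\prec$, so their number is at most the number of pairs of atoms. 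The interleaved merge, iteration and non-uniform d-shortcut applications each strictly decrease either the number of outcomes or the index $I$ of Definition \ref{def:index}, and once no backward uniform outcome remains the $k$-party fragments are acyclic replications (as in Claims 1--3 of Proposition \ref{prop:mainred}), which are reduced in finitely many steps by Proposition \ref{prop:polcomp} and Theorem \ref{thm:repred}.

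The main obstacle is precisely this per-stage termination bound: one must verify that the algorithm's discipline (process outcomes in order of increasing number of parties; give priority to merge, then iteration, then non-uniform d-shortcut, and only then to a \emph{minimal} backward uniform shortcut) genuinely rules out the non-terminating behaviour of Figure \ref{fig:cyclicendless}. The key insight that makes this go through is that, after the non-uniform outcomes of a $k$-party fragment have been eliminated, the fragment is a replication and thus behaves like a one-agent negotiation, so the one-agent termination argument—together with the strict $\prec$-increase of backward shortcuts—transfers essentially verbatim. I would therefore first prove the uniform-outcome generalization of Lemma \ref{lem:alg2shortcut}, then assemble the per-stage bound, and finally combine it with Proposition \ref{prop:mainred} as above; a careful accounting of these counts is exactly what also yields the quantitative bound $2K^3 + K^2 + KL + L$ used in the runtime analysis.
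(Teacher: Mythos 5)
Your proposal is correct and follows essentially the same route as the paper: the first paragraph is precisely the paper's own proof of Theorem \ref{thm:manyAgentComp} (apply Proposition \ref{prop:mainred}(1) with $k$ equal to the number of agents, note that the $n_0$-fragment is the whole negotiation since $P_{n_0}=\agents$, and conclude via correctness of the rules), while your termination analysis reproduces the argument the paper defers to Theorem \ref{thm:cyclicupperbound} (per-stage bounds, non-uniform d-shortcuts turning fragments into replications, then the one-agent machinery of Theorem \ref{thm:repred}). You are right to flag that Proposition \ref{prop:mainred} is conditional on termination; the paper handles this by proving the quantitative bound separately in the runtime section rather than inside this theorem's proof.
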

\begin{proof}
Let $K$ be the number of agents of $\N$. By Theorem \ref{prop:segfragsound}, after termination the
$n_0$-fragment is atomic. It follows that $\N^K$ is atomic, and we are done. 
\end{proof}

\subsection{Runtime Analysis}
\label{subsec:manyruntime}

It is easy to give a $O(K^4 + KL)$ bound for the number of rule applications of Algorithm \ref{alg:redalg}. 
Indeed, since each $n$-fragment can be summarized using
at most $2K^3 + K^2 + L$ applications and the negotiation has $K$ atoms, the bound follows.

We obtain a better, $O(K^3+KL)$ bound. The key observation is that\ph{,} for a given 
number $k$, the fragments of the atoms with $k$ parties may share nodes. Consider for instance 
the negotiation of Figure \ref{fig:multifragment},
and take $k=1$. The three atoms with one party are $n_1$, $n_2$, and $n_5$. 
The fragments ${\cal F}_{n_1}$ and ${\cal F}_{n_2}$ share the nodes $n_3$ and $n_4$.
Therefore, adding the number of rule applications needed to summarize the three fragments separately 
gives us a very crude bound. 

We define the $k$-fragment of $\N$ as, loosely speaking, the union of all the $n$-frgamnets for all
atoms $n$ with exactly $k$ parties. In Figure \ref{fig:multifragment} the $1$-fragment of $\N$ 
is shown at the bottom, on the right. We now formally define the $k$-fragment of a 
sound deterministic negotiation, and give a bound on the number of rule applications needed to summarize
it.

\begin{figure}[ht]
\centering
\input{tikz/multifragment}
\caption{A sound deterministic negotiation, its three one-party fragments, and their
combination into the $1$-fragment.}
\label{fig:multifragment}
\end{figure}

\begin{definition}
Let $\N=(N, n_0, n_f, \trans)$ be a \ph{deterministic} sound \ph{negotiation.}
Let $N^i$ be the set of atoms of $\N$ with exactly $i$ parties.
For every $n \in N_i$, let ${\cal F}_n = (F_n, n, \hat{n}, \trans_n)$ be the $n$-fragment,
with the fresh final atoms $\{ \hat{n} \mid n \in N_i \}$ chosen in such a way that
$\hat{n}_1 = \hat{n}_2$ holds for any two atoms $n_1, n_2 \in N^i$ satisfying $P_{n_1} = P_{n_2}$
(i.e., the fragments of two atoms have the same final atom if{}f they have the same parties).

The {\em $i$-fragment of $\N$} is the pre-negotiation $\F^i=(F,\trans^i)$ defined as follows
\begin{itemize}
\item $F = \bigcup_{n \in N^i} F_n$
\item for every $n \in F$, $p \in P_n$, $\r \in R_n$: 
$\ph{\trans^i(n,p,\r)} = \trans_{n'}(n,p,\r)$, where $n'$ is any atom with $i$ parties such that
$n \in F_{n'}$.
\end{itemize}
The $k$-fragment is summarized  if $\trans^i(n,p,\r) \neq \emptyset$ implies that $n \in N_i$ and 
\ph{$\trans^i(n,p,\r) = \{\hat{n}\}$}.
\end{definition}

\begin{lemma}
\label{lem:ifragment}
Let $\N$ be a sound  deterministic negotiation. Let $\F^i$ be the $i$-fragment of $\N$,
and let $K$ and $L$ be the number of nodes and outcomes of $\F^i$. If $\F^i$ is a replication,
then Algorithm \ref{alg:oneAgent} summarizes $\N^i$ after at most $2K^3 + K^2 + L$ rule applications.
\end{lemma}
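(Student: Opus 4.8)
The plan is to derive the statement from Theorem~\ref{thm:repred}, whose bound $2K^3+K^2+L$ is exactly the one we want. The conceptual core is that a replication is indistinguishable from a one-agent negotiation as far as the reduction rules are concerned: since every outcome of $\F^i$ is uniform, all parties of each atom move in lockstep, so the sequence of rules produced by Algorithm~\ref{alg:oneAgent} on $\F^i$ coincides step for step with the sequence it would produce on the one-agent negotiation obtained by collapsing all parties of a class into a single agent. This is precisely the observation that was used to justify Theorem~\ref{thm:repred}, so I would start by recalling it and then transporting the counting argument of Theorem~\ref{thm:oneAgentPoly}.

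First I would make the structure of $\F^i$ explicit. Because $\F^i$ is a replication, every atom of an $n$-fragment has the same parties as its root $n$; hence two fragments $F_{n'}$ and $F_{n''}$ can share an atom only when $P_{n'}=P_{n''}$, and by construction their mock final atoms are identified in exactly this case ($\hat n'=\hat n''$). Consequently $\F^i$ splits, over the party sets $P$ of size $i$, into node-disjoint sub-prenegotiations $\F^i_P$, each a replication with a single final atom $\hat n_P$ but possibly several ``source'' atoms of $N^i$. The reductions on distinct $\F^i_P$ do not interfere, so it suffices to bound the applications on each $\F^i_P$ and sum; since $m\mapsto 2m^3+m^2$ is superadditive we get $\sum_P(2K_P^3+K_P^2+L_P)\le 2K^3+K^2+L$ with $K=\sum_P K_P$ and $L=\sum_P L_P$, the node- and outcome-counts of $\F^i$.

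Next I would apply Theorem~\ref{thm:oneAgentPoly} to each $\F^i_P$. The key point to verify is that its proof never uses the existence of a \emph{unique} initial atom: the two-phase decomposition, the index measure $I(\cdot)$ of Definition~\ref{def:index}, and Lemma~\ref{lem:alg2shortcut} all depend only on the fixed total order $\prec$ and on counting atoms and outcomes. Since each $\F^i_P$ is a replication with one final atom, the three-phase argument yields a bound of $2K_P^3+K_P^2+L_P$ applications, and because every fragment of a sound deterministic negotiation is sound (Proposition~\ref{prop:segfragsound}), the algorithm terminates having made each fragment atomic by Proposition~\ref{prop:irredtheo}, i.e.\ having summarized $\F^i$.

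I expect the main obstacle to be this last transfer: confirming that the runtime analysis of Theorem~\ref{thm:oneAgentPoly} survives verbatim when the object being reduced has several source atoms rather than a single $n_0$. Concretely, I would have to check that Lemma~\ref{lem:alg2shortcut} (monotonicity of successive shortcut applications with respect to $\prec$) still holds, and that Claims~1 and~2 in the proof of Theorem~\ref{thm:oneAgentPoly}—acyclicity at the onset of phase~II and persistence of forward-only outcomes, enabling the appeal to Proposition~\ref{prop:polcomp}—remain valid for each $\F^i_P$. The matching of rule applications on $\F^i$ with those on its one-agent projection, and the superadditivity estimate, are then routine.
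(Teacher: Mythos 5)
Your proposal is correct and follows essentially the same route as the paper: both reduce to the one-agent case via the replication correspondence and then observe that the counting argument of Theorem~\ref{thm:oneAgentPoly} never uses the existence of a single initial atom, so it transfers to the multi-source pre-negotiation $\F^i$. Your extra decomposition of $\F^i$ into party-set classes $\F^i_P$ with the superadditivity estimate is a harmless refinement of a point the paper glosses over, and the one issue you flag (multiple source atoms) is exactly the one the paper identifies as the only adaptation needed.
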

\begin{proof}
\ph{By} the correspondence between replications and one-agent (pre-)negotiations,
it suffices to prove the result for $i=1$. The proof is exactly as the proof of Theorem 
\ref{thm:oneAgentPoly}. Indeed, an inspection shows that the proof does not use the fact 
that a negotiation has one single initial atom. The only point that has to be slightly 
adapted appears at the very end of the proof: A negotiation with
only the initial and final atom can be reduced to an atomic negotiation. On the contrary, a 
pre-negotiation with multiple initial atoms and one final atom cannot, because the shortcut 
rule cannot be applied. However, this is only a technicality, because the final atoms are 
atoms added artificially to give a destination to ``dangling outcomes''.
\end{proof}

\begin{theorem}
\label{thm:cyclicupperbound}
Let $\N$ be a sound deterministic negotiation with $K$ atoms, and $L$ outcomes. 
Algorithm \ref{alg:redalg} terminates after at most $2K^3 + K^2 + K L + L$ rule applications.
\end{theorem}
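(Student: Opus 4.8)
The plan is to exploit the inside-out structure of the reduction established in Proposition~\ref{prop:mainred} and to account for the work stage by stage, using the $k$-fragment rather than the individual $n$-fragments so as to beat the naive $O(K^4+KL)$ estimate. Recall that Algorithm~\ref{alg:redalg} runs in $A$ stages, and that by Proposition~\ref{prop:mainred} the $k$-th stage turns every fragment of an atom with exactly $k$ parties into an atomic fragment, while before that stage all segments of such fragments are already acyclic. I would therefore read each stage as performing the two steps made explicit in the proof of Proposition~\ref{prop:mainred}: first summarize the (acyclic) segments of the $k$-party fragments with the merge and d-shortcut rules, turning each such fragment into a replication; then summarize the resulting replication exactly as in the one-agent algorithm.

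First I would bound the replication-summarization work. Fix a stage $k$ and consider the $k$-fragment $\F^k$, whose numbers of nodes and outcomes I denote $K_k$ and $L_k$. After its segments have been summarized, $\F^k$ is a replication (Claim 3 in the proof of Proposition~\ref{prop:mainred}), so Lemma~\ref{lem:ifragment} bounds its summarization by $2K_k^3+K_k^2+L_k$ rule applications. The crucial point is that the $k$-fragment collects all $k$-party fragments at once, sharing their common nodes, so the nodes of $\F^k$ that survive into the replication are essentially the atoms of $\N$ with exactly $k$ parties, and each atom (resp.\ outcome) of $\N$ carries one fixed party count. Hence $\sum_k K_k \le K$ and $\sum_k L_k \le L$ (up to the artificially added mock final atoms, which I would absorb into the same counts). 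Using that the cube and the square are superadditive on nonnegative integers, $\sum_k K_k^3 \le \bigl(\sum_k K_k\bigr)^3 \le K^3$ and $\sum_k K_k^2 \le K^2$, so the total replication-summarization cost over all stages is at most $2K^3+K^2+L$.

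Next I would bound the segment-summarization work, i.e.\ the applications of the merge and d-shortcut rules that reduce each $k$-party fragment to a replication by eliminating its atoms with fewer than $k$ parties. Since these rules are applied only to acyclic segments (Proposition~\ref{prop:mainred}(2)) and each strictly decreases the index in the sense of Lemma~\ref{lem:lemacyc}, Proposition~\ref{prop:polcomp} bounds the number of such applications on any acyclic structure by its node count times its outcome count; aggregating over the at-most-$K$ atoms that get shortcut away and the at-most-$L$ outcomes involved, this phase contributes at most $KL$ applications in total. Adding the two budgets gives the claimed bound $2K^3+K^2+KL+L$.

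The main obstacle, and the reason the refined bound needs the $k$-fragment at all, is precisely this accounting: summarizing each $n$-fragment in isolation costs $2K^3+K^2+L$ and would force an extra factor of $A\le K$, yielding only $O(K^4+KL)$. The work will be in verifying that passing to the $k$-fragment legitimately lets the per-stage node and outcome counts sum to $K$ and $L$ — in particular that the mock final atoms and the fresh outcomes created by the shortcut rule do not spoil $\sum_k K_k \le K$ and $\sum_k L_k \le L$ — and that the segment-reduction and replication-reduction phases can be cleanly separated so that their costs add rather than multiply.
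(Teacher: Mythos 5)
Your proposal matches the paper's proof: the same stage-by-stage decomposition into a segment-summarization phase (costing $K_i\ell_i$ per stage, $KL$ in total) followed by a replication-summarization phase via Lemma~\ref{lem:ifragment} applied to the shared $k$-fragment (costing $2k_i^3+k_i^2+\ell_i$ per stage, summed to $2K^3+K^2+L$ by superadditivity of cubes and squares); the paper merely packages this sum as an induction on $i$ with cumulative bounds $2K_i^3+K_i^2+K_iL_i+L_i$. The one place your accounting is looser is the $KL$ term: applying Proposition~\ref{prop:polcomp} per segment does not directly yield it, since distinct segments share atoms and outcomes, and the paper instead bounds each segment's \emph{index} by its atom count, using that every non-initial outcome of such a segment already has an atomic segment and hence index $0$, so that Lemma~\ref{lem:lemacyc} gives at most $K_i$ applications per segment and $K_i\ell_i$ per stage.
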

\begin{proof}
For every $i \geq 1$ let $k_i$ and $\ell_i$ be the number of atoms and outcomes of the 
atoms $n$ and outcomes $(n, \r)$ such that $n$ has exactly $i$ parties, respectively. Further,
let $K_i = \sum_{j=1}^{i} k_j$ and $L_i = \sum_{j=1}^{i} \ell_j$.
We have $K_{i+1} = K_i + k_{i+1}$ and $L_{i+1} = L_i + \ell_{i+1}$ for every $i \geq 1$.

We prove that for every $i \geq 1$ the $i$-th iteration of the \textbf{for}-loop \ph{terminates} after at most 
$2K_i^3 + K_i^2 + K_i L_i + L_i$ rule applications. 
We proceed by induction on $i$.

\medskip

\noindent {\bf Case $i = 1$}. During the first iteration of the loop  the 
algorithm summarizes the $1$-fragment of $\N$, and so the result follows by Lemma \ref{lem:ifragment}.

\medskip

\noindent {\bf Case $i > 1$}. By induction hypothesis, 
the $(i$--$1)$-th iteration of the for-loop has terminated after at most $2 K_{i-1}^3 + K_{i-1}^2 + K_{i-1} L_{i-1} + L_{i-1}$ rule applications. Let $\N^{i-1}$ be the negotiation obtained after the $(i$--$1)$-th iteration. By Proposition \ref{prop:mainred}, the fragments of $\N^{i-1}$ for atoms with at 
most $i-1$ parties are atomic, and the segments of outcomes with at most $i$ parties are acyclic. 

\noindent {\bf Claim.} We claim that after further $K_i \ell_i$ rule applications all segments for atoms with at most $i$ parties are not only acyclic but also atomic. 

Consider a non-atomic $(n, \r)$-segment such that $n$ has at most $i$ parties. 
If $n$ has less than $i$ parties, then the $n$-fragment is atomic, and the claim follows.
So assume that $n$ has exactly $i$ parties. By definition
all atoms of the  $(n, \r)$-segment different from $n$ have less than $i$ parties. We have:
\begin{itemize}
\item[(1)] $(n, \r)$ unconditionally enables some atom with fewer parties than $n$, that is,
$(n, \r)$ is non-uniform. \\
Otherwise the $(n, \r)$-segment is not sound, contradicting the soundness of $\N$.
\item[(2)] The index of the $(n, \r)$-segment is equal to or smaller than its number of atoms.\\
Recall that the index of an outcome $(n', \r')$ in the $(n, \r)$-segment is the length of a longest maximal $(n', \r')$-sequence, and the index of the $(n, \r)$-segment is the sum of the indices of its atoms
(Definition \ref{def:index}). Since the $(n, \r)$-segment is acyclic, the index of $(n, \r)$ 
is at most the number of atoms of the segment minus $1$. Consider now any other outcome $(n', \r')$ of the $(n, \r)$-segment. Then $n' \neq n$, and so $n'$ has fewer parties
than $n$. But then the $(n',\r')$-segment is atomic, which implies that the index of
$(n', \r')$ is $0$. So the total index of the segment is at most equal to its number of atoms.
\end{itemize}  
By (1), the algorithm keeps choosing non-uniform outcomes of the segments with $i$ parties,
until there are no more non-uniform atoms. At this point, all $(n, \r)$-segments are irreducible,
and by Proposition \ref{prop:irredtheo} atomic. Since there at most $\ell_i$ segments, each of them
with at most $K_i$ atoms, by (2) and Lemma \ref{lem:lemacyc} the segments become atomic after at most 
$K_i \cdot \ell_i$ rule applications. This finishes the proof of the claim.

\medskip

Let $\hat{N}^{i-1}$ be the negotiation obtained after the sequence of $K_i \cdot \ell_i$ rule applications. By the claim, the 
fragments of atoms of $\hat{N}^{i-1}$ with $(i-1)$ parties and the segments of atoms with $i$ parties are atomic. 
It follows that the outcomes of $\hat{N}^{i-1}$ that have exactly $i$ parties are uniform. As shown in Claim 3 in the 
proof of Proposition \ref{prop:mainred}, this implies that all fragments for atoms with $i$ parties are 
replications. So the complete $i$-fragment of $\hat{N}^{i-1}$ has at most $k_i$ atoms 
and $\ell_i$ outcomes. By Lemma \ref{lem:ifragment}, the $i$-fragment is summarized after at
 most $2k_i^3+k_i^2+\ell_i$ rule applications.

So we obtain that the $i$-th iteration of the for-loop  \ph{terminates} after at most 
$$(2 K_{i-1}^3 + K_{i-1}^2 + K_{i-1}L_{i-1}+L_{i-1}) +  K_i \ell_i + (2 k_i^3+k_i^2+\ell_i)$$ iterations.
Now, using $K_i = K_{i-1} + k_i$ and $L_i = L_{i-1} + \ell_i$, we get: 

$$\begin{array}{rcl}
&    & (2 K_{i-1}^3 + K_{i-1}^2 + K_{i-1} L_{i-1}+L_{i-1}) + K_i \cdot \ell_i + (2 k_i^3+ k_i^2+\ell_i) \\[0.2cm]
& =  & 2 (K_{i-1}^3 + k_i^3) + (K_{i-1}^2 + k_i^2) + (K_{i-1}  L_{i-1} + K_i \cdot \ell_i)+ (L_{i-1} +\ell_i)  \\[0.2cm]
& \leq & 2(K_{i-1} + k_i)^3 + (K_{i-1} + k_i)^2 + K_{i}(L_{i-1}+\ell_i) + (L_{i-1} +\ell_i) \\[0.2cm]
& = & 2 K_i^3 + K_i^2 + K_iL_i +L_i
\end{array} $$
\noindent and we are done.
\end{proof}

\subsection{The Unsound Case}
\label{subsec:algforunsound}

Algorithm \ref{alg:redalg2} assumes that its input is a sound \ph{deterministic} negotiation. It is easy to transform it
into an algorithm for arbitrary deterministic negotiations, with the same complexity. 
The algorithm is shown as Algorithm \ref{alg:redalg2}. 
If the input negotiation is sound, Algorithm \ref{alg:redalg2}
algorithm summarizes it, and otherwise \ph{it} answers ``unsound''. The algorithm
counts the number of rule applications and answers ``unsound'' if \ph{the counter} exceeds $2K^3 + K^2 + KL + L$. 
This is correct by Theorem 
\ref {thm:cyclicupperbound}. The algorithm also reports ``unsound'' if it reaches an irreducible 
but non-atomic net, or if it can only apply the shortcut rule to non-uniform or uniform forward outcomes.

\begin{algorithm}[t]
\caption{Summarization algorithm for arbitrary deterministic negotiations}
\begin{algorithmic}[1]
\State fix an arbitrary order $\prec$ on the atoms of $\N$ 
\State $\mathit{counter} := 0$
\For{$k = 1$ to $A$}
\While{$R(\N,k) \neq \emptyset$ and $\mathit{counter} \leq 2K^3 + K^2 + KL + L$}
\If{possible} 
\State apply the merge rule to some outcome of $R(\N,k)$
\ElsIf{possible} 
\State apply the iteration rule to some outcome of $R(\N,k)$
\ElsIf{possible} 
\State apply the d-shortcut rule to some non-uniform outcome of $R(\N,k)$
\ElsIf{possible} 
\State apply the shortcut rule to a backward uniform outcome of $R(\N,k)$
\State  minimal w.r.t. $\prec$
\ElsIf{possible}
\State apply the d-shortcut rule to some outcome of $R(\N,k)$
\Else \  stop and answer ``unsound''
\EndIf
\State $\N := $ negotiation obtained after the application of the rule to $\N$
\State $\mathit{counter} := \mathit{counter}+1$
\EndWhile
\EndFor
\If{$\N$ is not atomic} answer ``unsound'' \EndIf
\end{algorithmic}
\label{alg:redalg2}
\end{algorithm}

\subsection{A Correction}

In \cite{negII} we \ph{claimed} that the merge, iteration, and shortcut rule were 
complete for sound deterministic negotiations, and presented a reduction algorithm.
While the completeness result holds, the algorithm of \cite{negII} is 
unfortunately incorrect. The algorithm is based on a lemma (lemma number 3 in \cite{negII}), 
which, rewritten in the terminology of this paper, states:

\begin{lemma}[Incorrect Lemma 3 of \cite{negII}]
A cyclic sound deterministic negotiation $\N$ contains an atom $n$ such that all loops
of the $n$-fragment of $\N$ contain $n$.
\end{lemma}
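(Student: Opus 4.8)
Because this is explicitly the erroneous Lemma~3 of \cite{negII}, I do not expect a proof to go through; the plan below is the natural attempt, and its failure point is precisely where the paper's counterexample must be placed.

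The plan is to locate the atom $n$ through the synchronizer machinery. By the loop results (Proposition~\ref{prop:loopsexist} and Proposition~\ref{prop:minSynchronized}), every loop $L$ of a sound deterministic negotiation contains a \emph{synchronizer}, i.e.\ an atom $s \in L$ with $P_s \supseteq \bigcup_{m \in L} P_m$. The obvious candidate is an atom that is the synchronizer of some loop and has a \emph{minimal} party set among all such synchronizers. First I would fix such an $n$ and check, via Proposition~\ref{prop:segfragsound}, that $\F_n$ is itself a sound deterministic negotiation, so that the synchronizer result may be reused inside $\F_n$.

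Next I would take an arbitrary loop $L$ of $\F_n$. Since the final atom $\hat{n}$ of $\F_n$ is a sink and every transition of $\F_n$ that is not redirected to $\hat{n}$ agrees with $\trans$ (Definition~\ref{def:segfrag}), the cycle $L$ uses only unmodified transitions, so $L$ is also a loop of $\N$. Applying the synchronizer result to $L$ yields a synchronizer $s' \in L$ with $P_{s'} \supseteq \bigcup_{m \in L} P_m$. As every atom of $\F_n$ has its parties contained in $P_n$ (by the definition of an $n$-sequence, Definition~\ref{def:nrsequence}), we get $P_{s'} \subseteq P_n$; and since $s'$ is itself a loop synchronizer, minimality of $|P_n|$ gives $|P_{s'}| \ge |P_n|$, whence $P_{s'} = P_n$. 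The crux would then be to upgrade $P_{s'} = P_n$ to $s' = n$, which would place $n$ on $L$ and finish the argument.

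The hard part---and, the statement being false, the insurmountable one---is exactly this identification $s' = n$. Equality of party sets does not imply equality of atoms: a sound deterministic negotiation may contain several distinct atoms with the same minimal party set, each the synchronizer of a different loop, so that after firing $(n,\r)$ the parties run through a loop whose synchronizer is one of the \emph{other} such atoms and $n \notin L$. Already with a single agent one can realise this by joining two vertex-disjoint cycles $C_1$ and $C_2$ (in both directions) into one strongly connected region sitting on a path from $n_0$ to $n_f$; such a one-agent negotiation is automatically sound. There $\F_m$ is cyclic for every atom $m$ on $C_1 \cup C_2$, yet $\F_m$ always contains whichever of $C_1, C_2$ avoids $m$, so no atom lying on a loop dominates all loops of its own fragment; the only atoms that vacuously satisfy the statement (e.g.\ $n_f$, whose fragment is acyclic) are useless for the intended application. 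I therefore expect the paper to refute the lemma by exactly such an interlocked-cycle example, and to replace it by the weaker, level-by-level fact driving Algorithm~\ref{alg:redalg}: every loop has merely \emph{a} synchronizer, processed inside-out by increasing party count rather than concentrated in a single global atom.
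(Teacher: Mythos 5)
You correctly recognize that this statement (explicitly flagged in the paper as the erroneous Lemma~3 of \cite{negII}) is false, and your refutation takes essentially the same route as the paper's: the paper gives no proof but instead the counterexample of Figure~\ref{fig:wrong}, a one-agent negotiation in which $n_1,n_2,n_3$ are pairwise joined by two-cycles so that every atom's fragment retains a loop avoiding that atom --- precisely the interlocked-cycle construction you predict, down to the observation that atoms with acyclic fragments satisfy the claim only vacuously. Your identification of where the natural synchronizer-based argument breaks (equal party sets do not force equal atoms) also matches the paper's subsequent repair via the level-by-level strategy of Algorithm~\ref{alg:redalg}.
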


This lemma was used in \cite{negII} to prove the correctness of the 
summarization procedure that iterates the following three steps. First: identify an $n$-fragment
satisfying the property stated in the lemma. Second: use the procedure for summarizing acyclic 
negotiations to reduce this $n$-fragment to a single atom, with possibly some self-loop outcomes
(i.e., outcomes $(n, \texttt{r})$ such that $\trans(n, a, \texttt{r}) = n$ for every party $a$ of $n$).
Third, use the iteration rule to remove such outcomes. 

However, the lemma is wrong. Figure \ref{fig:wrong} shows a one-agent negotiation in which,
for every atom $n$, the $n$-fragment contains a loop which does not contain $n$.  As a consequence,
the algorithm of \cite{negII} does not summarize this negotiation. 

\begin{figure}[ht]
\centering
\input{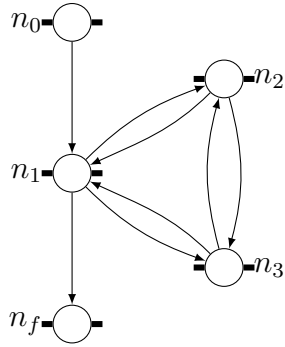}
\caption{Counterexample to Lemma 3 of \cite{negII}.}
\label{fig:wrong}
\end{figure}

In this paper we have corrected this mistake. 
The procedure of 
Section \ref{sec:cyclic} correctly summarizes the negotiation of Figure \ref{fig:wrong}.

\section{Conclusion}
\label{sec:conclusion}

We have introduced negotiations, a formal model of concurrency with
atomic negotiations (a form of synchronized choice) as primitive. 
We have defined and studied two important analysis problems: soundness, which coincides 
with the soundness notion for workflow nets, and the new {\em summarization problem}. 
We have provided a complete set of \ph{reduction} rules for sound deterministic negotiations,
and we have shown that, when applied following a simple strategy, the rules allow one 
to compute a summary of a sound deterministic negotiations with a polynomial number
of rule applications.

The reduction algorithm is based on several results about the structure of deterministic 
negotiations that have independent interest. Proposition \ref{prop:minSynchronized} and its syntactic
counterpart, Proposition \ref{prop:dominating}, are important results about the structure of loops.
Proposition \ref{prop:unique} shows that every sound deterministic negotiation is necessarily
organized compositionally. 

\paragraph{Acknowledgments.} The proof of Lemma \ref{lem:same} was obtained with the help of
Anca Muscholl and Igor Walukiewicz, whom we also thank for very helpful comments and
suggestions.

\bibliographystyle{alpha} 
\bibliography{references}

\newcommand{\etalchar}[1]{$^{#1}$}
\newcommand{\noopsort}[1]{} \newcommand{\singleletter}[1]{#1}
\begin{thebibliography}{{\noopsort{Dongen}}DvdAV05}

\bibitem[{\noopsort{Aalst}}dA98]{aalst}
W.~M.~P. {\noopsort{Aalst}}van~der Aalst.
\newblock The application of {P}etri nets to workflow management.
\newblock {\em J. Circuits, Syst. and Comput.}, 08(01):21--66, 1998.

\bibitem[AAS00]{atdelzater2000qos}
TF~Atdelzater, Ella~M Atkins, and Kang~G Shin.
\newblock Qos negotiation in real-time systems and its application to automated
  flight control.
\newblock {\em Computers, IEEE Transactions on}, 49(11):1170--1183, 2000.

\bibitem[Ber86]{DBLP:conf/ac/Berthelot86}
G{\'e}rard Berthelot.
\newblock Transformations and decompositions of nets.
\newblock In Wilfried Brauer, Wolfgang Reisig, and Grzegorz Rozenberg, editors,
  {\em Advances in Petri Nets}, volume 254 of {\em LNCS}, pages 359--376.
  Springer, 1986.

\bibitem[BMMvdA11]{BacarinMMA11}
Evandro Bacarin, Edmundo Roberto~Mauro Madeira, Claudia~Bauzer Medeiros, and
  W.~M.~P. van~der Aalst.
\newblock {\it SpiCa}'s multi-party negotiation protocol: Implementation using
  {YAWL}.
\newblock {\em Int. J. Cooperative Inf. Syst.}, 20(3):221--259, 2011.

\bibitem[Cap12]{Cap12}
F.~Capkovic.
\newblock Cooperation and negotiation of agents by means of {P}etri net-based
  models.
\newblock In {\em 17th International Conference on Methods and Models in
  Automation and Robotics (MMAR)}, pages 256--261, 2012.

\bibitem[DE95]{Desel:1995:FCP:207572}
J\"{o}rg Desel and Javier Esparza.
\newblock {\em Free choice Petri nets}.
\newblock Cambridge University Press, New York, NY, USA, 1995.

\bibitem[DE15]{DBLP:conf/apn/DeselE15}
J{\"{o}}rg Desel and Javier Esparza.
\newblock Negotiations and petri nets.
\newblock In Daniel Moldt, Heiko R{\"{o}}lke, and Harald St{\"{o}}rrle,
  editors, {\em Proceedings of the International Workshop on Petri Nets and
  Software Engineering (PNSE'15), including the International Workshop on Petri
  Nets for Adaptive Discrete Event Control Systems {(ADECS} 2015) {A} satellite
  event of the conferences: 36th International Conference on Application and
  Theory of Petri Nets and Concurrency Petri Nets 2015 and 15th International
  Conference on Application of Concurrency to System Design {ACSD} 2015,
  Brussels, Belgium, June 22-23, 2015.}, volume 1372 of {\em {CEUR} Workshop
  Proceedings}, pages 41--57. CEUR-WS.org, 2015.

\bibitem[Des92]{DBLP:books/daglib/0073545}
J{\"{o}}rg Desel.
\newblock {\em Struktur und Analyse von Free-Choice-Petrinetzen}.
\newblock {DUV} Informatik. Deutscher Universit{\"{a}}tsverlag, 1992.

\bibitem[{\noopsort{Dongen}}DvdAV05]{DBLP:conf/caise/DongenAV05}
Boudewijn~F. {\noopsort{Dongen}}van~Dongen, Wil M.~P. van~der Aalst, and
  H.~M.~W. Verbeek.
\newblock Verification of {EPC}s: Using reduction rules and {P}etri nets.
\newblock In Oscar Pastor and Jo{\~a}o~Falc{\~a}o e~Cunha, editors, {\em
  CAiSE}, volume 3520 of {\em LNCS}, pages 372--386. Springer, 2005.

\bibitem[DS83]{davis1983negotiation}
Randall Davis and Reid~G Smith.
\newblock Negotiation as a metaphor for distributed problem solving.
\newblock {\em Artificial intelligence}, 20(1):63--109, 1983.

\bibitem[ED13]{negI}
Javier Esparza and J{\"o}rg Desel.
\newblock On negotiation as concurrency primitive.
\newblock In Pedro~R. D'Argenio and Hern{\'a}n~C. Melgratti, editors, {\em
  CONCUR}, volume 8052 of {\em Lecture Notes in Computer Science}, pages
  440--454. Springer, 2013.

\bibitem[ED14]{negII}
Javier Esparza and J{\"o}rg Desel.
\newblock On negotiation as concurrency primitive {II}: Deterministic cyclic
  negotiations.
\newblock In Anca Muscholl, editor, {\em FoSSaCS}, volume 8412 of {\em Lecture
  Notes in Computer Science}, pages 258--273. Springer, 2014.

\bibitem[EH16]{DBLP:conf/fase/EsparzaH16}
Javier Esparza and Philipp Hoffmann.
\newblock Reduction rules for colored workflow nets.
\newblock In Perdita Stevens and Andrzej Wasowski, editors, {\em Fundamental
  Approaches to Software Engineering - 19th International Conference, {FASE}
  2016, Held as Part of the European Joint Conferences on Theory and Practice
  of Software, {ETAPS} 2016, Eindhoven, The Netherlands, April 2-8, 2016,
  Proceedings}, volume 9633 of {\em Lecture Notes in Computer Science}, pages
  342--358. Springer, 2016.

\bibitem[Esp96]{DBLP:conf/ac/Esparza96}
Javier Esparza.
\newblock Decidability and complexity of {P}etri net problems - an
  introduction.
\newblock In {\em Petri Nets}, volume 1491 of {\em LNCS}, pages 374--428.
  Springer, 1996.

\bibitem[GT84]{DBLP:journals/tcs/GenrichT84}
Hartmann~J. Genrich and P.~S. Thiagarajan.
\newblock A theory of bipolar synchronization schemes.
\newblock {\em Theor. Comput. Sci.}, 30:241--318, 1984.

\bibitem[Had88]{DBLP:conf/apn/Haddad88}
Serge Haddad.
\newblock A reduction theory for coloured nets.
\newblock In Grzegorz Rozenberg, editor, {\em Advances in Petri Nets}, volume
  424 of {\em LNCS}, pages 209--235. Springer, 1988.

\bibitem[HMU06]{HUM}
John~E. Hopcroft, Rajeev Motwani, and Jeffrey~D. Ullman.
\newblock {\em Introduction to Automata Theory, Languages, and Computation (3rd
  Edition)}.
\newblock Addison-Wesley Longman Publishing Co., Inc., Boston, MA, USA, 2006.

\bibitem[HPP06]{DBLP:journals/ppl/HaddadP06}
Serge Haddad and Jean-Fran\c{c}ois Pradat-Peyre.
\newblock New efficient {P}etri nets reductions for parallel programs
  verification.
\newblock {\em Parallel Processing Letters}, 16(1):101--116, 2006.

\bibitem[JFL{\etalchar{+}}01]{jennings2001automated}
Nicholas~R Jennings, Peyman Faratin, Alessio~R Lomuscio, Simon Parsons,
  Michael~J Wooldridge, and Carles Sierra.
\newblock Automated negotiation: prospects, methods and challenges.
\newblock {\em Group Decision and Negotiation}, 10(2):199--215, 2001.

\bibitem[JTL05]{JiTL05}
Shujuan Ji, Qijia Tian, and Yongquan Liang.
\newblock A {P}etri-net-based modeling framework for automated negotiation
  protocols in electronic commerce.
\newblock In Dickson Lukose and Zhongzhi Shi, editors, {\em PRIMA}, volume 4078
  of {\em LNCS}, pages 324--336. Springer, 2005.

\bibitem[PY82]{PapadimitriouY82}
Christos~H. Papadimitriou and Mihalis Yannakakis.
\newblock The complexity of facets (and some facets of complexity).
\newblock In Harry~R. Lewis, Barbara~B. Simons, Walter~A. Burkhard, and
  Lawrence~H. Landweber, editors, {\em STOC}, pages 255--260. ACM, 1982.

\bibitem[SFC04]{SalaunFC04}
Gwen Sala{\"u}n, Andrea Ferrara, and Antonella Chirichiello.
\newblock Negotiation among web services using {LOTOS/CADP}.
\newblock In Liang-Jie Zhang, editor, {\em ECOWS}, volume 3250 of {\em LNCS},
  pages 198--212. Springer, 2004.

\bibitem[VWvdAtH10]{DBLP:journals/jcss/VerbeekWAH10}
H.~M.~W. Verbeek, Moe~Thandar Wynn, Wil M.~P. van~der Aalst, and Arthur H.~M.
  ter Hofstede.
\newblock Reduction rules for reset/inhibitor nets.
\newblock {\em J. Comput. Syst. Sci.}, 76(2):125--143, 2010.

\bibitem[WSJ00]{winsborough2000automated}
William~H Winsborough, Kent~E Seamons, and Vicki~E Jones.
\newblock Automated trust negotiation.
\newblock In {\em DARPA Information Survivability Conference and Exposition,
  2000. DISCEX'00. Proceedings}, volume~1, pages 88--102. IEEE, 2000.

\end{thebibliography}

\appendix
\newpage 

\section{Appendix: Complexity of Soundness}
\label{app:complex}

\begin{reftheorem}{thm:complexity}
The soundness problem is PSPACE-complete. For acyclic negotiations, the problem is 
co-NP-hard and in DP (and so at level $\Delta^P_2$ of the polynomial hierarchy).
\end{reftheorem}
\begin{proof}

\noindent
{\em The soundness problem is in PSPACE.} 

\noindent
Membership in PSPACE could be proved by observing that the soundness problem can be formulated in CTL, and then applying the PSPACE algorithm for CTL and 1-safe Petri nets of \cite{DBLP:conf/ac/Esparza96}. This algorithm only assumes that, given a marking, one can compute a successor marking in polynomial time, which is the case both for Petri nets and for negotiations. However, since we only need a very special case of the CTL algorithm, we provide a self-contained proof.

We show that both conditions for soundness can be checked in nondeterministic polynomial space. The result then follows from Savitch's theorem (NPSPACE=PSPACE).

The first condition is: every atom is enabled at some reachable
marking. For this we consider each atom $n$ in turn, and guess step
by step an occurrence sequence ending with an occurrence of $n$.
This only requires to store the marking reached by the sequence executed so far.

The second condition is: every occurrence sequence from the initial
marking is either a large step or can be extended to a large step.
This case is a bit more involved. Let $S$ denote the problem of
checking this second condition. We prove $S \in \text{PSPACE}$.

\begin{itemize}
\item[(1)] The following problem is in PSPACE: given some
marking $\vx$, check that no occurrence sequence starting at $\vx$
ends with the final atom. \\
Let us call this problem NO-OCC. We have
$\overline{\text{NO-OCC}} \in \text{NPSPACE}$, because we can nondeterministically guess
an occurrence sequence starting at $\vx$ that
ends with the final atom (we guess one step at a time). Since
NPSPACE=PSPACE=co-PSPACE, we get $\text{NO-OCC} \in
\text{PSPACE}$.\\

\item[(2)] $\overline{S} \in \text{NPSPACE}$. \\
$\overline{S}$ consists of
checking the existence of a sequence $\sigma$, fireable from the
initial marking, that is neither a large step nor can be extended to
it. For this we guess a sequence $\sigma$ step by step that does not
end with the final atom. Then we consider the marking $\vx$ reached by
the occurrence of $\sigma$. Clearly, we have $\sigma \in \overline{S}$
if{}f $\vx \in \text{NO-OCC}$. So it suffices to apply our
deterministic polynomial-space algorithm for NO-OCC (see (1)).\\

\item[(3)] $S \in PSPACE$.\\
Follows from (2) and NPSPACE=PSPACE=co-PSPACE.
\end{itemize}

\noindent
{\em The soundness problem is PSPACE-hard.} 

\noindent
For PSPACE-hardness,
we reduce the problem of deciding if a
deterministic linearly bounded automaton (DLBA) recognizes an input to
the soundness problem. Let $A=(Q,\Sigma,\delta,q_0,F)$ be a DLBA, and
consider an input $w=a_1 \ldots a_k \in \Sigma^*$. The construction is
very similar to that of \cite{DBLP:conf/ac/Esparza96} for proving PSPACE-hardness of the reachability problem for 1-safe Petri nets, and so we do
not provide all details. The negotiation $\N_A$ has a control
agent $C$, a head agent $H$, and a cell agent $T_i$ for every tape
cell (i.e., $1 \leq i \leq k$). All agents have only one internal
state, i.e., the internal states are irrelevant. The negotiation has
an atom $n[q,h,a]$ (with only one result) for every
state $q$, every head position $1 \leq h \leq k$, and every $a \in
\Sigma$, plus an initial atom $n_0$ and a final atom $n_f$. The
parties of $n[q,h,a]$ are $C$, $H$, and $T_h$. The transition function
$\trans$ is defined so that $\N_A$ simulates $A$ in the following
sense: If $A$ is currently in state $q$ with the head at position $h$,
and the contents of the tape are $b_1 \ldots b_k$, then the current
marking $\vx$ of the negotiation satisfies the following properties:

\begin{itemize}
\item if $q\neq q_f$, then $\vx(C)$ is the set of atoms $n[h',q', a]$
such that $q'=q$, and both $h'$ and $a$ are arbitrary; if $q = q_f$,
then $\vx(C) = \{ n_f\}$;

\item $\vx(H)$ is the set of atoms $n[h',q', a]$ such that $h'=h$ and $q',a$ are arbitrary, plus the final atom;

\item $\vx(T_i)$ is the set of atoms $n[h',q', a]$ such that $h'=i$, $q'$ is arbitrary, and $a=b_i$, plus the final atom.
\end{itemize}

\noindent Intuitively, agent $C$ is only ready to
engage in atoms for the state $q$; agent $H$ is only ready to engage
in atoms for the position $h$; and $T_h$ is only ready to engage in
atoms for the letter $b_h$. These properties guarantee that the only atom enabled by $\vx$ is
$n[h,q,b_h]$ if $q \neq q_f$, or the atom $n_f$ if $q=q_f$. So the
negotiation $\N_A$ has only one initial occurrence sequence, which
corresponds to the execution of $A$ on $w$.

It remains to define $\trans$ so that it satisfies these properties. For the initial atom we take (recall that the input of the DLBA $A$ is the word $w=a_1 \ldots a_k$):

$$\begin{array}{rcl}
\trans(n_0, C, \mathtt{step}) & = & \{n[h',q_0, a'] \mid 1 \leq h' \leq k, a' \in \Sigma \} \\
\trans(n_0, H, \mathtt{step}) & = & \{n[1,q', a'] \mid q' \in Q, a' \in \Sigma \} \\
\trans(n_0, T_{i}, \mathtt{step}) & = & \{n[i,q_0, a_i] \}
\end{array}$$

For the transition function of an atom $n[q,h,a]$ we must consider the three possible cases of the transition relation
(head moves to the right, to the left, or stays put). We only deal with the case in which the machine moves to the
right, the others being analogous. Assume $\delta(q,a)=(\hat{q},\hat{a},R)$.
Then we take
$$\begin{array}{rcl}
\trans(n[h,q,a], C, \mathtt{step}) & = & \{n[h',\hat{q}, a'] \mid 1 \leq h' \leq k, a' \in \Sigma \} \\
\trans(n[h,q,a], H, \mathtt{step}) & = & \{n[h+1,q', a'] \mid q' \in Q, a' \in \Sigma \} \\
\trans(n[h,q,a], T_{h}, \mathtt{step}) & = & \{n[h,q', \hat{a}] \mid q' \in Q \}
\end{array}$$

Since $A$ is deterministic, $\N_A$ has only one maximal occurrence sequence, which is
a large step if{}f $A$ accepts. So $\N_A$ is sound if{}f $A$ accepts.\\

\noindent
{\em The soundness problem for acyclic negotiations is in DP.}

\noindent
We first observe that no occurrence of an acyclic negotiation contains an atom more than once
(loosely speaking, once the tokens of the parties of the atom have ``passed'' beyond it, they cannot return). It follows that the length of an occurrence sequence is at most equal to the number of atoms. It also follows that there are no livelocks, but there may be deadlocks. To check soundness we must
check that (1) every atom can be enabled, and that (2) every occurrence sequence can be extended to a large step.
Checking (1) can be done by guessing
in polynomial time enabling sequences for all atoms, and so (1) is in NP.
Checking the negation of (2) can be done by guessing in polynomial
time an occurrence sequence that cannot be extended to a large step because it leads to a deadlock,
and so (2) is in coNP. So the conjunction of (1) and (2) is in DP.\\

\noindent
{\em The soundness problem for acyclic negotiations is co-NP-hard.}

\noindent We reduce 3-CNF-UNSAT to soundness.
Given a boolean formula $\phi$ with variables
$x_i$, $1 \leq i \leq n$ and clauses $c_j$, $1 \leq j \leq m$, we construct a
negotiation $\N_\phi$ with an agent $X_i$ for each $x_i$, and
an agent $J$ (for judge). W.l.o.g. we assume that no clause of $\phi$ is a tautology.
For each variable $x_i$, $\N_\phi$ has an atom
$\mathit{Set\_x}_i$ with $X_i$ as only party and results $\texttt{true}$ and $\texttt{false}$.
For each clause $c_j$, the negotiation $\N_\phi$ has an atom
$\mathit{False}_j$ whose parties are the variables appearing in $c_j$ and the judge $J$.
The atom has only one result $\texttt{false}$.

After the initial atom, agent $X_i$ engages in $\mathit{Set\_x}_i$ and sets $x_i$ to a value
$b \in \{\texttt{true},\texttt{false}\}$ by choosing the appropriate result. After that, $X_i$
is ready to engage in the atoms $\mathit{False}_j$ satisfying the following condition:
the clause $c_j$ is {\em not} made true by setting $x_i$ to $b$; moreover, it
is also ready to engage in the final atom. As a consequence, $\mathit{False}_j$ becomes enabled if{}f the assignment chosen by the $X_i$'s makes $c_j$ false.
Finally, after the occurrence of a $\mathit{False}_j$, its parties are only ready to engage in the final atom.

After the initial atom, the judge $J$ is ready to engage in all atoms $\mathit{False}_j$, and then, if any of them occurs, in the final atom.

We argue that $\N_\phi$ is sound if{}f $\phi$ is unsatisfiable.
Notice first that, since by assumption no clause is
a tautology, every $\mathit{False}_j$ atom is enabled by some occurrence sequence.
So all atoms but perhaps the final atom can be enabled by some sequence. So
$\N_i$ is sound if{}f every occurrence sequence can be extended to a large step,
and therefore it suffices to show that $\phi$ is unsatisfiable if{}f
every occurrence sequence of $\N_\phi$ can be extended to a large step.

If $\phi$ is unsatisfiable then, whatever the assignment determined
by the outcome of the $\mathit{Set\_x}_i$'s, some clause is
false, and so at least one of the $\mathit{False}_j$ atoms is enabled. After
some $\mathit{False}_j$ occurs, the final atom becomes enabled, and so
the computation can be extended to a large step.

If $\phi$ is satisfiable, then consider an initial occurrence sequence in which the atoms $\mathit{Set\_x}_i$ occur, and \ph{then} choose the outcomes corresponding to a satisfying assignment. \ph{This way} none of the $\mathit{False}_j$ atoms become enabled.
Moreover, the final atom is not enabled either, because the judge $J$ is not ready to engage
in it. So the occurrence sequence cannot be extended to a large step.
\end{proof}

\newpage 

\section{Appendix: A Lemma on Irreducible Acyclic Negotiations}
\label{app:irred}

\begin{reflemma}{lem:acyclicpoly1} 
Let $\N$ be an irreducible sound and deterministic acyclic negotiation and 
let $n \neq n_f$ be an atom of $\N$ with more than one result.
Then every agent participates in $n$. 
\end{reflemma}
\begin{proof}
We proceed in two steps.\\[0.2cm]
\noindent (a) The atom $n$ has a result $\r$ such that either $(n,\r)$ unconditionally enables $n_f$, 
or $(n,\r)$ unconditionally enables some atom with more than one result.

We first prove a preliminary claim: 
if some outcome $(n,\r)$ unconditionally 
enables some atom, then (a) holds. 
Indeed: if $(n,\r)$ unconditionally enables some atom $n'$, then either $n'=n_f$ or $n'$ 
has more than one result, because otherwise the d-shortcut rule can be applied to 
$n$ and $n'$, contradicting the irreducibility of $\N$. This proves the claim. 

It remains to show that some outcome $(n,\r)$ unconditionally enables some atom. For this,
we assume the contrary, and prove that $\N$ contains a cycle, contradicting that $\N$ is acyclic.

Since the merge rule is not applicable to $\N$, the atom
$n$ has two results $r_1, r_2$ such that $\trans(n,a,r_1) \neq \trans(n,p,r_2)$ 
for some party $a$. We proceed in three steps.

\medskip

(a1) For every reachable marking $\vx$ that enables $n$ there is a
sequence $\sigma$ such that $\vx \by{(n,r_1) \, \sigma} \vx_1$ and $\vx \by{(n,r_2) \, \sigma} \vx_2$ for some markings $\vx_1, \vx_2$ such that the sets $N_1$ and $N_2$ of 
atoms enabled by $\vx_1, \vx_2$ are nonempty and disjoint.\\
Let $\sigma$ be a longest occurrence sequence such that $\vx \by{(n,r_1)\, \sigma} \vx_1$ 
and $\vx \by{(n,r_2)\, \sigma} \vx_2$ for some markings $\vx_1, \vx_2$ (notice that
$\sigma$ exists, because all occurrence sequences of $\N$ are finite by acyclicity).
We have $N_1 \cap N_2 = \emptyset$, because otherwise we can extend $\sigma$ with the 
occurrence of any atom enabled by both markings.
We prove that, furthermore, $N_1 \neq \emptyset \neq N_2$. Assume w.l.o.g. $N_1 = \emptyset$.
Then, since $\N$ is sound, we have $\vx_1 = \vx_f$, which means that the last 
step of $\sigma$ is of the form $(n_f, r_f)$. So $\vx_2$ is also a marking obtained after 
the occurrence of $(n_f,r_f)$. Since every agent participates in $n_f$ and $\trans(n_f,p,r_f)=\emptyset$ for every
agent $p$ and result $r_f$, we also have $\vx_2 = \vx_f$. So
$\vx \by{(n,r_1)} \vx_1' \by{\sigma} \vx_f$ and $\vx \by{(n,r_1)} \vx_2' \by{\sigma} \vx_f$, which implies $\vx_1' = \vx_2'$. Since $\N$ is deterministic, we then have $\trans(n,p,r_1) = \trans(n,p,r_2)$, contradicting the hypothesis.

\medskip

(a2) For every $n_1 \in N_1$ there is a path leading from some $n_2 \in N_2$ to $n_1$, and
for every $n_2 \in N_2$ there is a path leading from some $n_1 \in N_1$ to $n_2$.\\
By symmetry it suffices to prove the first part. Since $N_1$ and $N_2$ are disjoint, 
$n_1$ is enabled at $\vx_1$ but not at $\vx_2$. Moreover, since $\N$ is acyclic, every
atom can occur at most once in an occurrence sequence, and so neither $n_1$ nor $n_2$ appear
in $\sigma$. Since, furthermore, the sequences $(n,r_1)\, \sigma$ and $(n,r_2)\, \sigma$ only differ in their first element, there is an agent $p$ such that 
$\trans(n,p,r_1) = \{n_1\}$ and $\trans(n,p,r_2)=\{n_2'\}$ for some $n_2' \neq n_1$ ($n_2'$ is not necessarily the $n_2 \in N_2$ we are looking for). 
So we have $\vx_1(p) = \{n_1\}$ and $\vx_2(p)= \{n_2'\}$ (see Figure \ref{fig:polyproof}).

 \begin{figure}[h]
 \centerline{\scalebox{0.45}{\input{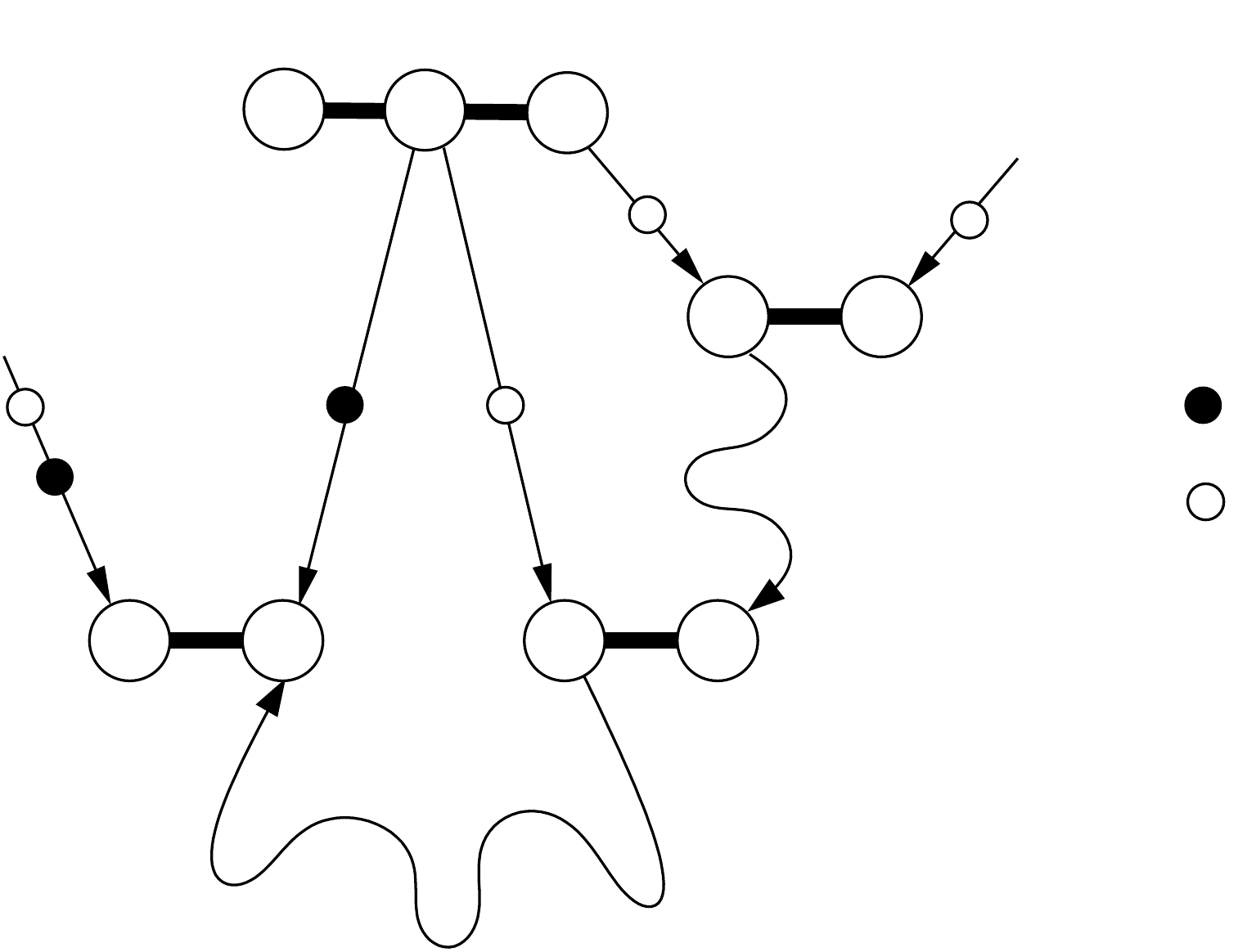_t}}}
 \caption{Illustration of the proof of Lemma \ref{lem:acyclicpoly1}.}
 \label{fig:polyproof}
 \end{figure}

We first show that there is a path from $n_2'$ to $n_1$.
By assumption, no outcome of $n$ unconditionally enables any atom, and so 
$(n,r_1)$ does not unconditionally enable $n_1$.
So $n_1$ has a participant $q \neq p$ such that either $q$ is not a participant of $n$ or
$\trans(n,q,r_1) \neq n_1$. Since $\vx_1$ enables $n_1$ we have $\vx_1(q)= \{n_1\}$, and since
$q$ is not a participant of $n$ or $\trans(n,q,r_1) \neq n_1$, we have $\vx_2(q)=\{n_1\}$ as well.
Since $\vx_2(q)=n_1$, and $\N$ is a sound and deterministic acyclic negotiation, there is an occurrence sequence 
$\tau$ such that $\vx_2 \by{\tau} \vx_2'$ and $\vx_2'$ enables $n_1$ (intuitively,
the white token on the arc to the $q$-port can only leave the arc through the occurrence of $n_1$). Since $\vx_2(p)=\{n_2'\} \neq \{n_1\}$, there is a path from $n_2'$ to $n_1$ (intuitively,
 the white token on the arc leading to the $p$-port of $n_2'$ has to travel
to some arc leading to the $p$-port of $n_1$, and by determinism it can only do so
through a path of $p$-ports that crosses $n_2'$).

We now prove that there is a path from some $n_2 \in N_2$ to $n_2'$. If $n_2'$ is enabled at $\vx_2$, then $n_2' \in N_2$ and we are done. If $n_2'$ is not enabled at $\vx_2$ (as in the figure), then, since $\vx_2(p)= \{n_2'\}$ and $\N$ is a sound and deterministic acyclic negotiation,
there is a sequence $\tau$ such 
that $\vx_2 \by{\tau} \vx_2'$ and $\vx_2'$ enables $n_2'$ (again, by soundness 
the white token on the arc to the $p$-port of $n_2'$ can eventually leave the arc to move towards $n_f$, and by determinacy it can only leave the arc through the occurrence of $n_2'$). 
Since $N_2$ is the set of transitions enabled at $\vx_2$, we have $\tau = (n_2, \r) \, \tau'$ for some $n_2 \in N_2$. So some subword of $\tau$ is a path from some transition of $N_2$ to $n_2'$.

\medskip

(a3) $\N$ contains a cycle. \\
Follows immediately from (a2) and the finiteness of $N_1$ and $N_2$.

\medskip

\noindent (b) Every agent participates in $n$.\\
By repeated application of (a) we find a chain $(n_1, r_1) \ldots (n_k, r_k)$ such that
$n_1 = n$, $n_k = n_f$, and $(n_i, r_i)$ unconditionally enables $n_{i+1}$ for  $1 \leq i \leq k-1$. By the definition of an unconditionally enabled atom we have 
$P_1 \supseteq P_2 \supseteq \cdots \supseteq P_k = P_f$. Since $P_f = \agents$, 
we obtain $P_1 = \agents$.
\end{proof} 

\newpage

\section{Appendix: Unique Targets of Maximal Sequences}
\label{app:fragseq}

The proof of Proposition \ref{prop:unique} is very involved, and requires some preliminaries.
In Section \ref{subsec:loopsynch} we introduce the notions of loop of a negotiation, and synchronizer
of a loop. Loosely speaking, a loop is an occurrence sequence leading
from a marking to itself, and a synchronizer of a loop is an atom having as parties all the agents that 
participate in any of the atoms of the loop.
We prove a fundamental result stating that every loop of a sound deterministic negotiation
has a synchronizer (Proposition \ref{prop:loopsexist}). In Section \ref{subsec:loopsynch} we
use this result to prove Proposition \ref{prop:dominating}, stating that every cycle of a negotiation 
has a dominating atom: An atom having as parties all the agents that participate in any of the atoms 
of the cycle.

Equipped with these propositions, in Section \ref{subsec:unique}
we proceed to prove Proposition \ph{\ref{prop:unique}}.

\subsection{Loops and Synchronizers}
\label{subsec:loopsynch}

Intuitively, a loop is an occurrence sequence leading back to the marking from which it started.

\begin{definition}
\label{def:loop}
A loop of a negotiation $\N$ is a \ph{nonempty} sequence $\sigma$ of outcomes such 
that $\vx \by{\sigma} \vx$ for some marking $\vx$ of $\N$. We say that $\vx$ {\em enables} $\sigma$.
Given a loop $\sigma$, we denote $N_\sigma$ the set of atoms of $\N$ that appear in $\sigma$, and
$A_\sigma$ the set of agents that participate in at least one of these atoms.
A loop $\sigma$ is minimal if no loop $\sigma'$ satisfies $N_{\sigma'} \subset N_\sigma$. 
\end{definition}

The sequences 
$$\begin{array}{rcl}
\sigma_1 & = & (n_1, \texttt{a})\, (n_2, \texttt{a}) \, (n_4, \texttt{b}) \, (n_5, \texttt{a})\\
\sigma_2 & = & (n_6, \texttt{a}) \, (n_7, \texttt{b}) \\
\sigma_3 & = & (n_1, \texttt{b}) \, (n_3, \texttt{a}) \, (n_4, \texttt{c}) \, (n_3, \texttt{a}) \, 
(n_4, \texttt{b}) \, (n_5, \texttt{a})
\end{array}$$
\noindent are loops of the negotiation of Figure \ref{fig:cyclicRunning}.  We have 
$N_{\sigma_3} = \{ n_1, n_3, n_4, n_5 \}$. Numbering the agents of the negotiation from left to right,
the set $A_{\sigma_2}$  contains only the third agent, while $A_{\sigma_1}$ and $A_{\sigma_3}$
contain the first and second agents.

\begin{definition}
\label{def:synchronizer}
Let $\sigma$ be a loop of a negotiation. An atom $n \in N_\sigma$ is a 
{\em synchronizer of $\sigma$} if $P_n' \subseteq P_n$ for every atom  $n' \in N_\sigma$. We say that 
$\sigma$ is {\em synchronized} by $n$. An atom is a {\em synchronizer}
if it synchronizes at least one loop of $\N$.
\end{definition}

The loop $\sigma_1$ above is synchronized by $n_1$ and $n_4$, but not by $n_2$ or $n_6$. 
The atoms $n_1, n_3, n_4, n_6, n_7$ are synchronizers of the negotiation, while $n_0$, $n_2$, 
$n_5$, $n_8$, and $n_f$ are not. Indeed, the atoms $n_0$, $n_8$, and $n_f$ do not 
belong to any loop. Atoms $n_2$ and $n_5$ do belong to loops, but only to loops that 
contain atoms with strictly more parties.

We prove two important properties of a sound cyclic deterministic negotiation $\N$:
Such a negotiation always has at least one loop, and every loop has at least one synchronizer.

\begin{proposition} 
\label{prop:loopsexist}
Every sound cyclic deterministic negotiation has a loop.
\end{proposition}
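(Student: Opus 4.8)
The plan is to reduce the statement to a pigeonhole argument on the finite set of reachable markings: I will construct an \emph{infinite} initial occurrence sequence, and then observe that, since a marking is a map $\agents \to 2^N$ with both sets finite, only finitely many markings exist; hence some marking $\vx$ must repeat along the infinite sequence, and the outcomes fired between the two visits form a nonempty sequence $\beta$ with $\vx \by{\beta} \vx$, i.e.\ a loop.

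The construction of the infinite sequence is where cyclicity, soundness, and determinism are combined. Since $\N$ is cyclic, its graph contains a cycle, on which $n_f$ cannot appear (it has no outgoing edges). Because every atom on the cycle differs from $n_f$ and $\N$ is deterministic, the cycle can be written as a list of atoms $n_1, \ldots, n_k$ together with outcomes $(n_1, \r_1), \ldots, (n_k, \r_k)$ and parties $p_1, \ldots, p_k$ such that $\trans(n_i, p_i, \r_i) = \{n_{i+1}\}$ for $1 \le i \le k$ (indices modulo $k$, so $n_{k+1} = n_1$). I will fire these outcomes in cyclic order, over and over. To start, soundness~(a) provides a reachable marking enabling $n_1$; fire $(n_1, \r_1)$. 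After firing $(n_i, \r_i)$ the agent $p_i$ satisfies $\vx(p_i) = \{n_{i+1}\}$, so, as $p_i$ is deterministic, its commitment can change only through an occurrence of $n_{i+1}$ (the forcing property of deterministic agents). By soundness~(b) the sequence so far extends to a large step, which reaches $\vx_f$ and must therefore contain an occurrence of $n_{i+1}$. Taking the prefix of this extension up to the first moment at which $n_{i+1}$ becomes enabled and then firing $(n_{i+1}, \r_{i+1})$ myself, I obtain a valid occurrence sequence in which $p_{i+1}$ becomes committed to $n_{i+2}$. Iterating through $i = 1, \ldots, k$ completes one traversal of the cycle and leaves $p_k$ committed to $n_1$; the forcing property then guarantees a further occurrence of $n_1$, so the next traversal can begin, and so on indefinitely. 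Each traversal fires $(n_1, \r_1)$ at least once, so the sequence is genuinely infinite.

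The main obstacle is this middle step: guaranteeing that, after committing $p_i$ to $n_{i+1}$, the atom $n_{i+1}$ can actually be reached and fired with the prescribed result $\r_{i+1}$. The two ingredients that make this work are the forcing property (which guarantees $n_{i+1}$ occurs in \emph{any} completion to a large step) and the freedom to choose the result of an atom when it becomes enabled (which lets me pick $\r_{i+1}$ and stay on the cycle). I expect the only remaining care to be bookkeeping: the large step furnished by soundness may fire many unrelated outcomes before $n_{i+1}$, but I use it only to locate the first marking at which $n_{i+1}$ is enabled, after which I discard that extension and fire $(n_{i+1}, \r_{i+1})$. Once the infinite sequence is in hand, the finiteness of the set of markings closes the argument immediately.
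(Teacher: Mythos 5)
Your argument is correct and is essentially the paper's own proof: both take a cycle of the graph (necessarily avoiding $n_f$), use determinism to turn each edge into a forced commitment $\trans(n_i,p_i,\r_i)=\{n_{i+1}\}$, use soundness to repeatedly extend the occurrence sequence until the next atom of the cycle is enabled, and conclude by pigeonhole on the finite set of reachable markings. No gaps.
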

\begin{proof}
Let $\pi$ be a cycle of the graph of a sound deterministic negotiation $\N$. Let $n_1$ be an 
arbitrary atom occurring in $\pi$, and let $n_2$ be its successor in $\pi$. 
Observe that $n_1 \neq n_f$ because $n_f$ has no successor, and hence no cycle contains $n_f$.
By soundness some reachable marking $\vx_1$ enables $n_1$, and so  
$n_2 \in \trans (n_1,p,\r)$ for at least one party $p$ and one 
result $\r$. By determinism, we have $n_2 \in \trans (n_1,p,\r) = \{n_2\}$. 
Let $\vx_1 \by{(n_1,\r)} \vx_1'$. Again by soundness, some occurrence sequence leads from 
from $\vx_1'$ to the final marking. This sequence must contain an occurrence of $n_2$, because
this is the only atom agent $p$ is ready to engage in. In particular, some prefix of this sequence 
leads to a marking $\vx_2$ that enables $n_2$.

Repeating this argument for the nodes $n_1$, $n_2$, $n_3$, \ldots , $n_k = n_1$ of the cycle $\pi$, 
we conclude that $\N$ has an infinite occurrence sequence
containing infinitely many occurrences of atoms of $\pi$. Since the set of reachable markings 
is finite, this sequence contains a loop. 
\end{proof}

Now we prove that every minimal loop has a synchronizer.
We start with a technical lemma. Intuitively, it states
that firing an outcome of a loop $\sigma$ never decreases the set of agents ready to engage in atoms of $N_\sigma$. 

\begin{lemma} 
\label{lem:basicloop}
Let $\sigma$ be a loop of a negotiation, let $(n, \r)$ be an 
outcome appearing in $\sigma$, and let $\vx_1, \vx_2$ be arbitrary markings such that $\vx_1 \by{(n, \r)} \vx_2$. 
For every agent $p$:  if $\vx_1(p) \cap N_\sigma \neq \emptyset$, then $\vx_2(p) \cap N_\sigma \neq \emptyset$. 
\end{lemma}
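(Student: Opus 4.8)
The plan is to reduce the statement to a purely local property of the outcome $(n,\r)$: that for every party $p$ of $n$ the set $\trans(n,p,\r)$ still meets $N_\sigma$. First I would split on whether $p$ participates in $n$. If $p \notin P_n$, then by the small-step semantics $\vx_2(p) = \vx_1(p)$, so the implication is immediate and the hypothesis $\vx_1(p)\cap N_\sigma \neq \emptyset$ is literally the conclusion. The only real work is the case $p \in P_n$, where $\vx_2(p) = \trans(n,p,\r)$ and I must show $\trans(n,p,\r)\cap N_\sigma \neq \emptyset$; note that in this case the conclusion will hold independently of the hypothesis, since it depends only on $n$, $\r$ and $\sigma$.

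For that case I would exploit that $\sigma$ is a loop and can therefore be repeated. Fix a marking $\vx_s$ enabling $\sigma$, i.e. $\vx_s \by{\sigma} \vx_s$, and consider the infinite occurrence sequence $\sigma\,\sigma\,\sigma\cdots$ enabled at $\vx_s$. Since $(n,\r)$ appears in $\sigma$ and $p \in P_n$, the agent $p$ participates in infinitely many outcomes of this infinite sequence. Pick any occurrence of $(n,\r)$ in it and let $(n',\r')$ be the next outcome in which $p$ participates; such an outcome exists precisely because of the infinite repetition. Between these two occurrences $p$ takes part in no outcome, so by the semantics its ready-set stays constant and equals $\trans(n,p,\r)$ throughout. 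For $(n',\r')$ to occur, the marking just before it must enable $n'$, and since $p \in P_{n'}$ this forces $n' \in \trans(n,p,\r)$. As $n'$ occurs in the infinite sequence it already occurs in $\sigma$, hence $n' \in N_\sigma$. Therefore $n' \in \trans(n,p,\r)\cap N_\sigma$, which proves the claim and hence $\vx_2(p)\cap N_\sigma \neq \emptyset$.

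Combining the two cases finishes the proof. I expect the main obstacle to be the bookkeeping guaranteeing that a ``next'' $p$-participation always exists after the chosen occurrence of $(n,\r)$: inspecting only a single pass of $\sigma$ is not enough, because $(n,\r)$ could be the last $p$-outcome of that pass. Passing to the infinite unfolding $\sigma^\omega$ (equivalently, arguing cyclically modulo the length of $\sigma$) removes this difficulty cleanly. It is worth stressing that neither determinism nor soundness of $\N$ is used anywhere, so the lemma holds for arbitrary negotiations, consistent with its intended role as a basic building block for the results on synchronizers.
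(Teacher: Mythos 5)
Your proof is correct and follows essentially the same route as the paper's: split on whether $p$ is a party of $n$, and in the nontrivial case use the cyclic structure of $\sigma$ to locate the first atom after $(n,\r)$ in which $p$ participates, which must lie both in $\trans(n,p,\r)$ and in $N_\sigma$. The paper works with a circular permutation $(n,\r)\,\tau$ of $\sigma$ and a small case split on whether $n$ already belongs to $\trans(n,p,\r)$, whereas your passage to $\sigma^\omega$ absorbs that case split; your side remarks (that the conclusion is independent of the hypothesis when $p \in P_n$, and that neither soundness nor determinism is needed) are also accurate.
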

\begin{proof}
 Let $\vx_1 \by{(n, \r)} \vx_2$ and assume $\vx_1(p) \cap N_\sigma \neq \emptyset$. 
If $p$ does not participate in $n$, then $\vx_1(p) = \vx_2(p)$, and so $\vx_2(p) \cap N_\sigma \neq \emptyset$.
So assume that $p$ participates in $n$. Then, since $\vx_1 \by{(n, \r)} \vx_2$, we have 
$n \in \vx_1(p)$. By the definition of a loop, there is a sequence
$\tau$ and reachable markings $\vx,\vx'$ such that $\vx \by{(n,\r)} \vx' \by{\tau} \vx$
(the sequence $(n, \r) \, \tau$ is a circular permutation of $\sigma$).
Since we have $\vx_2(p) = \vx'(p)$, it suffices to show $\vx'(p) \cap N_\sigma \neq \emptyset$.
If $n \in \vx'(p)$, then we are done. If $n \notin \vx'(p)$ then, since $n \in \vx(p)$
and $\vx' \by{\tau} \vx$, the sequence $\tau$ contains an outcome $(n', \r')$ such that
$n' \in \vx'(p)$. Since, by definition, only atoms of $N_\sigma$ occur in $\tau$, we have
$n' \in N_\sigma$, and so $\vx'(p) \cap N_\sigma \neq \emptyset$.
\end{proof}

\begin{proposition}
\label{prop:minSynchronized}
Every minimal loop of a sound deterministic negotiation is synchronized by at least one of its atoms.
\end{proposition}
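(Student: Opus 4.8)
The plan is to argue by contradiction: I would assume that the minimal loop $\sigma$, enabled at a marking $\vx_0$ with $\vx_0 \by{\sigma} \vx_0$, has no synchronizer, and derive that $\N$ cannot be sound. First I would extract a clean invariant from Lemma~\ref{lem:basicloop}. Writing $\vx_0, \vx_1, \ldots, \vx_k = \vx_0$ for the markings visited along $\sigma$, the lemma says that an agent committed to an atom of $N_\sigma$ stays committed to $N_\sigma$ after firing an outcome of $\sigma$; since the sequence returns to $\vx_0$, the set $B = \{p : \vx_i(p) \cap N_\sigma \neq \emptyset\}$ is independent of $i$, and it equals $A_\sigma$ (every party of an atom of $N_\sigma$ lies in $B$ just before that atom fires, and conversely any $p \in B$ is a party of the $N_\sigma$-atom it points to). Because $\N$ is deterministic each $\vx_i(p)$ is a singleton, so throughout the loop every agent $p \in A_\sigma$ is committed to exactly one atom $f_i(p) \in N_\sigma$. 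Moreover $n' \in \vx(p)$ forces $p \in P_{n'}$, so an atom $n \in N_\sigma$ is a synchronizer precisely when $P_n = A_\sigma$, equivalently when at the marking enabling $n$ all agents of $A_\sigma$ point to $n$ (the function $f_i$ is constant with value $n$).

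Next I would use minimality to exclude \emph{decomposable} loops. If $A_\sigma$ could be partitioned into nonempty sets $A'$ and $A''$ such that every atom of $N_\sigma$ has its parties entirely inside $A'$ or entirely inside $A''$, then the outcomes of $\sigma$ touching $A'$ never interfere with those touching $A''$; projecting $\sigma$ onto the $A'$-atoms then yields an occurrence sequence enabled at the restriction of $\vx_0$ that returns to it, i.e. a loop $\sigma'$ with $N_{\sigma'} \subsetneq N_\sigma$, contradicting minimality. Hence the hypergraph on $A_\sigma$ whose hyperedges are the party sets $\{P_n : n \in N_\sigma\}$ is connected.

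The decisive and hardest step is to play this connectivity against soundness. By soundness there is an occurrence sequence $\rho$ with $\vx_0 \by{\rho} \vx_f$, and since at $\vx_f$ every agent points to $\emptyset$, every agent of $A_\sigma$ must eventually leave $N_\sigma$. I would consider the first outcome $(n^*, \r^*)$ of $\rho$ whose firing sends some party in $A_\sigma$ to an atom outside $N_\sigma$; up to that point the invariant above still holds, so $n^* \in N_\sigma$ and $n^*$ is enabled at the marking $\vx'$ reached just before it fires. If $n^*$ were a synchronizer we would already be done, so assume $P_{n^*} \subsetneq A_\sigma$ and pick $q \in A_\sigma \setminus P_{n^*}$, still committed at $\vx'$ to some atom $n_q \in N_\sigma$ with $q \in P_{n_q}$. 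The aim is to show that the cyclic, ``rock-paper-scissors'' dependencies forced by connectivity and determinism now strand some agent: an agent released from $N_\sigma$ by $n^*$ can, by determinism, never return to serve the atoms on which $q$ still depends, so from the reachable marking $\vx'$ the final marking $\vx_f$ becomes unreachable, contradicting soundness.

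The main obstacle is exactly making this stranding argument rigorous, i.e. proving that no consistent ordering of ``exits'' from $N_\sigma$ exists unless some atom of $N_\sigma$ is joined by \emph{all} of $A_\sigma$. This is the point at which minimality (connectivity) and determinism (single commitment) must be combined with soundness, and I expect it to require choosing the result $\r^*$ of $n^*$ carefully and then tracking the commitments $f(\cdot)$ of the remaining $A_\sigma$-agents until an explicit deadlock is exhibited; the earlier invariant and the connectivity reduction are meant precisely to reduce the problem to this tight, strongly coupled situation where such a deadlock is forced.
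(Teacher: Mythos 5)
Your preliminary observations are sound: applying Lemma \ref{lem:basicloop} around the loop does show that the set of agents committed to an atom of $N_\sigma$ is constantly equal to $A_\sigma$ at every marking of the loop, and your reduction of minimality to non-decomposability of the party sets is correct. But these are the easy parts, and the paper uses essentially the same invariant. The decisive step --- showing that the first outcome along a run to $\vx_f$ that releases an agent of $A_\sigma$ from $N_\sigma$ must belong to an atom whose party set is all of $A_\sigma$ --- is exactly where your proposal stops being a proof, and the ``stranding'' idea you offer for it rests on a false claim. Determinism does not prevent an agent released from $N_\sigma$ from later returning to an atom of $N_\sigma$: in Figure \ref{fig:cyclicRunning} an agent leaves the atoms of a loop via $(n_4,\texttt{b})$ to $n_5$ and re-enters via $(n_5,\texttt{a})$ back to $n_1$. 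So from the marking reached after $(n^*,\r^*)$ no deadlock is forced, and connectivity plus determinism alone cannot yield a contradiction with soundness: a connected hypergraph of party sets (three agents with atoms of parties $\{a,b\}$, $\{b,c\}$, $\{c,a\}$) is perfectly compatible with the absence of a covering hyperedge, so something stronger must be extracted from minimality.

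The paper closes this gap with a construction your sketch does not contain. It fixes an agent $\hat p\in A_\sigma$, takes a run $\sigma_f$ from a loop marking to $\vx_f$, lets $n_\pi$ be the \emph{last} atom of $N_\sigma$ on $\sigma_f$ in which $\hat p$ participates, and builds from $n_\pi$ a path $\pi$ of $\hat p$-atoms to $n_f$ avoiding $N_\sigma$ after its first step. It then fires the outcome of $n_\pi$ that puts $\hat p$ onto $\pi$ and explicitly schedules a continuation $\tau$ (preferring loop outcomes, then $\pi$-outcomes, then shortest connecting sequences). Termination of $\tau$ is where minimality is really spent: an infinite tail of loop outcomes would yield a loop using only atoms of $N_\sigma$ in which $\hat p$ does not participate, hence a loop $\sigma'$ with $N_{\sigma'}\subsetneq N_\sigma$. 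Once $\tau$ is known to reach $\vx_f$, the monotonicity of the set of agents committed to $N_\sigma$ (which after $(n_\pi,\r_2)$ can only grow, yet is empty at $\vx_f$) forces that set to be empty immediately after $(n_\pi,\r_2)$, i.e.\ every agent of $A_\sigma$ is a party of $n_\pi$. Your argument would need a device of comparable strength --- an explicit terminating schedule or a well-founded measure --- in place of the stranding claim; as written, the core of the proof is missing.
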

\begin{proof}
Let $\sigma$ be a minimal loop enabled at a reachable marking $\vx$. 
Since $\N$ is sound, there is an occurrence sequence $\vx \by{\sigma_f} \vx_f$.
Let $\sigma_f = (n_1, \r_1) \, \ldots, (n_k, \r_k)$. Choose an arbitrary agent $\hat{p}$ of $A_\sigma$, and let $n_{j_0}$ be the atom of $N_\sigma$ with 
$\hat{p}$ as party that appears {\em last} in $\sigma_f$, i.e.
$n_{j_0} \in N_\sigma$ and $n_{j_0+1}, \ldots, n_k \notin N_\sigma$. 
We claim that $n_{j_0}$ is a synchronizer of $\sigma$.  

\ph{We iteratively 
construct} a path $\pi$ of the graph of $\N$
containing only atoms having $\hat{p}$ as party. 
The first atom of the path is $n_{j_0}$.  Assume now that the path we have 
constructed so far is  $(n_{j_0}, \hat{p}, \r_0) \, (n_{j_1}, \hat{p}, \r_1) \ldots (n_{j_i}, \hat{p}, \r_i)$ for some $j_0 < j_1 < \cdots < j_i$. We choose $n_{j_{i+1}}$  as the {\em last} occurrence in $\sigma_f$ of an atom that has $\hat{p}$ as party  and is a successor of $n_{j_i}$, meaning that $\{n_{j_{i+1}}\} = \trans(n_{j_i}, \hat{p}, \r_i)$ for some result $\r_i$. This guarantees that the only atom of $\pi$ that 
belongs to $N_\sigma$ is $n_{j_0}$, and that all atoms of $\pi$ are distinct. 
Since all agents \ph{participate in $n_f$} 
the path 
ends with $n_f$. 

In the rest of the proof we rename $n_{j_0}$ as $n_\pi$.
Since $\vx$ enables the loop $\sigma$ and since $n_\pi \in N_\sigma$, after some prefix of $\sigma$ a marking 
$\vx_\pi$ is reached which enables $n_\pi$. The loop
$\sigma$ continues with some outcome $(n_\pi,\r_1)$, where $\r_1$ is one possible result of $n_\pi$. 
By construction of $\pi$, there is another result $\r_2$ of $n_\pi$ such that $\trans(n_\pi,\hat{p},\r_2)$ is 
the second atom of $\pi$, and this atom does not belong to $N_\sigma$. 

Let $\vx_\pi'$ be the marking such that $\vx_\pi\by{(n_\pi,\r_2)}\vx_\pi'$.
We iteratively construct an occurrence sequence enabled at $\vx_\pi'$ as follows. Let $\tau$ be the sequence 
constructed so far and let $\vx$ be the marking reached by $\tau$ (initially $\tau = \epsilon$ and $\vx = \vx_\pi'$):
\begin{itemize}
\item[(0)] If $\vx$ is the final marking, stop.
\item[(1)] Else, if $\vx$ enables an atom $n$ of $N_\sigma$, then at least one outcome $(n,\r)$ occurs in $\sigma$. 
Let $\tau:= \tau \, (n, \r)$.
\item[(2)] Else, if $\vx$ enables an atom $n$ of $\pi$, then let $\r$ be the result such that $\trans (n,\hat{p},\r)$ is the successor of $n$ in $\pi$.
Let $\tau:= \tau \, (n, \r)$.
\item[(3)] Else, let $\rho$ be a shortest occurrence sequence that either leads to the final marking 
or enables an atom that appears in $\sigma$, or  
enables an atom that appears in $\pi$ (so that after this sequence either (1) or (2) can be applied). 
Such an occurrence sequence exists because $\N$ is sound. Further, $\rho$ is not empty because otherwise we would have taken branch (0). Let $\tau:= \tau \rho$. 
\end{itemize} 

For the rest of the proof let $\tau$ be the sequence generated by this procedure.
We claim that $\tau$ is finite 
(i.e., that the procedure terminates) and leads to the final marking. For this we prove
that the procedure takes branches (1)-(3) only finitely often.

We first prove that the procedure takes branch (2) only finitely often. Since every time (2) is taken $\tau$ is extended with an outcome of an atom of $\pi$, it suffices to show that these atoms occur only finitely often in $\tau$. Recall that $\pi$ is a finite path ending with $n_f$, and that all atoms of $\pi$ involve $\hat{p}$. Let $\pi = (n_1, \hat{a}, \r_1) \, (n_2, \hat{a}, \r_2) \, \ldots \, (n_k, \hat{p}, \r_k)$, where 
$n_k = n_f$. By determinism, $\hat{p}$ is always ready to engage in at most one atom of $\pi$.
By construction, after the $i$-th occurrence in $\tau$ of an atom of $\pi$  the agent $\hat{p}$ is ready to engage in $n_{i+1}$. Therefore, the atoms of $\pi$ occur at most $k$ times in $\tau$

We now prove that the procedure takes branch (3) only finitely often. Since branch (2) is taken only finitely often, 
some suffix $\tau'$ of $\tau$  does not contain any occurrence of atoms of $\pi$. For every marking $\vx$ reached 
along the execution of $\tau'$, let $A_\sigma(\vx)$ be the set of agents $p$ such that $\vx(p) \in N_\sigma$
(that is, the agents that at $\vx$ are ready to engage in an atom of $N_\sigma$). We show that
along the execution of $\tau'$ these sets never decrease, and strictly increase each time the 
procedure takes branch (3); since the set of agents is finite, this concludes the proof.

Let $\vx_1$ be a marking of $\tau'$ at which the procedure chooses either branch (1) or branch (3).
If the procedure takes branch (1), then, by the definition of (1), the procedure selects a result $(n, \r)$ that 
occurs in $\sigma$, and extends the current sequence with the step $\vx_1 \by{(n,\r)} \vx_2$.
By Lemma \ref{lem:basicloop} we have $A_\sigma(\vx_1) \subseteq A_\sigma(\vx_2)$. If
the procedure takes branch (3), then the current sequence is extended with a shortest sequence 
$\vx_1 \by{(n_1, \r_1)} \vx_2 \by{(n_2, \r_2)}  \cdots \by{(n_{k-1}, \r_{k-1})} \vx_k$ 
such that $\{n_1, \ldots, n_k\} \cap N_\sigma = \emptyset$, and $\vx_k$ enables an atom of $N_\sigma$ or 
an atom of $\pi$. By determinism, and since $\{n_1, \ldots, n_k\} \cap N_\sigma = \emptyset$, we have
$A_\sigma(\vx_1) \subseteq A_\sigma(\vx_2) \subseteq \cdots \subseteq A_\sigma(\vx_k)$. Since 
$\vx_k$ enables an atom of $N_\sigma$ or an atom of $\pi$, but no atom of $\pi$ occurs in $\tau'$,
 the marking $\vx_k$ enables an atom of $N_\sigma$ and, since the sequence is a shortest one, 
$\vx_{k-1}$ does not enable any atom of $N_\sigma$. So there is an agent $p \in A_\sigma$ such that 
$\vx_1(p) \notin N_\sigma$ and $\vx_k(p) \in N_\sigma$, which proves $A_\sigma(\vx_1) \subset A_\sigma(\vx_k)$,
and we are done.

Finally, we prove that the procedure takes branch (1) only finitely often. Since branches (2) and 
(3) are taken only finitely often, from some point on the algorithm only takes branch (1) (if at all),
and so some suffix $\tau''$ of $\tau$ contains only outcomes of $A_\sigma$. Let $\hat{p}$ 
be the agent of $A_\sigma$ we used for the construction of $\pi$. Since all the atoms of $\pi$
have already been executed before reaching $\tau''$, no atom of $N_\sigma$ in which 
$\hat{p}$ participates, and in particular the atom $n_\pi$, can occur in $\tau''$. 
So all the atoms occurring in $\tau''$ belong to $N_\sigma \setminus \{n_\pi\}$.
Assume $\tau''$ is infinite. Then, since the number of reachable markings is finite, $N_\sigma \setminus \{n_\pi\}$
contains a loop (more precisely, there is a loop in which only atoms of $N_\sigma \setminus \{n_\pi\}$
occur). But this contradicts the minimality of $\sigma$. So $\tau''$ is finite, which concludes the proof
of the claim.

By the claim, the procedure constructs a sequence $\tau$ reaching the final marking.
Since the final atom involves all agents, no agent was able to remain in the loop.
In other words: all agents of $A_\sigma$ left the loop when $(n_\pi, \r_2)$ has occurred.
As a consequence, all these agents are parties of $n_\pi$, and so $n_\pi$ is a synchronizer of the 
loop $\sigma$.
\end{proof}




\subsection{Dominating Atoms}
\label{subsubsec:dom}

Loosely speaking, an atom $n$ of a path of a negotiation dominates the path if every agent that participates in some atom of the path also participates in $n$.

\begin{definition}
Let $\N$ be a negotiation and let $\pi = (n_1, p_1, \r_1) \, (n_2, p_2,  \r_2) \cdots (n_k, p_k, \r_k)$
be a path of $\N$. An atom $n_i$ {\em dominates} $\pi$ if 
$P_{n_j} \subseteq P_{n_i}$ for  $1 \leq j \leq k$. 
\end{definition}

We prove that every circuit of a sound deterministic negotiation has a dominating atom.
This result is a syntactic counterpart \ph{to} Proposition \ref{prop:minSynchronized}, stating that every loop has a synchronizer. 
Indeed, a loop can be seen as a circuit in the marking graph of the negotiation. The proof shows that every circuit can be ``executed'', meaning that one can find an arbitrarily long occurrence sequence that executes the outcomes of the circuit arbitrarily often, and never executes any other outcome of the atoms in the circuit. 

\begin{definition}
Let $\pi = (n_1, p_1, \r_1) \, (n_2, p_2,  \r_2) \cdots (n_k, p_k, \r_k)$ be a path of a negotiation. An {\em execution} of $\pi$ is an occurrence sequence $\vx \by{\sigma} \vx'$
such that $\vx$ is a reachable marking, $\sigma = \sigma_0 (n_1, \r_1) \, \sigma_1 \, (n_2, \r_2) \, \sigma_2  \cdots \sigma_{k-1} \, (n_k, \r_k)$, and for every atom $n$ of $\pi$, if $(n, \r)$ occurs in $\sigma$ then $(n, p, \r)$ occurs in $\pi$.

A path $\pi$ of $\N$ is {\em executable} if \ph{it has} an execution $\vx \by{\sigma} \vx'$. 
\end{definition}

\begin{lemma}
\label{lem:pathexec}
Every path of a sound deterministic negotiation is executable. 
\end{lemma}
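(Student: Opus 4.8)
The plan is to construct an execution of $\pi$ one path outcome at a time, driving the negotiation through $(n_1,\r_1),(n_2,\r_2),\dots,(n_k,\r_k)$ in order, relying on determinism to guarantee that each prescribed outcome can actually be reached. The basic engine is the commitment property of deterministic agents: since $\pi$ is a path and $\N$ is deterministic, $\trans(n_i,p_i,\r_i)=\{n_{i+1}\}$ for $1\le i\le k-1$, so immediately after firing $(n_i,\r_i)$ the agent $p_i$ is ready for $n_{i+1}$ and for nothing else. By the fundamental property of deterministic agents, every extension of the current occurrence sequence to a large step must therefore contain an occurrence of $n_{i+1}$; such an extension exists because $\N$ is sound (condition (b)). This gives me, at every stage, a guaranteed way to eventually enable the next path atom.

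Concretely, I would first use soundness (condition (a)) to pick a shortest initial occurrence sequence $\sigma_0$ with $\vx_0\by{\sigma_0}\vx$ such that $\vx$ enables $n_1$; shortestness ensures $n_1$ itself does not fire inside $\sigma_0$. I then fire $(n_1,\r_1)$. Inductively, suppose I have reached a marking at which $p_i$ is committed to $n_{i+1}$. I extend the sequence towards a large step until $n_{i+1}$ becomes enabled, stop at the first marking that enables $n_{i+1}$, and then fire $(n_{i+1},\r_{i+1})$. Choosing exactly the prescribed result $\r_{i+1}$ is always legitimate, since any enabled atom may take any of its results. Iterating through $i=1,\dots,k-1$ and finally firing $(n_k,\r_k)$ yields a sequence of the required shape $\sigma_0\,(n_1,\r_1)\,\sigma_1\cdots\sigma_{k-1}\,(n_k,\r_k)$.

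The delicate point -- and the step I expect to be the main obstacle -- is the side condition in the definition of execution: whenever an atom of $\pi$ occurs anywhere in $\sigma$ it must occur with a \emph{path-compatible} result, i.e.\ with a result $\r$ such that $(n,p,\r)$ appears in $\pi$. The filler segments $\sigma_i$ cannot in general avoid firing atoms of $\pi$, because freeing a party of $n_{i+1}$ that is committed to some path atom forces that atom to fire. My plan is to handle this by always selecting a path-compatible result whenever a path atom is fired inside a filler (possible because such an atom, being on $\pi$, has at least one result prescribed by $\pi$, and enabled atoms may take any result), and then to establish that these constrained choices never block progress. For the latter I would argue that, by soundness, every reachable marking can still reach $\vx_f$, so no constrained choice ever produces a deadlock; and since the set of reachable markings is finite (as in the proof of Proposition \ref{prop:loopsexist}) while the commitment of $p_i$ forces $n_{i+1}$ into every large-step continuation, a filler that never enabled $n_{i+1}$ would have to cycle through markings forever without reaching $\vx_f$, contradicting that the current sequence extends to a large step. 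Making this progress argument airtight is precisely where I would expect to need a monotone potential in the spirit of Proposition \ref{prop:minSynchronized} -- for instance tracking the set of parties of $n_{i+1}$ that have already reached it and showing this set can be driven upward -- so that termination of each stage, and hence of the whole construction, is guaranteed.
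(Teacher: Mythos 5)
Your core construction is the one the paper uses: walk along $\pi$, and after each $(n_i,\r_i)$ exploit the fact that the deterministic agent $p_i$ is committed to $n_{i+1}$, so that any continuation to the final marking must pass through $n_{i+1}$. The paper phrases this as an induction on the length of $\pi$, and — importantly — it gets each filler segment for free: soundness hands it a \emph{finite} occurrence sequence $\tau$ from the current marking to $\vx_f$; since $\vx_f(p_i)=\emptyset$ while $p_i$ is currently committed to $n_{i+1}$, $\tau$ must contain an atom involving $p_i$, and by determinism the first such atom is $n_{i+1}$; truncating $\tau$ just before that occurrence yields the filler. No termination or progress argument is needed, because the witness $\tau$ is already finite.

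The gap is in your handling of the side condition that path atoms occurring inside fillers must take path-compatible results. Once you constrain the result choices in the fillers, the sequence you are building is no longer a prefix of a soundness-guaranteed large step, so you must prove that your constrained strategy enables $n_{i+1}$ after finitely many steps — and the argument you sketch does not do this. Saying that an unsuccessful filler ``would have to cycle through markings forever without reaching $\vx_f$'' yields no contradiction with soundness: soundness only requires that every \emph{finite} occurrence sequence be extendable to a large step, and sound negotiations can perfectly well have infinite runs; moreover, the fact that some extension of the current finite prefix reaches a large step does not imply that the extension produced by \emph{your constrained strategy} does. So finiteness of the marking set buys you nothing by itself; you would indeed need a monotone potential in the style of the sets $A_\sigma(\vx)$ from the proof of Proposition \ref{prop:minSynchronized}, and you have not constructed one. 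You have correctly noticed that the definition of execution imposes a condition on the fillers which the paper's own proof does not explicitly discharge (it truncates an arbitrary continuation without constraining the results taken inside $\tau'$), but your proposed repair replaces that loose end with an unproved termination claim, so as it stands the proof is incomplete exactly at the step you yourself flag as the main obstacle.
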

\begin{proof}
Let $\N$ be a sound deterministic negotiation, and let 
$\pi = (n_1, p_1, \r_1)  \cdots (n_k, p_k, \r_k)$ be a path of $\N$.

We construct a sequence $\sigma$ that executes $\pi$,
by induction on $k$. If $k=1$ then, by soundness, some reachable marking enables $n_1$, and 
so we can take $\sigma = (n_1, \r_1)$.
If $k > 1$ then, by induction hypothesis, there exists a reachable marking $\vx$
and an occurrence sequence
$\sigma'$ that executes $\pi'= (n_1, p_1, \r_1)  \cdots (n_{k-1}, p_{k-1}, \r_{k-1})$. 
Let $\vx \by{\sigma'} \vx'$.
Since $\N$ is sound, there exists an occurrence sequence $\vx' \by{\tau} \vx_f$. 
Since $\N$ is deterministic and $n_{k} \in \trans(n_i, a_{k-1}, \r_{k-1})$, we have 
$\trans(n_{k-1}, p_{k-1}, \r_{k-1}) = \{n_{k}\}$, and so $\vx'(p_{k-1}) = \{n_k\}$.
Since $\vx_f(p_{k-1}) = \emptyset$, we have $\tau = \tau' \, (n, \r) \, \tau''$ for some
sequence $\tau'$ that  \ph{contains} any atom having $p_{k-1}$ as party  and 
some atom $n$ such that $p_{k-1} \in P_n$. Let $\vx' \by{\sigma'} \vx''$. 
We have $\vx''(p_{k-1}) = \vx'(p_{k-1}) = \{n_1\}$, and so $n = n_k$. So we can take
$\sigma = \sigma' \, \tau' \, (n_k, \r_k)$.
\end{proof}

\begin{proposition}
\label {prop:dominating}
Every cycle of a sound deterministic negotiation has a dominating atom.
\end{proposition}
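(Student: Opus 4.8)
The plan is to lift Proposition~\ref{prop:minSynchronized}, a statement about loops in the marking graph, to the purely syntactic level of cycles in the graph, using Lemma~\ref{lem:pathexec} as the bridge. Write the cycle as $\pi = (n_1, p_1, \r_1) \cdots (n_k, p_k, \r_k)$ with $n_1 \in \trans(n_k, p_k, \r_k)$, and put $N_\pi = \{n_1, \ldots, n_k\}$ and $A_\pi = \bigcup_{i} P_{n_i}$. Observe that an atom $n_i$ dominates $\pi$ exactly when $P_{n_i} = A_\pi$, i.e.\ when $P_{n_i}$ is the largest party set, under inclusion, among the $P_{n_j}$; so it suffices to exhibit a single atom of $N_\pi$ whose parties include every agent of $A_\pi$.

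First I would strengthen Lemma~\ref{lem:pathexec} from ``the cycle can be executed once'' to ``the cycle can be executed arbitrarily often''. Since $\pi$ is a cycle, one traversal leaves the carrier agent $p_k$ committed again to $n_1$ (by determinism $\trans(n_k, p_k, \r_k) = \{n_1\}$), so the inductive construction of Lemma~\ref{lem:pathexec} can be restarted, and soundness lets us reconnect the marking reached after one traversal to a marking that again enables $n_1$. Iterating produces an infinite occurrence sequence that fires every cycle outcome $(n_i, \r_i)$ infinitely often and, crucially, fires \emph{no} outcome of an atom of $N_\pi$ other than the $(n_i, \r_i)$ themselves; the reconnecting blocks use only atoms outside $N_\pi$. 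Since the set of reachable markings is finite (markings are maps $\agents \to 2^N$ and $N$ is finite), some marking $\vx$ recurs. Cutting the sequence between two occurrences of $\vx$ that enclose a full traversal yields a genuine loop $\sigma$ with $\vx \by{\sigma} \vx$, with $N_\pi \subseteq N_\sigma$, hence $A_\pi \subseteq A_\sigma$, and in which only cycle outcomes of cycle atoms occur.

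It then remains to locate a synchronizing atom of $\sigma$ inside $N_\pi$. Rather than invoking Proposition~\ref{prop:minSynchronized} as a black box --- it only guarantees synchronizers for \emph{minimal} loops, whose atom set could omit some $n_j$ --- I would re-run its proof on the loop $\sigma$ constructed above: fix an agent $\hat p \in A_\pi$, extend $\sigma$ to a terminating sequence $\vx \by{\sigma_f} \vx_f$ by soundness, take the last atom $n_\ast$ of $N_\sigma$ having $\hat p$ as a party, and build the forced $\hat p$-path to $n_f$. The counting argument of that proof (each of its three branches can be taken only finitely often) then shows that all agents of $A_\sigma$, and in particular all of $A_\pi$, must abandon the recurring atoms simultaneously at $n_\ast$, so that $n_\ast$ has $A_\pi \subseteq A_\sigma$ among its parties and dominates every atom of $N_\sigma \supseteq N_\pi$.

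The main obstacle is precisely the place where the cycle setting departs from Proposition~\ref{prop:minSynchronized}: I must guarantee that $n_\ast$ lies in $N_\pi$ rather than among the reconnecting atoms, and I must replace the appeal to minimality of the loop (used there to exclude a leftover sub-loop) by an argument special to $\sigma$. Both should follow from the control established in the second paragraph: since, among atoms of $N_\pi$, $\sigma$ fires only the cycle outcomes, and since the $\hat p$-path consists of determinism-forced successors, the last recurring atom of $\sigma$ involving $\hat p$ is forced to be one of $n_1, \ldots, n_k$, and any residual loop avoiding $n_\ast$ would have to reuse a cycle atom in a way the controlled execution forbids. Making this dichotomy rigorous, and verifying that $n_\ast$ dominates \emph{every} $n_j$ and not merely those lying on the $\hat p$-path, is the delicate step; the remaining steps are direct adaptations of Lemma~\ref{lem:pathexec} and Proposition~\ref{prop:minSynchronized}.
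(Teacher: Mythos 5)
Your first two steps coincide with the paper's: applying Lemma~\ref{lem:pathexec} to the iterated path $\pi^\ell$ yields, for large enough $\ell$, a firing sequence containing each $(n_i,\r_i)$ many times and no other outcome of any cycle atom, and finiteness of the set of reachable markings lets you cut out a loop $\sigma$ with $N_\pi\subseteq N_\sigma$ that inherits this property. The gap is in the last step, exactly where you flag the ``delicate step''. You want to show that the synchronizer candidate produced by the construction of Proposition~\ref{prop:minSynchronized} --- the last atom of $N_\sigma$ involving $\hat p$ in a terminating run --- already lies in $N_\pi$. That claim is not justified and is false as stated: $N_\sigma$ may contain reconnecting atoms involving $\hat p$ interleaved with the cycle atoms, and in a terminating extension the last $N_\sigma$-atom at which $\hat p$ participates can perfectly well be one of those rather than one of $n_1,\dots,n_k$; only \emph{after} one knows the atom is a synchronizer can it be forced back onto the cycle. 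Similarly, ``any residual loop avoiding $n_\ast$ would have to reuse a cycle atom in a way the controlled execution forbids'' is not an argument: the residual loop that obstructs termination of branch (1) is a loop of the negotiation, not a piece of your controlled execution, and it may consist entirely of reconnecting atoms not involving $\hat p$.

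The paper closes this step by a different and self-contained argument, which works for \emph{whatever} synchronizer $n$ of $\sigma$ you obtain, with no need to control where it comes from. Let $\vy$ be a marking of the loop that enables $n$; since $n$ synchronizes $\sigma$ and the negotiation is deterministic, $\vy(p)=\{n\}$ for every party $p$ of every $n_j$. Take the unique segment $\vx''\by{(n_i,\r_i)}\vx'\by{\tau}\vy$ of the loop in which $\tau$ contains no outcome of $\pi$. After $(n_i,\r_i)$ the agent $p_i$ is committed to $n_{i+1}$ alone, and it cannot move during $\tau$, because the only outcome of $n_{i+1}$ occurring in $\sigma$ is $(n_{i+1},\r_{i+1})$, which is a $\pi$-outcome and hence absent from $\tau$. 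Therefore $\{n\}=\vy(p_i)=\vx'(p_i)=\{n_{i+1}\}$, so the synchronizer is a cycle atom, and being a synchronizer it dominates $\pi$. Your side remark that Proposition~\ref{prop:minSynchronized} is stated only for \emph{minimal} loops while the constructed $\sigma$ need not be minimal is a fair observation (the paper invokes the proposition without comment on this point), but your proposal does not actually repair it either; in any case that issue is orthogonal to the step you left open, which is the one that needs the determinism argument above.
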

\begin{proof}
Let $\N$ be a sound deterministic negotiation and 
let $\pi = (n_1, p_1, \r_1)  \cdots (n_k, p_k, \r_k)$ be a cycle of $\N$. Then 
$\pi^i$ (the result of concatenating $i$ copies of $\pi$) is a path of $\N$ for every
$i \geq 1$. By Lemma \ref{lem:pathexec}, $\pi^i$ is executable for every $i > 1$, and 
so for every number $\ell$ there is a firing sequence
containing at least $\ell$ occurrences of each of the outcomes $(n_1, \r_1), \ldots, (n_k, \r_k)$,
and no occurrence of any other outcome of the atoms $n_1, \ldots, n_k$. 
Since $\N$ \ph{has only}  finitely many
reachable markings, we can extract from the firing sequence for a sufficiently large
$\ell$ a loop $\vx \by{\sigma} \vx$ of $\N$ containing all of $(n_1, \r_1), \ldots, (n_k, \r_k)$,
and no other outcome of $n_1, \ldots, n_k$. By Proposition \ref{prop:minSynchronized},
$\sigma$ has a synchronizer $n$. 

We claim that $n= n_i$ for some $1 \leq i \leq k$, which proves
the proposition. Assume the contrary, and let $\vy$ be a marking of the loop that enables $n$. 
Since $n$ is a synchronizer, we have $\vy(p) = \{n\}$ for every party $p$ of 
$n_1, \ldots, n_k$. Let $\vx \by{(n_i, \r_i)} \vx' \by{\tau} \vy$ be the unique segment of the loop
such that no outcome of $\pi$ occurs in $\tau$. 
Since $(n_i, \r_i)$ is the only outcome of $n_i$ that appears in $\sigma$,
no outcome of $n_i$ appears in $\tau$. Since $\N$ is deterministic,  
we have $\vx'(p)= \vy(p)$ for every party $p$ of $n_i$. But, by the definition of the cycle $\pi$, we have $\vx'(p_i)=\{n_{i+ 1}\}$. So we get $\{n_{i+1}\} = \vx'(p_i) = \vy(p_i) = \{n\}$,
which implies $n = n_{i+1}$.
\end{proof} 

\subsection{Proof of Proposition \ref{prop:unique}}
\label{subsec:unique}

We first prove a lemma.

\begin{lemma}
\label{lem:same}
Let $\N$ be a sound deterministic negotiation with a set of agents $A$. Let $B, C$ be a 
partition of $A$, i.e., $A = B \cup C$ and $B \cap C = \emptyset$. 
Let $\vx_1, \vx_2$ be two reachable markings such that
\begin{itemize}
\item $\vx_1(B) = \vx_2(B)$, and 
\item every atom enabled at $\vx_1$ or $\vx_2$ has at least one party in $B$, 
and at least one party in $C$.
\end{itemize}
Then $\vx_1(C) = \vx_2(C)$.
\end{lemma}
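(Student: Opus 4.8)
Since $\agents = B \cup C$ is a partition and the two markings already agree on $B$, the conclusion $\vx_1(C) = \vx_2(C)$ is equivalent to the single statement $\vx_1 = \vx_2$. So the plan is to assume both hypotheses together with $\vx_1 \neq \vx_2$ and derive a contradiction with soundness — concretely, to exhibit an occurrence sequence, from $\vx_1$ or from $\vx_2$, that reaches a \emph{deadlock} (a non-final marking enabling no atom). Two standing facts will drive the argument. First, because $\N$ is deterministic, at every marking each agent is committed to at most one atom, so $\vx(p)$ is always a singleton or empty; consequently two atoms simultaneously enabled at a marking are party-disjoint, and hence commute. Second, soundness forbids deadlocks and livelocks, so every reachable non-final marking enables an atom and can be driven to $\vx_f$.

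First I would dispose of the boundary case. If $\vx_1 = \vx_f$ then $\vx_1(B) = \emptyset$, hence $\vx_2(B) = \emptyset$; but every atom enabled at $\vx_2$ has a party in $B$, so $\vx_2$ enables no atom, and soundness forces $\vx_2 = \vx_f = \vx_1$ (symmetrically if $\vx_2 = \vx_f$). Otherwise both markings enable atoms, and by the hypothesis each enabled atom synchronizes at least one $B$-agent with at least one $C$-agent. Since $\vx_1$ and $\vx_2$ agree on $B$, they agree on which atoms are \emph{$B$-enabled} (all $B$-parties ready). The idea is then to fix a large step $\vx_1 \by{\sigma} \vx_f$ (which exists by soundness) and replay it from $\vx_2$, maintaining the invariant that the two executions keep the $B$-agents synchronized; determinism lets me reorder concurrently enabled (party-disjoint) atoms so that whenever a $B$-agent is about to move in the $\vx_1$-run, it is ready to move in the same way in the $\vx_2$-run. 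The decisive moment is the first at which the replay cannot match: some atom $m$ firing in the $\vx_1$-run is $B$-enabled but not $C$-enabled in the $\vx_2$-copy, because a $C$-party $c$ of $m$ is committed to a different atom. There the $B$-party of $m$ is stuck waiting for $c$.

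The principal obstacle is that the hypothesis constrains only the atoms enabled at $\vx_1$ and $\vx_2$ \emph{themselves}, and this constraint is destroyed the moment one fires an atom: after a single step a $B$-only or $C$-only atom may become enabled. Hence a naive induction on the distance to $\vx_f$ fails, since one cannot just fire matching atoms and recurse while preserving the enabledness hypothesis. The heart of the argument is therefore to show that a persistent discrepancy in $C$ cannot be repaired by letting the $C$-agents catch up on their own. In the blocked $\vx_2$-copy the stuck $B$-party can be released only by firing $m$, which requires $c$ to return to $m$, which in turn requires a chain of firings among the agents outside $P_m \cap B$. I expect to show, using the structural results on loops and their synchronizers (Propositions \ref{prop:minSynchronized} and \ref{prop:dominating}), that any such catch-up must either terminate in a marking enabling no atom — a deadlock, contradicting soundness — or close a cycle whose dominating atom involves the waiting $B$-party, thereby forcing the very synchronization $m$ that was blocked and, by determinism, the equality of the two commitments at $c$. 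Ruling out indefinite cyclic catch-up, and so converting the local blockage into a global deadlock, is the step I anticipate to be the most delicate.
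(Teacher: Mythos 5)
Your proposal sets up correctly (the reduction to $\vx_1 = \vx_2$, the disposal of the case $\vx_1 = \vx_f$, and the observation that determinism makes simultaneously enabled atoms party-disjoint are all fine), and it takes a genuinely different route from the paper: a replay/simulation of a large step from $\vx_1$ inside $\vx_2$, rather than the paper's induction on $|C|$. The paper's argument instead shows that the sets $N_1, N_2$ of atoms enabled at $\vx_1$ and $\vx_2$ must intersect: if they were disjoint, then for each $n_1 \in N_1$ a shortest sequence from $\vx_2$ enabling $n_1$ yields a path from some atom of $N_2$ to $n_1$ none of whose atoms dominates $n_1$ (because the $B$-party of $n_1$ is parked at $n_1$ in both markings and never moves along that sequence), and chaining such paths back and forth between $N_1$ and $N_2$ closes a cycle with no dominating atom, contradicting Proposition \ref{prop:dominating}. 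A common enabled atom $n$ then forces $\vx_1(c) = \{n\} = \vx_2(c)$ for some $c \in C \cap P_n$, and the induction hypothesis with $c$ promoted into $B$ finishes the proof.

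The gap in your plan sits exactly where you say you "expect to show" something: the treatment of the decisive moment. When the replay first blocks at an atom $m$ whose $B$-parties are ready in both copies but whose $C$-party $c$ is committed elsewhere in the $\vx_2$-copy, soundness does \emph{not} hand you a deadlock. It only forces every continuation of the $\vx_2$-run to eventually fire $m$ (since the $B$-party is deterministically committed to it), and the $C$-agents can in principle achieve this by a detour; no reachable marking of a sound negotiation deadlocks, so "converting the local blockage into a global deadlock" cannot succeed as stated. Nor does the alternative branch work as sketched: a cycle through which $c$ catches up, even one dominated by an atom involving the waiting $B$-party, would only show that $c$ \emph{eventually} reaches $m$, not that its commitment at the blocked marking already equals $\{m\}$ — which is what you need for the replay to continue. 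The contradiction has to be extracted from a global comparison of the two markings, which is precisely what the cycle-without-dominator construction in the paper's Claim~1 does, and your sketch does not explain how such a cycle arises from a single mismatch. Note also that even establishing one commonly enabled atom only pins down the $C$-parties of that atom; the paper still needs the induction on $|C|$ to propagate agreement to all of $C$, a mechanism your all-at-once replay lacks. I would either work out the cycle construction in full (at which point you will likely have reconstructed the paper's proof) or adopt the induction on $|C|$ from the start.
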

\begin{proof}
The proof is by induction on the size of $C$. If $|C|= 0$ then there 
is nothing to show. So assume $|C| > 0$. For $i=1,2$ let $N_i$ be the 
set of atoms enabled at $\vx_i$, and let $B_i$ be the set of agents participating 
in some atom of $N_i$. 

\medskip

\noindent \textbf{Claim 1}: $N_1 \cap N_2 \neq  \emptyset$. \\
We show that if $N_1 \cap N_2 =  \emptyset$ then $\N$ contains a cycle without a dominating
atom, contradicting Proposition \ref{prop:dominating}. Let $n_1 \in N_1$. By soundness, 
there is a shortest sequence $\vx_2 \by{\sigma} \vx_2'$ such that $\vx_2'$ enables $n_1$. 
We prove two subclaims.

\medskip

\noindent {\bf Claim 1a:}  No atom of $\sigma$ dominates $n_1$.\\
Since $n_1$ is enabled at $\vx_1$ and at least one of its parties belongs to $B$, 
we have $\vx_1(b) = \{n_1\}$ for some agent $b \in B$,
and so, by the definition of $B$, also $\vx_2(b) = \{n_1\}$. 
Since $\sigma$ is shortest, it does not contain any occurrence of $n_1$. 
So, since $\N$ is deterministic, we also have $\vy(b) = \{n_1\}$ for every 
intermediate marking $\vy$ reached during the execution of $\sigma$. 
It follows that $b$ does not participate in any atom occurring in $\sigma$.
So no atom of $\sigma$ dominates $n_1$.

\medskip

\noindent {\bf Claim 1b:} There exists $m_0 \in N_2$ and a path 
$\pi = (m_0, p_0, \r_0) \ldots (m_k, p_k, \r_k)$ such that 
$\trans(m_k, p_k, \r_k) = \{n_1\}$ and none of $m_0, \ldots, m_k$
dominates $n_1$. \\
By Claim 1, it suffices to construct a path such that $m_0 \in N_2$, $\trans(m_k, p_k, \r_k) = \{n_1\}$,
and all of $m_0, \ldots, m_k$ occur in $\sigma$. Let $\sigma = (n, \r) \sigma'$. We proceed by induction on the length of $\sigma$. \\
If $|\sigma| = 1$ then $\sigma = (n, \r)$, and so $\vx_2 \by{(n,\r)} \vx_2'$. Since $\vx_2$ does not enable $n_1$ but $\vx_2'$ does, we have  $\trans(n,p,\r) = \{n_1\}$
for some $p \in P_n$. Choose $\pi = (n, p, \r)$. Since $n$ is enabled at $\vx_2$, 
we have $n \in N_2$, and we are done. \\
If $|\sigma| > 1$, let $\vy$ be the marking given by
$\vx_2 \by{(n, \r)} \vy \by{\sigma'} \vx_2'$. Then $\sigma'$ 
is a shortest sequence enabling $n_1$ from $\vy$ and so, by induction 
hypothesis, there is a path $\pi' = (m_1, p_1, \r_1) \cdots (m_k, p_k, \r_k)$
such that $m_1$ is enabled at $\vy$, $\trans(m_k, p_k, \r_k) = \{n_1\}$ and all of 
$m_1, \ldots, m_k$ occur in $\sigma'$. If $m_1 \in N_2$ then we can take $\pi := \pi'$. 
If $m_1 \notin N_2$, then $m_1$ is not enabled at $\vx_2$. Since $m_1$ is enabled at $\vy$,
 we have $\vx_2(p) = \{n\}$ and $\vy(p) = \{m_1\}$ for some party $p$ of both $n$ and $m_0$.
It follows $\trans{(n, p, \r)} = \{m_1\}$. So we can take  
$\pi = (n, p, \r) \pi'$, which concludes the proof.

\medskip

Observe that Claim 1b holds for every atom of $N_1$. By symmetry, for 
every $n_2 \in N_2$ there is $m_0 \in N_1$ and a path 
$\pi = (m_0, p_0, \r_0) \ldots (m_k, p_k, \r_k)$ such that 
$\trans(m_k, p_k, \r_k) = \{n_2\}$ and none of $m_0, \ldots, m_k$
dominates $n_2$. Since $\N$ \ph{has only} finitely many atoms,
there are $n_{11}, \ldots, n_{1k} \in N_1$, $n_{21}, \ldots, n_{2k} \in N_2$ 
and a cycle that visits the sequence of atoms
$n_{11}, n_{21}, n_{12}, n_{22}, \ldots, n_{1k}, n_{2k}$ in that
order, and such that no atom of the cycle dominates all others. 
So the cycle does not contain a dominating atom, contradicting 
Proposition \ref{prop:dominating}. This proves Claim 1. 

\medskip

By Claim 1, there is an atom $n \in N_1 \cap N_2$. Let $c \in C \cap P_n$, which exists 
by assumption, and let $B' = B \cup \{c\}$ and $C' = C \setminus \{c\}$.
Since $n$ is enabled at $\vx_1$ and $\vx_2$ we have $\vx_1(c) = \{n\} = \vx_2(c)$.
Since $|C'| = |C|-1$ we can apply the induction hypothesis to $B'$ and $C'$. So 
we have $\vx_1(C') = \vx_2(C')$, and therefore $\vx_1(C) = \vx_2(C)$.
\end{proof}

We are now ready to prove that all maximal $n$-sequences have the same target,
and that the same holds for maximal strict $(n, \r)$-sequences.

\begin{refproposition}{prop:unique}
Let $\N$ be a sound and deterministic negotiation, and let $n$ be an atom of $\N$.
\begin{itemize}
\item[(a)] All maximal $n$-sequences have the same target.\\
That is: there is a unique marking $\vx$ such that $\vx_n \by{\sigma} \vx$ for every 
maximal $n$-sequence $\sigma$. We call $\vx$ the {\em target} of $n$.
\item[(b)] For every outcome $(n, \r)$, all maximal strict $(n, \r)$-sequences have the same target. \\
That is: there is a unique marking $\vx$ such that $\vx_n \by{\sigma} \vx$ for every 
maximal strict $(n, \r)$-sequence $\sigma$. We call $\vx$ the {\em target} of $(n,\r)$.
\end{itemize}
\end{refproposition}
\begin{proof}
\noindent (a)  Let $\sigma_1$ and $\sigma_2$ be two maximal $P_n$-sequences and 
$\vx_0 \by{\sigma} \vy$ be an occurrence sequence that enables 
$n$ for the first time. We then have
$$\vx_0 \by{\sigma} \vy \by{\sigma_1} \vx_1 \qquad \mbox{ and  } \qquad \vx_0 \by{\sigma} \vy \by{\sigma_2} \vx_2$$
\noindent for two markings $\vx_1, \vx_2$ such that $\vx_1(a) = \vx_2(a)$ for every $a \notin P_n$.
By the maximality of $\sigma_1$ and $\sigma_2$, neither $\vx_1$ nor $\vx_2$ enable 
any atom $n'$ such that $P_{n'} \subseteq P_n$

We prove $\vx_1 = \vx_2$, which shows that $\sigma_1$ and $\sigma_2$ have the same target.
Let $B, C$ be the partition of the agents of $\N$ defined by
$p \in B$ if{}f $\vx_1(p) = \vx_2(p)$. We have $C \subseteq P_n$.
By soundness there exists a firing sequence $\vx_1 \by{\rho} \vx_f$. 
Let $\tau$ be the longest $B$-prefix of $\rho$, and let $\vx_1 \by{\tau} \vx_1'$. 
Since $\vx_1(B) = \vx_2(B)$ (meaning $\vx_1(p) = \vx_2(p)$ for every $p \in B$), 
we have $\vx_2 \by{\tau} \vx_2'$ for some marking $\vx_2'$ such that
$\vx_1'(B) = \vx_2'(B)$. 

Let $n'$ be an atom enabled at $\vx_1'$. We claim that $P_{n'}$ intersects both 
$B$ and $C$. Since $B, C$ is a partition, it suffices to show that $P_{n'}$ is not
contained in $B$ and is not contained in $C$. That $P_{n'}$ is not contained in $B$ follows from the maximality of $\tau$. Assume $P_{n'} \subseteq C$. Since $C \subseteq P_n$ we get $P_{n'} \subseteq P_n$. Since $\tau$ does not contain any atom with a party in $C$, we have $\vx_1'(C) = \vx_1(C)$, 
and so $n'$ is also enabled at $\vx_1$, contradicting the maximality of $\sigma_1$.
This proves the claim. 

So every atom enabled at $\vx_1'$ has parties in both $B$ and $C$. By symmetry, the same holds
for $\vx_2'$. Since $\vx_1'(B) = \vx_2'(C)$ we can apply Lemma \ref{lem:same} and conclude 
$\vx_1' = \vx_2'$. Since $\vx_1'$ and $\vx_2'$ are reached from $\vx_1$ and $\vx_2$ by means of the same 
sequence $\tau$, we get $\vx_1 = \vx_2$.

\noindent (b) The proof is exactly as in (a). In this case we even have $C \subset P_n$
\end{proof}

\end{document}